\newcommand{\unparskip}{\vspace{-\parskip}}
\titlespacing\section{0pt}{12pt plus 4pt minus 2pt}{0pt plus 2pt minus 2pt}
\titlespacing\subsection{0pt}{12pt plus 4pt minus 2pt}{0pt plus 2pt minus 2pt}
\titlespacing\subsubsection{0pt}{12pt plus 4pt minus 2pt}{0pt plus 2pt minus 2pt}
\newtheorem{theorem}{Theorem}
\newtheorem{corollary}{Corollary}
\newtheorem{proposition}{Proposition}%
\newtheorem{assumption}{Assumption}%
\newenvironment{proof_outline}{\begin{proof}[Proof outline]}{\end{proof}}
\newtheorem{thm}{Theorem}
\newtheorem{prop}{Proposition}
\newtheorem{lem}{Lemma}
\newtheorem{cor}{Corollary}
\newtheorem{as}{Assumption}
\newcommand{\iid}{\overset{\mathrm{iid}}{\sim}}
\DeclareMathOperator*{\argmin}{arg\,min}
\newcommand{\R}{\mathbb{R}}
\newcommand{\E}{\mathbb{E}}
\renewcommand{\Pr}{\mathbb{P}}
\newcommand{\Ind}{\mathbbm{1}}
\newcommand{\dotcup}
\DeclareRobustCommand\full {\tikz[baseline=-0.5ex]\draw[blue,thick] (0,0)--(0.5,0);}
\DeclareRobustCommand\dotted{\tikz[baseline=-0.5ex]\draw[black,opacity = 0.4,thick,dotted] (0.06,0)--(0.5,0);}
\DeclareRobustCommand\dashed{\tikz[baseline=-0.5ex]\draw[blue!60!black, opacity = 0.7,thick,dashed] (0,0)--(0.52,0);}
\DeclareRobustCommand\full {\tikz[baseline=-0.5ex]\draw[thick,color=blue!60!black] (0,0)--(0.5,0);}
\DeclareRobustCommand\fullmid  {\tikz[baseline=-0.5ex]\draw[thick,color=blue!60!black,opacity = 0.5] (0,0)--(0.5,0);}
\DeclareRobustCommand\fulllow  {\tikz[baseline=-0.5ex]\draw[thick,color=blue!60!black, opacity = 0.25] (0,0)--(0.5,0);}
\begin{document}

\title{Bayesian Quantile Estimation and Regression\\ with Martingale Posteriors}

\author{Edwin Fong$^{1,}$\thanks{Corresponding author. Email: chefong@hku.hk} \ and Andrew Yiu$^{2}$  \\ \\
$^1$Department of Statistics and Actuarial Science, University of Hong Kong\\
$^2$Department of Statistics, University of Oxford\\
}
\date{}

\maketitle
\vspace{-5mm}
\begin{abstract}
Quantile estimation and regression within the Bayesian framework is challenging as the choice of likelihood and prior is not obvious. 
In this paper, we introduce a novel Bayesian nonparametric method for quantile estimation and regression based on the recently introduced martingale posterior (MP) framework. The core idea of the MP is that posterior sampling is equivalent to predictive imputation, which allows us to break free of the stringent likelihood-prior specification. We  demonstrate that a recursive estimate of a smooth quantile function, subject to a martingale condition, is entirely sufficient for full nonparametric Bayesian inference. We term the resulting posterior distribution as the quantile martingale posterior (QMP), which arises from an implicit generative predictive distribution. Associated with the QMP is an expedient, MCMC-free and parallelizable posterior computation scheme, which can be further accelerated with an asymptotic approximation based on a Gaussian process. Furthermore, the well-known issue of monotonicity in quantile estimation is naturally alleviated through increasing rearrangement due to the connections to the Bayesian bootstrap.  Finally, the QMP has a particularly tractable form that allows for comprehensive theoretical study, which forms a main focus of the work. We demonstrate the ease of posterior computation in simulations and real data experiments. 
\end{abstract}

\section{Introduction}
Quantile estimation and regression has wide applications in fields such as econometrics and biostatistics \citep{Koenker1978}. 
{The Bayesian approach has garnered attention due to the ability to fully quantify uncertainty through the posterior distribution.}
However, a Bayesian equivalent is not immediately obvious as the need to specify a likelihood is challenging.  \cite{Yu2001,Yang2016} and related works utilize a `working likelihood' based on the asymmetric Laplace distribution, where the quantile of interest parameterizes a potentially misspecified likelihood.  Within the Bayesian nonparametric literature, {the challenge lies in eliciting a valid nonparametric prior}. \cite{Hjort2009} introduced the quantile pyramid, which is a nonparametric prior with support on piecewise linear quantile functions. \cite{Rodrigues2019} and \cite{An2024} extend the quantile pyramid to allow for the introduction of covariate dependence. \cite{Tokdar2012} introduce a semiparametric prior for linear quantile regression which has support on monotone curves; \cite{Yang2017} and \cite{Chen2021} then extend this to more complex covariate spaces. In general, constructing prior distributions for quantile functions is nontrivial, and posterior inference in all cases require the use of Markov chain Monte Carlo (MCMC) techniques, which can often be computationally demanding. 

A recent promising class of approaches that avoids the need to work directly with a likelihood are methods which \textit{generalize} Bayesian inference. One direction is the  generalized Bayesian update of \cite{Bissiri2016}, which relies on a loss function instead of a likelihood, and motivates updating through coherence. The asymmetric Laplace likelihood can be motivated in this fashion, as the likelihood is indeed proportional to the exponentiated check loss function. Another approach is to view Bayesian inference as a \textit{predictive} task by taking advantage of connections between posterior and predictive inference, which has been explored in \cite{Berti2020,Fong2023a,Fortini2020,Fortini2023} and others. Particularly close to our work is the \textit{martingale posterior} (MP) of \cite{Fong2023a}, where the traditional likelihood-prior construct of Bayesian inference is replaced with the elicitation of a sequence of predictive densities, which shares the motivation of the prequential approach of \cite{Dawid1984}. Given observations $Y_{1:n}$, the sequence of predictives is utilized to impute the remainder of the population, $Y_{n+1:\infty}$, from which an estimand can be computed and is then distributed according to the MP.

\subsection{Our contribution}
In this work, we introduce a Bayesian nonparametric method for quantile estimation and regression, motivated from the purely predictive framework of the MP \citep{Fong2023a}. The core idea is to utilize a recursive {estimate} of the quantile function as a \textit{generative predictive}, which is then sampled from and updated to impute $Y_{n+1:\infty}$. 
We will differentiate between quantile functions and quantile function \textit{estimates}, where the first is monotonically increasing but the latter may not be.
 The distribution of the resulting random quantile function of $Y_{n+1:\infty}$ is then termed the \textit{quantile martingale posterior} (QMP).
The generative predictive is essentially a stochastic approximation of the quantile function with an additional coherence condition. 

{The QMP inherits many advantages of the MP framework. Firstly, exact posterior computation is simple and expedient, as MCMC is not required at all. We will see later that a highly accurate approximate posterior sampling reduces computation time even further, making the imputation step negligible in time. Secondly, in many situations, we may not have strong prior information despite wanting to quantify posterior uncertainty. The prior distribution can thus be a nuisance to specify, motivating noninformative priors \citep{Berger2009}. This is particularly true in Bayesian nonparametrics, where the specification of the prior is both technically demanding and challenging to interpret. In contrast to traditional Bayes, the QMP is entirely data-driven and prior-free, and the model is simple to interpret due to connections to stochastic approximation.}

{
The QMP also has unique advantages within the context of quantile estimation.} The issue of monotonicity or quantile crossing is handled automatically by the imputation step in the QMP, and we rely heavily on the useful theory of increasing rearrangements. This is another benefit of working with the predictive framework and specifically with a generative predictive as in our case.
Extensions to incorporate covariate dependence, e.g. for linear quantile regression, is then straightforward again due to connections to stochastic approximation, especially when compared to traditional Bayesian nonparametric priors. Finally, we will be extending beyond the c.i.d. condition required for the original MP, which greatly expands the possible set of models for Bayesian nonparametric inference.

In exchange for these benefits, we will immediately be faced with theoretical challenges, for which solutions form the bulk of this work. In general, theoretical study of the MP 
is challenging due to the inability to rely on standard tools for Bayesian asymptotics, and we now cannot even rely on results from the c.i.d. literature. To study the existence and support of the QMP, we will leverage new tools from the Banach space valued martingale literature, which will aid us greatly. In addition, we will be able to study the weak convergence of the QMP, as well as  posterior consistency and contraction in the frequentist sense, which is novel for MPs. The theoretical results have strong practical implications as they guide model elicitation, hyperparameter setting and approximate sampling.  We hope these methods and tools used are also of independent interest and will be useful for future research in MPs and Bayesian inference in general. {We speculate that the aforementioned theory may also be adapted to the Bayesian estimation of more general monotone functions \citep[e.g.][]{Chakraborty2021}.}

We now provide an outline the paper. In Section \ref{sec:intro}, we will review the role of increasing rearrangement in quantile estimation and the MP framework from \cite{Fong2023a}. We then introduce the QMP in the unconditional setting, and provide intuition as to the various model components and sampling algorithm. Section \ref{sec:theory} will then cover the bulk of the theory, with most derivations postponed for the Appendix. Section \ref{sec:implications} will discuss the practical implications of the theory, with a focus on the setting of a few key hyperparameters and an expedient approximate posterior sampling scheme. Section \ref{sec:quantreg} then extends the QMP for quantile regression, covering similar theory and practical discussions. 
Section \ref{sec:illustrations} demonstrates the QMP in a simulation and real data example, and Section \ref{sec:discussion} concludes with future directions.

\section{Quantile martingale posteriors}\label{sec:intro}
For ease of exposition, we first introduce the quantile martingale posterior without covariate dependence, and extend it to the  quantile regression case in Section \ref{sec:quantreg}. For the remainder of this section, let  $Y_{1:n}$ be $n$ i.i.d. copies of the r.v. $Y \in \R$ from an unknown sampling distribution $P^*$ with cumulative distribution function (CDF) $P^*(y)$. 

\subsection{Quantile functions and increasing rearrangement}
To begin, we outline some prerequisites on the quantile function and its estimators, with a particular focus on \textit{increasing rearrangement} \citep{Chernozhukov2010}. 
 The quantile function $Q^*: (0,1) \to \R$  is the left-continuous, monotonically increasing function defined as
\begin{align*}
    Q^*(u) = \inf\left\{y \in \R: u \leq P^*(y)\right\}.
\end{align*}
The quantile function is particularly useful for inverse-transform sampling from $P^*$, which we strongly leverage in our work. In particular, given a uniform r.v. $V \sim \mathcal{U}(0,1)$, we have that $Q^*(V) \sim P^*$. 
{This is due to the key property that $Q^*(u) \leq y$ if and only if $u \leq P^*(y)$ for all $u \in (0,1)$.} A detailed summary of properties of quantile functions can be found in \citet{Embrechts2013}. For the remainder of the paper, we will assume that that both $P^*(y)$ and $Q^*(u)$ are continuous.

Let $Q_n$ be an \textit{estimate} of the quantile function $Q^*$ from $Y_{1:n}$. A well-known problem in quantile estimation is that $Q_n(u)$ may not be monotonically increasing on $u\in (0,1)$, so it is not a valid quantile function. In the case of quantile regression, this is known as the \textit{quantile crossing problem} \citep{Bassett1982,He1997,Chernozhukov2010}, where the lack of monotonicity causes quantile curves as functions of the covariates to cross one another for different values of $u$. Many solutions to this problem have been proposed, but we will focus particularly on increasing rearrangements, as this occurs naturally under the MP framework. 

For the remainder of the paper, we will denote a potentially non-monotone quantile function estimate as $Q_n$. 
Let $Q_n^{\dagger}$ denote the increasing rearrangement of $Q_n$, which is defined as follows:
\begin{equation}\label{eq:rearrangement}
\begin{aligned}
    P_n(y) &= \int_0^1 \Ind(Q_n(u) \leq y)\, du,\quad 
    Q_n^{\dagger}(u) =  \inf\left\{y \in \R: u \leq P_n(y)\right\}.
\end{aligned}
\end{equation}
$Q_n^{\dagger}$  is then a proper quantile function, where one can see the monotonicity as follows. For $V \sim \mathcal{U}(0,1)$, the function $P_n$ is the CDF of $Q_n\left(V \right)$, so $Q_n^{\dagger}$ is a valid quantile function and must be monotonically increasing. The connection to the bootstrap is hence obvious and of key importance - the quantile estimate $Q_n$ gives us a means to simulate from $P_n$ (or equivalently $Q_n^{\dagger}$) through the inverse transform, which forms the basis of our work. In {Figure \ref{fig:rearrangement} (left)}, we show an example of rearranging a non-monotone $Q_n$ into a monotonically increasing $Q_n^{\dagger}$, with corresponding $P_n$ in {Figure \ref{fig:rearrangement} (right)}. We can see that $Q_n^{\dagger}$ agrees with $Q_n$ in some regions, and preserves continuity. A detailed discussion on properties of rearrangement for quantile estimation can be found in {\cite{Chernozhukov2010}}.

There is also a close connection to rearrangement inequalities \citep{Hardy1952}, which have previously been leveraged in estimation by \cite{Chernozhukov2009} and specifically in quantile estimation/regression by \cite{Chernozhukov2010}.  Many useful properties of $Q_n^{\dagger}$ have also been shown in \cite{Chernozhukov2010}, and we will outline and utilize this theory in Section \ref{sec:theory}. In particular, one can show that $Q_n^{\dagger}$ is always a better estimate of $Q^*$ in terms of $L^p$ distance as a result. Increasing rearrangement also preserves continuity properties, which will be useful for us.

\subsection{Martingale posterior distributions}

The MP is a generalization of the Bayesian framework introduced by \cite{Fong2023a}. The key notion is that Bayesian uncertainty on a parameter of interest $\theta$  arises from the unknown remainder of the population $Y_{n+1:\infty}$ that has yet to be observed. \cite{Fong2023a} show that posterior sampling is equivalent to the predictive imputation of $Y_{n+1:\infty}$ given $Y_{1:n}$, followed by the computation of $\theta$ as an estimand from $Y_{1:\infty}$. This procedure is termed as \textit{predictive resampling}, where the sequence of predictive distributions, $P_n(y) = P(Y_{n+1} \leq y \mid Y_{1:n})$, is used to sequentially impute $Y_{n+1:\infty}$, which is outlined in {Algorithm \ref{alg:predictive_resampling}}.

Armed with this interpretation of Bayesian inference, the MP then generalizes Bayes by eliciting a general sequence of predictive distributions $\{P_n,P_{n+1},\ldots\}$ directly as the statistical model, removing the need for a likelihood and prior, and instead relying on predictive resampling to obtain a posterior distribution on a parameter of interest. In order for the MP on $\theta$ to exist, we require the sequence $P_N$ to converge almost surely to a random probability measure $P_\infty$ when predictive resampling, which is ensured through a martingale condition. In particular, \cite{Fong2023a} requires the following predictive coherence condition, $\E\left[P_{N+1}(y) \mid Y_{1:N}\right] = P_N(y)$
for each $y \in \R$ and all $N \geq n$. 
This then implies that the sequence of imputed observations $Y_{n+1:\infty}$ is \textit{conditionally identically distributed} (c.i.d.), and \cite{Berti2004} show that the c.i.d. condition is sufficient for the existence of a $P_\infty$ which $P_N \to P_\infty$ weakly almost surely. This c.i.d. condition unfortunately greatly constrains the class of predictive distributions one can use for the MP. The MP also has close connections to the Bayesian bootstrap of \cite{Rubin1981}, which has recently had a resurgence in popularity, e.g. \cite{Fong2019a, Nie2023}. Other nonparametric MPs have been suggested in \cite{Cui2023,Cui2024} and \cite{Walker2024}. Parametric versions of the MP have also been introduced in \cite{Walker2022,Holmes2023}, where a  parametric predictive distribution is utilized for predictive resampling. The martingale is now directly the parameter of interest $\theta$, ensuring convergence of an estimator $\theta_N \to \theta_\infty$ instead of $P_N \to P_\infty$, which relaxes the c.i.d. condition.

\cite{Fong2023a} enforce the c.i.d. condition using a nonparametric recursive update for $P_N$ based on the bivariate copula as introduced in \cite{Hahn2018}. This recursive update is inspired by the Dirichlet process mixture model, and takes the form
\begin{align}\label{eq:copula_update}
    P_{N+1}(y) = (1-\alpha_{N+1})P_N(y) + \alpha_{N+1}H_\rho\left(P_N(y), P_N(Y_{N+1})\right),
\end{align}
where $H_\rho(u,v)$ is the conditional distribution of the bivariate Gaussian copula of the form
\begin{align}\label{eq:condit_copula}
H_\rho(u,v) = \Phi\left\{\frac{\Phi^{-1}\left(u\right) - \rho \Phi^{-1}(v)}{\sqrt{1-\rho^2}}\right\},
\end{align}
and  $\rho \in (0,1)$ is the correlation term and $\Phi$ and $\Phi^{-1}$ are the standard normal CDF and its inverse respectively. The weights are usually chosen $\alpha_{N} = O(N^{-1})$ in order for the update to approach the independence copula as $N \to \infty$. Intuitively, the second term in the sum is akin to a kernel centred at $Y_{N+1}$ as in the traditional kernel density estimate, but the main difference is that the kernel is adaptive as it depends on $P_N$. 

The nonparametric MP based on (\ref{eq:copula_update}) faces a few challenges. 
Firstly, estimating a probability density constrains the update due to the need to integrate to 1.
Secondly, although extensions to conditional density estimation are provided in \cite{Fong2023a}, it is challenging to incorporate structure in the regression setting (e.g. linearity), due to the stringent c.i.d. condition. Finally, studying the asymptotic properties of the nonparametric MP based on the copula is challenging, due to working in the space of probability measures \citep{Berti2004}. We will see that the QMP alleviates these challenges faced by the nonparametric MP outlined in the previous section as the space of quantile function estimates is much easier to handle.\vspace{2mm}

\begin{figure}[ht]
\small
  \centering
  \begin{minipage}{.44\linewidth}
\begin{algorithm}[H]
{Compute $P_n$ from the observed data $Y_{1:n}$}\\
  \For{$b \gets 1$ \textnormal{\textbf{to}} $B$}{
  \For{$i \gets n+1$ \textnormal{\textbf{to}} $N$} {
   Sample $Y_{i}  \sim {P}_{i-1}$\\
   Update  $P_{i} \mapsfrom \left\{P_{i-1}, Y_{i}\right\} $
  }
  Evaluate   ${\theta}^{(b)}_N =\theta(Y_{1:N})$ or  $\theta(P_N)$ }
 {Return $\{\theta_N^{(1)},\ldots,\theta_N^{(B)} \}$}
\caption{Predictive resampling}\label{alg:predictive_resampling}
\end{algorithm}
  \end{minipage}\hspace{2mm}
    \begin{minipage}{.54\linewidth}
\begin{algorithm}[H]
{Compute $Q_n$ from the observed data $Y_{1:n}$}\\
  \For{$b \gets 1$ \textnormal{\textbf{to}} $B$}{
  \For{$i \gets n+1$ \textnormal{\textbf{to}} $N$} {
   Sample $V_{i}  \sim \mathcal{U}(0,1)$; compute $Y_i = Q_{i-1}(V_i)$\\
   Update  $Q_{i} \mapsfrom \left\{Q_{i-1}, Y_{i}\right\} $
  }
  Evaluate   ${\theta}^{(b)}_N =\theta(Y_{1:N})$ or   $\theta(Q_N^{\dagger})$  }
 {Return $\{\theta_N^{(1)},\ldots,\theta_N^{(B)} \}$}
\caption{Quantile predictive resampling}\label{alg:quantile_predictive_resampling}
\end{algorithm}
  \end{minipage}
\end{figure}

\subsection{Quantile predictive resampling}\label{sec:PR}
In this section, we introduce the \textit{quantile martingale posterior} framework, which builds on the ideas of \cite{Fong2023a} to address quantile estimation. The core idea is to utilize a recursive update for an estimate of the quantile function, which serves as our predictive imputation machine. For now, assume that we have an estimate of the quantile function, $Q_n:(0,1) \to \R$, computed from the i.i.d. observations $Y_{1:n}$. We will address how to obtain $Q_n$ later, and will assume that $Q_n$ is continuous and bounded, but not necessarily monotonic. Given $Q_n$, consider the following sampling scheme:
\begin{enumerate}
    \item Simulate $V_{n+1} \sim \mathcal{U}(0,1)$
    \item Compute $Y_{n+1} = Q_n(V_{n+1})$.
\end{enumerate}
Viewed in this manner, $Q_n$ is simply a tool for simulating $Y_{n+1}$, and can thus be viewed as a \textit{generative predictive sampler}. This is analogous to the approach of the generative adversarial network \citep{Goodfellow2020}, where accurate samples are generated by passing noise through a neural network instead of estimating the density. It is also not challenging to see that $Y_{n+1}$ is in fact distributed according to $P_n$ with the corresponding rearranged quantile function $Q^{\dagger}_n$, which is indeed monotonic. The quantile function estimate $Q_n$ thus provides us a means to simulate from the rearranged predictive distribution directly, without the need to actually compute the rearrangement operator (\ref{eq:rearrangement}). This procedure is illustrated in {Figure \ref{fig:rearrangement} (left)}, where we draw $V_{n+1} \sim \mathcal{U}(0,1)$ and read off the corresponding value $Q_n(V_{n+1})$ to get a sample. The quantile function  and CDF of $Y_{n+1}$ is then $Q_n^{\dagger}$ and $P_n$, as shown in red in {Figures \ref{fig:rearrangement} (left)} and {\ref{fig:rearrangement} (right)} respectively. We will also refer to $Q_n^{\dagger}$ and $P_n$ as the \textit{implicit} quantile function and CDF respectively.
We provide more intuition as to what rearrangement implies for the resulting QMP in Section \ref{sec:theory}.

Given the further specification of a recursive update $\left(Q_n,Y_{n+1}\right) \to Q_{n+1}$, and assuming appropriate conditions on the update, we will then have all the ingredients needed to sample from the QMP, which is outlined in {Algorithm \ref{alg:quantile_predictive_resampling}}. The main difference to the original MP is that we keep track of a quantile function estimate, which can be interpreted as a generative predictive sampler, and it does not need to satisfy the monotonicity property. For now, we leave the update unspecified, but we will investigate the appropriate elicitation of the update function in detail starting in Section \ref{sec:recursive}.  Compared to the original MP, the class of possible predictives for the QMP is much broader, as we only require $Q_n$ to be bounded and continuous, whereas the original MP requires estimating a probability density function. We will see in Section \ref{sec:theory} that this relaxation allows for comprehensive theoretical study of the QMP, and 
Section \ref{sec:quantreg} will illustrate the simplicity of incorporating covariates for conditional quantile estimation.\vspace{5mm}

\begin{figure}[!h]
\begin{center}
\includegraphics[width=0.9\textwidth]{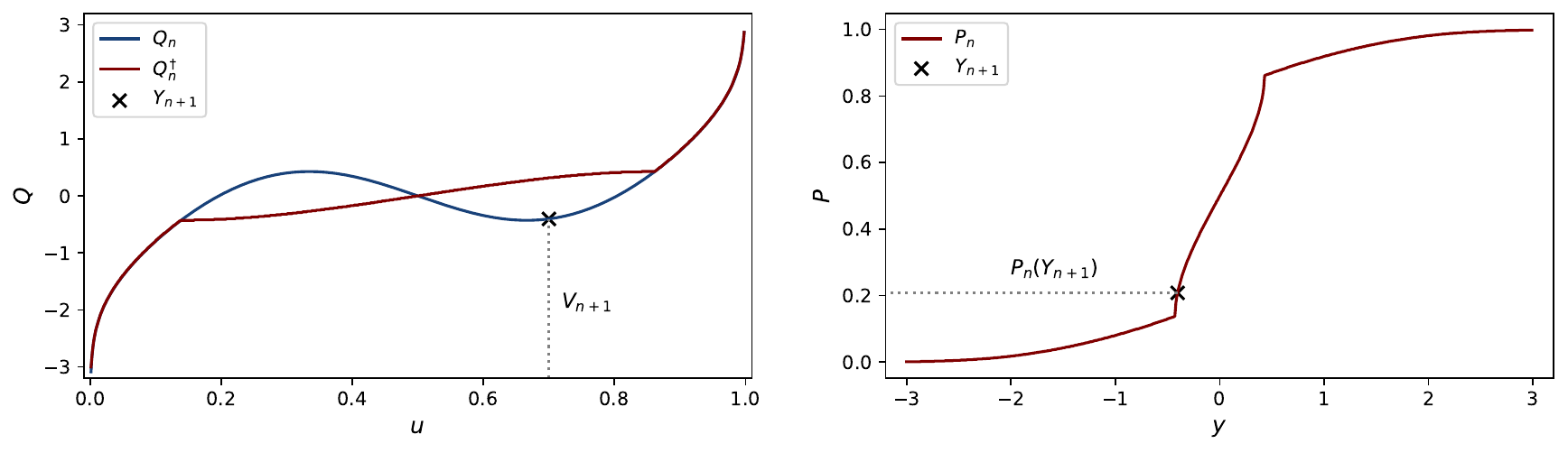}
\end{center}
\caption{Plot of  (Left) $Q_n$ and rearranged $Q_n^{\dagger}$ and (Right) implicit $P_n$; for  $V_{n+1} =0.7$ which gives $Y_{n+1}\approx -0.4$. Note that $P_n(Y_{n+1}) \neq V_{n+1}$, although $P_n(Y_{n+1}) \overset{d}{=} V_{n+1}$ as both are distributed according to $\mathcal{U}(0,1)$.} 
\label{fig:rearrangement}
\end{figure}

\subsection{Martingale condition and coherence}\label{sec:martingale}
In order for the QMP to be well-specified under the scheme of {Algorithm \ref{alg:quantile_predictive_resampling}}, we will require an analogous condition to the c.i.d. property for the original nonparametric MP. Unsurprisingly, we find that a martingale condition is once again sufficient for existence of the MP, which corresponds to an interesting coherence property on the generative predictive.

While we will leave the technical details for Section \ref{sec:theory}, we briefly outline the martingale condition here. In particular, we require a similar condition on the {estimate} of the quantile:
\begin{align}\label{eq:martingale}
   \E\left[Q_{N+1}(u)  \mid Y_{1:N} \right] = Q_N(u)
\end{align}
for each $u\in (0,1)$ for all $N \geq n$. Here, the conditional expectation is over $Y_{N+1} = Q_N(V_{N+1})$, so we are averaging over the r.v. $V_{N+1}\sim \mathcal{U}(0,1)$. Under assumptions on the recursive update, we show in Section \ref{sec:theory} that the limiting empirical distribution of $Y_{n+1:\infty}$ converges to some $P_\infty$ weakly almost surely, which has a corresponding random quantile function
 $Q_\infty^{\dagger}$. This kind of convergence also has close connections to exchangeability.
The QMP is then the distribution of $Q_\infty^{\dagger}$ or $P_\infty$ (or appropriate functionals thereof). The theory requires technical tools from the function-valued martingales and rearrangement operator literature, but intuitively, the above weak convergence implies that the QMP over the unknown quantile function exists. Furthermore, we will see that the additional flexibility gained in working with quantile functions instead of CDFs will allow us to quantify the convergence of $Q_N^{\dagger}$ to $Q_{\infty}^{\dagger}$ more precisely.

Previously, \cite{Fong2023a} highlighted that the c.i.d. condition was equivalent to predictive coherence, as the posterior mean of the predictive CDF $P_\infty(y)$ is equal to the initial estimate $P_n(y)$. In the QMP case, we will instead have a kind of \textit{generative} coherence. To interpret this, suppose we are interested in drawing a sample from $Y \sim P_{\infty}$. One can draw $V \sim \mathcal{U}(0,1)$ and plug it into the limiting generative predictive $Y= Q_\infty(V)$, which then gives $Y \sim P_\infty$. From (\ref{eq:martingale}), we have that $\E\left[Y \mid Y_{1:n}\right] = Q_n(V)$,
which suggests that the posterior mean of a sample from $P_{\infty}$ is equal to a sample from $P_n$ almost surely. We thus have not introduced any bias in samples from $P_n$ through our recursive update, which amounts to a generative coherence property.

\section{Recursive quantile estimator}\label{sec:recursive}
\subsection{Stochastic approximation}
We now introduce a novel recursive update to estimate continuous quantile functions. Recursive updates are particularly well-suited for the QMP, as it gives us both a means for predictive resampling and for ensuring the necessary martingale condition, which will we discuss in depth shortly. The motivation is based on the connection between recursive methods and \textit{stochastic approximation} \citep{Lai2003}. \cite{Hahn2018, Fong2023a} highlight the interpretation of (\ref{eq:copula_update}) as a stochastic approximation of the CDF/density, and the parametric MP of \cite{Walker2022,Holmes2023}
relies on a stochastic gradient descent approach to update the parameter $\theta_N$.

We take a similar approach here, leveraging a stochastic approximation estimate of the quantile function, which has also been investigated in works such as \cite{Aboubacar2014,Kohler2014} and \cite{Chen2023} in the non-Bayesian setting. One can define the quantile at $u \in (0,1)$ as
 $Q^*(u) = \argmin_q \int \rho_u(y - q) \, dP^*(y)$
where $\rho_u(z) =  z\left(u - \mathbbm{1}\left(z \leq 0 \right) \right)$ is the familiar check loss. Although the check loss is not differentiable at $z = 0$, one can still utilize the sub-gradient, and define the recursive update
\begin{align}\label{eq:recursive_quantile_freq}
    Q_{n+1}(u) = Q_{n}(u) + \alpha_{n+1}\left[u - \mathbbm{1}\left(Y_{n+1} \leq Q_{n}(u)\right)\right]
\end{align}
where $\alpha_n$ is a sequence of decreasing weights chosen so that 
\begin{align}\label{eq:alpha_conditions}
    \sum_{i = 1}^\infty \alpha_i = \infty, \quad \sum_{i = 1}^\infty \alpha_i^2 < \infty
\end{align}
as is standard in stochastic approximation.
One can show that this is indeed a consistent estimator under some assumptions, as the second condition on $\alpha_i$ ensures the algorithm converges, and the first condition ensures initial conditions are forgotten and we converge to the minimizer. 

There are however two main issues with (\ref{eq:recursive_quantile_freq}) that cause it to be unsuitable for the QMP, which we address now. Firstly, we are interested in the case where $Q^*$ is continuous, whilst (\ref{eq:recursive_quantile_freq}) will recover a discontinuous estimate of the quantile. Secondly, for the purposes of the QMP, there is the subtle but important point that (\ref{eq:recursive_quantile_freq}) does not imply a martingale for $Q_n(u)$ under the quantile predictive resampling, which will be important for showing the existence of the QMP.

\subsection{Recursive copula update}
We now describe a recursive estimate of the quantile function which returns both continuous curves and satisfies the required martingale condition. To begin, we highlight the connection between the recursive update of the predictive CDF based on the bivariate Gaussian copula as shown in (\ref{eq:copula_update}) and the empirical distribution and Bayesian bootstrap. The empirical distribution can be written recursively:
\begin{align*}
    P_{N+1}(y) = (1-\alpha_{N+1})P_N(y) + \alpha_{N+1}\mathbbm{1}\left(Y_{N+1} \leq y\right)
\end{align*}
where $\alpha_{N} = N^{-1}$. By comparing the above update to (\ref{eq:copula_update}), we see that the indicator term $\mathbbm{1}\left(Y_{N+1} \leq y\right)$ corresponds to the term $H_\rho\left(P_N(y), P_N(Y_{N+1})\right)$. In fact, we have that $\lim_{\rho \to 1} H_\rho(u,v) = \mathbbm{1}(v \leq u)$.
As a result, (\ref{eq:copula_update}) can be viewed as a smoothed version of the empirical distribution update.

Inspired by this connection, we apply the same intuition to extend (\ref{eq:recursive_quantile_freq}) into a form that is suitable for the QMP. Our suggested recursive update of the quantile function estimate is then
\begin{align}\label{eq:quantile_copula}
    Q_{N+1}(u) = Q_N(u) + \alpha_{N+1}\left[u - H_{\rho_{N+1}}\left(u, P_N(Y_{N+1})\right)\right],
\end{align}
where $P_N$ is the rearranged CDF function of $Q_N$, and $\alpha_N$ satisfies (\ref{eq:alpha_conditions}). We postpone discussion on the sequence $\rho_N \in (0,1)$ except for requiring that $\rho_N \to 1$ as $N$ increases, which is a key difference between the QMP and the regular MP, as the bandwidth $\rho$ is kept fixed in the latter.
Intuitively, the update (\ref{eq:quantile_copula}) is akin to a Bayesian analogue of a recursive kernel-smoothed quantile estimator (e.g. \cite{Aboubacar2014}) which arises naturally from a stochastic optimization viewpoint.
 {Figure \ref{fig:quant_copula} (left)} illustrates the form of $[u - H_\rho(u,v)]$ for increasing values of $\rho$, which we see approaches the limiting case $[u-\mathbbm{1}(v \leq u)]$. We can thus directly view $[u - H_\rho(u,v)]$ as a continuous relaxation of $[u-\mathbbm{1}(v \leq u)]$. {Figure \ref{fig:quant_copula} (right)} then illustrates the effect of updating with an observation with $P_N(Y_{N+1}) = v$.

\begin{figure}[!h]
\begin{center}
\includegraphics[width=0.9\textwidth]{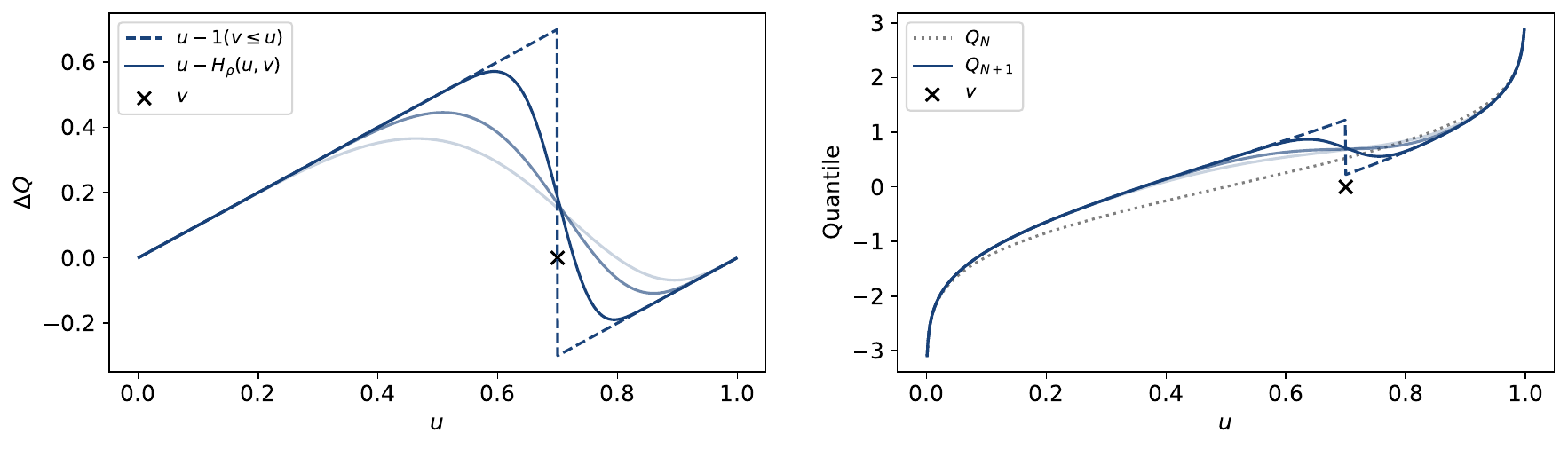}
\end{center}
\caption{Plot of  (Left)  $[u - H_\rho\left(u, v\right)]$ and (Right) updated quantile estimate $Q_{N+1}(u)$ and old $Q_{N}(u)$ (\dotted); for $v =0.7$ with $\rho = (0.9,0.95,0.99)$ (\fulllow,\fullmid,\full) and $[u - \mathbbm{1}(v \leq u)]$ (\dashed).  } 
\label{fig:quant_copula}
\end{figure}
Unlike the non-Bayesian case, much care is needed to ensure the coherence condition discussed in Section \ref{sec:martingale} is satisfied.
To this end, the rearrangement step is crucial for obtaining the martingale under predictive resampling, as $Q_N$ may not be monotonic. We highlight the key property that both  $[u - H_\rho(u,v)]$ and $[u-\mathbbm{1}(v \leq u)]$ are not monotonic, so it is possible for $Q_{N+1}$ to not be monotonic even if $Q_N$ is. This is illustrated in {Figure \ref{fig:quant_copula} (right)}, where for $\rho$  close to 1, we have non-monotonicity of the updated $Q_{N+1}$.

To understand the importance of rearrangement for the martingale condition, we focus on the step function case, and contrast between $\left[u - \mathbbm{1}(Y_{N+1} \leq Q_{N}(u))\right]$ versus $[u - \mathbbm{1}(P_N(Y_{N+1}) \leq u)]$, where the first case is from (\ref{eq:recursive_quantile_freq}) and the latter is from (\ref{eq:quantile_copula}) with $\rho \to 1$. If $Q_N$ is is a proper quantile function, i.e. it is motonically increasing and left-continuous, then we have $Y_{N+1} \leq Q_N(u) \Leftrightarrow P_N(Y_{N+1}) \leq u$. In this case, it is thus clear that the two updates are equivalent. However, when $Q_N$ is not monotonic, the two updates will differ. To see why the latter update is more suitable, consider $Y_{N+1} \sim P_N$ where $P_N$ is continuous. Under predictive resampling, we have $P_N(Y_{N+1}) \sim \mathcal{U}(0,1)$, so in the latter case we have
\begin{align*}
    \E\left[u - \mathbbm{1}(P_N(Y_{N+1}) \leq u) \mid Y_{1:N}\right] = u - \int_0^1 \mathbbm{1}\left(v \leq u\right)\, dv = 0.
\end{align*}
In the first case however, we have
$\E\left[u - \mathbbm{1}(Y_{N+1} \leq Q_{N}(u)) \mid Y_{1:N}\right]  = u - P_N\left(Q_N(u)\right) \neq 0.$
The issue arises as $P_N\left(Q_N(u)\right)\neq u$ when $Q_N$ is not monotonic, and the size of the deviation is related to how non-monotonic $Q_N$ is. Finally, the above logic extends to the smooth case, where one can show that
\begin{align*}
    \E\left[u - H_{\rho_{N+1}}(u,P_N(Y_{N+1})) \mid Y_{1:N}\right] = u - \int_0^1 H_{\rho_{N+1}}(u,v)\, dv = 0.
\end{align*}
This follows as $\int_0^{v'} H_\rho(u,v) \, dv = C_\rho(u,v')$ where $C_\rho$ is the bivariate Gaussian copula, and taking $v' \to 1$ returns $C_\rho(u,1) = u$. As a result, the recursive update (\ref{eq:quantile_copula}) satisfies the required martingale condition from Section \ref{sec:martingale} when $P_N$ is continuous. This once again highlights the bivariate copula as a versatile building block for Bayesian nonparametrics, especially for smooth functions.

\subsection{Posterior sampling from the QMP}
A nice property of the QMP is that rearrangement is automatically handled during predictive resampling. To see this, we revisit the quantile predictive resampling scheme, where $Y_{N+1} = Q_N(V_{N+1})$ for $V_{N+1} \sim\mathcal{U}(0,1)$, resulting in $Y_{N+1} \sim P_N$. The recursive quantile update only relies on $Y_{N+1}$ through $P_N\left(Y_{N+1}\right)$, 
and again we have $P_N\left(Y_{N+1}\right) \sim \mathcal{U}(0,1)$ if $P_N$ is continuous. 
To carry out one step of predictive resampling, it is then simply a matter of simulating $V_{N+1} \sim \mathcal{U}(0,1)$ and computing
\begin{align}\label{eq:quantile_copula_PR}
    Q_{N+1}(u) = Q_N(u) + \alpha_{N+1}\left[u - H_{\rho_{N+1}}\left(u, V_{N+1}\right)\right].
\end{align}
Once again, posterior sampling only depends on the simulation of uniform r.v.s, which is extremely cheap, and does not require complex MCMC schemes. In practice, the update (\ref{eq:quantile_copula_PR}) truncated at some reasonably large $N \gg n$ is sufficient for convergence to $Q_\infty$. Looking ahead, we will shortly see that a truncation may not even be necessary as we can identify the limiting law of $Q_{\infty} - Q_N$.

The advantages of predictive resampling over traditional Bayes is clear and outlined in \cite{Fong2023a}, which we now recap. Firstly, we can completely avoid issues of mixing and serial computation that faces MCMC, relying only on uniform r.v.s and simple computations to provide i.i.d. posterior samples. The update (\ref{eq:quantile_copula_PR}) is also particularly easy to parallelize, both across samples and across different values of $u$, allowing us to easily take advantage of modern GPU compute.  Finally, the only source of approximation comes from a relatively harmless truncation step.
Interestingly, sampling from the QMP has additional advantages over the regular MP due to working in the space of quantile function estimates. The first is that samples from $P_{\infty}$ can be obtained directly when quantile predictive resampling, as we are working with a generative predictive. Specifically, if we replace $u$ in (\ref{eq:copula_update}) with $U\sim \mathcal{U}(0,1)$, then computing the recursive update will transform $Y_N = Q_N(U) \sim P_N$ into $Y_{N+1} = Q_{N+1}(U) \sim P_{N+1}$. In practice, we can thus initialize a vector $U_{1:T}\iid \mathcal{U}(0,1)$ and pass it through the update (\ref{eq:quantile_copula_PR}) for $N\geq n+1$. 
A second benefit is that theoretical study will be more comprehensive compared to the regular MP, as the QMP only requires $u$ and $V_{N+1}$ as inputs for the update. In particular, we will be able to show a weak convergence result which allows even faster approximate sampling from the QMP based on a Gaussian process (GP). 

\subsection{Initial estimate $Q_n$}\label{sec:initial}
Up until now, we have not discussed how one would obtain the initial estimate $Q_n$ from the observed data $Y_{1:n}$, from which predictive resampling begins. We emphasize here that $n$ is the number of observed i.i.d. samples from $P^*$ or $Q^*$, whereas $N$ is used to index predictively resampled future samples. Of course, $Q_n$ is of utmost importance as it governs the central tendency of the QMP. 
Following \cite{Fong2023a}, it is the most coherent if $Q_n$ is obtained by applying the update (\ref{eq:quantile_copula}) to the i.i.d. observations $Y_{1:n}$, starting from some initial $Q_0$ (e.g. the quantile function of $\mathcal{U}[a,b]$ where $[a,b]$ depends on the dataset). The entire statistical model is then governed by the update (\ref{eq:quantile_copula}) and $Q_0$, which is closely connected to the prequential framework of \cite{Dawid1984}. One can then interpret Bayesian inference as applying the update (\ref{eq:quantile_copula}) up until the final observation $Y_n$, then imputing $Y_{n+1:\infty}$ from $Q_n$ once we `run out' of observed data points.

There is however a slight intricacy specific to the quantile estimation case, which suggests that a variant of (\ref{eq:quantile_copula}) may be more desirable when estimating $Q_n$ from i.i.d. observations. Consider the \textit{rearranged} update:
\begin{align}\label{eq:rearr_quantile_copula}
    Q_{i+1}(u) &= Q^{\dagger}_i(u) + \alpha_{i+1}\left[u - H_{\rho_{i+1}}\left(u, P_i(Y_{i+1})\right)\right],
\end{align}
for $i = 1,\ldots,n-1$. The key difference is that we require an additional rearrangement of $Q_i$ after each update. We will see in Section \ref{sec:consistency} that applying the update (\ref{eq:rearr_quantile_copula})  gives us consistency of $Q_n^{\dagger}$ at $Q^*$, which we have been unable to show for (\ref{eq:quantile_copula}). The intuition for this discrepancy lies in the stochastic gradient descent interpretation of the update. 
In both (\ref{eq:quantile_copula}) and (\ref{eq:rearr_quantile_copula}), the `gradient' is computed at the rearranged version of $Q_i^{\dagger}$ (through $P_i$), but the gradient update is applied to $Q_i$ instead of $Q_i^{\dagger}$ in (\ref{eq:quantile_copula}), which may impede consistency. 

Under predictive resampling however, the non-linear rearrangement step in (\ref{eq:rearr_quantile_copula}) would cause $Q_N$ to no longer be a martingale, which is undesirable for the QMP. As a result, we recommend using the update $(\ref{eq:rearr_quantile_copula})$ for the i.i.d. observations to obtain the initial  $Q_n^{\dagger}$, then carrying out predictive resampling with $(\ref{eq:quantile_copula_PR})$ for imputing $Y_{n+1},Y_{n+2},\ldots$. Under this scheme, $(\ref{eq:rearr_quantile_copula})$ will ensure frequentist consistency while $(\ref{eq:quantile_copula_PR})$ will ensure that the QMP exists under predictive resampling. This slight incoherency appears to be the price that we pay for working with quantile function estimates, which are well known to have issues related to monotonicity of estimates as we discussed. Fortunately, we find that in practice there is not too much difference in the estimated $Q_n^{\dagger}$ obtained through (\ref{eq:quantile_copula_PR}) or (\ref{eq:rearr_quantile_copula}) as long as $\rho_N$ is chosen to not approach $1$ too quickly (which we discuss shortly), so the above concern is perhaps more theoretical in nature.

\subsection{Algorithm}
We now
summarize the QMP method, and  postpone the setting of $\rho_i$ and $\alpha_i$ and approximate sampling to Section \ref{sec:implications}.
{Algorithms \ref{alg:fit}} and {\ref{alg:QMP}} below illustrate the full process of obtaining the QMP. Like with the regular MP, there is a distinct separation of estimation and obtaining uncertainty, which is more akin to frequentist methods. In practice, it may be desirable to average the output of {Algorithm \ref{alg:fit}} over multiple permutations of the data (e.g. $10$) if it is desirable for the initial estimate of the QMP $Q_n$ to be permutation invariant. Due to the expediency of the update, this is not too restrictive computationally, and no permutation-averaging is required for predictive resampling due to asymptotic exchangeability (discussed in Section \ref{sec:prob}).
We will require a grid of $u$-values on which we compute the quantile estimates, and this also governs the `resolution' of our samples. We find that a grid of $200$ evenly spaced points from $[0,1]$ works well in practice.
The number of future samples $N$ can be set by monitoring the convergence of $Q_N$, and we see that $N \approx n + 5000$ is sufficient in practice. {Algorithm \ref{alg:QMP}} can be easily executed on a GPU,  as sampling consists of many simple operations which can be computed in parallel. However, we will see in Section \ref{sec:approx} than {Algorithm \ref{alg:QMP}}
can be approximated even more quickly using a GP. 

\begin{figure}[!ht]
\small
  \centering
  \begin{minipage}{.44\linewidth}
\begin{algorithm}[H]
{Initialize $Q_0$}\\
{Data is $Y_1,\ldots,Y_n$}\\
\For{$i \gets 1$ \textnormal{\textbf{to}} $n$ }{
Compute $V_i = P_{i-1}(Y_{i})$\\
{$Q_{i}(u) = Q^{\dagger}_{i-1}(u) + \alpha_{i}\left[u - H_{\rho_{i}}\left(u, V_i\right)\right]$}
}
 {Return $Q_n^{\dagger}$}
\caption{Estimation of quantile function}\label{alg:fit}
\end{algorithm}
\end{minipage}
\hspace{3mm}
\begin{minipage}{.52\linewidth}
\begin{algorithm}[H]
{Initialize $Q_n$ from {Algorithm \ref{alg:fit}}}\\
\For{$b \gets 1$ \textnormal{\textbf{to}} $B$}{
\For{$i \gets n+1$ \textnormal{\textbf{to}} $N$}{
{Draw $V^{(b)}_i \sim \mathcal{U}(0,1)$}\\
{{$Q^{(b)}_{i}(u) = Q^{(b)}_{i-1}(u) + \alpha_{i}\left[u - H_{\rho_{i}}\left(u, V^{(b)}_i\right)\right]$}   
}}}
 {Return $\left\{{Q_N^{\dagger}}^{(1)},\ldots,{Q_N^{\dagger}}^{(B)}\right\}$}
\caption{QMP sampling}\label{alg:QMP}
\end{algorithm}
\end{minipage}
\end{figure}\vspace{-5mm}

\section{Theory}\label{sec:theory}

For the original MP, asymptotic theory was challenging due to the complex dependence in the update. Interestingly, the lack of dependence on the predictive of the first input into $H_\rho(u,v)$ helps to simplify the theory. We distinguish between two asymptotic regimes under the MP framework. The first is the convergence of $P_N \to P_\infty$ from predictive resampling, starting at $N = n+1$, which we term \textit{predictive} asymptotics. This is closely connected to Doob's consistency theorem \citep{Doob1949}, and is discussed nicely in \cite{Fortini2024}. The second is the classical \textit{frequentist} asymptotics, where we study the convergence of the MP or relevant estimates (such as $P_n$) as $n \to \infty$, where $n$ is the number of i.i.d. observations $Y_{1:n} \iid P^*$.  We will now investigate both for the QMP. Full derivations are postponed to the Appendix, although we provide proof outlines when they are particularly insightful. 

\subsection{Predictive asymptotics}
To study the predictive asymptotics of the QMP, we will rely on the theory of \textit{function-valued} martingales \citep{Pisier2016}. Although the theory is  technical, the results and conditions are  insightful and simple to interpret.
We begin this subsection with prerequisite theory from functional analysis, with details deferred to Section \ref{app:sec_prereq} in the Appendix. As the space of possible  of quantile function {estimates} $Q_N$ is quite large due to not requiring monotonicity, we will have sufficient structure to borrow powerful results from functional analysis. Let $B$ be a Banach space of real-valued functions $f:(0,1) \to \R$ with norm $\|\cdot \|_B$, which $Q_N$ will belong to. In particular, we will work with two very useful spaces that lend themselves to easy study of recursive updates for $Q_N$. The first is the $L^2((0,1))$ space, which consists of square-integrable functions with norm $\|f\|_{2} = \sqrt{\int f(u)^2 \, du}$. 
We write the $L^2$ distance between two elements $f, g\in L^2((0,1))$ as $d_2(f,g) = \|f-g\|_2$. The second is the Sobolev space $H^1((0,1))$ consisting of functions $f \in L^2((0,1))$  which are weakly differentiable with weak derivative $f' \in L^2((0,1))$, which shares properties with the regular derivative. 
A very useful property in the 1-dimensional case is that if $f \in H^1((0,1))$, then $f$ is equal almost everywhere to an absolutely continuous function. 
The norm in the Sobolev space $H^1((0,1))$ is then
$\|f\|_{1,2} = \sqrt{\| f\|_2^2 + \|f'\|_2^2}$, with corresponding distance $d_{1,2}(f,g) = \|f-g\|_{1,2}$. Both $L^2$ and $H^1$ are Hilbert spaces, which will allow us to apply function-valued martingale convergence theorems easily.

Through {Algorithm \ref{alg:QMP}}, $Q_N$ will evolve randomly, so we require a probability space on $B$-valued objects. Let $(\Omega,\mathcal{F},\mathbb{P})$ denote the probability space. A r.v. in this case is a function $f: \Omega \to B$ which is Bochner measurable and takes values in $B$, so realizations of the r.v. are functions, that is $f(\omega)\in B$ for $\omega \in \Omega$. We write $L^p\left(\Omega, \mathcal{F}, \mathbb{P}; B\right)$ or $L^p(B)$ as the space of  Bochner measurable functions with $\E\left[\|f\|_B^p\right] =\int \|f\|_B^p d\mathbb{P} < \infty$ for some $1 \leq p < \infty$, where we will mostly be using $p = 2$. The norm of this space is defined as $\|f\|_{L^p(B)} = \left(\E\left[ \|f\|_B^p \right]\right)^{1/p}$, and functions that are equal a.e. are identified. In our use cases, expectations within this  space can be evaluated pointwise on the function, so
the condition (\ref{eq:martingale}) is enough to ensure $Q_N$ is a function-valued martingale. Details regarding (conditional) expectations are in Section \ref{app:sec_ban_mart} of the Appendix. 

\subsubsection{Existence and support of the QMP}
 We now study the convergence of the sequence $Q_{n+1}, Q_{n+2},\ldots$ under quantile predictive resampling with {Algorithm \ref{alg:QMP}}, which will inform us on properties of the QMP. The main theorem we will use is the  convergence theorem for Banach space valued martingales, which we cover in detail in Section \ref{app:sec_ban_mart} of the Appendix.
 
We will need the following assumptions on $Q_n:(0,1)\to \R$, which is the initial estimate of the quantile function that we predictive resample from, as well as an assumption on the copula update.

\begin{assumption}[Bounded in $L^2$]\label{as:L2}
    $Q_n$ satisfies $\|Q_n\|_2 < \infty$.
\end{assumption}
\begin{assumption}[Weak derivatives bounded in $L^2$]\label{as:weak_deriv} 
$Q_n$ is weakly differentiable with weak derivative $q_n$ which satisfies  $\|q_n\|_2 < \infty$, so $\|Q_n\|_{1,2} <\infty$.
\end{assumption}
\begin{assumption}[Learning rate]\label{as:alpha}
    The learning rate sequence takes the form $\alpha_i= a(i+1)^{-1}$ for some $a \in (0,\infty)$ for $i \geq 1$.
\end{assumption}
\begin{assumption}[Bandwidth]\label{as:bandwidth}
    The bandwidth sequence takes the form  $\rho_i =\sqrt{1-ci^{-k}}$  where $0 < k < 1$ and $0 < c< 1$ for $i \geq 1$. 
\end{assumption}

Intuitively, {Assumptions \ref{as:L2}} and {\ref{as:weak_deriv}} ensure that the initial sampler $Q_n$ is sufficiently well-behaved. {Assumption \ref{as:alpha}} satisfies (\ref{eq:alpha_conditions}) which is standard for stochastic approximation. {Assumption \ref{as:bandwidth}} ensures that $\rho_N$ does not approach $1$ too quickly, i.e. the smoothness of the update function does not decrease too quickly.
 
\begin{proposition}\label{prop:L2_mart}
   Under {Assumptions \ref{as:L2}} and {\ref{as:alpha}}, there exists a random function $Q_\infty$ with realizations in $L^2((0,1))$   such that 
    $d_2(Q_N,Q_\infty) \to 0$ a.s. 
    
\end{proposition}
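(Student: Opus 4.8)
The plan is to recognise $(Q_N)_{N\ge n}$ as an $L^2$-bounded martingale valued in the Hilbert space $L^2((0,1))$ and apply the vector-valued martingale convergence theorem. First I would fix the filtration $\mathcal{F}_N=\sigma(V_{n+1:N})$ generated by the uniform variates driving {Algorithm \ref{alg:QMP}}, and check that each increment $\Delta_{N+1}:=Q_{N+1}-Q_N=\alpha_{N+1}\,[\,\cdot-H_{\rho_{N+1}}(\cdot,V_{N+1})\,]$ is a bona fide Bochner-measurable $L^2((0,1))$-valued random element: since $H_\rho$ is continuous with values in $(0,1)$, the map $u\mapsto u-H_{\rho_{N+1}}(u,V_{N+1})$ is continuous, bounded by $1$ in modulus, hence in $L^2((0,1))$, and it depends measurably on $V_{N+1}$. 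The pointwise identity already verified in Section \ref{sec:martingale}, $\E[\,u-H_{\rho_{N+1}}(u,V_{N+1})\mid\mathcal{F}_N\,]=u-\int_0^1 H_{\rho_{N+1}}(u,v)\,dv=0$, lifts to $\E[\Delta_{N+1}\mid\mathcal{F}_N]=0$ because in this setting the Bochner conditional expectation commutes with pointwise evaluation (Section \ref{app:sec_ban_mart}), so $Q_N$ is a martingale in $L^2(B)$ with $B=L^2((0,1))$.

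Next I would prove $L^2$-boundedness, i.e. $\sup_N\E\|Q_N\|_2^2<\infty$. Martingale increments are orthogonal in $L^2(\Omega;L^2((0,1)))$ --- for $i<j$, $\E\langle\Delta_i,\Delta_j\rangle_2=\E\langle\Delta_i,\E[\Delta_j\mid\mathcal{F}_{j-1}]\rangle_2=0$, and similarly $\E\langle Q_n,\Delta_j\rangle_2=0$ --- so Pythagoras gives $\E\|Q_N\|_2^2=\|Q_n\|_2^2+\sum_{i=n+1}^N\E\|\Delta_i\|_2^2$. Pathwise, $\|\Delta_i\|_2^2=\alpha_i^2\int_0^1[u-H_{\rho_i}(u,V_i)]^2\,du\le\alpha_i^2$, and under {Assumption \ref{as:alpha}} we have $\sum_i\alpha_i^2=a^2\sum_i(i+1)^{-2}<\infty$, while $\|Q_n\|_2<\infty$ by {Assumption \ref{as:L2}}; hence $\sup_N\E\|Q_N\|_2^2<\infty$. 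Note that only the square-summability half of (\ref{eq:alpha_conditions}) is needed here; the divergence half plays no role for existence.

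Finally I would invoke the Banach-space martingale convergence theorem of Section \ref{app:sec_ban_mart} (following \citet{Pisier2016}): $L^2((0,1))$ is a Hilbert space, hence reflexive and possessing the Radon--Nikodym property, so every $L^2$-bounded (a fortiori uniformly integrable) martingale valued in it converges a.s. in norm and in $L^2(\Omega;L^2((0,1)))$ to a limit $Q_\infty$ whose realizations lie in $L^2((0,1))$; thus $d_2(Q_N,Q_\infty)=\|Q_N-Q_\infty\|_2\to 0$ a.s. (Alternatively one can dispense with the abstract theorem: expanding in an orthonormal basis $(e_k)$ of $L^2((0,1))$, each $\langle Q_N,e_k\rangle$ is a real $L^2$-bounded martingale, hence a.s. convergent, and the increment bound forces $\sum_{k>K}\E\langle Q_N,e_k\rangle^2$ to be small uniformly in $N$, upgrading coordinatewise convergence to norm convergence.) I do not anticipate a real obstacle: the only subtle points are the measurability/integrability bookkeeping that makes $Q_N$ a legitimate Bochner martingale and the passage from the pointwise to the vector-valued martingale property, both routine given boundedness of $H_\rho$ and handled in the Appendix; the substantive content is just that {Assumptions \ref{as:L2}} and {\ref{as:alpha}} render the increments uniformly bounded in $L^2((0,1))$ with square-summable scales $\alpha_i$.
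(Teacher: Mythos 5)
Your proposal is correct and follows essentially the same route as the paper: establish that $Q_N$ is a Bochner-measurable martingale in the Hilbert space $L^2((0,1))$ (the pointwise-to-vector-valued lift being justified as in {Lemma \ref{app:lem_bochnerl2}}), verify $\sup_N \E\left[\|Q_N\|_2^2\right]<\infty$ from $\|Q_n\|_2<\infty$ and $\sum_i\alpha_i^2<\infty$, and invoke the Hilbert-valued martingale convergence theorem ({Theorem \ref{app:thm_hilbert_mart}}). The only cosmetic difference is that you obtain the bound via orthogonality of increments (Pythagoras in $L^2(\Omega;L^2((0,1)))$) whereas the paper iterates the pointwise conditional second moment and applies Tonelli, which amounts to the same computation.
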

\begin{proof_outline}
    We rely on the martingale convergence theorem for Banach spaces as given in {Theorem \ref{app:thm_hilbert_mart}} in the Appendix. By construction, we have (\ref{eq:martingale}) so $Q_N$ is a martingale. The main condition to check is that $\sup_{N \geq n}\E\left[\|Q_N\|_{2}^2\right] < \infty$, which is detailed in the Appendix.
\end{proof_outline}
The above proposition thus guarantees the existence of the QMP, which is the distribution of $Q_\infty$. Under relatively weak constraints on the predictive update, we can say much more about the support of the QMP.

\begin{theorem}\label{th:sobolev_mart}
     Under {Assumptions \ref{as:L2}}-{\ref{as:bandwidth}}, there exists a random function $Q_\infty$ with realizations in $H^1((0,1))$ such that $d_{1,2}\left(Q_N,Q_\infty\right) \to 0$ a.s. 
\end{theorem}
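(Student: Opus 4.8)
The plan is to mirror the structure of the proof of Proposition~\ref{prop:L2_mart} but now work in the Hilbert space $H^1((0,1))$ rather than $L^2((0,1))$. First I would verify that $Q_N$ remains a martingale when viewed as an $H^1$-valued process: since the martingale identity \eqref{eq:martingale} holds pointwise in $u$ and, under Assumption~\ref{as:weak_deriv}, also for the weak derivatives (differentiating the update \eqref{eq:quantile_copula_PR} in $u$ gives $q_{N+1}(u) = q_N(u) + \alpha_{N+1}[1 - \partial_u H_{\rho_{N+1}}(u,V_{N+1})]$, whose conditional expectation over $V_{N+1}\sim\mathcal{U}(0,1)$ vanishes because $\int_0^1 \partial_u H_\rho(u,v)\,dv = \partial_u\!\int_0^1 H_\rho(u,v)\,dv = \partial_u u = 1$), the process $(Q_N)$ is a martingale in $H^1$. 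Then, by the Banach-space martingale convergence theorem (Theorem~\ref{app:thm_hilbert_mart}), it suffices to show $\sup_{N\ge n}\E[\|Q_N\|_{1,2}^2] < \infty$, i.e. $\sup_{N\ge n}\E[\|Q_N\|_2^2] < \infty$ (already handled in Proposition~\ref{prop:L2_mart}) together with $\sup_{N\ge n}\E[\|q_N\|_2^2] < \infty$.

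The heart of the argument is therefore the uniform $L^2$ bound on the weak derivatives $q_N$. Using the martingale (orthogonality) decomposition, $\E[\|q_{N+1}\|_2^2] = \E[\|q_N\|_2^2] + \alpha_{N+1}^2\,\E[\|r_{N+1}\|_2^2]$ where $r_{N+1}(u) = 1 - \partial_u H_{\rho_{N+1}}(u,V_{N+1})$ is the derivative of the increment term; telescoping gives $\E[\|q_N\|_2^2] = \|q_n\|_2^2 + \sum_{i=n+1}^{N}\alpha_i^2\,\E[\|r_i\|_2^2]$, so it remains to show $\sum_i \alpha_i^2\,\E[\|r_i\|_2^2] < \infty$. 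Here $\alpha_i^2 = a^2(i+1)^{-2}$ by Assumption~\ref{as:alpha}, but $\E[\|r_i\|_2^2] = \int_0^1\!\int_0^1 \big(1 - \partial_u H_{\rho_i}(u,v)\big)^2\,du\,dv$ \emph{grows} as $\rho_i\to 1$, because $\partial_u H_\rho(u,v)$ develops a spike near $u=v$ (in the limit $\rho\to1$ the increment becomes $u - \Ind(v\le u)$, whose derivative is $1 - \delta_{u=v}$, so $\|r_i\|_2^2$ diverges). The key estimate will be to bound this rate: using the explicit form $H_\rho(u,v) = \Phi\!\big((\Phi^{-1}(u) - \rho\Phi^{-1}(v))/\sqrt{1-\rho^2}\big)$, a direct computation shows $\partial_u H_\rho(u,v) = \sqrt{1-\rho^2}^{-1}\,\phi\!\big((\Phi^{-1}(u)-\rho\Phi^{-1}(v))/\sqrt{1-\rho^2}\big)/\phi(\Phi^{-1}(u))$, and one should be able to show $\E[\|r_i\|_2^2] = O\big((1-\rho_i^2)^{-1/2}\big) = O(i^{k/2})$ after the change of variables to Gaussian coordinates and a Gaussian integral. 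Combining with Assumption~\ref{as:bandwidth} ($1-\rho_i^2 = ci^{-k}$ with $0<k<1$) yields $\sum_i \alpha_i^2\,\E[\|r_i\|_2^2] = O\big(\sum_i i^{-2 + k/2}\big)$, which converges since $-2 + k/2 < -1$ for $k < 2$, in particular for $k < 1$. This gives the uniform bound, hence $H^1$-martingale convergence to some $Q_\infty$ with $d_{1,2}(Q_N,Q_\infty)\to 0$ a.s.

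The main obstacle I anticipate is the derivative estimate $\E[\|r_i\|_2^2] = O((1-\rho_i^2)^{-1/2})$: one must carefully control the integral of $\big(1 - \sqrt{1-\rho^2}^{-1}\phi(\cdot)/\phi(\Phi^{-1}(u))\big)^2$ over the unit square as $\rho\to1$, handling both the near-diagonal region (where the integrand blows up like $(1-\rho^2)^{-1}$ but over a shrinking set of width $O(\sqrt{1-\rho^2})$) and the off-diagonal region (where the integrand tends to $1$). The natural route is the substitution $x = \Phi^{-1}(u)$, $y = \Phi^{-1}(v)$, turning $\int r^2\,du\,dv$ into a two-dimensional Gaussian-weighted integral in which the rescaling $(x - \rho y)/\sqrt{1-\rho^2}$ makes the diagonal singularity explicit; the $(1-\rho^2)^{-1/2}$ rate then falls out from the Jacobian of that rescaling. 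A secondary technical point is justifying that differentiation in $u$ commutes with the recursion at the level of weak derivatives and that all the interchanges of $\E$, $\int du$, and $\partial_u$ are legitimate — this follows from Assumption~\ref{as:weak_deriv}, Fubini/Tonelli, and the smoothness of $H_\rho$ for $\rho\in(0,1)$, but should be stated carefully. Finally, one should note the consistency of the two limits: the $L^2$ limit of Proposition~\ref{prop:L2_mart} and the $H^1$ limit here coincide a.e., since $H^1$ convergence implies $L^2$ convergence and limits are unique.
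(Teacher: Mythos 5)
Your overall route is the paper's: view $Q_N$ as an $H^1((0,1))$-valued martingale — checking the martingale identity pointwise both for $Q_N$ and for its weak derivative $q_N(u)=q_n(u)+\sum_i \alpha_i(1-c_{\rho_i}(u,V_i))$, with the measurability and interchange issues handled by scalarization against Riesz representers as in Lemma \ref{app:lem_bochnerh1} — and then invoke Theorem \ref{app:thm_hilbert_mart} after establishing $\sup_{N\geq n}\E[\|Q_N\|_{1,2}^2]<\infty$, which by orthogonality of the martingale increments reduces to summability of $\alpha_i^2\,\E[\|1-c_{\rho_i}(\cdot,V_i)\|_2^2]$. This is exactly the paper's decomposition.

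However, the one estimate you yourself flag as the heart of the argument is stated incorrectly. Since $\partial_u H_\rho(u,v)=c_\rho(u,v)$ is the Gaussian copula density, the paper computes exactly (Lemma \ref{app:lem_cop_squared}) that $\int_0^1\int_0^1 c_\rho(u,v)^2\,du\,dv=(1-\rho^2)^{-1}$, hence
\begin{align*}
\E\left[\|r_i\|_2^2\right]=\int_0^1\int_0^1\left(1-c_{\rho_i}(u,v)\right)^2du\,dv=\frac{\rho_i^2}{1-\rho_i^2}\asymp i^{k},
\end{align*}
not $O\left((1-\rho_i^2)^{-1/2}\right)=O(i^{k/2})$ as you claim. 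Your height-times-width heuristic is valid only pointwise in $u$ in the interior: indeed $\int_0^1 c_\rho(u,v)^2\,dv=(1-\rho^4)^{-1/2}\exp\left(\rho^2 z_u^2/(1+\rho^2)\right)$ with $z_u=\Phi^{-1}(u)$, which is $O((1-\rho^2)^{-1/2})$ for fixed $u$; but the exponential factor — coming from the $\phi(\Phi^{-1}(u))$ in the denominator of your expression for $c_\rho$ — blows up as $u\to 0,1$, and integrating it in $u$ against the uniform measure costs another factor $(1-\rho^2)^{-1/2}$. The error does not sink the theorem: with the correct rate, $\sum_i\alpha_i^2\,\E[\|r_i\|_2^2]\asymp\sum_i i^{-2+k}$, which is finite precisely because Assumption \ref{as:bandwidth} imposes $k<1$. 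So once you replace your anticipated bound by the exact identity of Lemma \ref{app:lem_cop_squared}, your argument closes and coincides with the paper's; but as written the key estimate is false, and your remark that the series converges ``for $k<2$, in particular for $k<1$'' hides the fact that $k<1$ is exactly what this argument requires — it is not slack in Assumption \ref{as:bandwidth}.
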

\begin{proof_outline}
The key here is that the {Assumption \ref{as:bandwidth}} on the bandwidth prevents the expected Sobolev norm from diverging to infinity, i.e. $\sup_{N \geq n} \E\left[\|Q_N\|^2_{1,2}\right]<\infty$. This allows us to apply {Theorem \ref{app:thm_hilbert_mart}} as we did in {Proposition \ref{prop:L2_mart}}.
\end{proof_outline}
\begin{corollary}\label{cor:continuous}
Under {Assumptions \ref{as:L2}}-{\ref{as:bandwidth}}, realizations of $Q_\infty$ are absolutely continuous on $(0,1)$ a.s., up to the equivalence class of $H^1((0,1))$. 
\end{corollary}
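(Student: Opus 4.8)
The plan is to deduce the corollary directly from Theorem \ref{th:sobolev_mart} together with the standard Sobolev embedding fact already quoted in the excerpt: every $f \in H^1((0,1))$ agrees almost everywhere with an absolutely continuous function. First I would invoke Theorem \ref{th:sobolev_mart} to obtain, on an event of probability one, a random limit $Q_\infty$ whose realizations lie in $H^1((0,1))$ with $d_{1,2}(Q_N, Q_\infty)\to 0$. Fix such a realization $Q_\infty(\omega) \in H^1((0,1))$. By the one-dimensional Sobolev embedding (stated in Section \ref{sec:theory}), there is an absolutely continuous representative $\widetilde{Q}_\infty(\omega)$ on $(0,1)$ with $\widetilde{Q}_\infty(\omega) = Q_\infty(\omega)$ Lebesgue-a.e., i.e. the two agree as elements of the equivalence class in $H^1((0,1))$. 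Since this holds for a.e.\ $\omega$, realizations of $Q_\infty$ are a.s.\ absolutely continuous on $(0,1)$ up to the $H^1$ equivalence class, which is exactly the statement.

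The only point requiring a sentence of care is measurability: one should note that the construction $f \mapsto \widetilde f$ picking the absolutely continuous representative can be taken in a measurable way (e.g.\ via $\widetilde f(u) = \limsup_{h \downarrow 0} \frac{1}{h}\int_u^{u+h} f(t)\,dt$, a Borel-measurable functional of $f \in L^2$), so that $\omega \mapsto \widetilde{Q}_\infty(\omega)$ remains a bona fide $B$-valued random variable and nothing is lost in passing to the continuous version. I would mention this explicitly but not belabor it, since it is a routine consequence of the Lebesgue differentiation theorem applied pointwise in $u$.

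I do not anticipate a genuine obstacle here — the corollary is essentially a restatement of Theorem \ref{th:sobolev_mart} through a well-known functional-analytic fact, and the proof is a two-line argument. If anything, the mild subtlety is simply emphasizing that "absolutely continuous" is a statement about a representative, not about the equivalence class as a whole, which is already flagged by the phrase "up to the equivalence class of $H^1((0,1))$" in the statement; so the proof just needs to make that identification transparent.

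\begin{proof}
By Theorem \ref{th:sobolev_mart}, under Assumptions \ref{as:L2}--\ref{as:bandwidth} there is a random function $Q_\infty$ whose realizations lie in $H^1((0,1))$ almost surely. As noted in Section \ref{sec:theory}, any $f \in H^1((0,1))$ is equal almost everywhere to an absolutely continuous function on $(0,1)$; concretely, the representative $\widetilde f(u) = \lim_{h \downarrow 0} h^{-1}\int_u^{u+h} f(t)\, dt$ is absolutely continuous and is a Borel-measurable functional of $f \in L^2((0,1))$, so that applying this construction pointwise in $\omega$ yields a $B$-valued random variable $\widetilde{Q}_\infty$ with $\widetilde{Q}_\infty = Q_\infty$ in $H^1((0,1))$ almost surely. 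Hence, for almost every $\omega$, the realization $Q_\infty(\omega)$ has an absolutely continuous representative, i.e.\ realizations of $Q_\infty$ are absolutely continuous on $(0,1)$ almost surely, up to the equivalence class of $H^1((0,1))$.
\end{proof}
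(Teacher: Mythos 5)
Your proof is correct and follows essentially the same route as the paper: the authors also deduce the corollary directly from Theorem \ref{th:sobolev_mart} combined with the standard fact (their Proposition \ref{app:prop_abcont}, i.e.\ the one-dimensional Sobolev embedding) that any element of $H^1((0,1))$ equals an absolutely continuous function almost everywhere. Your extra remark on measurably selecting the continuous representative is harmless but not needed, since the statement only concerns a.s.\ properties of realizations rather than constructing a new random element.
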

In other words, the above theorem and corollary implies that samples of $Q_\infty$ from the QMP are  absolutely continuous and thus differentiable almost everywhere a.s. We have thus managed to identify the support of the QMP by leveraging the Sobolev space, which is crucial if absolute continuity of the quantile function estimate is desired. However, we have only studied the quantile estimate $Q_N$, which may not be monotonic. Since the actual object of interest is the implicit quantile function or CDF $Q_N^{\dagger}/P_N$, the question is whether we can say anything about the QMP distribution over those. Fortunately the answer is yes, due to the regularizing effect of the rearrangement operator. To first study the convergence of $Q_N^{\dagger}$, we will need the following well-known proposition on rearrangement:
\begin{proposition}[\cite{Lorentz1953, Chernozhukov2009}]\label{prop:rearr}
    Let $f,g$ be any two functions $[0,1] \to C$ for some bounded subset $C \subset \R$ with increasing rearrangements $f^{\dagger},g^{\dagger}$ respectively. We then have 
$d_2(f^{\dagger},g^{\dagger}) \leq  d_2(f,g)$.\end{proposition}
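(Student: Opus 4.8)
The plan is to deduce the inequality from the Hardy--Littlewood rearrangement inequality for products together with the fact that a function and its increasing rearrangement are equimeasurable. Throughout I take $f$ and $g$ to be measurable (as holds for the continuous quantile estimates in our setting), and I use that, by the very definition in (\ref{eq:rearrangement}), $f^{\dagger}$ is the quantile function of the pushforward $P_f$ of Lebesgue measure on $[0,1]$ under $f$; consequently $f$ and $f^{\dagger}$ have the same distribution, $f^{\dagger}$ is bounded (its range lies between $\inf C$ and $\sup C$), and $\norm{f^{\dagger}}_2 = \norm{f}_2 < \infty$, and similarly for $g$.

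First I would work in the Hilbert space $L^2((0,1))$ and expand the squared distances as $d_2(f^{\dagger},g^{\dagger})^2 = \norm{f^{\dagger}}_2^2 - 2\ipr{f^{\dagger}}{g^{\dagger}} + \norm{g^{\dagger}}_2^2$ and $d_2(f,g)^2 = \norm{f}_2^2 - 2\ipr{f}{g} + \norm{g}_2^2$. Since increasing rearrangement preserves the $L^2$-norm, subtracting these two identities reduces the claim to the single product inequality $\int_0^1 f(u)\,g(u)\,du \le \int_0^1 f^{\dagger}(u)\,g^{\dagger}(u)\,du$.

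To establish this, I would first translate $f$ and $g$ by constants so that both are nonnegative; this is legitimate because $C$ is bounded and because $(f+c)^{\dagger} = f^{\dagger}+c$, so replacing $f$ by $f+c$ adds the same quantity $c\int_0^1 g\,du$ to both sides (using $\int_0^1 g^{\dagger}\,du = \int_0^1 g\,du$ by equimeasurability), leaving the difference of the two sides unchanged. Then I would invoke the layer-cake formula $f(u) = \int_0^\infty \Ind(f(u)>t)\,dt$ (and likewise for $g$) and Tonelli's theorem to write $\int_0^1 fg\,du = \int_0^\infty\!\int_0^\infty \abs{A_t\cap B_s}\,ds\,dt$, with $A_t = \cbr{u : f(u)>t}$, $B_s = \cbr{u : g(u)>s}$ and $\abs{\cdot}$ Lebesgue measure; the same identity holds for $f^{\dagger},g^{\dagger}$ with the superlevel sets $A_t^{\dagger} = \cbr{u : f^{\dagger}(u)>t}$ and $B_s^{\dagger} = \cbr{u : g^{\dagger}(u)>s}$. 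Because $f^{\dagger}$ is nondecreasing and equidistributed with $f$, the set $A_t^{\dagger}$ coincides, up to a Lebesgue-null set, with the interval $(1-\abs{A_t},1]$, and likewise $B_s^{\dagger}$ with $(1-\abs{B_s},1]$; these intervals are nested, so $\abs{A_t^{\dagger}\cap B_s^{\dagger}} = \min\cbr{\abs{A_t},\abs{B_s}} \ge \abs{A_t\cap B_s}$. Integrating over $(s,t)$ yields $\int_0^1 f^{\dagger} g^{\dagger}\,du \ge \int_0^1 fg\,du$, completing the argument.

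The step needing the most care is this product inequality; everything else is bookkeeping. The one genuinely delicate point inside it is identifying $A_t^{\dagger}$ with an interval of length $\abs{A_t}$ in the presence of the left-continuity convention in (\ref{eq:rearrangement}) — but since every quantity that enters is an integral, boundary sets of measure zero are irrelevant, so this is dispatched cleanly. If preferred, the product inequality may simply be cited as the Hardy--Littlewood--P\'olya rearrangement inequality, collapsing the proof to the two-line Hilbert-space expansion above; I would nonetheless include the self-contained layer-cake derivation and note this alternative. Finally, the more general estimate $\int_0^1 \ell\rbr{f^{\dagger}-g^{\dagger}}\,du \le \int_0^1 \ell(f-g)\,du$ for arbitrary convex $\ell$, which is the form appearing in \citet{Lorentz1953,Chernozhukov2009}, requires a separate discretization argument and is not needed here, where only $\ell(z)=z^2$ is used.
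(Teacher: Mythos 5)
Your proposal is correct, but it takes a different route from the paper. The paper does not prove the inequality from scratch: in the appendix it invokes the known nonexpansivity of the \emph{decreasing} rearrangement in $L^2$ (Leoni, Theorem 4.19, i.e.\ the Lorentz/Chernozhukov--Fern\'andez-Val--Galichon result), reduces to nonnegative functions by a translation lemma, and transfers to the increasing case via the reflection identity $Q^{\dagger}(u) = Q_{\dagger}(1-u)$ and a change of variables. You instead give a self-contained argument: equimeasurability preserves $\norm{\cdot}_2$, so after expanding the squares the claim is equivalent to the Hardy--Littlewood product inequality $\int fg \leq \int f^{\dagger}g^{\dagger}$, which you then prove by translating to nonnegative functions, applying the layer-cake representation with Tonelli, and observing that the superlevel sets of $f^{\dagger}$ and $g^{\dagger}$ are (up to null sets) right-anchored intervals of lengths $\abs{A_t}$, $\abs{B_s}$, whose intersection has measure $\min\{\abs{A_t},\abs{B_s}\} \geq \abs{A_t\cap B_s}$; each step checks out, including the exact equimeasurability of level sets under the left-continuous quantile convention and the invariance of the difference of the two sides under the translations. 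What your approach buys is transparency and independence from the rearrangement literature, at the cost of only handling the quadratic loss; the paper's citation-based route is shorter and comes bundled with the general convex-loss version of Lorentz and Chernozhukov et al., which you correctly note is not needed for this proposition. The only caveat worth stating explicitly is the measurability hypothesis you add (the proposition as stated says ``any two functions''), but this is harmless in context since the paper only ever applies the result to continuous or $H^1$ quantile estimates.
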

Consider the case where $g^{\dagger} = Q^*$ is a proper quantile function. The above proposition then states that the rearrangement of $Q_N$ to $Q_N^{\dagger}$ can only improve the estimate \citep{Chernozhukov2009}. Furthermore, the rearrangement procedure does not hurt the smoothness of the function, which implies the following result.
\begin{proposition}\label{prop:rearr_sobolev}
    Under {Assumptions \ref{as:L2}}-{\ref{as:bandwidth}}, there exists a random function $Q^{\dagger}_\infty$ with realizations in $H^1((0,1))$ such that 
$d_2(Q^{\dagger}_N,Q^{\dagger}_\infty) \to 0$ a.s, where realizations of $Q_N^{\dagger}$ and $Q_\infty^{\dagger}$ are proper monotonically increasing quantile functions. 
\end{proposition}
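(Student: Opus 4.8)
The plan is to construct $Q_\infty^{\dagger}$ directly as the increasing rearrangement of the Sobolev limit $Q_\infty$ supplied by {Theorem \ref{th:sobolev_mart}}, and then to transfer both the almost-sure convergence and the regularity through the rearrangement operator, which is a $d_2$-contraction by {Proposition \ref{prop:rearr}} and is known not to increase the Sobolev norm. To set up, I would work on the almost-sure event of {Theorem \ref{th:sobolev_mart}}, on which $Q_\infty\in H^1((0,1))$, $Q_N\in H^1((0,1))$ for every $N\ge n$ (this is implicit in $d_{1,2}(Q_N,Q_\infty)<\infty$, and follows from $\sup_{N\ge n}\E[\|Q_N\|_{1,2}^2]<\infty$ by a countable intersection), and $d_{1,2}(Q_N,Q_\infty)\to 0$. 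The one-dimensional Sobolev embedding $H^1((0,1))\hookrightarrow C([0,1])$ upgrades this to uniform convergence, so $M:=\sup_{N\ge n}\|Q_N\|_\infty\vee\|Q_\infty\|_\infty<\infty$ almost surely. I then define $Q_\infty^{\dagger}$ to be the increasing rearrangement of $Q_\infty$ in the sense of (\ref{eq:rearrangement}).

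The convergence then drops out: since every $Q_N$ and $Q_\infty$ takes values in the bounded set $[-M,M]$, {Proposition \ref{prop:rearr}} applies with $f=Q_N$, $g=Q_\infty$, $C=[-M,M]$, giving $d_2(Q_N^{\dagger},Q_\infty^{\dagger})\le d_2(Q_N,Q_\infty)\le d_{1,2}(Q_N,Q_\infty)\to 0$. Hence $d_2(Q_N^{\dagger},Q_\infty^{\dagger})\to 0$ almost surely, and $Q_\infty^{\dagger}$ is in particular the $L^2$-limit of the $Q_N^{\dagger}$. That $Q_N^{\dagger}$ and $Q_\infty^{\dagger}$ are proper quantile functions is built into the $\inf$-construction in (\ref{eq:rearrangement}): each is non-decreasing and left-continuous, and — once $H^1$-membership is established — each admits an absolutely continuous, hence continuous, representative.

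It remains to check $Q_\infty^{\dagger}\in H^1((0,1))$ almost surely, which is the one genuinely new ingredient. Increasing rearrangement is equimeasurable, so $\|Q_\infty^{\dagger}\|_2=\|Q_\infty\|_2$; the key point is the one-dimensional P\'olya--Szeg\H{o}-type inequality $\|(Q_\infty^{\dagger})'\|_2\le\|Q_\infty'\|_2$ for the weak derivative — precisely the statement, flagged in the text, that rearrangement does not hurt smoothness, available from classical rearrangement theory and \cite{Chernozhukov2010}. Combining these bounds yields $\|Q_\infty^{\dagger}\|_{1,2}\le\|Q_\infty\|_{1,2}<\infty$ almost surely, so $Q_\infty^{\dagger}\in H^1((0,1))$, and the identical argument applied to each $Q_N$ places $Q_N^{\dagger}$ in $H^1((0,1))$ as well. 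For measurability, $Q_\infty^{\dagger}$ is the image of the random element $Q_\infty$ under the rearrangement operator, which by {Proposition \ref{prop:rearr}} is $1$-Lipschitz — hence Borel — on the $L^2$ functions with range in any fixed bounded interval; decomposing $\Omega$ according to the finite value of $M$ exhibits $Q_\infty^{\dagger}$ as a composition of Borel maps, hence a bona fide $L^2((0,1))$-valued random function whose realizations lie a.s.\ in $H^1((0,1))$.

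I expect the main obstacle to be the Sobolev-contraction property of the one-dimensional increasing rearrangement used to land $Q_\infty^{\dagger}$ in $H^1$; everything else is a transfer of {Theorem \ref{th:sobolev_mart}} through the $d_2$-Lipschitz rearrangement map, with the only minor subtlety being the uniform-in-$N$ boundedness of the $Q_N$ needed to invoke {Proposition \ref{prop:rearr}}, which the Sobolev embedding provides for free.
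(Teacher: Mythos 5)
Your proposal is correct and follows essentially the same route as the paper: transfer the Sobolev convergence of Theorem \ref{th:sobolev_mart} to $d_2$, use boundedness via the Sobolev embedding (Proposition \ref{app:prop_bounded}) to invoke the rearrangement contraction of Proposition \ref{prop:rearr}, and place the rearranged functions in $H^1((0,1))$ via the smoothness-preserving property of rearrangement (Theorem \ref{app:thm_sobolev}). The uniform-in-$N$ bound and the explicit measurability argument you add are fine but not needed beyond what the paper already uses.
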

We highlight that $Q_{\infty}^{\dagger}$  is absolutely continuous on $(0,1)$ up to the equivalence class of $H^1((0,1))$ a.s., which follows from the well-known property that rearrangement preserves absolute continuity ({Theorem \ref{app:thm_sobolev}} in the Appendix).
We thus have that the predictive quantile function $Q_N^{\dagger}$ converges to an absolutely continuous random quantile function $Q_{\infty}^{\dagger}$ a.s. In other words, the rearrangement operator does not significantly affect the  predictive asymptotics of the QMP, although we highlight that in general the posterior mean of the QMP is not $Q_n^{\dagger}$.
Note that the convergence is only in the $L^2$ norm, as strengthening the convergence to hold in the Sobolev norm is trickier. A technical result states that a variant of the rearrangement operator is continuous in the Sobolev space $H^1((0,1))$ \citep{Coron1984,Almgren1989}, so it is likely possible for our case. However, this is stronger than what we require, so we leave this for future work.

\paragraph{\textbf{Summary}}
Although the results are technical, the intuition is hopefully clear.
We have utilized the function-valued martingale convergence theorem to show that the quantile estimate $Q_N$ converges a.s. (in the norm of the respective Banach space) to a random $Q_\infty$ under {Algorithm \ref{alg:QMP}}. The regularizing behaviour of the rearrangement operator then assures us that the {implicit} proper quantile functions $Q_N^{\dagger}$ also converge in $L^2$ a.s. to a random proper quantile function $Q_\infty^{\dagger}$. This guarantees the existence of the QMP, which is precisely the distribution of $Q_\infty^{\dagger}$. 
Under additional smoothness assumptions on $Q_n$ and the bandwidth sequence $\rho_N$, we can then leverage the Sobolev space to show that the support of the QMP is on proper quantile functions which are absolutely continuous on $(0,1)$.

\subsubsection{QMP over probability measures and asymptotic exchangeability}\label{sec:prob}
So far, we have only been working in the quantile space, but it is interesting to study the QMP on the more familiar space of probability measures. This will also allow us to make  statements on the convergence of limiting probability distributions as studied in \cite{Berti2004} and \cite{Fong2023a}. Fortunately, we can leverage a simple connection between $L^2$ convergence of quantile functions and weak convergence.
\begin{proposition}
    Under {Assumptions \ref{as:L2}} and {\ref{as:alpha}}, there exists a random probability measure $P_\infty$ on $\R$ such that $P_N{\to} P_\infty$ in Wasserstein-2 distance a.s., which further implies $P_N{\to} P_\infty$ weakly a.s. Under the additional {Assumptions \ref{as:weak_deriv}} and {\ref{as:bandwidth}}, $P_\infty$ corresponds to an absolutely continuous $Q_\infty^{\dagger}$. 
\end{proposition}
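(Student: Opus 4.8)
The plan is to read this statement as the image, on the space of probability measures, of the $L^2((0,1))$ martingale convergence already established: the single fact that makes the translation work is the classical identity $W_2(\mu,\nu)^2=\int_0^1\big(Q_\mu(u)-Q_\nu(u)\big)^2\,du$ relating the Wasserstein-2 distance on $\R$ to the $L^2$ distance between quantile functions (the comonotone coupling is optimal on the line). Recall first that $P_N$ denotes the \emph{implicit} CDF, i.e.\ the law of $Q_N(V)$ for $V\sim\mathcal U(0,1)$, so its quantile function is exactly $Q_N^{\dagger}$ and it has finite second moment $\int_0^1 Q_N^{\dagger}(u)^2\,du=\int_0^1 Q_N(u)^2\,du=\|Q_N\|_2^2$, which is a.s.\ finite --- uniformly in $N$ --- by the uniform $L^2$ bound $\sup_{N\ge n}\E[\|Q_N\|_2^2]<\infty$ proved for Proposition~\ref{prop:L2_mart}.

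For the first conclusion I would take $Q_\infty$ to be the a.s.\ $L^2$-limit supplied by Proposition~\ref{prop:L2_mart}, define $P_\infty$ to be the law of $Q_\infty(V)$ (equivalently, the measure with quantile function the increasing rearrangement $Q_\infty^{\dagger}$), and note that $P_\infty$ is a bona fide random probability measure with finite second moment $\|Q_\infty\|_2^2<\infty$ a.s., measurability of $\omega\mapsto P_\infty$ following from that of $\omega\mapsto Q_\infty$ together with the fact that $f\mapsto f_{\#}\mathcal U(0,1)$ is $1$-Lipschitz from $L^2((0,1))$ into $(\mathcal P_2(\R),W_2)$. The comonotone coupling $(Q_N(V),Q_\infty(V))$ then has marginals $P_N$ and $P_\infty$, so $W_2(P_N,P_\infty)^2\le\E_V\big[(Q_N(V)-Q_\infty(V))^2\big]=d_2(Q_N,Q_\infty)^2\to 0$ a.s.; since $W_2$-convergence implies weak convergence, $P_N\to P_\infty$ weakly a.s.\ as well. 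For the last claim, under the additional Assumptions~\ref{as:weak_deriv} and~\ref{as:bandwidth}, Proposition~\ref{prop:rearr_sobolev} gives $d_2(Q_N^{\dagger},Q_\infty^{\dagger})\to 0$ a.s.\ with $Q_\infty^{\dagger}\in H^1((0,1))$, and the remark following it (via Theorem~\ref{app:thm_sobolev}) gives that $Q_\infty^{\dagger}$ is absolutely continuous on $(0,1)$ a.s.; by the Wasserstein identity the $W_2$ distance from $P_N$ to the measure with quantile function $Q_\infty^{\dagger}$ also tends to $0$, and uniqueness of $W_2$-limits identifies that measure with $P_\infty$, so $P_\infty$ has $Q_\infty^{\dagger}$ as its (absolutely continuous) quantile function.

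I do not expect a deep obstacle here --- the proposition is essentially the measure-theoretic shadow of Propositions~\ref{prop:L2_mart} and~\ref{prop:rearr_sobolev} --- and the points requiring care are bookkeeping ones: keeping straight that $P_N$ is the implicit (rearranged) CDF, which is precisely what makes the comonotone coupling carry the correct marginals; checking measurability of $\omega\mapsto P_\infty$ as a random element of $\mathcal P(\R)$; and using the uniform second-moment bound from Proposition~\ref{prop:L2_mart} to ensure all the $W_2$ distances are a.s.\ finite so that the whole argument takes place inside $\mathcal P_2(\R)$.
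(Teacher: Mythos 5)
Your proposal is correct and follows essentially the same route as the paper: identify $W_2$ on $\R$ with the $L^2$ distance between quantile functions, transfer the a.s.\ $L^2$ martingale convergence of Proposition~\ref{prop:L2_mart} to $W_2$-convergence of $P_N$ to $P_\infty$, invoke the metrization of weak convergence by $W_2$, and read off absolute continuity of $Q_\infty^{\dagger}$ from Proposition~\ref{prop:rearr_sobolev}. Your only deviation is cosmetic but pleasant: bounding $W_2(P_N,P_\infty)\le d_2(Q_N,Q_\infty)$ directly via the coupling $(Q_N(V),Q_\infty(V))$ rather than through $d_2(Q_N^{\dagger},Q_\infty^{\dagger})$, which sidesteps the bounded-range hypothesis of Proposition~\ref{prop:rearr} and makes the first claim transparent under Assumptions~\ref{as:L2} and \ref{as:alpha} alone.
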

\begin{proof}
    As $d_2(Q_N^{\dagger},Q_\infty^{\dagger})$ is exactly the Wasserstein-2 distance between $P_N$ and $P_\infty$, where $P_\infty$ is computed from  $Q_\infty$, we have that $P_N \to P_\infty$ in Wasserstein-2 distance a.s. As the Wasserstein distance metrizes weak convergence in $\R$ (e.g. \cite[Theorem 6.8]{Villani2009}), we have the above result. 
\end{proof}
As we have a.s. weak convergence of $P_N$ to a random probability measure, we can make the usual statements on limiting empirical distributions and asymptotic exchangeability. 
\begin{corollary}\label{cor:exchange}
     Under {Assumptions \ref{as:L2}} and {\ref{as:alpha}}, the sequence $(Y_{n+1},Y_{n+2},\ldots)$ arising from {Algorithm \ref{alg:QMP}} is asymptotically exchangeable. Furthermore, the empirical distribution of $(Y_{n+1},Y_{n+2},\ldots, Y_N)$ converges weakly to $P_\infty$ a.s. as $N \to \infty$. 
\end{corollary}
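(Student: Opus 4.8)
The plan is to bypass the c.i.d.\ machinery entirely. The imputed sequence $(Y_{n+1},Y_{n+2},\dots)$ produced by {Algorithm \ref{alg:QMP}} is \emph{not} conditionally identically distributed --- the rearrangement operator is nonlinear, so $\E[P_{N+1}(y)\mid Y_{1:N}]\neq P_N(y)$ in general --- so the usual c.i.d.\ arguments are unavailable. Instead I would work directly with the quantile representation $Y_{N+1}=Q_N(V_{N+1})$, $V_i\iid\mathcal{U}(0,1)$. Conditioning on $Y_{1:n}$ so that $Q_n$ is fixed, set $\mathcal{F}_N=\sigma(V_{n+1},\dots,V_N)$; then $Q_N$ is $\mathcal{F}_N$-measurable and, for bounded measurable $f$, $\E[f(Y_{N+1})\mid\mathcal{F}_N]=\int_0^1 f(Q_N(v))\,dv=\int f\,dP_N$. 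The starting input is the preceding proposition, which gives $P_N\to P_\infty$ weakly a.s.\ under {Assumptions \ref{as:L2}} and {\ref{as:alpha}} (equivalently $d_2(Q_N^{\dagger},Q_\infty^{\dagger})\to0$ a.s.\ via {Proposition \ref{prop:rearr}} and {Proposition \ref{prop:L2_mart}}).

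The first step is to upgrade one-step weak convergence to weak convergence of blocks of arbitrary fixed length. Fix $m\ge1$ and a bounded Lipschitz $g:\R^m\to\R$. Peeling off the coordinates $Y_{N+m},\dots,Y_{N+1}$ one at a time via the tower property, and at each peel replacing the conditional law $P_{N+j-1}$ of that coordinate by $P_N$ (the peeled-off partial integrals remain bounded Lipschitz with the same norm), one obtains $\bigl|\E[g(Y_{N+1},\dots,Y_{N+m})\mid\mathcal{F}_N]-\int_{\R^m} g\,dP_N^{\otimes m}\bigr|\lesssim\sum_{j=1}^{m}d_{\mathrm{BL}}(P_{N+j-1},P_N)$, where $d_{\mathrm{BL}}$ is the bounded-Lipschitz metric. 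Since $d_{\mathrm{BL}}(P_{N+j-1},P_N)\le d_{\mathrm{BL}}(P_{N+j-1},P_\infty)+d_{\mathrm{BL}}(P_\infty,P_N)\to0$ a.s.\ for each fixed $j$, and $\int g\,dP_N^{\otimes m}\to\int g\,dP_\infty^{\otimes m}$ a.s.\ (products of weakly convergent probability measures converge weakly), I conclude $\E[g(Y_{N+1},\dots,Y_{N+m})\mid\mathcal{F}_N]\to\int g\,dP_\infty^{\otimes m}$ a.s.

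The second step converts this into asymptotic exchangeability. Because $\mathcal{F}_N\uparrow\mathcal{F}_\infty$ and $P_\infty$ is $\mathcal{F}_\infty$-measurable, I would approximate any bounded $\mathcal{F}_\infty$-measurable $W$ by $W_N=\E[W\mid\mathcal{F}_N]$ (L\'evy's theorem), so that $\E[g(Y_{N+1},\dots,Y_{N+m})\,W]=\E\bigl[\E[g(Y_{N+1},\dots,Y_{N+m})\mid\mathcal{F}_N]\,W_N\bigr]+o(1)\to\E\bigl[(\int g\,dP_\infty^{\otimes m})\,W\bigr]$ by bounded convergence. As $m$ and $g$ are arbitrary, the finite-dimensional laws of the shifted sequence converge, jointly with $\sigma(P_\infty)$, to those of a sequence that given $P_\infty$ is i.i.d.\ $P_\infty$; tightness on $\R^{\N}$ is automatic from convergence of the finite-dimensional projections, so $(Y_{N+1},Y_{N+2},\dots)$ converges in distribution to an exchangeable sequence directed by $P_\infty$, which is asymptotic exchangeability. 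For the empirical-measure claim I would argue directly: for bounded continuous $f$, $D_i:=f(Y_i)-\int f\,dP_{i-1}$ (for $i>n$) is a martingale-difference sequence for $(\mathcal{F}_i)$ with $|D_i|\le2\|f\|_\infty$, so $\frac1{N-n}\sum_{i=n+1}^{N}D_i\to0$ a.s.\ by the strong law for bounded martingale differences, while $\frac1{N-n}\sum_{i=n+1}^{N}\int f\,dP_{i-1}\to\int f\,dP_\infty$ a.s.\ by Ces\`aro averaging of the a.s.\ convergent sequence $\int f\,dP_{i-1}$; adding these and ranging $f$ over a countable convergence-determining class yields $\frac1{N-n}\sum_{i=n+1}^{N}\delta_{Y_i}\to P_\infty$ weakly a.s.

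I expect the block step to be the crux. Unlike in the c.i.d.\ case, $P_N$ is not a measure-valued martingale, so its a.s.\ weak convergence is not automatic and must be imported from the Banach-space quantile martingale ({Proposition \ref{prop:L2_mart}} combined with the contraction property of rearrangement, {Proposition \ref{prop:rearr}}); once that is available, propagating the convergence from one step to blocks is a routine triangle-inequality argument, and the remaining ingredients (bounded martingale differences, Ces\`aro averaging, tightness on $\R^{\N}$, and the fact that convergence of all finite-dimensional laws determines convergence in distribution on the product space) are standard.
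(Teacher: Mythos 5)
Your proposal is correct, but it is a more self-contained route than the paper takes for the first claim and essentially the paper's own route for the second. For asymptotic exchangeability the paper does not argue from scratch: it invokes Lemma 8.2(b) of Aldous (1985), which states precisely that a.s.\ weak convergence of the predictive distributions $P_N$ to a random probability measure implies asymptotic exchangeability. Your block-peeling argument followed by the L\'evy upward theorem, bounded convergence, and finite-dimensional convergence on $\R^{\N}$ is in effect a re-derivation of that lemma (indeed it yields the stronger stable-convergence statement), which buys transparency and independence from the citation at the cost of length. For the empirical-distribution claim your argument matches the paper's: following Berti et al.\ (2004), the paper forms the martingale $Z_N=\sum_i (f(Y_i)-P_{i-1}[f])/(i+1)$, applies Kronecker's lemma and a Ces\`aro argument --- exactly the strong law for bounded martingale differences you quote --- and then passes to a.s.\ weak convergence via the characteristic functions $f=e^{ity}$ and Theorem 2.6 of Berti et al.\ (2006), whereas you close with a countable convergence-determining class of bounded continuous functions; the two closings are interchangeable.

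One small repair is needed in your block step. As written, the display bounding $\bigl|\E[g(Y_{N+1},\dots,Y_{N+m})\mid\mathcal{F}_N]-\int g\,dP_N^{\otimes m}\bigr|$ by $\sum_{j}d_{\mathrm{BL}}(P_{N+j-1},P_N)$ cannot hold literally: the left side is $\mathcal{F}_N$-measurable while the right side is not, and the peeling actually produces the bound $\|g\|_{\mathrm{BL}}\sum_{j}\E[d_{\mathrm{BL}}(P_{N+j-1},P_N)\mid\mathcal{F}_N]$; your justification via a.s.\ convergence of $d_{\mathrm{BL}}(P_{N+j-1},P_N)$ through $P_\infty$ does not by itself control this conditional expectation. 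The gap closes easily: $d_{\mathrm{BL}}(P_{N+j-1},P_N)\le W_1(P_{N+j-1},P_N)\le d_2(Q^{\dagger}_{N+j-1},Q^{\dagger}_N)\le d_2(Q_{N+j-1},Q_N)\le\sum_{i=N+1}^{N+j-1}\alpha_i$, using Proposition \ref{prop:rearr} and the fact that the update increments are bounded by $\alpha_i$, so the bound is deterministic and tends to zero with $N$ for each fixed block length $m$ (alternatively, Hunt's conditional dominated convergence lemma applied along the filtration would also suffice). With that fix, your argument is sound.
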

Both results in the above corollary arise due to the a.s. weak convergence of $P_N$ to some $P_\infty$, which implies that this convergence of the predictive distribution is sufficient for Bayesian inference  \citep{Fong2023a,Cui2023}. The QMP distribution over any functional is then simply the push-forward of $\theta(P_\infty)$. The c.i.d. condition is a very convenient means to attain this convergence, but the above two results highlight that it is by no means necessary. Unlike in \cite{Fong2023a}, which relies on the c.i.d. condition, here we instead rely on a martingale condition on the potentially non-monotonic quantile estimate.
A keen reader may notice that we have not assured absolute continuity on the probability measure $P_\infty$, which would then imply the existence of a probability density function. Unfortunately the absolute continuity and non-strict monotonicity of $Q_\infty^{\dagger}$ is not enough to guarantee this, as any flat regions of $Q_\infty^{\dagger}$ could be mapped to an atom for $P_\infty$. However, absolute continuity of $Q_\infty^{\dagger}$ allows us to guarantee that $P_\infty$ does not have any gaps in its support, and in practice we also see that $P_\infty$ is continuous a.s.

\subsubsection{Gaussian process}
Having established the existence of $Q_\infty$ which is distributed according to the QMP, a natural question is to investigate the properties of $Q_\infty - Q_N$ as we take $N \to \infty$ in {Algorithm \ref{alg:QMP}}. This is closely related to the study carried out in \cite{Fortini2020,Fortini2023,Fortini2024}, but we will require some technical tools from empirical process theory as we would like to study the entire function $Q_\infty$. One surprising consequence of the theory to come is the simplicity of the law of $Q_\infty - Q_N$, which allows us to accelerate  sampling from the QMP even further. We now introduce the results before discussing their implications.

For the rest of this section, we will assume that
$\alpha_N$ takes the form given in {Assumption \ref{as:alpha}}. To begin, we first discuss the object of study. We will focus on quantifying the convergence of $Q_N$ to $Q_\infty$, as this is much more tractable than the rearranged case. Specifically, we are interested in the law of the \textit{random} function $S_N = Q_\infty - Q_{N-1}$ as $N \to \infty$, which we suspect to be Gaussian due to the summative form of (\ref{eq:quantile_copula}). More concretely, let us define the random function 
\begin{align}\label{eq:SN}
    S_N(u) = Q_\infty(u) - Q_{N-1}(u) = \sum_{i = N}^\infty \alpha_i \left(u - H_{\rho_i}(u,V_i)\right)
\end{align} 
 where $V_i \iid \mathcal{U}(0,1)$ for all $i\geq N$. We highlight to the reader again that $S_N$ has an   additive form and in particular consists of  a sum of independent terms. As an aside, one concern may be that the distribution of $S_N$ does not depend on observed data (through $Q_n$). However, we can quell these concerns by drawing a connection to the Bayesian bootstrap, where the random Dirichlet weights $w_{1:n}$ do not depend on the data at all, but the \textit{location} of observations contribute to the posterior. In the QMP case, $S_N$ plays the role of the Dirichlet weights, and the initial function $Q_n$ plays the role of the observations' locations.

This independent form of $S_N$ is in fact a strength of the QMP compared to the traditional MP, as it allows us to much more easily leverage central limit theorems for the sum of independent functions.
Armed with this, we can study the convergence of the whole function, which depends on technical empirical process theory that we defer to Section \ref{sec:emp} in the Appendix. In particular, the independent form of $S_N(u)$ allows us to easily verify an asymptotic tightness condition and marginal convergence to a Gaussian distribution using the Lindeberg-Feller central limit theorem (CLT), which gives the following result.
\begin{theorem}\label{th:gp}
    Under {Assumptions \ref{as:alpha}} and {\ref{as:bandwidth}}, the function $\sqrt{N}S_N$ converges weakly in $\ell^\infty\left((0,1)\right)$  to $\mathbb{G}_a$, where $\mathbb{G}_a$ is a zero-mean GP with covariance function
$\E\left[\mathbb{G}_a(u)\, \mathbb{G}_a(u^{\prime})\right] = a^2 (\min\{u,u^{\prime}\} - uu^{\prime})$. 
\end{theorem}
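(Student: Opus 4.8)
The plan is to establish weak convergence in $\ell^\infty((0,1))$ by the standard two-step criterion: convergence of the finite-dimensional distributions of $\sqrt{N}S_N$ to a Gaussian law with covariance $K(u,u') = a^2(\min\{u,u'\} - uu')$, together with asymptotic tightness of $\sqrt{N}S_N$. Everything is driven by the additive, row-wise independent structure in (\ref{eq:SN}): write $\sqrt{N}S_N = \sum_{i=N}^\infty \xi_{N,i}$ with $\xi_{N,i}(u) = \sqrt{N}\,\alpha_i\,(u - H_{\rho_i}(u,V_i))$, a triangular array of independent, mean-zero processes (the mean being zero by the martingale identity established in Section \ref{sec:recursive}) that are uniformly bounded, $\|\xi_{N,i}\|_\infty \le \sqrt{N}\,\alpha_i = a\sqrt{N}\,(i+1)^{-1}$, hence $\sup_{i\ge N}\|\xi_{N,i}\|_\infty \le a\sqrt{N}\,(N+1)^{-1}\to 0$ under Assumption \ref{as:alpha}.

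For the finite-dimensional convergence I would apply the Lindeberg--Feller CLT (and the Cram\'er--Wold device) to the independent arrays $\{\xi_{N,i}(u_j)\}_{i\ge N}$. The Lindeberg condition is automatic: since the summands are uniformly $o(1)$, for every $\varepsilon>0$ and all large $N$ each $|\xi_{N,i}(u_j)|\le\varepsilon$, so the truncated second moments vanish identically. For the limiting covariance, a short computation using the representation of $H_\rho$ as a conditional Gaussian-copula CDF gives $\mathrm{Cov}\big(u - H_\rho(u,V),\,u' - H_\rho(u',V)\big) = C_{\rho^2}(u,u') - uu'$, where $C_{\rho^2}$ is the bivariate Gaussian copula with correlation $\rho^2$; since $\rho_i\uparrow 1$ under Assumption \ref{as:bandwidth}, this tends to $\min\{u,u'\} - uu'$. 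Combining with $N\sum_{i\ge N}\alpha_i^2 = a^2 N\sum_{i\ge N}(i+1)^{-2}\to a^2$, and controlling the tail via dominated convergence (the covariances are bounded), one obtains $\mathrm{Cov}\big(\sqrt{N}S_N(u),\sqrt{N}S_N(u')\big) = N\sum_{i\ge N}\alpha_i^2\,\mathrm{Cov}\big(u-H_{\rho_i}(u,V),u'-H_{\rho_i}(u',V)\big)\to K(u,u')$, as required.

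For asymptotic tightness I would invoke a functional CLT for triangular arrays of independent processes indexed by a class with controlled entropy (the Lindeberg-type theorem of van der Vaart and Wellner). The relevant class is $\mathcal{F} = \{u\mapsto u - H_\rho(u,v):\ \rho\in(0,1),\ v\in(0,1)\}$, and the crucial point is that every $u\mapsto H_\rho(u,v)$ is monotone increasing with range in $[0,1]$, so $\mathcal{F}$ sits, up to the fixed affine term $u\mapsto u$, inside the class of all monotone $[0,1]\to[0,1]$ functions, which has bracketing entropy $\log N_{[\,]}(\varepsilon,\cdot,L^2)\lesssim\varepsilon^{-1}$ \emph{uniformly in} $\rho$; the entropy integral $\int_0^{\delta}\varepsilon^{-1/2}\,d\varepsilon = 2\sqrt{\delta}\to0$ then supplies the chaining bound. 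The two remaining hypotheses are the (uniform) Lindeberg condition, already verified above, and the requirement that $\sup_{|u-u'|<\delta_N}\sum_{i\ge N}\E\big[(\xi_{N,i}(u)-\xi_{N,i}(u'))^2\big]\to0$ for every $\delta_N\downarrow0$; this follows from the crude increment bound $\mathrm{Var}\big(H_\rho(u',V)-H_\rho(u,V)\big)\le\E\big[(H_\rho(u',V)-H_\rho(u,V))^2\big]\le\E\big[H_\rho(u',V)-H_\rho(u,V)\big]=|u'-u|$ (the integrand lies in $[0,1]$, uniformly in $\rho$, by monotonicity of $H_\rho(\cdot,v)$), so the displayed sum is at most $\big(N\sum_{i\ge N}\alpha_i^2\big)|u-u'|\lesssim|u-u'|$. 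Finite-dimensional convergence together with asymptotic tightness then identifies the limit as the tight, zero-mean Gaussian process $\mathbb{G}_a$ with covariance $K$, i.e.\ $a$ times a Brownian bridge on $(0,1)$; this process has continuous sample paths and hence lives in $\ell^\infty((0,1))$.

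I expect the main obstacle to be the tightness step, specifically making the empirical-process machinery apply to an \emph{infinite-tail} sum of processes whose bandwidths $\rho_i$ drift to the boundary of their parameter range. Two points require care: (a) reducing the tail sum to a genuine triangular array --- this uses $\sum_i\alpha_i^2<\infty$ (Assumption \ref{as:alpha}) to guarantee that $\sum_{i\ge N}\xi_{N,i}$ converges and is the limit of its finite partial sums, to which the array CLT applies; and (b) verifying that the entropy and increment bounds hold \emph{uniformly over} $i\ge N$ even though $\rho_i\to1$ --- this is exactly where embedding every $H_\rho(\cdot,v)$ into the single fixed (Donsker) class of bounded monotone functions does the work. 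By contrast, the finite-dimensional part is essentially routine once the Gaussian-copula limit $C_{\rho_i^2}\to\min$ and the normalization $N\sum_{i\ge N}\alpha_i^2\to a^2$ are recorded.
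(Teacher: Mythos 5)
Your overall strategy coincides with the paper's: marginal convergence via the Lindeberg--Feller CLT and the Cram\'er--Wold device, using the copula identity $\int_0^1[u-H_\rho(u,v)][u'-H_\rho(u',v)]\,dv=C_{\rho^2}(u,u')-uu'$ together with $\rho_i\to1$ and $N\sum_{i\ge N}\alpha_i^2\to a^2$; and asymptotic tightness via a bracketing/chaining argument whose increment bound comes from the monotonicity of $u\mapsto H_\rho(u,v)$, exactly the bound $\E[(H_\rho(u',V)-H_\rho(u,V))^2]\le|u'-u|$ used in the paper's Theorem \ref{th:tight}. The finite-dimensional part of your sketch is essentially complete (the paper splits the sum at a level $m_N$ with $N/m_N=o(1)$ and kills the tail by a variance bound, which is the rigorous version of your ``dominated convergence'' remark), and your entropy accounting, whether through the monotone-function class or through a grid partition of $(0,1)$ under $d(u,u')=|u-u'|^{1/2}$, gives a convergent entropy integral either way.

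The genuine issue is the one you flag yourself in point (a), and your proposed resolution does not close it. The van der Vaart--Wellner bracketing CLT you invoke is stated for \emph{finite} row sums $\sum_{i=1}^{m_N}$ of independent processes; here each row is an infinite series. Knowing that $\sum_{i\ge N}\xi_{N,i}$ converges for each fixed $N$ (which does follow from $\sum_i\alpha_i^2<\infty$) does not transfer asymptotic tightness from truncated sums to the full sums: for that you need to choose truncation levels $m_N$ with $\E^*\bigl\|\sum_{i>m_N}\xi_{N,i}\bigr\|_\infty\to0$, i.e.\ a maximal inequality for the \emph{uniform} norm of the tail, uniformly in $N$ --- a pointwise or variance bound is not enough, and the crude bound $\sqrt{N}\sum_{i>m_N}\alpha_i$ diverges since $\sum_i\alpha_i=\infty$. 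This is precisely where the paper does real work: its Theorem \ref{th::bracketing_clt} reproves the bracketing CLT for infinite sums, the key obstruction being that Bernstein's inequality applies only to finite sums, which the paper circumvents by applying it to partial sums and passing to the limit in the probability bound (with a separate right-continuity argument at discontinuity points) before rerunning the chaining scheme. So your proof is the right proof in outline, but the step you describe as ``reducing the tail sum to a genuine triangular array'' is asserted rather than established, and as stated it is the missing ingredient rather than a routine reduction.
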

\begin{proof_outline}
  Asymptotic tightness of $\sqrt{N}S_N$ is shown in {Theorem \ref{th:tight}} in the Appendix.
  We also show in the Appendix that any finite collection of points of $\sqrt{N}S_N(u)$ converges to a Gaussian distribution using the Lindeberg-Feller CLT, which together with asymptotic tightness is sufficient for weak convergence to the GP.
\end{proof_outline}
This covariance function is $a^{2}$ times the Brownian bridge covariance, which is unsurprising as this arises in the asymptotics for traditional quantile estimation as well.
We conclude this section with a brief discussion of the implications of the above, and postpone a detailed demonstration for Section \ref{sec:implications}. Following \cite{Fortini2020}, we note that the above gives us a measure of contraction of $Q_N$ to $Q_\infty$, which is quantified by $\sqrt{N}$ term pre-multiplying $S_N$. More interesting for us however, is the ability to approximate {Algorithm \ref{alg:QMP}} with the above GP, which we dedicate Section \ref{sec:approx} to. A remaining question is whether the rearranged $Q_\infty^{\dagger}$ satisfies a similar result. Our conjecture is that it may hold, but it is challenging to extend the proof due to an issue of the centering function. Nonetheless, as we are primarily interested in posterior sampling, we can still utilize the asymptotic normality to sample $Q_\infty$ which then gives the implied $Q_\infty^{\dagger}$.

\subsection{Frequentist asymptotics}
We now address the frequentist properties of QMP, which requires a different set of technical tools, but relies on similar recursive arguments such as martingale theory. We will shortly see that posterior consistency and contraction rates can be shown for the QMP, where the $L^2((0,1))$ Hilbert space and rearrangement theory aid us greatly. The proofs depend critically on the consistency and the convergence rate of the initial $Q_n^{\dagger}$. However, the latter properties depend on somewhat more technical tools from the stochastic approximation literature. We hope to distinguish this in the discussion below.

To begin, we introduce the setup which differs to the previous subsection. Let $Y_{1:n} \iid P^*$ where $P^*$ has the corresponding quantile function $Q^*$, and we consider the case as $n \to \infty$. 
Following the discussion in Section \ref{sec:initial}, we study the frequentist properties of the QMP obtained through applying {Algorithm \ref{alg:fit}} to the i.i.d. observations $Y_{1:n}$ to obtain the initial $Q_n^{\dagger}$, followed by predictive resampling with {Algorithm \ref{alg:QMP}} in order to obtain $Q^{\dagger}_\infty$. The QMP is then the distribution of $Q^{\dagger}_\infty$ conditional on  $Y_{1:n}$.

\subsubsection{Posterior consistency}\label{sec:consistency}
Posterior consistency is a crucial property of a Bayesian model which in our context states that the posterior distribution concentrates on the true $Q^*$ from which the data is i.i.d. This is much stronger than Doob's consistency theorem, which only holds a.s. with respect to the prior and is closely connected to the previously discussed predictive asymptotics. Posterior consistency usually hinges on the Kullback-Leibler (KL) property of the prior distribution \citep[Chapter 6]{Ghosal2017}, which states that the prior allocates non-zero mass to a KL ball around the truth. Within the martingale posterior context, no such prior distribution exists, so we must develop novel tools for posterior consistency. \cite{Fong2023a} showed consistency of the posterior mean of the MP, but did not make any statements on the entire posterior distribution. We will now show this for the QMP case, which requires the following conditions.
\begin{assumption}[Lipschitz quantile function]\label{as:consistency_truth}
Assume that $P^*$ has a quantile function $Q^*$ which is $M$-Lipschitz continuous on $[0,1]$, where $M$ is a constant. Furthermore, $Q_0$ is chosen to be Lipschitz continuous. 
\end{assumption}
A sufficient condition for this is that $P^*$ has compact support, and $P^*$ is continuously differentiable with strictly positive derivative on its support (e.g. see \cite[Lemma 21.4]{vanderVaart2000}). We now have consistency of the initial $Q_n^{\dagger}$.
\begin{theorem}\label{th:consistency_mean}
    Under {Assumptions \ref{as:alpha}},  {\ref{as:bandwidth}} and  {\ref{as:consistency_truth}}, we have that $d_2( Q_n^{\dagger},Q^*) \to 0$ a.s.$[P^*]$ under {Algorithm \ref{alg:fit}}.
\end{theorem}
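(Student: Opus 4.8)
The plan is to run a Lyapunov / stochastic-approximation argument in the Hilbert space $L^2((0,1))$, tracking $U_i := d_2(Q_i^{\dagger},Q^*)^2 = \norm{Q_i^{\dagger}-Q^*}_2^2$ and showing $U_i\to0$ a.s. Write the step of {Algorithm \ref{alg:fit}} as $Q_i = Q_{i-1}^{\dagger}+\alpha_i g_i$ with $g_i(u)=u-H_{\rho_i}(u,P_{i-1}(Y_i))$, so that $\norm{g_i}_2\le1$ because $u,H_{\rho_i}\in[0,1]$. Since $Q^*$ is a proper quantile function, $(Q^*)^{\dagger}=Q^*$, hence {Proposition \ref{prop:rearr}} gives $\norm{Q_i^{\dagger}-Q^*}_2\le\norm{Q_i-Q^*}_2$; expanding $\norm{(Q_{i-1}^{\dagger}-Q^*)+\alpha_i g_i}_2^2$ yields the deterministic recursion
\begin{align*}
U_i \;\le\; U_{i-1} + 2\alpha_i\,\ipr{Q_{i-1}^{\dagger}-Q^*}{g_i} + \alpha_i^2 .
\end{align*}
Taking $\E[\cdot\mid\mathcal F_{i-1}]$ with $\mathcal F_{i-1}=\sigma(Y_{1:i-1})$ and $Y_i\sim P^*$ independent of $\mathcal F_{i-1}$, the conditional drift is $\bar g_i(u):=\E[g_i(u)\mid\mathcal F_{i-1}] = u-\int H_{\rho_i}(u,P_{i-1}(y))\,P^*(dy)$.

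Next I would identify the mean field. Integrating by parts in $v$ and using $\int_0^1 H_\rho(u,v)\,dv=u$ and $\int_0^1\partial_v H_\rho(u,v)\,dv=-1$, one rewrites $\bar g_i(u)=\int_0^1\big(u-P^*(Q_{i-1}^{\dagger}(v))\big)\,p_{\rho_i,u}(v)\,dv$, where $p_{\rho_i,u}(v):=-\partial_v H_{\rho_i}(u,v)\ge0$ is a probability density in $v$ that concentrates at $v=u$ as $\rho_i\to1$. Splitting off the leading term gives $\bar g_i=m_i+e_i$, with $m_i(u)=u-P^*(Q_{i-1}^{\dagger}(u))=P^*(Q^*(u))-P^*(Q_{i-1}^{\dagger}(u))$ and a deterministic bias $e_i(u)=\int_0^1\big(P^*(Q_{i-1}^{\dagger}(u))-P^*(Q_{i-1}^{\dagger}(v))\big)p_{\rho_i,u}(v)\,dv$. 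Monotonicity of $P^*$ makes the integrand of $\ipr{Q_{i-1}^{\dagger}-Q^*}{m_i}$ pointwise $\le0$, and the $M$-Lipschitz hypothesis on $Q^*$ — equivalently, $P^*$ is co-Lipschitz with constant $1/M$ on its compact support, combined with a pointwise truncation $\Pi$ onto $\mathrm{supp}(P^*)$ to handle values of $Q_{i-1}^{\dagger}$ that fall outside it — upgrades this to
\begin{align*}
\ipr{Q_{i-1}^{\dagger}-Q^*}{m_i} \;\le\; -\tfrac1M\,\norm{\Pi Q_{i-1}^{\dagger}-Q^*}_2^2 \;=:\; -\tfrac1M\,Z_{i-1} \;\le\; 0 .
\end{align*}

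The main obstacle is controlling the bias, i.e.\ showing $\sum_i\alpha_i\,\norm{e_i}_2\,\norm{Q_{i-1}^{\dagger}-Q^*}_2<\infty$ a.s. Pointwise $e_i$ need not be small, since $Q_{i-1}^{\dagger}$ may be steep; but $v\mapsto P^*(Q_{i-1}^{\dagger}(v))$ is monotone with total variation at most $1$, so smoothing against the kernel $p_{\rho_i,u}$ only perturbs it by an amount that is small \emph{on average}, of order the bandwidth $\sqrt{1-\rho_i^2}$. Making this quantitative requires a simultaneous induction: using {Assumptions \ref{as:alpha}} and {\ref{as:bandwidth}} together with $\norm{\partial_1 H_{\rho_i}}_\infty=O\big((1-\rho_i^2)^{-1/2}\big)$, one controls $\mathrm{Lip}(Q_i^{\dagger})$ (which grows at most polynomially, of order $i^{k/2}$ up to logarithmic factors) and keeps the range of $Q_i^{\dagger}$ bounded, so that steepness of order $i^{k/2}$ can occur only on a set of $u$ of measure $O(i^{-k/2})$; this makes $\norm{e_i}_2$ small enough that, after multiplying by $\alpha_i=a(i+1)^{-1}$, the bias series converges. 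A preliminary stability step — dropping the nonpositive descent term to get $\sup_i\E[U_i]<\infty$ and hence a.s.\ boundedness of the iterates via the supermartingale maximal inequality — is what lets us treat $\norm{Q_{i-1}^{\dagger}-Q^*}_2$ as bounded at this stage. I expect this bias estimate, and the coupled control of the Lipschitz constant and the range of $Q_i^{\dagger}$, to be where the precise forms of the learning-rate and bandwidth schedules (Assumptions \ref{as:alpha}--\ref{as:bandwidth}) are genuinely needed.

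Putting the pieces together, the conditional recursion becomes $\E[U_i\mid\mathcal F_{i-1}]\le U_{i-1}-\tfrac{2\alpha_i}{M}Z_{i-1}+\zeta_i$ with $\zeta_i\ge0$, $\mathcal F_{i-1}$-measurable, and $\sum_i\zeta_i<\infty$ a.s.\ (absorbing $\alpha_i^2$ and the bias cross-term). The Robbins--Siegmund supermartingale convergence theorem then yields that $U_i$ converges a.s.\ and that $\sum_i\alpha_i Z_{i-1}<\infty$ a.s. Since $\sum_i\alpha_i=\infty$ by {Assumption \ref{as:alpha}}, necessarily $\liminf_i Z_{i-1}=0$ a.s.; combined with boundedness of the iterates and a short argument showing no mass escapes $\mathrm{supp}(P^*)$ (there the drift $m_i$ points inward), this forces $\liminf_i U_i=0$, and as $U_i$ converges we conclude $U_i\to0$ a.s., i.e.\ $d_2(Q_n^{\dagger},Q^*)\to0$ a.s.\ $[P^*]$. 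Finally, $\norm{Q_i-Q_i^{\dagger}}_2\le2\alpha_i\to0$ (again by {Proposition \ref{prop:rearr}}), so the same limit holds for $Q_n$ itself.
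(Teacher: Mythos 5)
Your overall architecture coincides with the paper's: pass from $Q_i$ to $Q_i^{\dagger}$ via the rearrangement contraction ({Proposition \ref{prop:rearr}}), expand the squared $L^2$ distance, split the conditional drift into a nonnegative descent term and a smoothing bias, lower-bound the descent term by $M^{-1}d_2^2(Q_{i-1}^{\dagger},Q^*)$ using the Lipschitz hypothesis, and close with Robbins--Siegmund ({Theorem \ref{app:thm_asmart}}) together with $\sum_i\alpha_i=\infty$. Your mean-field identification (integration by parts against $-\partial_v H_{\rho_i}(u,v)$) is a correct, slightly different way of writing the paper's decomposition into $T(Q_{i-1}^{\dagger})$ and $\zeta_i$, and your final remark $\norm{Q_i-Q_i^{\dagger}}_2\le 2\alpha_i$ is fine (though not needed for the statement).

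The genuine gap is exactly where you flag "the main obstacle": the quantitative bound on the bias term is asserted, not proved, and the mechanism you propose for it is both unsubstantiated and unnecessary. You claim a simultaneous induction controlling $\mathrm{Lip}(Q_i^{\dagger})$ (growth of order $i^{k/2}$) and the range of the iterates, plus a kernel-concentration estimate "of order $\sqrt{1-\rho_i^2}$ on average"; none of these is established, and the concentration of the copula kernel $-\partial_v H_{\rho}(u,v)$ is not uniform in $u$ near $0$ and $1$ (the Gaussian quantile transform distorts the bandwidth there), which is precisely the delicate point. The paper avoids all dependence on the regularity of the iterate: by Jensen's inequality, $\kappa_n\le\int\int\bigl(H_{\rho_n}(u,P_{n-1}(y))-\mathbbm{1}(P_{n-1}(y)\le u)\bigr)^2\,du\,dP^*(y)$, and the inner integral $U(v;\rho)=\int_0^1(H_\rho(u,v)-\mathbbm{1}(v\le u))^2\,du$ is bounded \emph{uniformly in $v$} by $K^2\sqrt{1-\rho^2}$ ({Lemma \ref{app:lem_kappan}}), via explicit bivariate-normal integral identities; this gives $\sqrt{\kappa_n}=O(n^{-k/4})$ and hence summability of $\alpha_n\sqrt{\kappa_n}$ for any $k>0$, with no control of $\mathrm{Lip}(Q_i^{\dagger})$ whatsoever. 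Your alternative observation that $v\mapsto P^*(Q_{i-1}^{\dagger}(v))$ is monotone and $[0,1]$-valued could in principle also yield an iterate-free bound (an $L^1$ smoothing estimate of order the effective bandwidth, then interpolation to $L^2$), but you would still have to prove the bandwidth estimate uniformly over $u$, i.e.\ do the analogue of {Lemma \ref{app:lem_kappan}} --- which is the technical heart of the theorem and is missing from your proposal. A secondary, smaller gap: your truncation $\Pi$ onto $\mathrm{supp}(P^*)$ and the "no mass escapes" claim are only sketched; since the increments are $O(\alpha_i)$ with $\sum_i\alpha_i=\infty$, converting $\liminf_i\norm{\Pi Q_{i-1}^{\dagger}-Q^*}_2=0$ into $\liminf_i U_i=0$ does require an actual argument (the paper instead feeds the Lipschitz bound directly into $T(Q_{i-1}^{\dagger})\ge M^{-1}L_{i-1}$ and extracts a subsequence along which $L$ vanishes, then uses a.s.\ convergence of $L_i$).
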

\begin{proof_outline}
    The proof has similar components to the proofs of consistency in \cite{Hahn2018} and \cite{Fong2023a}, but require additional tools specialized to quantile functions and rearrangement. We show that $d_2( Q_n^{\dagger},Q^*)$ is an \textit{almost supermartingale} in the sense of \cite{Robbins1971}. The bandwidth condition ensures that (\ref{eq:rearr_quantile_copula}) approaches a variant of the step update (\ref{eq:recursive_quantile_freq}). The condition $\sum \alpha_n^2 < \infty$ prevent the errors from  accumulating so $d_2(Q_n^{\dagger},Q^*)$ converges a.s. 
 The Lipschitz assumption on $Q^*$ and $\sum \alpha_n =  \infty$ guarantee that the distance converges to 0 a.s.  We also highlight that the rearrangement inequality in {Proposition \ref{prop:rearr}} is crucial in handling the rearrangement step after updating with each data point.  
\end{proof_outline}

Let us now write $Q_{n\infty}^{\dagger}$ as the random function obtained from {Algorithm \ref{alg:QMP}} starting at $Q_n^{\dagger}$ for each $n$, where the additional index $n$ on $Q_{n\infty}^{\dagger}$ is to indicate the dependence on the initial $Q_n^{\dagger}$. A novel contribution of our work is that consistency of $Q_n^{\dagger}$ can be used to show consistency of the entire QMP, which follows from an application of Markov's inequality and {Proposition \ref{prop:rearr}}.
\begin{theorem}\label{th:consistency}
    Under  {Assumptions \ref{as:alpha}},  {\ref{as:bandwidth}} and  {\ref{as:consistency_truth}}, for any $\varepsilon > 0$, the QMP from {Algorithms \ref{alg:fit}} and  {\ref{alg:QMP}} satisfies
    \begin{align*}
\Pi_n\left(Q^{\dagger}_{n\infty}: d_2\left(Q^{\dagger}_{n\infty},Q^*\right) \geq \varepsilon \mid Y_{1:n}\right) \to 0 \quad \textnormal{a.s.}[P^*]
    \end{align*}
\end{theorem}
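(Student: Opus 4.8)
The plan is to reduce the claim to the already-established consistency of the initial estimate $Q_n^{\dagger}$ in Theorem \ref{th:consistency_mean}, using the $L^2$-contraction property of increasing rearrangement (Proposition \ref{prop:rearr}) to bound the extra displacement incurred during predictive resampling, and then eliminating that displacement with a second-moment bound and Markov's inequality. The structural fact that makes this work is that the predictive-resampling fluctuation $S_{n+1}$ from (\ref{eq:SN}) is independent of the data $Y_{1:n}$, so its concentration bound is uniform over data realizations.

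First I would split by the triangle inequality, $d_2(Q^{\dagger}_{n\infty},Q^*) \le d_2(Q^{\dagger}_{n\infty},Q^{\dagger}_n) + d_2(Q^{\dagger}_n,Q^*)$. The second term is a deterministic functional of $Y_{1:n}$ and tends to $0$ a.s.$[P^*]$ by Theorem \ref{th:consistency_mean}; on that probability-one event fix the realization $Y_{1:\infty}$ and choose $n_0$ so that $d_2(Q^{\dagger}_n,Q^*) < \varepsilon/2$ for every $n \ge n_0$. For the first term, running Algorithm \ref{alg:QMP} from $Q_n^{\dagger}$ produces bounded iterates $Q_N = Q_n^{\dagger} + \sum_{i=n+1}^{N}\alpha_i\big(u - H_{\rho_i}(u,V_i)\big)$, so Proposition \ref{prop:rearr} (with the already-monotone $Q_n^{\dagger}$, whose rearrangement is itself) gives $d_2(Q_N^{\dagger},Q_n^{\dagger}) \le \|Q_N - Q_n^{\dagger}\|_2$, and passing $N \to \infty$ using the $L^2$ convergences of Propositions \ref{prop:L2_mart} and \ref{prop:rearr_sobolev} yields $d_2(Q^{\dagger}_{n\infty},Q^{\dagger}_n) \le \|S_{n+1}\|_2$. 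Since $S_{n+1}$ depends only on $V_i$, $i\ge n+1$, and not on $Y_{1:n}$, for $n \ge n_0$ Markov's inequality gives
\[
\Pi_n\!\left(d_2(Q^{\dagger}_{n\infty},Q^*)\ge\varepsilon \mid Y_{1:n}\right)
\;\le\; \Pi_n\!\left(\|S_{n+1}\|_2 \ge \varepsilon/2 \mid Y_{1:n}\right)
\;\le\; \frac{4}{\varepsilon^2}\,\E\!\left[\|S_{n+1}\|_2^2\right].
\]
It then remains to bound $\E[\|S_{n+1}\|_2^2] = \int_0^1 \sum_{i=n+1}^{\infty}\alpha_i^2\,\mathrm{Var}\big(H_{\rho_i}(u,V_i)\big)\,du$, using independence of the increments and $\E[u - H_{\rho_i}(u,V_i)] = 0$; since $H_{\rho_i}\in[0,1]$ each variance is at most $1/4$, so $\E[\|S_{n+1}\|_2^2] \le \tfrac14\sum_{i=n+1}^{\infty}\alpha_i^2 \le a^2/\big(4(n+1)\big)$ under Assumption \ref{as:alpha}. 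Hence the right-hand side is $O(n^{-1})$, uniformly over the fixed data realization, and tends to $0$; since this holds for a.e.\ $Y_{1:\infty}$, the a.s.$[P^*]$ conclusion follows.

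The main obstacle, such as it is, is the bookkeeping needed to combine a data-dependent almost-sure statement (consistency of $Q_n^{\dagger}$) with a data-independent mean-square bound on the predictive fluctuation $S_{n+1}$, since these stem from different sources of randomness. The $\varepsilon/2$ split handles this cleanly precisely because the Markov bound on $\|S_{n+1}\|_2$ does not depend on $Y_{1:n}$, so no uniform-in-$\omega$ argument is needed; the remaining ingredients — the variance estimate and the rearrangement contraction inequality — are routine.
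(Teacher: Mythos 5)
Your argument is correct and rests on the same two pillars as the paper's proof---consistency of the point estimate from Theorem \ref{th:consistency_mean} and an $O(n^{-1})$ second-moment bound on the predictive-resampling fluctuation---combined via the rearrangement contraction (Proposition \ref{prop:rearr}) and a Markov/Chebyshev bound, but your bookkeeping is arranged differently. The paper applies the rearrangement inequality against $Q^*$ (which is its own rearrangement), bounds $\Pi_n\left(d_2(Q^{\dagger}_{n\infty},Q^*)\geq\varepsilon\mid Y_{1:n}\right)$ by $\varepsilon^{-2}\,\E\left[d_2^2(Q_{n\infty},Q^*)\mid Y_{1:n}\right]$, and expands the square into a posterior-variance term, a cross term and the point-estimate term $d_2^2(Q_n^{\dagger},Q^*)$; you instead split the event at $\varepsilon/2$ via the triangle inequality, apply the rearrangement inequality against the already-monotone $Q_n^{\dagger}$, and apply Chebyshev only to $\|S_{n+1}\|_2$. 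This removes the cross term and makes explicit that the fluctuation bound is independent of $Y_{1:n}$, at the (cosmetic) cost of a realization-dependent $n_0$; both routes are of comparable length and deliver the same rate, as the contraction-rate proof of Theorem \ref{th:contraction} also shows. The one step you should spell out is the paper's opening observation that $Q_n^{\dagger}$ produced by Algorithm \ref{alg:fit} lies in $H^1((0,1))$, i.e.\ satisfies Assumptions \ref{as:L2} and \ref{as:weak_deriv}: your appeals to Propositions \ref{prop:L2_mart} and \ref{prop:rearr_sobolev}, as well as the essential boundedness of $Q_N$ and $Q_{n\infty}$ needed to invoke Proposition \ref{prop:rearr} and to pass to the limit $N\to\infty$, all presuppose this. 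It follows from the Lipschitz $Q_0$ in Assumption \ref{as:consistency_truth}, the smoothness of the copula update, and the fact that rearrangement preserves the Sobolev space (Theorem \ref{app:thm_sobolev}), exactly as in the first paragraph of the paper's appendix proof; with that line added, your proof is complete.
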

\begin{proof_outline}
    We follow a similar approach to Example 8.5 from \cite{Ghosal2017}.
    As $d_2^2(Q^{\dagger}_{n\infty}, Q^*) \leq d_2^2\left(Q_{n\infty}, Q^*\right)$ from {Proposition \ref{prop:rearr}}, we have from Markov's inequality that
    \begin{align*}
        \Pi_n\left(Q^{\dagger}_{n\infty}: d_2\left(Q^{\dagger}_{n\infty},Q^*\right) \geq \varepsilon \mid Y_{1:n}\right) \leq \frac{1}{\varepsilon^2 }\E\left[d^2_2\left(Q_{n\infty},Q^*\right)\mid Y_{1:n}\right].
    \end{align*}
 We decompose $d^2_2\left(Q_{n\infty},Q^*\right)$ into a posterior variance component $\E[d^2_2(Q_{n\infty}, Q_n^{\dagger}) \mid Y_{1:n}]$, a point estimate component $d_2( Q_n^{\dagger},Q^*)$ and a cross-term. The posterior variance is sent to 0 by the sequence  $\alpha_N$, so posterior consistency depends only on consistency of $Q_n^{\dagger}$, which is guaranteed by 
    {Theorem \ref{th:consistency_mean}}.
\end{proof_outline}
Once again, the connections of the $L^2$ distance between quantile functions and the Wasserstein  metric suggest that the QMP over $P_\infty$ is consistent at $P^*$ in the Wasserstein metric; posterior asymptotics in this metric space has also been studied by \cite{Chae2021}.

\subsubsection{Posterior contraction rate}\label{sec:contraction}
A more challenging but informative result is the posterior contraction rate, which quantifies how quickly the QMP concentrates on the true $Q^*$. Once again, we will rely on the convergence rate of $Q_n^{\dagger}$ to the truth, but we have only managed to show results for quite stringent additional assumptions, given below.

\begin{assumption}[Lipschitz quantile functions, learning rate and bandwidth]\label{as:contraction_init}
   Suppose {Assumption \ref{as:consistency_truth}} holds, and
     additionally that $\alpha_i =  a(i+1)^{-1}$ for  $a > M/2$ and the bandwidth satisfies $\rho_i = \sqrt{1- ci^{-k}}$ 
    for $k > 4$ and $c \in (0,1)$.
\end{assumption}
\begin{theorem}\label{th:contraction_mean}
    Under {Assumption \ref{as:contraction_init}}, we have that for any $0< \delta < 1$, {Algorithm \ref{alg:fit}} satisfies 
    \begin{align*}
n^{\delta}d^2_2\left(Q_n^{\dagger},Q^*\right) \to 0 \quad \text{a.s.}[P^*]
    \end{align*}
\end{theorem}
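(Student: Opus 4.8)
\textbf{Proof proposal for Theorem~\ref{th:contraction_mean}.}
The plan is to refine the almost-supermartingale argument from Theorem~\ref{th:consistency_mean}, but now tracking the \emph{rate} at which $e_n := d_2^2(Q_n^{\dagger},Q^*)$ decays rather than merely its convergence to zero. The starting point is the recursion for the unrearranged update (\ref{eq:rearr_quantile_copula}): expanding $d_2^2(Q_{i+1},Q^*) = \|Q_i^{\dagger} + \alpha_{i+1}g_{i+1} - Q^*\|_2^2$ where $g_{i+1}(u) = u - H_{\rho_{i+1}}(u,P_i(Y_{i+1}))$, and taking conditional expectations given $Y_{1:i}$, I would obtain a bound of the form
\begin{align*}
\E\left[d_2^2(Q_{i+1},Q^*)\mid Y_{1:i}\right] \le \left(1 - 2\alpha_{i+1}\gamma_i + o(\alpha_{i+1})\right) e_i + \alpha_{i+1}^2 B + \text{(bandwidth bias term)},
\end{align*}
where $\gamma_i > 0$ is a contraction constant coming from the fact that the population "gradient" $\int_0^1 g(u)\,\mathrm{d}v$ evaluated at a quantile function estimate points toward $Q^*$, and the Lipschitz constant $M$ of $Q^*$ controls how $\gamma_i$ relates to the step size — this is exactly where the assumption $a > M/2$ enters, ensuring $2a\gamma > 1$ so that the effective recursion is $e_{i+1} \lesssim (1 - \beta/i)e_i + O(i^{-2})$ with $\beta = 2a\gamma > 1$. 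Here I would apply Proposition~\ref{prop:rearr} to pass from $d_2(Q_{i+1},Q^*)$ to $d_2(Q_{i+1}^{\dagger},Q^*) = d_2(e_{i+1})$, since $Q^* = (Q^*)^{\dagger}$, so the rearrangement only helps.

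Next I would invoke a deterministic-rate lemma for stochastic recursions of the form $\E[e_{i+1}\mid \mathcal{F}_i] \le (1-\beta/i)e_i + C i^{-2}$ (a Robbins--Siegmund / Chung-type argument): when $\beta > 1$ this yields $\E[e_n] = O(n^{-1})$, and more importantly $i\,e_i$ (or $i^{1-\delta}e_i$ for the almost-sure version) is itself an almost supermartingale up to summable errors, so by the Robbins--Siegmund theorem it converges a.s. to a finite limit. To upgrade the in-expectation rate to the almost-sure statement $n^\delta e_n \to 0$ for every $\delta<1$, I would multiply the recursion by $n^\delta$, show that $n^\delta e_n$ satisfies $\E[(i+1)^\delta e_{i+1}\mid\mathcal{F}_i] \le (1 + \eta_i)\, i^\delta e_i - (\text{nonneg. term}) + \xi_i$ with $\sum\eta_i<\infty$ and $\sum\xi_i<\infty$ (the latter requiring $\delta<1$ so that the $n^\delta \cdot n^{-2}$ contributions remain summable, and $\beta>1$ so the linear term absorbs the extra $n^\delta/n$ factor), hence $n^\delta e_n$ converges a.s.; a separate Cesàro/summability argument then forces the limit to be $0$.

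The main obstacle I expect is controlling the \emph{bandwidth bias term}: replacing the idealized step indicator $\mathbbm{1}(P_i(Y_{i+1})\le u)$ by the smoothed copula conditional $H_{\rho_{i+1}}(u,\cdot)$ introduces an $O(1-\rho_{i+1}^2)^{1/2} = O(c^{1/2} i^{-k/2})$ discrepancy in the drift, and this must be summable \emph{after} multiplication by $n^\delta$ and division by the step size $\alpha_i \asymp i^{-1}$; concretely one needs $\sum i^{\delta} \cdot i^{-1} \cdot i^{-k/2} < \infty$ uniformly for $\delta<1$, which is why Assumption~\ref{as:contraction_init} demands $k>4$ (giving a comfortable margin; $k>2$ would be the bare minimum but the cross-terms with $e_i^{1/2}$ via Cauchy--Schwarz cost another factor). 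A secondary technical point is verifying that the "contraction constant" $\gamma_i$ is bounded below uniformly in $i$ and in the (random) shape of $Q_i^{\dagger}$ — here the Lipschitz control on $Q^*$ and the fact that $Q_i^{\dagger}$ stays in a fixed compact interval (inherited from $Q_0$ and the bounded updates) are what make the population gradient field non-degenerate near $Q^*$. Once these two estimates are in place, the rest is a routine, if bookkeeping-heavy, application of the Robbins--Siegmund almost-supermartingale convergence theorem.
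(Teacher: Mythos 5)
Your proposal follows essentially the same route as the paper: it extends the consistency almost-supermartingale recursion, uses the Lipschitz bound $T(Q^{\dagger})\geq M^{-1}d_2^2(Q^{\dagger},Q^*)$ so that $a>M\delta/2$ makes the drift dominate after multiplying by $n^{\delta}$, controls the copula-smoothing bias via Cauchy--Schwarz (which is exactly why $k>4$ appears, the paper's $\sqrt{\kappa_n}=O(n^{-k/4})$ matching your ``extra factor'' remark), applies Proposition~\ref{prop:rearr} and Robbins--Siegmund, and forces the a.s.\ limit to zero by summability of the negative-drift terms. The only slight imprecision is stating the drift discrepancy as $O((1-\rho_{i}^2)^{1/2})$ rather than the $O((1-\rho_i^2)^{1/4})$ bound the paper actually proves for $\sqrt{\kappa_i}$, but since you budget for the Cauchy--Schwarz loss the final bookkeeping agrees with the paper's.
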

\begin{proof_outline}
    The proof follows a similar argument to \cite{Aboubacar2014}, where we extend the consistency proof to show that $n^{\delta}d^2_2(Q_n^{\dagger},Q^*) $ is an almost supermartingale. 
\end{proof_outline}
\begin{theorem}\label{th:contraction}
Under {Assumption \ref{as:contraction_init}}, the sequence $\varepsilon_n = n^{-\delta/2}$ for any $0 < \delta < 1$ is a valid posterior contraction rate for the QMP from {Algorithms \ref{alg:fit}} and  {\ref{alg:QMP}}, that is for any finite $K> 0$, we have
\begin{align*}
\Pi_n\left(Q^{\dagger}_{n\infty}: d_2\left(Q^{\dagger}_{n\infty},Q^*\right) \geq K\varepsilon_n  \mid Y_{1:n}\right) \to 0 \quad \textnormal{a.s.}[P^*]  
\end{align*}
\end{theorem}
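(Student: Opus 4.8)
The plan is to mirror the proof of Theorem \ref{th:consistency}, replacing the qualitative consistency input (Theorem \ref{th:consistency_mean}) with the quantitative rate from Theorem \ref{th:contraction_mean}. First I would apply Markov's inequality together with Proposition \ref{prop:rearr}, exactly as in the proof outline of Theorem \ref{th:consistency}: since $d_2^2(Q^{\dagger}_{n\infty},Q^*) \leq d_2^2(Q_{n\infty},Q^*)$, we get
\begin{align*}
\Pi_n\left(Q^{\dagger}_{n\infty}: d_2(Q^{\dagger}_{n\infty},Q^*) \geq K\varepsilon_n \mid Y_{1:n}\right) \leq \frac{1}{K^2\varepsilon_n^2}\, \E\left[d_2^2(Q_{n\infty},Q^*)\mid Y_{1:n}\right].
\end{align*}
With $\varepsilon_n = n^{-\delta/2}$ this upper bound is $K^{-2} n^{\delta}\,\E[d_2^2(Q_{n\infty},Q^*)\mid Y_{1:n}]$, so it suffices to show $n^{\delta}\E[d_2^2(Q_{n\infty},Q^*)\mid Y_{1:n}] \to 0$ a.s.$[P^*]$.

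Second I would decompose $d_2^2(Q_{n\infty},Q^*)$ as in the Theorem \ref{th:consistency} outline into the posterior-variance term $\E[d_2^2(Q_{n\infty},Q_n^{\dagger})\mid Y_{1:n}]$, the point-estimate term $d_2^2(Q_n^{\dagger},Q^*)$, and a cross term bounded via Cauchy--Schwarz by (twice) the geometric mean of the two. For the point-estimate term, Theorem \ref{th:contraction_mean} gives directly that $n^{\delta} d_2^2(Q_n^{\dagger},Q^*) \to 0$ a.s.$[P^*]$ for every $0<\delta<1$. For the posterior-variance term, I would use the representation $S_N = Q_\infty - Q_{N-1} = \sum_{i\geq N}\alpha_i(u - H_{\rho_i}(u,V_i))$ from equation (\ref{eq:SN}): the summands are independent, centered (by the martingale condition / generative coherence), and bounded in absolute value by $\alpha_i$, so $\E[\|Q_{n\infty}-Q_n^{\dagger}\|_2^2\mid Y_{1:n}] = \sum_{i\geq n+1}\alpha_i^2\,\E[\int_0^1 (u-H_{\rho_i}(u,V_i))^2 du]$. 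Under Assumption \ref{as:contraction_init}, $\alpha_i = a(i+1)^{-1}$, so this tail sum is $O(n^{-1})$, which is $o(n^{-\delta})$ for any $\delta<1$. Hence $n^{\delta}\E[d_2^2(Q_{n\infty},Q_n^{\dagger})\mid Y_{1:n}] \to 0$. The cross term is then controlled by combining the two rates through Cauchy--Schwarz, giving $n^{\delta}$ times a quantity of order $\sqrt{o(n^{-\delta})}\cdot\sqrt{o(n^{-\delta})} = o(n^{-\delta})$, hence $\to 0$ a.s.$[P^*]$. Assembling the three pieces yields $n^{\delta}\E[d_2^2(Q_{n\infty},Q^*)\mid Y_{1:n}]\to 0$ a.s.$[P^*]$, and plugging into the Markov bound completes the proof.

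The main obstacle is bookkeeping the rate of the posterior-variance tail sum and making sure it genuinely dominates $\varepsilon_n^2 = n^{-\delta}$: this is where Assumption \ref{as:contraction_init}'s requirement $\delta<1$ enters, since the variance contracts at the parametric $n^{-1}$ rate while the point estimate only achieves the slower $n^{-\delta}$ rate from Theorem \ref{th:contraction_mean}, and the overall contraction rate is therefore bottlenecked by the latter. A secondary technical point is verifying that the cross term does not spoil the argument --- but since both constituent terms are $o(n^{-\delta})$, Cauchy--Schwarz gives a cross term of the same order, so no extra assumptions are needed. Everything else is a routine adaptation of the already-established Theorem \ref{th:consistency} argument, now carrying the explicit power of $n$ through each step. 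I would remark in closing, as the paper does for Theorem \ref{th:consistency}, that via the identification of $d_2$ on quantile functions with the Wasserstein-2 distance this also yields a contraction rate for the QMP over $P_\infty$ at $P^*$ in the Wasserstein metric.
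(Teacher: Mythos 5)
Your proposal is correct and follows essentially the same route as the paper: Markov's inequality combined with the rearrangement inequality (Proposition \ref{prop:rearr}), the same posterior-variance / cross-term / point-estimate decomposition carried over from the proof of Theorem \ref{th:consistency}, the $O(n^{-1})$ bound on $\E[d_2^2(Q_{n\infty},Q_n^{\dagger})\mid Y_{1:n}]$, and Theorem \ref{th:contraction_mean} for the point-estimate term. The only immaterial difference is the final bookkeeping: the paper applies the point-estimate rate at an auxiliary exponent $\delta' \in (\delta,1)$ to obtain a bound of order $n^{-(\delta'-\delta)}$, whereas you use directly that Theorem \ref{th:contraction_mean} gives $d_2^2(Q_n^{\dagger},Q^*) = o(n^{-\delta})$ at the same $\delta$; both yield the stated convergence.
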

\begin{proof_outline}
    The proof continues from that of {Theorem \ref{th:consistency}}. The posterior variance is $O(n^{-1})$ due to the sequence $\alpha_n$, so we just require the convergence rate of $Q_n^{\dagger}$ as provided by {Theorem \ref{th:contraction_mean}}. 
\end{proof_outline}
Although it is encouraging that obtaining a posterior contraction rate is possible for the QMP, the assumptions on $a$ and $k$ are not conducive for good performance in practice, as we will see in Section \ref{sec:implications}. 
In particular, we require $k < 1$ for smoothness, and $M$ can be very large if $P^*$ has light tails, greatly inflating posterior variance. As a result, we do not suggest the usage of the above {Theorem \ref{th:contraction}} for hyperparameter setting. We believe it likely that the condition on $k$ is an artefact of the proof, and suspect it may be relaxed. However, it is possible that the assumption on $a > M/2$ is necessary. One potential solution is to only consider the posterior contraction rate of the quantile function on a subset of $(0,1)$, which can decrease the required lower bound on $a$. Another potential remedy for this impractical setting of $a$ is to use a functional learning rate $a(u)$ based on a density estimate, which we discuss in Section \ref{sec:functional}.

\section{Hyperparameters and approximate posterior sampling}\label{sec:implications}
Although the theory just introduced is technical, we now shed light on the practical utility of the above theory and its extensions for practical selection of hyperparameters and approximate posterior sampling. 

\subsection{Learning rate }\label{sec:learning_rate}

 The sequence $\alpha_i$ is extremely important for both consistency of $Q_n^{\dagger}$ and the amount of uncertainty obtained when predictive resampling. As a reminder, we will let $\alpha_i = a(i+1)^{-1}$, as in {Assumption \ref{as:alpha}},
where $a \in (0,\infty)$ which we will refer to as the learning rate.
The above sequence clearly satisfies (\ref{eq:alpha_conditions}) due the rate of $\alpha_i \to 0$. The choice of the learning rate $a$ however requires much care, as it directly controls the magnitude of the posterior uncertainty. Perhaps surprisingly, a default choice for $a$ can be justified by considering the asymptotic posterior variance of a low-dimensional functional of the QMP, which we now discuss. This works well in practice across a general range of settings.

Consider the mean of $P_\infty$, which can be written as $
  \mu_\infty = \int_0^1 \, Q_\infty^{\dagger}(u) \, du = \int_0^1 \, Q_\infty(u) \, du$,
where the last equality can be seen from the integral preserving property of increasing rearrangement ({Lemma \ref{app:lem_equi}} in the Appendix). This allows us to work with $Q_\infty$ directly instead of $Q_\infty^{\dagger}$, where the latter is much more challenging due to its non-linearity.
The following proposition quantifies the posterior mean and asymptotic variance of $\mu_\infty$.
\begin{proposition}\label{prop:mu}
    Let $\mu_n = \int Q^{\dagger}_n(u) \, du$ for $\{Q^{\dagger}_n\}_{n \geq 1}$ from {Algorithm \ref{alg:fit}},  and $\mu_{n\infty} = \int_0^1 Q_{n\infty}^{\dagger}(u)\, du$ where $Q_{n\infty}^{\dagger}$ arises from {Algorithm \ref{alg:QMP}} starting from $Q^{\dagger}_n$. Under {Assumptions \ref{as:alpha}} and {\ref{as:bandwidth}}, we have $E[\mu_{n\infty} \mid Y_{1:n}] = \mu_n$  for each $n\geq 1$, and ${n}\, \E\left[\left(\mu_{n\infty} - \mu_n\right)^2\mid Y_{1:n}\right]\to {a^2}/{12}$ a.s.$[P^*]$.
\end{proposition}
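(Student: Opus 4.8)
The plan is to reduce both claims to elementary facts about the scalar random series $\mu_{n\infty}-\mu_n$, which turns out to have independent summands that do not depend on the data at all. First I would invoke the integral-preserving property of increasing rearrangement ({Lemma \ref{app:lem_equi}}) to write $\mu_{n\infty}=\int_0^1 Q_{n\infty}(u)\,du$, where $Q_{n\infty}$ is the (non-rearranged) almost sure limit of {Algorithm \ref{alg:QMP}} initialised at $Q_n^{\dagger}$. Since each step of {Algorithm \ref{alg:QMP}} adds the term $\alpha_i\big(u-H_{\rho_i}(u,V_i)\big)$, we have $Q_{n\infty}-Q_n^{\dagger}=\sum_{i=n+1}^\infty\alpha_i\big(u-H_{\rho_i}(u,V_i)\big)$ exactly as in (\ref{eq:SN}), with $V_i\iid\mathcal{U}(0,1)$ independent of $Y_{1:n}$. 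The partial sums converge in $L^2((0,1))$ by {Proposition \ref{prop:L2_mart}}, and $(0,1)$ has finite measure, so I can integrate term by term to obtain
\begin{align*}
\mu_{n\infty}-\mu_n=\sum_{i=n+1}^\infty\alpha_i\,g_i,\qquad g_i:=\int_0^1\big(u-H_{\rho_i}(u,V_i)\big)\,du=\tfrac12-\int_0^1 H_{\rho_i}(u,V_i)\,du,
\end{align*}
a sum of independent, uniformly bounded ($|g_i|\le\tfrac12$) random variables that are independent of $Y_{1:n}$.

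The next step is to evaluate $g_i$ in closed form. Substituting $u=\Phi(x)$ in $\int_0^1 H_\rho(u,v)\,du$ and using the definition (\ref{eq:condit_copula}) rewrites this as $\E\big[\Phi\big((X-\rho\Phi^{-1}(v))/\sqrt{1-\rho^2}\big)\big]$ for $X\sim N(0,1)$, and the standard identity $\E[\Phi(aX+b)]=\Phi\big(b/\sqrt{1+a^2}\big)$ gives $\int_0^1 H_\rho(u,v)\,du=\Phi\big(-\rho\Phi^{-1}(v)/\sqrt{2-\rho^2}\big)$. Hence $g_i=\Phi\big(\rho_i\Phi^{-1}(V_i)/\sqrt{2-\rho_i^2}\big)-\tfrac12$. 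Since $\Phi^{-1}(V_i)\sim N(0,1)$ the argument of $\Phi$ is symmetric about $0$, so $\E[g_i]=0$ exactly. As $\sum_i\alpha_i^2<\infty$, the partial sums $\sum_{i=n+1}^N\alpha_i g_i$ form an $L^2$-bounded mean-zero martingale converging a.s.\ and in $L^1$ to $\mu_{n\infty}-\mu_n$; together with independence of the series from $Y_{1:n}$ this yields $\E[\mu_{n\infty}\mid Y_{1:n}]=\mu_n$, the first assertion. (Equivalently, this is the martingale property (\ref{eq:martingale}) of $Q_N$ under {Algorithm \ref{alg:QMP}} integrated over $u$ via Fubini.)

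For the variance I would use independence of the $g_i$, and of the whole series from $Y_{1:n}$, to get $\E\big[(\mu_{n\infty}-\mu_n)^2\mid Y_{1:n}\big]=\sum_{i=n+1}^\infty\alpha_i^2\,\E[g_i^2]=a^2\sum_{i>n}(i+1)^{-2}\E[g_i^2]$ --- a deterministic quantity, so the stated a.s.$[P^*]$ convergence is automatic. Writing $\psi(\rho):=\mathrm{Var}\big(\Phi(\rho Z/\sqrt{2-\rho^2})\big)$ with $Z\sim N(0,1)$, the map $\psi$ is continuous on $(0,1]$, and since $\Phi(\rho Z/\sqrt{2-\rho^2})\to\Phi(Z)\sim\mathcal{U}(0,1)$ as $\rho\to1$, bounded convergence gives $\psi(1)=1/12$. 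By {Assumption \ref{as:bandwidth}}, $\rho_i=\sqrt{1-ci^{-k}}\to1$, so $\E[g_i^2]=\psi(\rho_i)\to1/12$. Comparing the tail sum with an integral gives $n\sum_{i>n}(i+1)^{-2}\to1$, and a Toeplitz/Ces\`aro argument (a weighted average of a convergent sequence converges to its limit) then gives $n\,\E\big[(\mu_{n\infty}-\mu_n)^2\mid Y_{1:n}\big]\to a^2/12$.

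I expect no deep obstacle here: the Gaussian copula integral and the limit $\psi(\rho_i)\to1/12$ are routine computations, and the unbiasedness claim is essentially immediate. The only points requiring genuine care are the term-by-term integration of the functional series and the passage of the (conditional) expectation through the infinite sum, both of which are underwritten by the $L^2$-bounded martingale structure of $S_{n+1}$ that {Proposition \ref{prop:L2_mart}} already provides; one should also note that the ``a.s.'' qualifier is in fact vacuous here, since $\mu_{n\infty}-\mu_n$ and its conditional second moment do not depend on $Y_{1:n}$.
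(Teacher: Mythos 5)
Your proposal is correct and takes essentially the same route as the paper's proof: the equimeasurability property ({Lemma \ref{app:lem_equi}}) to replace $Q_{n\infty}^{\dagger}$ by $Q_{n\infty}$, the representation $\mu_{n\infty}-\mu_n=\sum_{i>n}\alpha_i Z_i$ with independent, zero-mean, bounded summands that are independent of $Y_{1:n}$ (justified by the $L^2$-bounded martingale structure), vanishing cross-terms for the conditional second moment, and the tail-sum argument $n\sum_{i>n}(i+1)^{-2}\to 1$ combined with $\E[Z_i^2]\to 1/12$. The only divergence is in one computation: where the paper evaluates $\E[Z_i^2]=\int_0^1\int_0^1 C_{\rho_i^2}(u,u')\,du\,du'-1/4$ via {Lemma \ref{app:lem_covariance}} and dominated convergence with the copula ordering, you obtain the closed form $Z_i=\Phi\bigl(\rho_i\Phi^{-1}(V_i)/\sqrt{2-\rho_i^2}\bigr)-1/2$ from the Gaussian identity $\E[\Phi(aX+b)]=\Phi\bigl(b/\sqrt{1+a^2}\bigr)$ and pass to the limit $1/12$ by bounded convergence, an equivalent and slightly more elementary calculation.
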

If $\mu_n$ is the sample mean for $Y_{1:n} \iid P^*$ with mean $\mu^*$, then it has the asymptotic variance $\sigma^2/n$, where $\sigma^2$ is the variance of $Y \sim P^*$.
A natural matching of the asymptotic variance of the QMP to the sample mean then involves setting $a = \sqrt{12}\, \sigma$. Although we cannot guarantee that  $Q_n^{\dagger}$ gives an efficient estimate of $\mu_n$,  this serves as a simple default choice which works well in practice, and one can also regard our suggestion as a lower bound on $a$. In this case, we can actually show asymptotic normality by leveraging {Theorem \ref{th:gp}}, which we defer to Section \ref{app:sec_munormal} in the Appendix.

One potential weakness of the QMP is that $a$ is only a scalar, so we have to choose a single low-dimensional functional for which we want to match asymptotic variances. 
Nonetheless, our above suggestion based on the mean functional works well in practice. 
As discussed after {Theorem \ref{th:contraction}}, the above issue can also be potentially alleviated with a functional learning rate $a(u)$ which we discuss in Section \ref{sec:functional}, but requires a separate density estimate. More generally, the learning rate for martingale posteriors remains an important open topic of research.

\subsection{Bandwidth sequence }\label{sec:bandwidth}
The bandwidth sequence $\rho_i$ governs the smoothness of the update, and there are two competing effects. First, we would like $\rho_i \to 1$ so $H_{\rho_i}(u,v)$ approaches an indicator function, which is required for frequentist consistency in the $L^2$ norm ({Theorem \ref{th:consistency}}). This is akin to the condition required for kernel density estimation, and we see in practice that having $\rho_i \to 1$ also improves the initial $Q_n^{\dagger}$. On the other hand, {Theorem \ref{th:sobolev_mart}} assures us that posterior samples of the quantile function from the QMP are only absolutely continuous if $\rho_i$ does not approach 1 too quickly. Furthermore, under the assumption of $Q^*(u)$ being differentiable, if $\rho_i$ approaches 1 too quickly, then the weak derivatives $q_n^{\dagger}$ do not approximate the derivative of $q^*$ well. A slower convergence of $\rho_i \to 1$ also results in fewer violations of monotonicity when applying {Algorithm \ref{alg:fit}}. The importance of {Assumption \ref{as:bandwidth}} for {Corollary \ref{cor:continuous}} is illustrated in {Figure \ref{fig:bandwidth} (left, middle)}, where we see that QMP samples of $Q_N^{\dagger}$ are smooth for $k = 0.5$, but non-smooth for $k = 1.5$.

Our suggestion is thus to set the bandwidth sequence as
$\rho_i = \sqrt{1 - ci^{-k}}$ as in {Assumption \ref{as:bandwidth}},
  where $c \in (0,1)$ and $k \in (0,1)$ are two hyperparameters. 
This form arises naturally from the proofs of {Theorems \ref{th:sobolev_mart}} and {\ref{th:consistency}}.
Although both theorems are satisfied for any $k \in (0,1)$,  we find the choice of $k = 0.5$ to work well in practice which balances between smoothness of the QMP and attaining $L^2$ consistency. We then suggest setting the constant $c \in (0,1)$ in a data-adaptive manner, which allows fine-tuning of the smoothness of the 
initial $Q_n^{\dagger}$ to the specific dataset. As we have $\rho_1 = \sqrt{1-c}$ and $\rho_n = \sqrt{1-cn^{-0.5}}$,  the constant $c$ controls the initial value  $\rho_1$ which increases monotonically to $\sqrt{1-cn^{-0.5}}$ as $i \to n$. 

To choose $c$, we suggest maximizing the prequential log  score due to its connections to the marginal likelihood \citep{Dawid1984,Gneiting2007,Fong2020}. In particular, the prequential log score is easy to compute in our setting, as we have
$$\sum_{i = 1}^n \log \left[p_{i-1}(Y_i)\right] = - \sum_{i = 1}^n \log \left[{q_{i-1}^{\dagger}\left(P_{i-1}\left(Y_i\right)\right)}\right]$$
where $q_{i}^{\dagger}$ is the weak derivative of $Q_i^{\dagger}$. The existence of $q_{i}^{\dagger}$ is guaranteed by the absolute continuity of $Q_i^{\dagger}$ and {Theorem \ref{app:thm_sobolev}}
in the Appendix. The choice of the above is justified as we should rely on $q_i^{\dagger}$ in some way to set $c$, as relying on $Q_i^{\dagger}$ alone (e.g. with the $L^2$ norm) will not guarantee smooth estimates. We can compute $q_i^{\dagger}$ easily with finite differences, and $P_{i-1}(Y_i)$ is already computed for our update.

\begin{figure}[!h]
\begin{center}
\includegraphics[width=\textwidth]{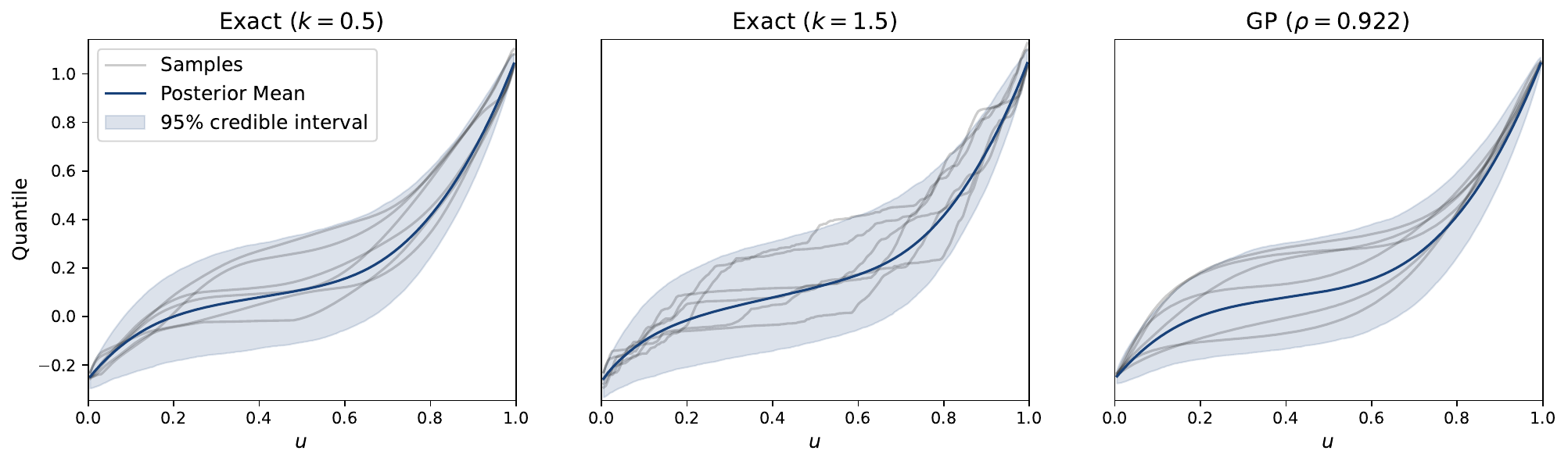}
\end{center}
\caption{Posterior samples, mean and 95\% credible intervals of ${Q}^{\dagger}$ for (Left) $k = 0.5$; (Middle) $k = 1.5$; (Right) GP approximation; all plots are with initial $Q_n(u) = 4(u-0.4)^3 +0.2u$, $n = 10$, $N = n + 5000$, $a \approx 0.95$ and $c = 0.5$; generating $B = 5000$ exact and approximate samples required 15s and 0.2s respectively.}
\label{fig:bandwidth}
\end{figure}

\subsection{Approximate posterior sampling}\label{sec:approx}
This subsection is dedicated to utilizing {Theorem \ref{th:gp}} in order to drastically accelerate quantile predictive resampling. For $S_n(u) = Q_\infty(u) - Q_n(u)$, we essentially have that $a^{-1}\sqrt{n+1}\,S_n \overset{d}{\approx}  \mathbb{G}$ for sufficiently large $n$,
where $\mathbb{G} \sim \mathcal{GP}\left(0, \left(\min\{u,u'\} - uu'\right)\right)$ is the Brownian bridge. Unlike in the case of \cite{Fortini2020, Fortini2023} and the regular MP, the distribution of $\mathbb{G}$ does not depend on any random quantities, which arises from working with the quantile instead of the distribution, and allows easier sampling. Furthermore, we only require realizations of $Q_\infty$ to lie in $L^2((0,1))$ or $H^1((0,1))$, which is much simpler than needing realizations to be valid probability measures as in the regular MP case. It thus seems reasonable to approximate sampling $Q_\infty$ with
$\widetilde{Q}_\infty= Q^{\dagger}_n + {a\, \mathbb{G}}/{\sqrt{n+1}}$.
Algorithmically, this involves drawing a sample from a Brownian bridge, then scaling it  by $a/\sqrt{n}$ and adding it to the initial $Q_n^{\dagger}$. The immediate downside to this approach is that samples of $\widetilde{Q}_\infty$ will not be smooth (i.e. in $H^1((0,1))$) even if $Q_\infty$ is from {Theorem \ref{th:sobolev_mart}}, due to the a.s. nowhere differentiability of paths from a Brownian bridge. 

To remedy this, we propose the following alternative approximation:
\begin{align*}
\widetilde{Q}_\infty= Q^{\dagger}_n + a\,{\mathbb{G}_{\rho_{n+1}}}/{\sqrt{n+1}},
\end{align*}
where $\mathbb{G}_{\rho}$ is a zero-mean GP with covariance function
${k}_{\rho}(u,u') = C_{\rho^2}(u,u') - uu'$ and $C_\rho(u,u')$ is the bivariate normal copula. To justify this above choice, we have the following theorem.
\begin{theorem}\label{th:approx_GP}
    Let $S_n = Q_\infty - Q_n$  and let $\widetilde{S}_n = a\,{\mathbb{G}_{\rho_{n+1}}}/{\sqrt{n+1}}$ be the approximation as defined above, and suppose {Assumptions \ref{as:alpha}} and {\ref{as:bandwidth}} hold true. The covariance function of $S_n$, which we write as $k_n(u,u') := \E\left[S_n(u)\, S_n(u')\right]$,
    satisfies the following for all $u,u' \in (0,1)$ and $n \geq 1$:
\begin{align*}
 k_{\rho_{n+1}}(u,u') \leq  r_n^{-1}{k_n(u,u')} \leq \min\{u,u'\} - uu',
\end{align*}
where $r_n = \sum_{i = n+1}^\infty\alpha_i^2 \approx a^2(n+1)^{-1}$,
and both $k_{\rho_{n+1}}(u,u')$ and $r_n^{-1}k_{n}(u,u')$ converge to $\min\{u,u'\} - uu'$ as $n \to \infty$.
Furthermore, realizations of $\widetilde{S}_n$ lie in $H^1((0,1))$ a.s., and $a^{-1}\sqrt{n}\,\widetilde{S}_n$ converges weakly  in $\ell^\infty((0,1))$ to the Brownian bridge $\mathbb{G}$.
\end{theorem}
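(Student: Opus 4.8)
The plan is to treat the four assertions in sequence: the covariance sandwich, the two limits together with the asymptotics of $r_n$, the $H^1$ regularity of $\widetilde S_n$, and the functional CLT; the first three are bookkeeping, and the functional CLT is the only genuinely delicate part.

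\textbf{Covariance algebra, sandwich, and limits.} Writing $S_n(u) = \sum_{i=n+1}^\infty \alpha_i\,(u - H_{\rho_i}(u,V_i))$ with $V_i\iid\mathcal U(0,1)$, each summand is mean zero by the martingale identity $\int_0^1 H_\rho(u,v)\,dv = C_\rho(u,1) = u$, and the summands are independent across $i$, so $k_n = \sum_{i=n+1}^\infty \alpha_i^2\,\E[(u-H_{\rho_i}(u,V))(u'-H_{\rho_i}(u',V))]$. I would evaluate the one-term covariance exactly as in Section \ref{sec:recursive}: conditioning two independent Gaussian-copula draws on a common uniform variable produces a pair of standard normals with correlation $\rho^2$, giving $\E[H_\rho(u,V)H_\rho(u',V)] = C_{\rho^2}(u,u')$ and hence one-term covariance $C_{\rho^2}(u,u') - uu' = k_\rho(u,u')$. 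Thus $k_n = \sum_{i=n+1}^\infty \alpha_i^2\,k_{\rho_i}$, so $r_n^{-1}k_n$ is the $\{\alpha_i^2\}_{i>n}$-weighted average of $\{k_{\rho_i}\}_{i>n}$. Under Assumption \ref{as:bandwidth}, $\rho_i^2 = 1-ci^{-k}$ is strictly increasing in $i$, and $\theta\mapsto C_\theta(u,u')$ is nondecreasing on $[0,1]$ (Slepian's inequality, or directly from the Gaussian representation), so $k_{\rho_i}(u,u')$ is nondecreasing in $i$; the smallest term $k_{\rho_{n+1}}$ then lower-bounds the weighted average (the left inequality), while the Fréchet--Hoeffding bound $C_\theta(u,u')\le\min\{u,u'\}$ gives $k_{\rho_i}(u,u')\le\min\{u,u'\}-uu'$ for every $i$ (the right inequality). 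For the limits, $\rho_{n+1}^2\to1$ and $C_\theta(u,u')\to\min\{u,u'\}$ as $\theta\uparrow1$ (the Gaussian copula degenerates to the comonotone copula), so $k_{\rho_{n+1}}(u,u')\to\min\{u,u'\}-uu'$, whence the sandwich plus the squeeze theorem force $r_n^{-1}k_n(u,u')$ to the same limit; finally $r_n = \sum_{i=n+1}^\infty a^2(i+1)^{-2}\sim a^2/n$ by comparison with $\int_n^\infty x^{-2}\,dx$.

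\textbf{Sobolev regularity of $\widetilde S_n$.} Fix $\rho = \rho_{n+1}<1$. Since $k_\rho(u,u') = C_{\rho^2}(u,u') - uu'$ is smooth on $(0,1)^2$ with mixed partial $\partial_u\partial_{u'}k_\rho = c_{\rho^2}-1$, where $c_\theta = \partial_1\partial_2 C_\theta$ is the Gaussian copula density, the process $\mathbb G_\rho$ is mean-square differentiable with derivative process of covariance $c_{\rho^2}-1$. By the standard fact that a mean-square differentiable Gaussian process whose derivative process has a.s. integrable sample paths is a.s. absolutely continuous with pathwise derivative equal to the mean-square one, $\mathbb G_\rho\in H^1((0,1))$ a.s. provided $\E\|\mathbb G_\rho\|_{1,2}^2 = \int_0^1 k_\rho(u,u)\,du + \int_0^1 (c_{\rho^2}(u,u)-1)\,du < \infty$. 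The first integral is at most $\int_0^1(u-u^2)\,du = 1/6$, and the second is finite because $c_{\rho^2}(u,u) = (1-\rho^4)^{-1/2}\exp\!\left(\frac{\rho^2}{1+\rho^2}\Phi^{-1}(u)^2\right)$ together with $\Phi^{-1}(u)^2\sim-2\log u$ as $u\downarrow0$ (symmetrically at $u\uparrow1$) gives $c_{\rho^2}(u,u) = O\!\left(u^{-2\rho^2/(1+\rho^2)}\right)$, which is integrable precisely because $\rho<1$ makes $2\rho^2/(1+\rho^2)<1$. The conclusion transfers to the scalar multiple $\widetilde S_n = a\,\mathbb G_{\rho_{n+1}}/\sqrt{n+1}$.

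\textbf{The functional CLT.} Since $a^{-1}\sqrt n\,\widetilde S_n = \sqrt{n/(n+1)}\,\mathbb G_{\rho_{n+1}}$ with $\sqrt{n/(n+1)}\to1$, by Slutsky it suffices to show $\mathbb G_{\rho_{n+1}}\Rightarrow\mathbb G$ weakly in $\ell^\infty((0,1))$. Finite-dimensional convergence is immediate from the covariance limit above and Gaussianity, so \emph{the main obstacle is asymptotic tightness}. My preferred route realizes each $\mathbb G_\rho$ as a section of one fixed Brownian bridge: the class $\mathcal M$ of monotone functions $[0,1]\to[0,1]$ is $\mathcal U(0,1)$-Donsker (bracketing entropy $\lesssim 1/\varepsilon$), so the limiting $\mathcal U(0,1)$-Brownian bridge $\mathbb B$ on $\ell^\infty(\mathcal M)$ has sample paths a.s. uniformly continuous for the intrinsic $L^2$ semimetric; matching covariances, $\mathbb G_\rho \overset{d}{=}\mathbb B\circ\iota_\rho$ with $\iota_\rho(u) = H_\rho(u,\cdot)\in\mathcal M$, and $\mathbb G\overset{d}{=}\mathbb B\circ\iota_\infty$ with $\iota_\infty(u)=\Ind(\cdot\le u)$. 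The crucial estimate is $\sup_{u\in(0,1)}\|\iota_\rho(u)-\iota_\infty(u)\|_{L^2(\mathcal U(0,1))}\to0$ as $\rho\uparrow1$, which I would obtain from $\|H_\rho(u,\cdot)-\Ind(\cdot\le u)\|_{L^1} = 2(u - C_\rho(u,u)) \le 2\sqrt{1-\rho}/\pi$ (a one-line bound on $P(X_1\le x, X_2>x)$ for correlation-$\rho$ normals using $X_2-X_1\sim N(0,2(1-\rho))$) and $\|\cdot\|_{L^2}^2\le\|\cdot\|_{L^1}$; together with the uniform continuity of $\mathbb B$ this forces $\|\mathbb B\circ\iota_{\rho_{n+1}}-\mathbb B\circ\iota_\infty\|_{\ell^\infty((0,1))}\to0$ a.s., hence $\mathbb G_{\rho_{n+1}}\Rightarrow\mathbb G$. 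A possible alternative is a direct Dudley entropy bound using the increment identity $\E[(\mathbb G_\rho(u)-\mathbb G_\rho(u'))^2] = \int_u^{u'}\!\int_u^{u'}(c_{\rho^2}(w,s)-1)\,dw\,ds$, but making this uniform in $n$ while $c_{\rho_{n+1}^2}$ concentrates on the diagonal is exactly where the difficulty bites, so the Donsker-class route seems the safer bet.
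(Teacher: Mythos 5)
Your proposal is correct, and the first two parts coincide with the paper's argument: the exact covariance $k_n(u,u')=\sum_{i>n}\alpha_i^2\,[C_{\rho_i^2}(u,u')-uu']$ (the paper's {Lemma \ref{app:lem_covariance}}), the sandwich via the concordance ordering of the Gaussian copula in $\rho$ together with the Fr\'echet--Hoeffding upper bound, the squeeze argument for the limits, and the Sobolev step, where your ``standard fact'' about mean-square differentiable Gaussian processes is exactly the content of the result the paper invokes \citep[Theorem 1]{Scheuerer2010}; your integrability check $\int_0^1\bigl(c_{\rho^2}(u,u)-1\bigr)\,du<\infty$ via $c_{\rho^2}(u,u)=(1-\rho^4)^{-1/2}\exp\bigl(\tfrac{\rho^2}{1+\rho^2}\Phi^{-1}(u)^2\bigr)=O\bigl(u^{-2\rho^2/(1+\rho^2)}\bigr)$ is the same condition the paper verifies by an exact Gaussian integral, so you should cite that theorem rather than leave the transfer from mean-square to pathwise regularity implicit. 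Where you genuinely diverge is the functional CLT. The paper proves asymptotic tightness of $\{\mathbb{G}_{\rho_n}\}$ directly: it shows the standard-deviation semimetric of $\mathbb{G}_{\rho_n}$ is dominated by that of the Brownian bridge (a somewhat delicate completing-the-square computation using $\min\{u,v\}-C_\rho(u,v)=\int_\rho^1\phi_2(z_u,z_v;r)\,dr$), then combines the packing bound $D(\varepsilon,d)\lesssim\varepsilon^{-2}$ for $d(u,v)=|u-v|^{1/2}$ with the Gaussian maximal inequality \citep[Corollary 2.2.9]{vanderVaart2023} and concludes via marginal convergence and \citep[Theorems 1.5.4, 1.5.7]{vanderVaart2023}. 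You instead couple all the processes through a single $\mathcal{U}(0,1)$-Brownian bridge $\mathbb{B}$ indexed by the Donsker class of monotone $[0,1]\to[0,1]$ functions, writing $\mathbb{G}_\rho\overset{d}{=}\mathbb{B}\circ\iota_\rho$ with $\iota_\rho(u)=H_\rho(u,\cdot)$ (the covariance identification via conditional independence given the shared uniform is a clean probabilistic rederivation of {Lemma \ref{app:lem_covariance}}), and reduce weak convergence to the uniform bound $\sup_u\|H_\rho(u,\cdot)-\Ind(\cdot\le u)\|_{L^1}=2\sup_u\bigl(u-C_\rho(u,u)\bigr)=O(\sqrt{1-\rho})$, which is correct and whose proof is a lighter cousin of the paper's {Lemma \ref{app:lem_kappan}}. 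This buys you almost-sure convergence of coupled versions (hence weak convergence without any new entropy computation for the family $\{\mathbb{G}_{\rho_n}\}$), at the modest cost of two points you should spell out: that $\mathbb{B}\circ\iota_\rho$ and $\mathbb{G}_\rho$ have the same law as tight Borel elements of $\ell^\infty((0,1))$ (same finite-dimensional distributions plus continuity of $u\mapsto\iota_\rho(u)$ in $L^2$ and of the paths of $\mathbb{G}_\rho$), and that the intrinsic bridge semimetric is bounded by the $L^2$ distance so your uniform estimate indeed feeds into the a.s.\ modulus of continuity of $\mathbb{B}$. The paper's route, by contrast, is self-contained given its semimetric-domination lemma and also records the monotone domination $d_n\le d_\infty$, which it reuses for intuition about smoothness of the approximate sampler.
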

From the above, we have that $\widetilde{Q}_\infty$ and $Q_\infty$ have the same distribution asymptotically when suitably normalized, which happens as $\rho_{n} \to 1$. Furthermore, realizations of $\widetilde{Q}_\infty$ (and thus $\widetilde{Q}_\infty^{\dagger}$) lie in the same Sobolev space a.s. This occurs as the true covariance function $k_n$ lies in between $k_{\rho_{n+1}}$ and that of the Brownian motion in terms of smoothness, where we prefer $k_{\rho_{n+1}}$ to $k_n$ as the former is much cheaper to compute. The above theorem thus justifies the choice of $\widetilde{Q}_\infty$ as a suitable approximation to $Q_\infty$.
The above inequality actually suggests that  
 sample paths of $\widetilde{Q}_\infty$ may be slightly smoother than that of $Q_\infty$. In practice, this effect disappears quickly with increasing $n$ as the inequality is very tight even for moderate $n$. 

This approximate sampling scheme is given in {Algorithm \ref{alg:approx_QMP}}, where drawing from the GP is very cheap and detailed in Section \ref{app:sec:alg} of the Appendix. In practice, this approximation works extremely well, as we illustrate in {Figure \ref{fig:bandwidth} (right)}. Both samples and credible intervals of $\widetilde{Q}_\infty^{\dagger}$ are visually very similar to $Q_N^{\dagger}$ even for $n = 10$. Furthermore, generating $B = 5000$ posterior samples required 15s and 0.2s for the exact and approximate case respectively, which indicates a substantial speedup. We will see further demonstration of the computational gains and similar results in later in the illustrations. \vspace{2mm}
\begin{figure}[ht]
\center
\small
 \begin{minipage}{.57\linewidth}
\begin{algorithm}[H]
{Initialize $Q_n$ from {Algorithm \ref{alg:fit}}\\}
{Set $\rho_{n+1} = \sqrt{1- c(n+1)^{-k}}$\\}
\For{$b \gets 1$ \textnormal{\textbf{to}} $B$}{
    {Draw ${S}^{(b)}\sim \mathcal{GP}(0, C_{\rho_{n+1}^2}(u,u')- uu')$}\\
    Compute $\widetilde{Q}_\infty^{(b)} = Q_n^{\dagger} + a\,S^{(b)}/\sqrt{n+1}$
}
 {Return $\left\{{\widetilde{Q}_\infty^{\dagger}}{}^{(1)},\ldots,{{\widetilde{Q}}_\infty^{\dagger}}{}^{(B)}\right\}$}
\caption{Approximate QMP sampling with GPs }\label{alg:approx_QMP}
\end{algorithm}\vspace{-2mm}
\end{minipage}

\end{figure}

\section{Quantile regression}\label{sec:quantreg}

Having established the framework and theory for the QMP, we now introduce the QMP in the quantile regression setting, which is a natural extension. This is in contrast to the usual intricacies involved in specifying nonparametric prior distributions with covariate dependence. We will focus on the linear case, and leave discussion of potential directions for the non-linear case to Section \ref{sec:nonlinear}.

To begin, we assume that $\{Y_i, X_i\}_{i = 1,\dots,n} \iid P^*(y,x)$, where $Y \in \mathbbm{R}$ and $X \in \mathcal{X} \subset\mathbbm{R}^p$. The conditional distribution $P^*(y \mid x)$ is assumed to have a quantile function which varies linearly, that is
$Q^*(u \mid x) =  \beta^*(u)^T x$,
where $\beta^*(u):(0,1) \to \R^d$ is the true unknown coefficients. As we can write
$\beta^*(u) =\argmin_\beta \int \rho_u(y - \beta^T x) \, dP^*(y,x)$,
this immediately suggests a quantile regression version of (\ref{eq:quantile_copula}):
\begin{align}\label{eq:beta_freq}
      \beta_{n+1}(u) &= \beta_{n}(u)  + \alpha_{n+1}\left[u - H_{\rho_{n+1}}\left(u,P_{n}(Y_{n+1} \mid X_{n+1})\right)\right]X_{n+1},
\end{align}
where $P_n(y \mid x) = \int_0^1 \mathbbm{1}\left(Q_n(u \mid x) \leq y\right)\, du$.
 We now utilize the above for the QMP for quantile regression. 

\subsection{Quantile predictive resampling}
The predictive resampling scheme for the quantile regression setting is a straightforward extension of Section \ref{sec:PR}. The key extra ingredient is that we will use the empirical distribution for predictive resampling $X_{n+1:\infty}$, which is equivalent to the Bayesian bootstrap as suggested in \cite{Fong2023a}. This is particularly natural in our setting, where we are mainly interested in $P^*(y \mid x)$ or $Q^*(u \mid x)$. Quantile predictive resampling then consists of first drawing $X_{N+1} \sim \frac{1}{N}\sum_{i = 1}^N \delta_{X_i}$, then simulating $V_{N+1} \sim \mathcal{U}(0,1)$ and computing
    \begin{align}\label{eq:beta_qmp}
      \beta_{N+1}(u) &= \beta_{N}(u)  + \alpha_{N+1}\left[u - H_{\rho_{N+1}}\left(u,V_{N+1}\right)\right]X_{N+1},
\end{align}
which again is a martingale.
The simple uniform r.v. again arises as $P_N(Y_{N+1} \mid X_{N+1}) \sim \mathcal{U}(0,1)$ if $Y_{N+1} \sim P_N(\cdot \mid X_{N+1})$ and $P_N(\cdot \mid X_{N+1})$. For the covariates, it will be simpler computationally to draw $w_{1:n} \sim \text{Dir}(1,\ldots,1)$, followed by $X_{n+1:N}  \iid \sum_{i = 1}^n w_i \delta_{X_i}$. To draw actual samples of the observations $Y_{N+1}$ given $X_{N+1}$, we can analogously compute $Y_{N+1} = \beta_{N+1}(V_{N+1})^T X_{N+1}$ which is straightforward. 
For the initial estimate, we can once again just apply the update $(\ref{eq:beta_freq})$ on the i.i.d. observables $Y_{1:n}$. Our implicit quantile function $Q_i^{\dagger}(u \mid x)$ is then the increasing rearrangement of $\beta_i(u)^Tx$ at each value of $x$.

\subsection{Predictive asymptotics}
\subsubsection{Martingale}
It is not too hard to verify that we once again have a pointwise martingale condition, that is
\begin{align*}
    \E\left[\beta_{N+1}(u) \mid Y_{1:N}, X_{1:N}\right] = \beta_N(u),
\end{align*}
which we can see by first computing the conditional expectation of $Y_{N+1}$ given $X_{N+1}$, which returns $\beta_N(u)$, so the additional expectation over $X_{N+1}$ does not affect this. 
Looking at  ($\ref{eq:beta_qmp}$), we see that each component $\beta_{Nj}(u)$ for $j \in \{1,\ldots,p\}$ is a function-valued martingale as before, with the additional term due to the covariates. It is thus not too difficult to show the following.
\begin{theorem}\label{th:qr_sobolev}
    Under {Assumptions \ref{as:alpha}}, {\ref{as:bandwidth}},  {\ref{as:beta_L2}} and {\ref{as:beta_weak_deriv}}, there exists a random vector function $\beta_\infty(u)$ with realizations  in $H^1((0,1))^p$ such that $\beta_N(u)$ satisfies $d_{1,2}\left(\beta_{Nj},\beta_{\infty j}\right)  \to 0$ for each component $j \in \{1,\ldots,p\}$  a.s. Furthermore, each component of the realizations of $\beta_\infty(u)$ is absolutely continuous on $(0,1)$ a.s.
\end{theorem}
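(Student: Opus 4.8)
The plan is to reduce the regression statement to the unconditional analysis of {Theorem \ref{th:sobolev_mart}} by treating each coordinate $\beta_{Nj}$, $j\in\{1,\dots,p\}$, on its own as a martingale valued in the Hilbert space $H^1((0,1))$ and then invoking the Banach-space martingale convergence theorem ({Theorem \ref{app:thm_hilbert_mart}}). I would let $\mathcal{F}_N$ be the $\sigma$-algebra generated by the observed data $\{Y_i,X_i\}_{i\le n}$, the bootstrap weights, and the draws $\{X_i,V_i\}_{n<i\le N}$. The pointwise martingale identity established just above, together with the fact (recalled in Section \ref{sec:theory}) that conditional expectations in $H^1((0,1))$ act pointwise, shows that each $\beta_{Nj}$ is an $H^1((0,1))$-valued $\mathcal{F}_N$-martingale; membership $\beta_{Nj}\in H^1((0,1))$ for every $N\ge n$ follows by induction from {Assumption \ref{as:beta_weak_deriv}}, since $g_i(u):=u-H_{\rho_i}(u,V_i)$ lies in $H^1((0,1))$ whenever $\rho_i<1$ (the Gaussian-copula conditional CDF is smooth in $u$). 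It then remains only to check the single hypothesis $\sup_{N\ge n}\E\big[\|\beta_{Nj}\|_{1,2}^2\big]<\infty$ for each $j$.

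To do this I would write the update as $\beta_{N+1,j}=\beta_{Nj}+\alpha_{N+1}X_{N+1,j}\,g_{N+1}$ and use orthogonality of martingale increments in $H^1((0,1))$. Conditionally on $\mathcal{F}_N$, the draws $X_{N+1}$ and $V_{N+1}$ are independent, and $\E[g_{N+1}\mid\mathcal{F}_N]=0$ together with $\E[g_{N+1}'\mid\mathcal{F}_N]=0$ in $L^2((0,1))$ — the latter because $g_{N+1}'(u)=1-c_{\rho_{N+1}}(u,V_{N+1})$ with $c_\rho$ the Gaussian-copula density and $\int_0^1 c_\rho(u,v)\,dv=1$. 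Taking squared $H^1$-norms and applying this pointwise orthogonality leads to the recursion
\begin{align*}
\E\big[\|\beta_{N+1,j}\|_{1,2}^2\big]
=\E\big[\|\beta_{nj}\|_{1,2}^2\big]
+\sum_{i=n+1}^{N+1}\alpha_i^2\,\E\big[X_{ij}^2\big]\,\E\big[\|g_i\|_{1,2}^2\big],
\end{align*}
whose initial term is finite by {Assumption \ref{as:beta_weak_deriv}}.

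For the summands, because $X_{n+1:N}$ are resampled from (the Bayesian bootstrap of) the empirical distribution of the observed covariates, $\|X_i\|\le C_X:=\max_{i\le n}\|X_i\|<\infty$ almost surely, so $\E[X_{ij}^2]\le C_X^2$ uniformly in $i$. For $\E[\|g_i\|_{1,2}^2]$, the $L^2$ part is at most $1$ since $|u-H_{\rho_i}(u,v)|\le1$, while the derivative part equals $\int_0^1\E[(1-c_{\rho_i}(u,V_i))^2]\,du=\|c_{\rho_i}\|_{L^2((0,1)^2)}^2-1$, which by {Assumption \ref{as:bandwidth}} grows like $(1-\rho_i^2)^{-1/2}\asymp i^{k/2}$ — precisely the estimate already used in {Theorem \ref{th:sobolev_mart}}. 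Since $\alpha_i^2\asymp i^{-2}$, the series $\sum_i\alpha_i^2\,\E[X_{ij}^2]\,\E[\|g_i\|_{1,2}^2]\lesssim\sum_i i^{-2+k/2}$ converges for $0<k<1$, giving $\sup_N\E[\|\beta_{Nj}\|_{1,2}^2]<\infty$. {Theorem \ref{app:thm_hilbert_mart}} then supplies a random $\beta_{\infty j}\in H^1((0,1))$ with $d_{1,2}(\beta_{Nj},\beta_{\infty j})\to0$ a.s. for each $j$; as there are finitely many coordinates, $\beta_\infty$ has realizations in $H^1((0,1))^p$ a.s., and absolute continuity of each component is the one-dimensional Sobolev embedding recalled in Section \ref{sec:theory}.

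The genuinely new work relative to {Theorem \ref{th:sobolev_mart}} is modest: carry the extra factor $X_{N+1,j}$ through the second-moment recursion, verify it is uniformly bounded via the bootstrap construction, and confirm the increment cross-terms still vanish once $X_{N+1}$ and $V_{N+1}$ are conditionally independent. The \emph{main obstacle} — controlling the growth rate of $\|c_{\rho_i}\|_{L^2((0,1)^2)}^2$ as $\rho_i\to1$ so that the Sobolev-norm recursion stays summable — is inherited essentially verbatim from the proof of {Theorem \ref{th:sobolev_mart}}, with {Assumption \ref{as:bandwidth}} the ingredient that keeps it in force.
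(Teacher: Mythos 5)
Your proposal is correct and follows essentially the same route as the paper: treat each coordinate $\beta_{Nj}$ as an $H^1((0,1))$-valued martingale (pointwise martingale property via conditional independence of $X_{N+1}$ and $V_{N+1}$, with the scalarization argument of Lemma \ref{app:lem_bochnerh1}), bound $\sup_N\E[\|\beta_{Nj}\|_{1,2}^2]$ using the a.s.\ boundedness of the bootstrap-resampled covariates and the copula-density growth controlled by Assumption \ref{as:bandwidth}, and conclude with Theorem \ref{app:thm_hilbert_mart} plus the Sobolev embedding. One minor slip: by Lemma \ref{app:lem_cop_squared} the derivative term $\|c_{\rho_i}\|_{L^2((0,1)^2)}^2-1=\rho_i^2/(1-\rho_i^2)$ grows like $(1-\rho_i^2)^{-1}\asymp i^{k}$, not $i^{k/2}$, but the resulting series $\sum_i i^{-2+k}$ still converges for $0<k<1$, so the conclusion is unaffected.
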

\begin{proof_outline}
For each dimension $j \in \{1,\ldots,p\}$, we can apply the same derivation as in {Theorem \ref{th:sobolev_mart}}, with the key difference that the update term for $\beta_{Nj}(u)$ is scaled by $X_{Nj}$.
\end{proof_outline}
In order to study the result of the rearrangement process, we now study the conditional quantile function estimate directly, which satisfies 
$Q_{N}(u\mid x) = \beta_N(u)^T x$ for each $x \in \mathcal{X}$. The implicit conditional quantile function is then the increasing rearrangement of $Q_N(u \mid x)$ for each $x$, which we write as $Q_N^{\dagger}(u \mid x)$.
\begin{proposition}\label{prop:condit_quant}
Under {Assumptions \ref{as:alpha}}, {\ref{as:bandwidth}},   {\ref{as:beta_L2}} and  {\ref{as:beta_weak_deriv}}, for each $x \in \mathcal{X}$, there exists a random function $Q_\infty^{\dagger}(u \mid x)$ with realizations in $H^1((0,1))$ such that $d_2(Q_N^{\dagger}(\cdot \mid x),Q_\infty^{\dagger}(\cdot \mid x)) \to 0$  a.s., where $Q_N^{\dagger}(u \mid x)$ and $Q_\infty^{\dagger}(u \mid x)$ are proper monotonically increasing quantile functions. Furthermore, $Q_\infty^{\dagger}(u \mid x)$ is the increasing rearrangement of $Q_\infty(u \mid x)=\beta_\infty(u)^T x$ a.s. 
\end{proposition}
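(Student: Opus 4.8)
The plan is to reduce Proposition \ref{prop:condit_quant} to the already-established Theorem \ref{th:qr_sobolev} together with the rearrangement machinery from Proposition \ref{prop:rearr} (the $L^2$ contraction of the rearrangement operator) and Proposition \ref{prop:rearr_sobolev}/Theorem \ref{app:thm_sobolev} (rearrangement preserves membership in $H^1$ and absolute continuity). First I would fix $x \in \mathcal{X}$ and observe that the unrearranged conditional quantile estimate is the linear combination $Q_N(\cdot \mid x) = \sum_{j=1}^p x_j \beta_{Nj}(\cdot)$. Since Theorem \ref{th:qr_sobolev} gives $d_{1,2}(\beta_{Nj}, \beta_{\infty j}) \to 0$ a.s.\ for each $j$, and $H^1((0,1))$ is a vector space on which any fixed finite linear combination is a bounded operator, it follows immediately that $d_{1,2}(Q_N(\cdot \mid x), Q_\infty(\cdot \mid x)) \to 0$ a.s., where $Q_\infty(u \mid x) = \beta_\infty(u)^T x$ has realizations in $H^1((0,1))$ (hence absolutely continuous, up to the $H^1$ equivalence class). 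This is the straightforward part.

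Next I would pass to the rearrangements. For this I need the hypotheses of Proposition \ref{prop:rearr} to hold, i.e.\ that $Q_N(\cdot \mid x)$ and $Q_\infty(\cdot \mid x)$ take values in a common bounded subset $C \subset \R$. This requires an argument that, for fixed $x$, the functions $\beta_{Nj}$ (and their limits) are uniformly bounded; since Assumption \ref{as:beta_L2}/\ref{as:beta_weak_deriv} give $H^1$ bounds and $H^1((0,1)) \hookrightarrow C([0,1])$ with a uniform bound coming from $\sup_N \E[\|\beta_{Nj}\|_{1,2}^2] < \infty$, one obtains a.s.\ uniform boundedness of $Q_N(\cdot \mid x)$ along the sampling path. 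Given this, Proposition \ref{prop:rearr} yields $d_2(Q_N^{\dagger}(\cdot \mid x), Q_\infty^{\dagger}(\cdot \mid x)) \le d_2(Q_N(\cdot \mid x), Q_\infty(\cdot \mid x)) \to 0$ a.s., where $Q_\infty^{\dagger}(\cdot \mid x)$ is defined as the increasing rearrangement of $Q_\infty(\cdot \mid x)$, and both $Q_N^{\dagger}(\cdot \mid x)$ and $Q_\infty^{\dagger}(\cdot \mid x)$ are proper monotone quantile functions by construction. Finally, because $Q_\infty(\cdot \mid x) \in H^1((0,1))$ and rearrangement preserves $H^1$-membership and absolute continuity (Theorem \ref{app:thm_sobolev}), $Q_\infty^{\dagger}(\cdot \mid x)$ lies in $H^1((0,1))$ and is absolutely continuous on $(0,1)$, and the displayed identity $Q_\infty^{\dagger}(u \mid x) = $ (increasing rearrangement of $\beta_\infty(u)^T x$) holds by definition.

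I expect the main obstacle to be the a.s.\ uniform boundedness of the path $\{Q_N(\cdot \mid x)\}_N$ needed to legitimately invoke Proposition \ref{prop:rearr}: the cited rearrangement inequality is stated for functions into a fixed bounded range $C$, whereas a priori we only have $H^1$ bounds in expectation. One clean way around this is to note that the update \eqref{eq:beta_qmp} is bounded, $\|\alpha_{N+1}[u - H_{\rho_{N+1}}(u,V_{N+1})]X_{N+1}\| \le \alpha_{N+1}\|X_{N+1}\|$, and $\sum_i \alpha_i < \infty$ in the predictive-resampling phase is \emph{not} available — so instead I would argue via the a.s.\ $H^1$-convergence (which gives a.s.\ $C([0,1])$-convergence through the Sobolev embedding in one dimension), so that $\sup_N \|Q_N(\cdot \mid x)\|_\infty < \infty$ a.s.\ on the event of convergence, which has probability one; this lets us take $C$ to be a (random but a.s.\ finite) bounded interval and apply Proposition \ref{prop:rearr} pathwise. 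A secondary, minor point is that $Q_\infty(\cdot \mid x) \in H^1$ is only defined up to the $H^1$ equivalence class, so the rearrangement $Q_\infty^{\dagger}(\cdot \mid x)$ should be understood via the absolutely continuous representative; this is exactly the convention already used in Proposition \ref{prop:rearr_sobolev} and Corollary \ref{cor:continuous}, so no new difficulty arises.
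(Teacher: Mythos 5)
Your proposal is correct and follows essentially the same route as the paper: reduce to Theorem \ref{th:qr_sobolev} via the continuity (boundedness) of the linear map $\beta \mapsto \beta^T x$ on $H^1((0,1))^p$, then apply the rearrangement contraction (Proposition \ref{prop:rearr}) and the fact that rearrangement preserves $H^1$ membership (Theorem \ref{app:thm_sobolev}). The boundedness issue you flag is handled in the paper more simply by noting that each $Q_N(\cdot\mid x)$ and $Q_\infty(\cdot\mid x)$ is essentially bounded as an element of $H^1((0,1))$ (Proposition \ref{app:prop_bounded}), applied pairwise for each $N$, so no uniform-in-$N$ bound is needed; your pathwise Sobolev-embedding argument also works and supersedes the earlier (insufficient) appeal to bounds in expectation.
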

\begin{proof_outline}
Since $Q_N(u \mid x)$ is just a weighted sum of $\beta_{Nj}(u)$, which are elements in a Banach space, the continuous mapping theorem can be used to show $Q_N(u \mid x) \to \beta_\infty(u)^T x$ a.s. The rearrangement step is then analogous to {Theorem \ref{th:sobolev_mart}}. 
\end{proof_outline}
We remark that once again, since $Q_\infty^{\dagger}(u \mid x)$ is in $H^1((0,1))$, it can be identified almost everywhere with an absolutely continuous conditional quantile function. One could also make similar statements on the weak convergence of the conditional distributions. 

An interesting phenomenon due to the nonlinearity of rearrangement is that even if $Q_N(u \mid x)$ is linear in $x$, the rearranged $Q_N^{\dagger}(u \mid x)$ may no longer be so. Nonetheless, {Proposition \ref{prop:rearr}} guarantees us that $Q_N^{\dagger}(u \mid x)$ will always be closer to $Q^*(u \mid x)$ in $L^2$ compared to $Q_N(u \mid x)$, so it is not too much of an issue for estimation. Interestingly, we can still say something about the QMP over the regression function $\E[Y \mid X]$. Let us define $\E_\infty[Y \mid x] := \int_0^1 Q_\infty^{\dagger}(u \mid x) \, du$, so realizations of $\E_\infty[Y \mid x]$ are samples of the regression function from the QMP. We then have the below, which follows from the equimeasurable property of rearrangement.
\begin{proposition}\label{prop:linear}
  Under {Assumptions \ref{as:alpha}}, {\ref{as:bandwidth}},  {\ref{as:beta_L2}} and  {\ref{as:beta_weak_deriv}}, the QMP has support over linear regression functions, that is realizations of $\E_\infty[Y \mid x]$ are linear functions of $x$ a.s.
\end{proposition}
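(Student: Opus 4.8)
The plan is to reduce the statement to the equimeasurability (integral-preserving) property of the increasing rearrangement operator. I would work on the almost-sure event on which Theorem~\ref{th:qr_sobolev} and Proposition~\ref{prop:condit_quant} hold, so that $\beta_\infty \in H^1((0,1))^p$ and, for every $x \in \mathcal{X}$, the function $Q_\infty^{\dagger}(\cdot \mid x)$ is \emph{exactly} the increasing rearrangement of the fixed function $u \mapsto Q_\infty(u \mid x) = \beta_\infty(u)^\top x$. Since $(0,1)$ is a bounded interval, the embedding $H^1((0,1)) \hookrightarrow L^2((0,1)) \hookrightarrow L^1((0,1))$ guarantees that each coordinate $\beta_{\infty j}$ is integrable, hence so is $u \mapsto \beta_\infty(u)^\top x$ for every $x$.

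First I would invoke the integral-preserving property of rearrangement (Lemma~\ref{app:lem_equi}): for each $x \in \mathcal{X}$,
\begin{align*}
\E_\infty[Y \mid x] = \int_0^1 Q_\infty^{\dagger}(u \mid x)\, du = \int_0^1 Q_\infty(u \mid x)\, du = \int_0^1 \beta_\infty(u)^\top x\, du .
\end{align*}
By linearity of the integral this equals $\bar\beta_\infty^\top x$, where $\bar\beta_\infty := \int_0^1 \beta_\infty(u)\, du \in \R^p$ is a well-defined (random) vector, finite because each $\beta_{\infty j} \in L^1((0,1))$. Thus $\E_\infty[Y \mid x]$ is a linear function of $x$. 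Since this chain of equalities uses only the fixed realization of $\beta_\infty$ together with deterministic facts about rearrangements, it holds simultaneously for all $x \in \mathcal{X}$ on that single almost-sure event, which is exactly the claim.

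The main (minor) obstacle is justifying the use of Lemma~\ref{app:lem_equi}: Proposition~\ref{prop:rearr} is stated for functions with bounded range, whereas here we only know $\beta_\infty(\cdot)^\top x \in L^2((0,1)) \subset L^1((0,1))$. This is resolved by noting that the integral-preserving identity only needs integrability, since $f$ and $f^{\dagger}$ are equimeasurable and so $\int_0^1 f^{\dagger} = \int_0^1 f$ whenever $f \in L^1((0,1))$ (via the distribution-function/layer-cake representation). A second point worth a sentence is the uniformity of the exceptional null set in $x$: once $\beta_\infty$ is fixed on the full-measure event of Theorem~\ref{th:qr_sobolev}, both the map $x \mapsto \beta_\infty(\cdot)^\top x$ and its rearrangement are deterministic, so no further (uncountable) intersection over $x \in \mathcal{X}$ is required.
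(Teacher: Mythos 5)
Your proposal is correct and follows essentially the same route as the paper: apply the integral-preserving (equimeasurability) property of increasing rearrangement with $h(x)=x$ to replace $Q_\infty^{\dagger}(\cdot\mid x)$ by $\beta_\infty(\cdot)^\top x$, then conclude by linearity of the integral. Your worry about the bounded-range hypothesis is already covered in the paper's framework, since realizations of $\beta_\infty$ lie in $H^1((0,1))$ and are therefore essentially bounded ({Proposition \ref{app:prop_bounded}}), so no extension of the rearrangement lemma is needed.
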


\subsubsection{Gaussian process}
We can again study the asymptotic normality, this time focusing on the vector $\beta_{n}(u)$. Consider the difference
\begin{align*}
    S_{N}(u,j) = \sum_{i=N}^\infty\alpha_{i}\left[u -H_{\rho_{i}}\left(u,V_{i}\right)\right]X_{ij}\quad 
\end{align*}
for $j \in \{1,\ldots,p\}$ and $u \in (0,1)$,
where $X_{ij}$ is the $j$-th entry of $X_{i}$. All of the results in this subsection will be conditional on the Bayesian bootstrap weights $w_{1:n}$ and $X_{1:n}$. Similar to the non-regression case, we can use the Cram\'{e}r-Wold device to help us study the joint convergence of $S_{N}$ for an arbitrary finite collection of points.
Combining the above with asymptotic tightness, we can again extend the finite-dimensional joint convergence to uniform convergence with respect to $\mathcal{F} = (0,1) \times \{1,\ldots, p\}$.

\begin{theorem}\label{th:reg_gp}
   Under {Assumptions \ref{as:alpha}}, {\ref{as:bandwidth}},  {\ref{as:beta_L2}} and {\ref{as:beta_weak_deriv}}, conditional on $w_{1:n}$, $\sqrt{N}S_{N}$ converges weakly in $\ell^{\infty}(\mathcal{F})$ to $\mathbb{G}_{a}$ almost surely, where $\mathbb{G}_{a}$ is a zero-mean GP with covariance function
    $\E[\mathbb{G}_{a}(u,j), \mathbb{G}_{a}(u^{\prime}, j^{\prime})] = a^{2}\left[\sum_{k=1}^{n}w_{k}X_{kj}X_{kj^{\prime}}\right](\min\{u,u^{\prime}\} - uu^{\prime})$.
\end{theorem}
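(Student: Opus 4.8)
The plan is to mirror the proof strategy of Theorem \ref{th:gp}, treating everything conditionally on the Bayesian bootstrap weights $w_{1:n}$ and the observed covariates $X_{1:n}$. Since the covariates $X_{n+1},X_{n+2},\ldots$ are drawn i.i.d.\ from the (random but now fixed) discrete measure $\sum_{k=1}^n w_k \delta_{X_k}$, and the $V_i$ are i.i.d.\ uniforms independent of the $X_i$, the random function $\sqrt N S_N$ is once again a sum of independent (though no longer identically distributed, because of the randomness in $X_{ij}$) mean-zero terms indexed by $\mathcal F = (0,1)\times\{1,\ldots,p\}$. The argument has two ingredients: (i) convergence of all finite-dimensional marginals to the stated Gaussian law, and (ii) asymptotic tightness in $\ell^\infty(\mathcal F)$; by the standard empirical-process criterion (e.g.\ van der Vaart--Wellner) these together give weak convergence to the Gaussian process $\mathbb G_a$.

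For (i), I would fix an arbitrary finite collection $(u_1,j_1),\ldots,(u_m,j_m)$ and apply the Cram\'er--Wold device: for any $\lambda\in\R^m$ consider $\sqrt N\sum_{\ell}\lambda_\ell S_N(u_\ell,j_\ell) = \sqrt N\sum_{i=N}^\infty \alpha_i\, \xi_i$, where $\xi_i = \big(\sum_\ell \lambda_\ell [u_\ell - H_{\rho_i}(u_\ell,V_i)] X_{i j_\ell}\big)$ are independent across $i$ with $\E[\xi_i\mid w_{1:n},X_{1:n}]=0$ (the conditional expectation over $V_i$ kills the $u_\ell - H_{\rho_i}(u_\ell,\cdot)$ factor, as in Section \ref{sec:martingale}). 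I would then verify the Lindeberg--Feller conditions for the triangular array $\{\alpha_i\xi_i\}_{i\ge N}$ with $\alpha_i = a(i+1)^{-1}$: the variance sum is $\sum_{i\ge N}\alpha_i^2\,\mathrm{Var}(\xi_i)$, and since $\rho_i\to 1$ we have $\E[(u_\ell - H_{\rho_i}(u_\ell,V_i))(u_{\ell'} - H_{\rho_i}(u_{\ell'},V_i))] \to \min\{u_\ell,u_{\ell'}\} - u_\ell u_{\ell'}$ (this is the Brownian-bridge covariance that already appeared via $C_{\rho^2}(u,u')-uu'$ in Theorem \ref{th:approx_GP}), while $\E[X_{ij_\ell}X_{ij_{\ell'}}\mid w_{1:n}] = \sum_k w_k X_{kj_\ell}X_{kj_{\ell'}}$ by independence of $X_i$ and $V_i$. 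Multiplying by $N\sum_{i\ge N}\alpha_i^2 \to a^2$ yields exactly the claimed covariance $a^2\big[\sum_k w_k X_{kj}X_{kj'}\big](\min\{u,u'\}-uu')$; boundedness of $\mathcal X$ gives the Lindeberg negligibility (uniform boundedness of $\xi_i$ times $\alpha_N\to 0$).

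For (ii), I would reduce tightness to the scalar case already handled: writing $\sqrt N S_N(u,j) = \sqrt N \sum_{i\ge N}\alpha_i [u - H_{\rho_i}(u,V_i)]\,X_{ij}$ and using $|X_{ij}|\le R$ for some $R$ (boundedness of $\mathcal X$), the increments in the $u$-direction are controlled exactly as in Theorem \ref{th:tight}, uniformly over the finitely many values of $j$; finiteness of the $j$-index means no extra entropy is introduced. So the asymptotic equicontinuity modulus over $\mathcal F$ is dominated by a constant multiple of that over $(0,1)$, giving tightness. The main obstacle I anticipate is purely bookkeeping: making the conditioning on $w_{1:n},X_{1:n}$ rigorous so that the ``almost surely'' in the statement (over the Dirichlet weights and observed covariates) is clean — one must check that the limiting covariance is a.s.\ finite and positive semidefinite and that the Lindeberg bounds hold on an event of $w$-probability one — but since everything is bounded and the dependence on $w$ enters only through the fixed matrix $\sum_k w_k X_k X_k^\top$, this is routine. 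I would close by invoking the continuous-mapping/portmanteau reasoning as in Theorem \ref{th:gp} to pass from finite-dimensional convergence plus tightness to weak convergence in $\ell^\infty(\mathcal F)$ to $\mathbb G_a$.
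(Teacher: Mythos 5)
Your plan is essentially the paper's own proof: condition on $w_{1:n}$ and $X_{1:n}$, establish finite-dimensional convergence via the Cram\'er--Wold device and a Lindeberg--Feller argument (with the copula covariance $C_{\rho_i^2}(u,u')-uu'\to\min\{u,u'\}-uu'$ and $\E[X_{ij}X_{ij'}\mid w_{1:n}]=\sum_k w_k X_{kj}X_{kj'}$), and obtain asymptotic tightness in $\ell^\infty((0,1)\times\{1,\ldots,p\})$ by the same bracketing argument as Theorem \ref{th:tight}, the finite index $j$ adding no entropy. One small correction of justification: the uniform bound on the covariates should not be attributed to boundedness of $\mathcal{X}$ (which is not assumed here) but to the fact that predictive resampling only recycles the $n$ observed covariates, so $|X_{ij}|\leq C_X=\max_{k\leq n,\,j}|X_{kj}|<\infty$ automatically, which is exactly how the paper argues.
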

We are then free to replace the covariance function of the limiting GP with $C_{\rho_{n+1}^2}(u,u^{\prime}) - uu^{\prime}$ for approximate sampling as before, giving the covariance function 
\begin{align}\label{eq:reg_kernel}
k_{\rho_{n+1}}(\{u,j\}, \{u',j'\}; w_{1:n}) = \left[\sum_{k=1}^{n}w_{k}X_{kj}X_{kj^{\prime}}\right]\left(C_{\rho_{n+1}^2}(u,u^{\prime}) - uu^{\prime}\right).
\end{align}

\subsection{Frequentist asymptotics}\label{sec:reg_consistency}
In the quantile regression setting, the frequentist asymptotics of the QMP is unfortunately more challenging. The main challenge is that the rearrangement $Q_n^{\dagger}(u \mid x)$ does not preserve linearity of the rearranged conditional quantile, so we do not necessarily have a corresponding vector $\beta_n^{\dagger}(u)$. As a result, we cannot use an analogous rearranged update like in Section \ref{sec:initial}. We are however able to show an analogous posterior consistency result in the case where $\rho = 1$, which we detail in Section \ref{app:sec:QR_consistency} of the Appendix, as this special case lends itself more easily to a consistent estimate. However, this does not extend easily to the $\rho \neq 1$ case. Nonetheless, (\ref{eq:beta_qmp}) works well in practice, and for sufficiently slow rate of $\rho_i \to 1$, we find that $Q_n^{\dagger}(u \mid x) = Q_n(u \mid x)$ anyways. We thus conjecture that it will also satisfy posterior consistency, and we leave this for future work. 

\subsection{Practical considerations}
In the quantile regression case, the same considerations as Section \ref{sec:implications} can be made, where the added complications are that we also need to handle the random covariates.
 \subsubsection{Approximate posterior sampling}
As outlined in Section \ref{sec:reg_consistency}, a rearranged version of the update is not obvious, so we opt for {Algorithm \ref{alg:reg_fit}} to estimate the initial $\beta_n$. In the interest of space, we jump straight to the approximate sampling procedure in {Algorithm \ref{alg:reg_approx_QMP}}, with the exact case in {Algorithm \ref{alg:QMP_reg}} of the Appendix. Once again, the GP approximation is extremely expedient, and drawing from a GP with kernel (\ref{eq:reg_kernel}) is covered in Section \ref{app:sec:alg} of the Appendix. 
\begin{figure}[ht]
\center
\small
 \begin{minipage}{.46\linewidth}
\begin{algorithm}[H]\label{alg:reg_fit}
{Initialize $\beta_0$}\\
Data is $(Y_1,X_1),\ldots, (Y_n,X_n)$\\
\For{$i \gets 1$ \textnormal{\textbf{to}} $n$}{
Compute $V_i = P_{i-1}(Y_{i}\mid X_i)$\\
$\beta_{i}(u) = \beta_{i-1}(u)  + \alpha_{i}\left[u - H_{\rho_{i}}\left(u,V_i\right)\right]X_{i}$}
{Return $\beta_n$}
\caption{Estimation of quantile regression coefficients}
\end{algorithm}
\end{minipage}\hspace{2mm}
 \begin{minipage}{.52\linewidth}
\begin{algorithm}[H]
{Initialize $\beta_n$ from {Algorithm \ref{alg:reg_fit}}}\\
{Compute $\rho_{n+1} = \sqrt{1- c(n+1)^{-k}}$\\}
\For{$b \gets 1$ \textnormal{\textbf{to}} $B$}{
Draw $w_{1:n}^{(b)} \sim \text{Dirichlet}(1,\ldots,1)$\vspace{1mm}\\
Draw $S^{(b)}_{1:p} \sim \mathcal{GP}(0,k_{\rho_{n+1}}(\{u,j\},\{u',j'\};w_{1:n}^{(b)}))$\vspace{1mm}\\
Compute $\widetilde{\beta}^{(b)}_\infty = \beta_n + a\,S^{(b)}_{1:p}/\sqrt{n}$
}
 {Return $\left\{{\widetilde{\beta}_\infty}^{(1)},\ldots,{\widetilde{\beta}_\infty}^{(B)}\right\}$}
\caption{Approximate QMP Sampling for Quantile Regression with GPs}\label{alg:reg_approx_QMP}
\end{algorithm}
\end{minipage}
\end{figure}\vspace{-5mm}

\subsubsection{Hyperparameters}
The quantile regression case has the same hyperparameters, i.e. the learning rate $a$ and the bandwidth sequence $\rho_i$. Fortunately, the bandwidth sequence works exactly as before, where we set the value of $c$ according to $\sum_{i = 1}^n p_{i-1}(Y_i \mid X_i)$ which can be computed analogously. We thus turn our focus on the learning rate. Once again, we can consider the asymptotic posterior variance of a low-dimensional functional. In this case,  we can look at the marginal posterior mean and asymptotic covariance matrix on the linear regression coefficients, $\bar{\beta}_\infty = \int \beta_\infty(u) \, du$.
\begin{proposition}\label{prop:QR_coverage}
For $n \geq 1$, let $\bar{\beta}_n:= \int \beta_n(u) \, du$ for $\{\beta_n\}_{n \geq 1}$ arising from {Algorithm \ref{alg:reg_fit}}, and suppose that $X_{1:n} \iid P^*(x)$ with $\Sigma_x = \E[X_i X_i^T]$. Let $\bar{\beta}_{n\infty} = \int_0^1 \beta_{n\infty}(u)\, du$
 where $\beta_{n\infty}$ arises from {Algorithm \ref{alg:QMP_reg}} starting from $\beta_n$.
Under {Assumptions \ref{as:alpha}}, {\ref{as:bandwidth}} and {\ref{app:as_covariates_cov}}, we then have
$\E\left[\bar{\beta}_{n\infty} \mid Y_{1:n}\right] = \bar{\beta}_n$ for each $n \geq 1$, and 
$${n}\, \E\left[(\bar{\beta}_{n\infty} - \bar{\beta}_n)(\bar{\beta}_{n\infty} - \bar{\beta}_n)^T \mid Y_{1:n},X_{1:n}\right] \to (a^2/12)\, \Sigma_x \quad \textnormal{a.s.}[P^*].$$
\end{proposition}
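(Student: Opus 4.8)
\emph{Proof proposal.} The plan is to mimic the scalar argument behind Proposition~\ref{prop:mu}, carrying the covariate vector along as a multiplicative factor in each increment and then averaging over it. Write $\mathcal D_n:=\{(Y_i,X_i)\}_{i\le n}$ (the covariates are treated as fixed design, so conditioning on $Y_{1:n}$ here means conditioning on $\mathcal D_n$, which pins down $\bar\beta_n$; note that Algorithm~\ref{alg:reg_fit} updates $\beta$ without rearrangement, so $\bar\beta_n=\int_0^1\beta_n(u)\,du$ exactly). Integrating the predictive-resampling update~(\ref{eq:beta_qmp}) over $u$ gives the one-step recursion
\[
  \bar\beta_{N+1}=\bar\beta_N+\alpha_{N+1}\,\psi(V_{N+1};\rho_{N+1})\,X_{N+1},\qquad \psi(v;\rho):=\int_0^1\bigl(u-H_\rho(u,v)\bigr)\,du .
\]
The first thing I would record is that $\int_0^1\psi(v;\rho)\,dv=0$: by Fubini this equals $\tfrac12-\int_0^1\!\int_0^1 H_\rho(u,v)\,du\,dv$, and $\int_0^1 H_\rho(u,v)\,dv=C_\rho(u,1)=u$, the identity already used in Section~\ref{sec:martingale}. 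Since $V_{N+1}\sim\mathcal U(0,1)$ is drawn independently of $X_{N+1}$, of the past, and of the (deterministic, given $\mathcal D_n$) bandwidth $\rho_{N+1}$, we get $\E[\psi(V_{N+1};\rho_{N+1})X_{N+1}\mid\mathcal F_N]=\E[\psi(V_{N+1};\rho_{N+1})]\,\E[X_{N+1}\mid\mathcal F_N]=0$ for the natural filtration $\mathcal F_N$, so $\{\bar\beta_N\}_{N\ge n}$ is a vector martingale. Uniform integrability follows from the $L^2$ boundedness established under Assumption~\ref{app:as_covariates_cov} (as in the proof of Theorem~\ref{th:qr_sobolev}), hence $\bar\beta_N\to\bar\beta_{n\infty}$ in $L^1$ and $\E[\bar\beta_{n\infty}\mid\mathcal D_n]=\bar\beta_n$, which is the first claim.

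For the covariance I would expand $\bar\beta_{n\infty}-\bar\beta_n=\sum_{i>n}\alpha_i\,\psi(V_i;\rho_i)\,X_i$ (an $L^2$-convergent sum under Assumption~\ref{app:as_covariates_cov}) and compute the conditional second moment term by term. Off-diagonal terms vanish by the same martingale-increment/independence argument: for $i<j$, condition on $\mathcal F_{j-1}\vee\sigma(X_j)$ and take expectation over $V_j$, using $\E[\psi(V_j;\rho_j)]=0$. For the diagonal terms, using $V_i\perp X_i$ and then the tower property over the Dirichlet weights $w_{1:n}$ — so that $\E[X_iX_i^T\mid\mathcal D_n]=\sum_k\E[w_k]X_kX_k^T=\tfrac1n\sum_{k\le n}X_kX_k^T$ for every $i>n$ (or the same thing directly, if $X$ is resampled uniformly at each step) — the conditional covariance factors as
\[
  \E\bigl[(\bar\beta_{n\infty}-\bar\beta_n)(\bar\beta_{n\infty}-\bar\beta_n)^T\mid\mathcal D_n\bigr]=\Bigl(\sum_{i>n}\alpha_i^2\,\E[\psi(V_i;\rho_i)^2]\Bigr)\cdot\frac1n\sum_{k\le n}X_kX_k^T .
\]
It remains to evaluate the two factors after multiplying by $n$. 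A direct computation with the Gaussian copula gives $\psi(v;\rho)=\E[U\mid V=v]-\tfrac12=\Phi\bigl(\rho\,\Phi^{-1}(v)/\sqrt{2-\rho^2}\bigr)-\tfrac12$, so $\psi(v;\rho)\to v-\tfrac12$ pointwise as $\rho\to1$ and $|\psi(v;\rho)|\le\tfrac12$; dominated convergence then yields $\E[\psi(V_i;\rho_i)^2]=\int_0^1\psi(v;\rho_i)^2\,dv\to\operatorname{Var}(U)=\tfrac1{12}$ since $\rho_i\to1$ by Assumption~\ref{as:bandwidth}. Combined with $n\sum_{i>n}\alpha_i^2=a^2\,n\sum_{i>n}(i+1)^{-2}\to a^2$ (integral comparison / Stolz--Ces\`aro) and the simple splitting $n\sum_{i>n}\alpha_i^2\E[\psi(V_i;\rho_i)^2]=\tfrac1{12}\,n\sum_{i>n}\alpha_i^2+O\bigl(n\sum_{i>n}\alpha_i^2\cdot\sup_{j>n}|\E[\psi(V_j;\rho_j)^2]-\tfrac1{12}|\bigr)$, the first factor tends to $a^2/12$. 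The second factor, $\tfrac1n\sum_{k\le n}X_kX_k^T$, tends to $\Sigma_x$ a.s.$[P^*]$ by the (entrywise) strong law of large numbers, valid under the second-moment condition in Assumption~\ref{app:as_covariates_cov}. Multiplying the two limits gives $n\,\E[(\bar\beta_{n\infty}-\bar\beta_n)(\bar\beta_{n\infty}-\bar\beta_n)^T\mid\mathcal D_n]\to(a^2/12)\,\Sigma_x$.

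The only genuinely delicate part is the bookkeeping around the imputed covariates: one must be careful that the bootstrap resampling of $X_{n+1:\infty}$ makes the martingale increments mutually orthogonal, and that the tower property collapses $\E[X_iX_i^T\mid\mathcal D_n]$ to the \emph{observed} empirical second moment $\tfrac1n\sum_{k\le n}X_kX_k^T$ rather than to $\Sigma_x$ directly — it is then the SLLN applied to that empirical moment that supplies the ``a.s.$[P^*]$'' qualifier, whereas everything else (the $\psi$-calculation and the Cesàro-type tail estimates) is deterministic given $\mathcal D_n$. I would also double-check that Assumption~\ref{app:as_covariates_cov} is strong enough to license simultaneously the $L^2$-convergence of the infinite sum $\sum_{i>n}\alpha_i\psi(V_i;\rho_i)X_i$, the uniform integrability needed for the posterior-mean identity, and the SLLN (a finite second moment of $\|X\|$ suffices for all three); the rest of the proof is the same template as Proposition~\ref{prop:mu}, with no rearrangement entering since $\bar\beta_n$ and $\bar\beta_{n\infty}$ are integrals of the unrearranged coefficient processes.
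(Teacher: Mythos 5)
Your proposal is correct and follows essentially the same route as the paper's proof: expand $\bar\beta_{n\infty}-\bar\beta_n$ as a sum of independent increments, kill the cross-terms by the zero-mean property of the copula update, condition on the Dirichlet weights and tower to reduce the covariate factor to the empirical second moment $\tfrac1n\sum_{k\le n}X_kX_k^T$, then combine $n\sum_{i>n}\alpha_i^2\to a^2$, the $\rho_i\to1$ limit of the scalar variance to $1/12$, and the SLLN for the a.s.\ limit $\Sigma_x$. The only (harmless) deviation is that you evaluate the limiting $1/12$ via the closed form $\psi(v;\rho)=\Phi\bigl(\rho\,\Phi^{-1}(v)/\sqrt{2-\rho^2}\bigr)-\tfrac12$ and dominated convergence, whereas the paper invokes its copula covariance identity ({Lemma \ref{app:lem_covariance}}) to write $\E[Z_i^2]=\int_0^1\int_0^1 C_{\rho_i^2}(u,v)\,du\,dv-\tfrac14$.
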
\vspace{-1mm}
We assume the covariates and response are standardized, so the intercept is 0 for simplicity, and {Assumption \ref{app:as_covariates_cov}} ensures $\Sigma_x$ is non-singular. The asymptotic covariance matrix of the least squares estimate of $\hat{\beta}_n$ in linear regression is $\sigma^2 \Sigma_x^{-1}/n$, where $\sigma^2$ is the variance of the residuals from the linear model. We can once again attempt a matching of asymptotic covariances, but matching the entire covariance matrix is not possible with a scalar $a$. Instead, we can match the determinant of the covariance matrices, which can be interpreted as matching the generalized variance \citep{Wilks1932}. This then gives the setting 
$a = {\sqrt{12} \sigma}/{\det \Sigma_x}$,
where we can estimate $\sigma$ and $\Sigma_x$ from the data. 
This default choice appropriately inflates the posterior variance in the presence of highly correlated covariates and as the dimension of $x$ increases, and works well in practice.
Analogous to the unconditional case, we can also adopt a $u$-specific and dimension-specific learning rate, $a_j(u) =a_j \, a(u) $, at the cost of having to depend on a separate density estimate of the residuals. We provide a brief discussion in Section \ref{app:sec_func} of the Appendix, but leave a detailed investigation for future work.

\section{Illustrations}\label{sec:illustrations}
We now illustrate the QMP on a simulation and real dataset respectively. All methods are implemented in \texttt{JAX} \citep{Jax2018} in Python, and executed on an Apple M2 Pro CPU. Due to the parallel nature of the QMP, significant acceleration is possible on a GPU \citep{Fong2023a}, but we use a CPU to illustrate the speed-up attained by the GP approximation.

\subsection{Simulations}\label{sec:simulation}
In this section, we demonstrate the method and practical performance for unconditional quantile estimation under different sample sizes, as well as comparing the computation time of exact and approximate sampling schemes. Let $Y_{1:n} \iid P^*$, where $P^*$ has the associated quantile function $Q^*(u) = 4(u-0.4)^3 + 0.2u$. We consider two sample sizes, $n = 50$ and $n = 500$, and compare the QMP distributions. For estimation, we initialize with $Q_0(u) = y_{\text{min}} + (y_{\text{max}}- y_{\text{min}})\, u$, which implies a uniform distribution over the range of the observations, and is appropriate here as we know the range of $y$ is bounded. 
We average over $10$ permutations of the data to compute $Q_n^{\dagger}$. We follow the guidance of Section \ref{sec:learning_rate} and \ref{sec:bandwidth}, and set $c$ by maximizing the prequential log score (also averaged over 10 permutations) on a  grid of $c\in (0,1)$ values of size $20$. For exact predictive resampling, we let $N = n + 5000$, and sample $B = 5000$ independent posterior samples.
 For all examples, we compute the quantile function estimates on a uniform grid on $[0,1]$ of size 200. 

\begin{figure}[!h]
\begin{center}
\includegraphics[width=\textwidth]{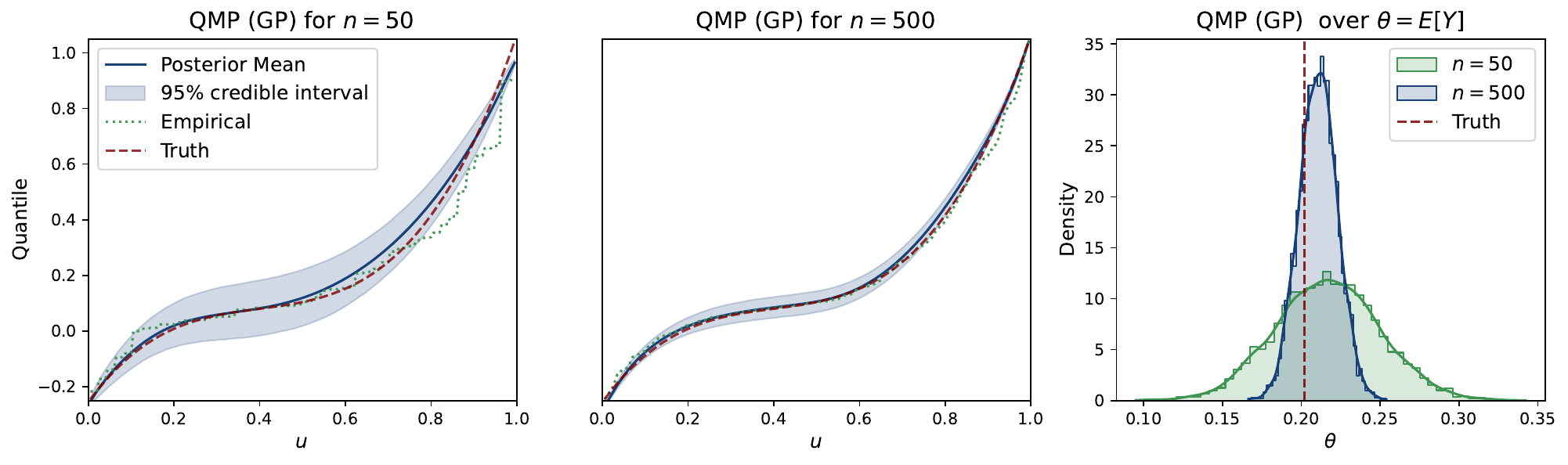}
\end{center}
\caption{QMP over $Q_\infty^{\dagger}$  for (Left) $n = 50$; (Middle) $n = 500$; (Right) QMP over  $\theta = \E[Y]$;  we only show the GP approximation as it is visually indistinguishable from exact sampling. } 
\label{fig:simulation}
\end{figure}

The selection of $c$ and estimation of $Q_n^{\dagger}$ for $n = 50$ and $n = 500$ required 0.7s and 1.4s respectively, where $c$ is chosen to be 0.6 and 0.75 respectively. We highlight that tuning $c$ can be easily parallelized if desired. In both sample sizes, exact predictive resampling required 15 seconds, whereas approximate predictive resampling with the GP only required 0.15s, which is a very significant speed-up. In {Figure \ref{fig:simulation}}, we plot the QMP mean and 95\% credible intervals for $Q_\infty^{\dagger}(u)$ and $\theta = \E[Y]$ for the two simulated sample sizes, with the empirical quantile estimate and true $Q^*$ for reference. As the exact and approximate QMP are visually indistinguishable, we only plot the latter in the interest of space in the main paper, with the exact QMP in Section \ref{app:sec_sim} of the Appendix.  
We can see that the posterior mean is monotonic and smooth, and is  regularized towards the initial linear $Q_0$ compared to the empirical quantile estimate.  As $n$ increases, the posterior mean approaches the truth, and the credible intervals shrink and capture the truth for central values of $u$ but 
seem to be anticonservative for values of $u$ close to $0$ or $1$.  As addressed by {Proposition \ref{prop:mu}}, the learning rate $a$ is chosen based on the asymptotic variance for the mean functional, which manifests as conservative and anticonservative credible intervals for the central and tail quantiles respectively. 
This is an inherent limitation of the scalar learning rate, and we discuss a potential extension on the QMP to address this in Section \ref{sec:functional}. 
We see in the {Figure \ref{fig:simulation} (right)} that the posterior distribution for $\theta$ concentrates at $n$ increases.\vspace{5mm}

\subsection{Cyclone dataset}
Following \cite{Tokdar2012} and \cite{An2024}, we now demonstrate the QMP for quantile regression in a real dataset based on a tropical cyclone intensity dataset from \cite{Elsner2008}. The dataset\footnote{\href{https://myweb.fsu.edu/jelsner/temp/Data.html}{https://myweb.fsu.edu/jelsner/temp/Data.html}} consists of $n = 2097$ tropical cyclones and their respective lifetime maximum wind speeds from the years 1981-2006. Covariates include the year, basin, latitude, and age of the cyclone; see the Supplementary Information of \cite{Elsner2008} for more details. Both \cite{Tokdar2012} and \cite{An2024} studied a subset of tropical cyclones in the North Atlantic (NA) basin ($n = 291$) with the year as the single covariate, and identified an increasing trend.

 For the QMP, we initialize $Q_0$ by setting $\beta_{0j}(u) = 0$ for $j \in \{1,\ldots,d\}$ and only set the intercept term $\beta_{00}(u)$ to be non-zero, which corresponds to initializing $Q_0(u \mid x) = Q_0(u)$. We set $\beta_{00}(u)$ to be the line interpolating the lower and upper quartile of $y$, which will reduce the impact of outliers on $Q_0$ compared to using the whole range of $y$. For both data sizes, we  average over $10$ permutations, but this could be reduced for large $n$ as there is less sensitivity to data ordering. Once again, we choose $c\in (0,1)$ by maximizing the prequential log score on a grid of size $20$, and estimate $\beta(u)$ on a grid on $[0,1]$ of size 200. We standardize all covariates and the response, and rescale after estimation. For the results, we again only present the GP approximation, as the posterior samples are visually indistinguishable from the exact sampler; this comparison is provided in Section \ref{app:sec_cyc} of the Appendix.  
As benchmarks, we compare to quantile regression with the \texttt{quantreg} package \citep{Koenker2018} for each $u$ independently followed by increasing rearrangement. We also compare to the dependent quantile pyramids (DQP) method of \cite{An2024}, and utilize the author's MCMC implementation in \texttt{C++}.

We first analyze the subset of tropical cyclones within the NA basin ($n = 291$) with a single covariate and the lifetime maximum wind speed as the response. MCMC for the DQP required 26 minutes to generate  $B = 10000$ posterior samples after a burn-in period of $10000$ samples. This long burn-in is necessary due to slow mixing, and note that the posterior samples may still be highly correlated without thinning.
The QMP required 1s for tuning the hyperparameter ($c = 0.95$) and estimating $Q_n^{\dagger}$. A further 33 seconds for exact predictive resampling or 0.4 seconds for approximate predictive resampling was needed to generate $B = 10000$ independent QMP samples. Not only is this orders of magnitude faster than MCMC,  both exact and approximate predictive resampling are inherently parallelizable, and can be efficiently accelerated using GPUs if desired \citep{Fong2023a}. Furthermore, the samples produced are independent and convergence concerns are minor, unlike MCMC where mixing is always a concern. The effect of truncating at a final $N$ is relatively harmless for exact predictive resampling, and the approximate sampler with the GP does not even require  truncation. This highlights the scalability of the QMP due to not relying on MCMC.

\begin{figure}[!h]
\begin{center}
\includegraphics[width=\textwidth]{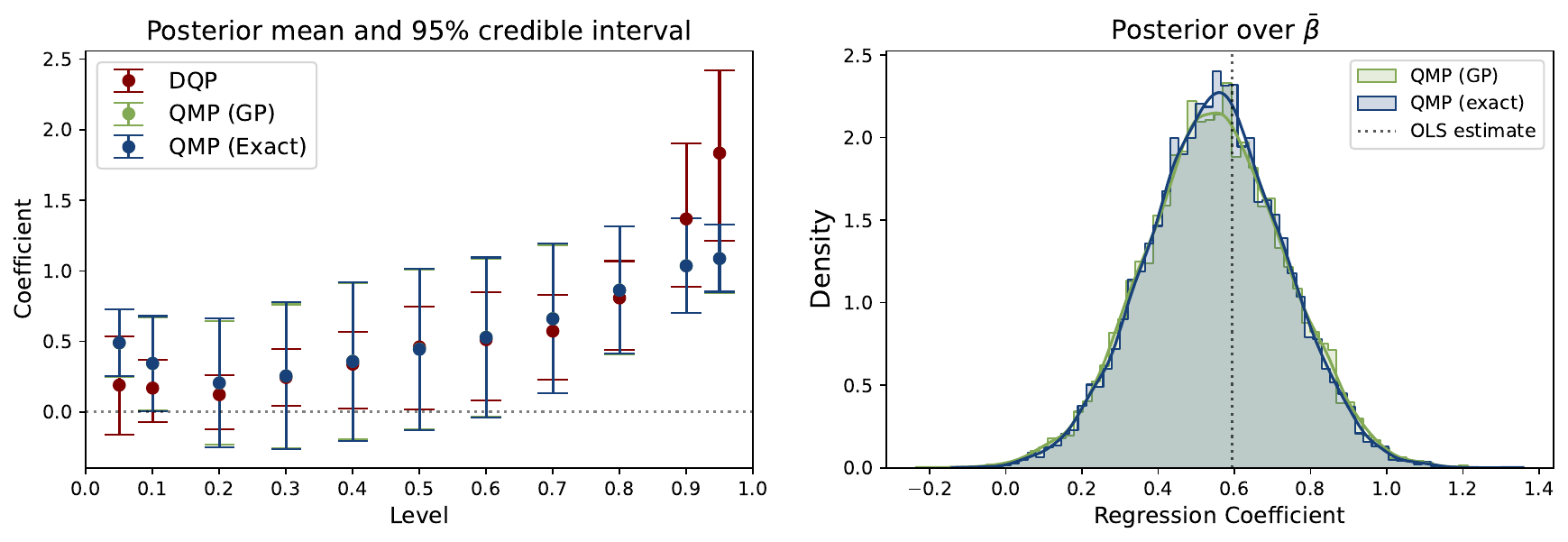}
\end{center}
\caption{Tropical cyclone maximum speeds in the NA basin ($n = 291$): (Left) Posterior mean and 95\% credible intervals for $\beta_{1\infty}(u)$ from the exact and approximate QMP and DQP; (Right) Posterior distribution of $\bar{\beta}_\infty$ for the exact
and approximate QMP } 
\label{fig:reg_small_cred}
\end{figure}
\begin{figure}[!h]
\begin{center}
\includegraphics[width=\textwidth]{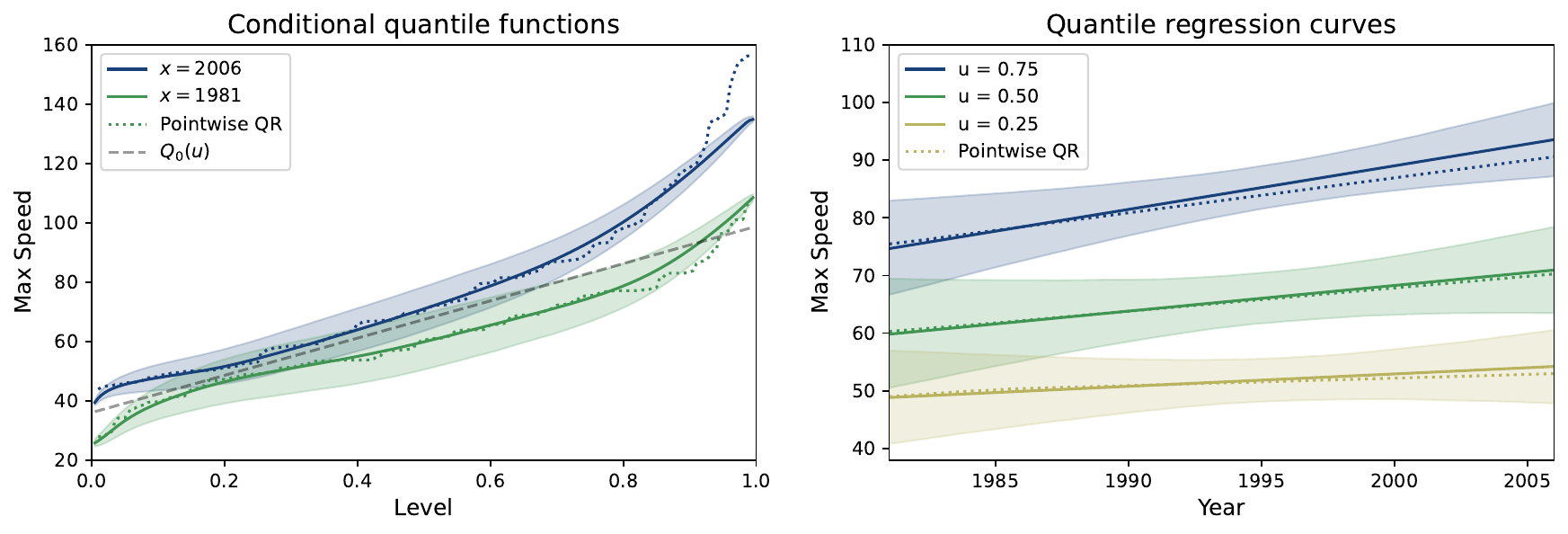}
\end{center}
\caption{Tropical cyclone maximum speeds in the NA basin ($n = 291$): (Left) Posterior mean and 95\% credible intervals for $Q(u \mid x = 1981)$ and  $Q(u \mid x = 2006)$  from the approximate QMP; (Right)  Posterior mean and 95\% credible intervals for $Q^{\dagger}_\infty(u = u^* \mid x)$ for $u^* \in \{0.25,0.50,0.75\}$ from the approximate QMP } 
\label{fig:reg_small}
\end{figure}
\begin{figure}[!h]
\begin{center}
\includegraphics[width=\textwidth]{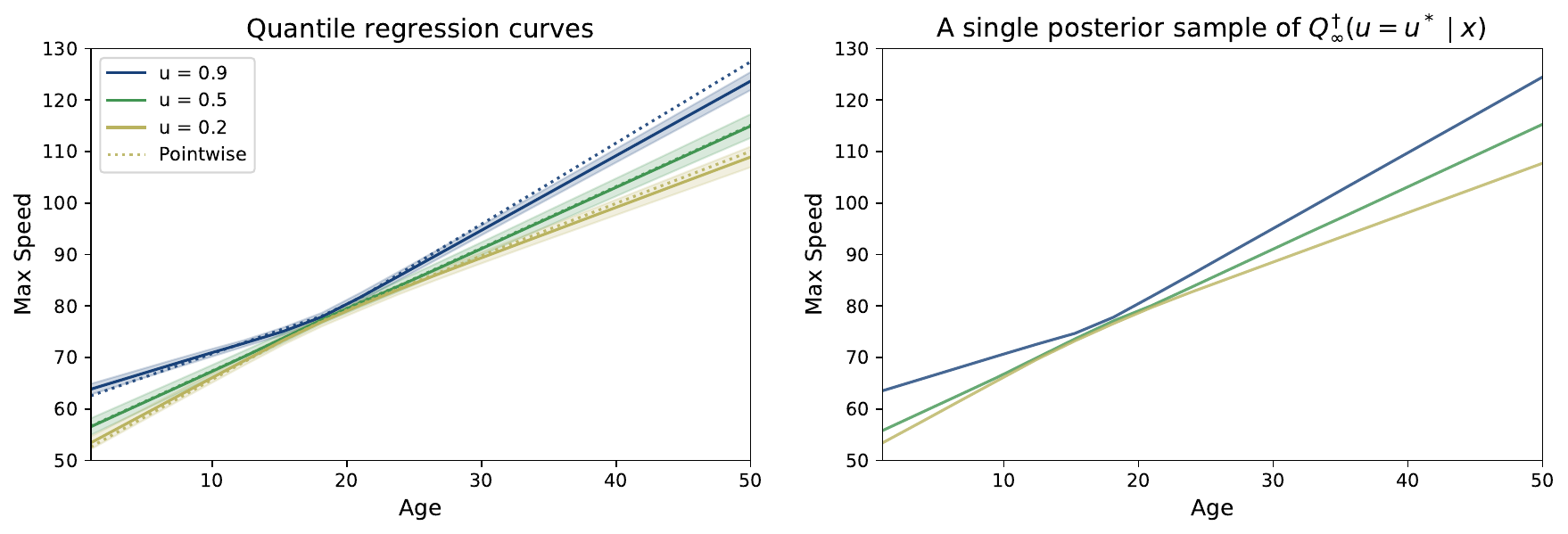}
\end{center}
\caption{Tropical cyclone maximum speeds globally ($n = 2097$): (Left) Posterior mean and 95\% credible intervals for $Q^{\dagger}_\infty(u = u^* \mid x)$ for $u^* \in \{0.25,0.50,0.75\}$; (Right) A single posterior sample of $Q^{\dagger}_\infty(u = u^* \mid x)$ for $u^* \in \{0.25,0.50,0.75\}$ from the approximate QMP } 
\label{fig:reg_big}
\end{figure}

{Figure \ref{fig:reg_small_cred} (left)} illustrates the posterior means and 95\% credible intervals of $\beta_{\infty,1}(u)$ corresponding to the year, in comparison to that from the DQP. In general, we caution against interpreting $\beta_{1\infty}(u)$ directly, as we generally do not have $Q_\infty^{\dagger}(u \mid x) \neq \beta_\infty(u)^{T} x$. However, in this specific case, the smoothness of the update resulted in no rearrangement being required for both $Q_n(u \mid x)$ and $Q_\infty(u \mid x)$. We see here that the exact and approximate QMP are numerically indistinguishable, so again it seems that $n$ is already sufficiently large for the GP approximation to hold. In comparison to the DQP, we see that the QMP has wider credible intervals within the centre but narrower in the tails, and the QMP posterior mean is also more regularized towards the initial $\beta_{01}(u) = 0$ than the DQP. {Figure \ref{fig:reg_small_cred} (right)} illustrates the exact and approximate QMP over the linear regression coefficient $\bar{\beta}_\infty$, where again the exact and approximate QMP are very similar.

{Figure \ref{fig:reg_small} (left)} illustrates the posterior mean and 95\% credible intervals for the conditional quantile functions for the earliest and latest year, i.e. $Q_\infty^{\dagger}(u \mid x = 1981)$ and $Q_\infty^{\dagger}(u \mid x = 2016)$. As mentioned earlier, in this specific case, no rearrangement was necessary as the updates are sufficiently smooth. The conditional quantiles are smooth and monotonic, and again deviate from the quantile regression estimate for values of $u$ near $1$, as it is regularized more towards the linear $Q_0(u)$. We see that there is an increasing trend in maximum speed with year, with a more significant difference for small and large values of $u$. {Figure \ref{fig:reg_small} (right)} illustrates the posterior means and 95\% credible intervals of quantile regression curves $Q_\infty^{\dagger}(u  = u^*\mid x)$ at $u^* \in \{0.25,0.50,0.75\}$, which in this case are linear and non-crossing. Again, we see that there is an increasing trend which is larger for values of $u$ near $1$.

We now study the full data set $(n = 2097, d = 3)$ with the year, latitude and cyclone age as covariates, where we exclude the basin indicator due to strong collinearity with latitude.  We do not compute the DQP posterior due to the computational expense. The QMP required 5.2s for tuning the hyperparameter ($c = 0.95$) and estimating $Q_n^{\dagger}$, which can be accelerated if fewer data permutations are used. Exact and approximate predictive resampling then required a further 42s and 0.7s respectively, where again we only display results for the approximate QMP as they are visually indistinguishable from the exact sampler. In this case, the cyclone age is the most significant predictor of maximum speed. 
{Figure \ref{fig:reg_big} (left)} shows the posterior mean and 95\% credible intervals for $Q(u = u^* \mid x)$ for $u^* \in \{0.2,0.5,0.9\}$, where we fix the year and latitude at the respective sample means and only vary age. We see that the credible intervals are tighter, and again the QMP agrees with the pointwise QR for $u = 0.5$ and $u = 0.2$  but is regularized towards $Q_0$ for $u = 0.9$. In this setting, the effect of increasing rearrangement is clear: the posterior mean of the quantile regression curves are non-crossing but are no longer linear, and {Figure \ref{fig:reg_big} (right)} shows a single posterior sample of $Q_\infty^{\dagger}(u= u^* \mid x)$ for different values of $u$, which also do not cross.

\section{Discussion and extensions} \label{sec:discussion}
In this paper, we introduce the quantile martingale posterior (QMP), which is a method for nonparametric Bayesian quantile estimation/regression based on a solely predictive framework, where we focus on the smooth case. Model specification only requires an estimate of the (conditional) quantile function, which does not need to be monotonic, as we rely on increasing rearrangement which naturally arises from predictive resampling. One main advantage of the QMP is that we no longer need to specify a likelihood or a prior distribution, which is complex in the quantile estimation/regression case. Another key advantage is computational cost - we can carry out exact posterior sampling without MCMC, where we are orders of magnitude faster and free of convergence challenges. By relying on an asymptotic Gaussian process approximation of the QMP, we can accelerate posterior sampling even further. Compared to the original martingale posterior, the space of quantile function estimates is also easier to work with for the theory.
However, this gain in flexibility of model specification and computational speed comes at a cost of being less `automatic' than traditional Bayesian inference. Significant effort is needed to show the existence, support and consistency/contraction rate of the QMP, and there are still some gaps in the theory for the regression case. Furthermore, careful specification of the learning rate and bandwidth sequence are needed to achieve good results, which is a limitation of the recursive approach. We now discuss some potential future directions to alleviate some of these limitations.

\subsection{Functional learning rates}\label{sec:functional}

Throughout the paper, we hinted at the inherent limitation of a scalar learning rate $a$, resulting in sub-optimal estimation of the quantile function near $u = 0$ and $u = 1$, as well as the need to inflate posterior uncertainty for central values of $u$ to compensate for anticonservative uncertainty in the tails. A potential extension of the QMP to tackle this limitation is to introduce a functional learning rate $a(u)$ which depends on $u$, allowing for a slower and faster learning rate in the center and tails respectively. In Section \ref{app:sec_func} of the Appendix, we show that under some assumptions on $a(u)$, this does not affect posterior consistency. We also conjecture that attaining a posterior contraction rate of $n^{-1}$ can be attained under more reasonable hyperparameter settings, but leave this for future work. To guide the setting of $a(u)$, we note that the asymptotic variance of the empirical quantile estimate is equal to $u(1-u)\,  q^*(u)^2$ \citep{vanderVaart2000}, where $q^*(u) = 1/p^*\left(Q^*(u)\right)$ is the quantile density function. This hints at an appropriate choice of $a(u) = q^*(u)$, which is also suggested in \cite{Aboubacar2014}. One downside of this approach is the need to separately estimate a density function, which is somewhat unsatisfying from a coherence point of view. Furthermore, the posterior uncertainty of the QMP will be very sensitive to the tails of the estimated density, as posterior variance will be proportional to the reciprocal of the density, and the tails are difficult to estimate. In the Appendix, we also explore an example where we estimate $p^*$ using a kernel density estimate, but leave a proper investigation for future work.

\subsection{Multivariate data and non-linear quantiles}\label{sec:nonlinear}
In this paper, we focused on the case where $y$ is univariate and the conditional quantiles are linear in $x$. However, the predictive asymptotics extends to the case where $Q_N(u)$ is multivariate, which is also hinted at when we studied the vector of quantile regression coefficients. As a result, an extension to the multivariate case, where $Q_N(u)$ is a generative predictive, may be of interest. The challenge here is then to design a recursive update, where we may want to leverage machine learning due to connections with generative adversarial networks \citep{Goodfellow2020}. We believe this to be a fruitful line of research where deep generative models may be used for Bayesian inference.
Extensions to increasing rearrangement within the multivariate case may also be of interest, e.g. as studied in \cite{Carlier2016,Rosenberg2022}. Another obvious extension is to nonlinear quantile regression, which in theory involves replacing $X_{n+1}$ in (\ref{eq:beta_freq}) with the gradient of a nonlinear function estimator, again overlapping with machine learning.

\newpage
\section*{Acknowledgments}
\textbf{AY} receives funding from Novo Nordisk. We thank Hyoin An for providing the code for the DQP method which we used for our experiments. 
\vspace{-4mm}

\section*{Code}
Code for reproducing the results in the paper can be found at \url{https://github.com/edfong/qmp}. \vspace{-4mm}

\bibliographystyle{abbrvnat}
\bibliography{paper-ref}

\newpage
\begin{appendices}
\setcounter{equation}{0}
\renewcommand\theequation{A\arabic{equation}}
\setcounter{algocf}{0}
\renewcommand\thealgocf{A\arabic{algocf}}
\setcounter{figure}{0}
\renewcommand\thefigure{A\arabic{figure}}

\section{Prerequisite theory}\label{app:sec_prereq}
In this section, we provide overviews of a few key topics along with key results that are necessary for the proof of the main results. 

\subsection{Banach space valued martingales}\label{app:sec_ban_mart}

In this section, we introduce Banach space valued martingales, and provide the key theorem on martingale convergence with reference to the seminal book of \cite{Pisier2016} on Banach-valued martingales. 
We begin with a summary of expectations in Banach spaces, but omit details on Bochner integrals which can be found in references such as \cite{Yosida2012} and \cite{Hytonen2016}.

Let $B$ be a Banach space of real-valued functions $g:(0,1) \to \R$ with  norm $\|\cdot \|_B$. Let $(\Omega,\mathcal{F},\mathbb{P})$ denote the probability space. A random variable (r.v.) in this case is a function $f: \Omega \to B$ which is Bochner (or strongly) measurable. As we will only be working with separable spaces, Pettis' theorem implies that weak and strong separability are equivalent, so we only need to check scalar measurability of $Tf: \Omega \to \R$ for every continuous linear operator $T: B \to \R$.  
The Bochner integral generalizes the Lebesgue integral to Banach spaces by constructing a sequence of simple functions which converge to $f$ pointwise.
If $f$ is Bochner integrable, then we write the expectation of $f$ as the Bochner integral relative to $\mathbb{P}$, that is
$
\E[f] = \int f d\mathbb{P},
$
where $\E[f]$ is an element of $B$.
Note that $f$ is Bochner integrable if and only if $\int \|f\|_B \, d\mathbb{P} < \infty$, which involves checking that the real-valued function $\|f\|_B$ is integrable in the traditional sense. For every continuous linear operator $T: B \to \R$, the expectation satisfies
$
 T \int f d\mathbb{P}= \int Tf d\mathbb{P}.
$

We now introduce the Banach space valued version of $L^p$ spaces for r.v.s.
We write $L^p\left(\Omega, \mathcal{F}, \mathbb{P}; B\right)$ or $L^p(B)$ as the space of (equivalence classes of) Bochner measurable functions with $\int \|f\|_B^p d\mathbb{P} < \infty$ for some $1 \leq p < \infty$; we will mostly be using $p = 2$. The norm of this $L^p$-space, which is also a Banach space, is then defined as $\|f\|_{L^p(B)} = \left(\int \|f\|_B^p d\mathbb{P}\right)^{1/p}$. A realization of $f$, i.e. $f(\omega)$ for some $\omega \in \Omega$, can be interpreted as a random function in $B$.

The conditional expectation can be analogously defined, e.g. \citet[Chapter 1.2]{Pisier2016} or \citet[Chapter 5]{Diestel1977}. Let $\mathcal{A}\subseteq \mathcal{F}$ denote a sub-$\sigma$-algebra and $f$ a Bochner integrable r.v. as before. The conditional expectation of $f$ given $\mathcal{A}$
is then the $B$-valued $\mathcal{A}$-measurable r.v. $\E^{\mathcal{A}}\left[f\right]$ which satisfies $\int_A \E^{\mathcal{A}}\left[f\right]\, d\mathbb{P} = \int_A f \, d \mathbb{P}$
for all $A \in \mathcal{A}$, which exists and is unique up to the null set of $\mathbb{P}$. The conditional expectation also satisfies $\E^{\mathcal{A}}\left[Tf\right]= T\,\E^{\mathcal{A}}\left[f\right]$ for any continuous linear operator $T: B \to \R$, e.g. \cite[Remark 1.11]{Pisier2016}.

A martingale in $B$ is then the extension of the regular martingale as follows. Let $\{\mathcal{F}_i\}_{i \geq 0}$ denote a filtration, and define $\mathcal{F}_\infty := \sigma\left(\cup_{i \geq n}\mathcal{F}_i\right)$. A sequence of random functions $\{f_i\}_{i \geq 1}$ in $L_1(B)$ is then a Banach space valued martingale if $f_i$ is $\mathcal{F}_i$-measurable and $\E^{\mathcal{F}_i}\left[f_{i+1}\right] = f_i$ a.s. for each $i \geq 0$.  A detailed overview can be found in  \citet[Chapter 1.3]{Pisier2016}. 

We now require a technical result on Hilbert-valued martingales, which is a direct specialization of \citet[Theorem 2.9]{Pisier2016} to the case where $B$ is a Hilbert space. Hilbert spaces automatically have the Radon-Nikodym property, so we have martingale convergence given a boundedness condition. We will be leveraging the below result to show existence of the QMP.
\begin{thm}[{\cite[Theorem 2.9]{Pisier2016}}]\label{app:thm_hilbert_mart}
    Let $\{f_i\}_{i \geq 0}$ be a Banach space valued martingale as defined above. Further assume that $B$ is in fact a Hilbert space. For some $p \geq 1$, if $\sup_i \|f_i\|_{L^p(B)} < \infty$, 
    then there exists an $\mathcal{F}_\infty$-measurable $f_\infty \in L^p(B)$ such that 
    $\| f_i - f_\infty \|_{B} \to 0$ a.s. and $\|f_i - f_\infty\|_{L^p(B)} \to 0$.  
\end{thm}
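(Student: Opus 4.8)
The plan is to establish the statement for $p=2$, which is the only exponent used elsewhere in the paper and the case in which the Hilbert structure is sharpest, and then to indicate the routine extension. The engine of the $p=2$ argument is the orthogonality of martingale increments in the Hilbert space $L^{2}(\Omega,\mathcal{F},\mathbb{P};B)$, which converts the convergence question into a completeness argument, followed by a maximal-inequality step to pass from $L^{2}(B)$-convergence to almost sure convergence, and finally an identification of the two limits.

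Concretely, first I would set $d_{0}=f_{0}$ and $d_{i}=f_{i}-f_{i-1}$, so $f_{n}=\sum_{i=0}^{n}d_{i}$. Since the Bochner conditional expectation commutes with continuous linear functionals, and hence with the inner product against an $\mathcal{F}_{i-1}$-measurable vector, for $j<i$ one obtains $\E\langle d_{j},d_{i}\rangle_{B}=\E\big[\langle d_{j},\E^{\mathcal{F}_{i-1}}[d_{i}]\rangle_{B}\big]=0$ using $\E^{\mathcal{F}_{i-1}}[d_{i}]=0$; equivalently, expand in a countable orthonormal basis of the separable space $B$ and apply scalar-martingale orthogonality coordinatewise. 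The Pythagorean identity then gives $\E\|f_{n}\|_{B}^{2}=\sum_{i=0}^{n}\E\|d_{i}\|_{B}^{2}$, so the hypothesis $\sup_{n}\E\|f_{n}\|_{B}^{2}<\infty$ forces $\sum_{i\ge 0}\E\|d_{i}\|_{B}^{2}<\infty$, whence $\E\|f_{n}-f_{m}\|_{B}^{2}=\sum_{i=m+1}^{n}\E\|d_{i}\|_{B}^{2}\to 0$. Thus $(f_{n})$ is Cauchy in the complete space $L^{2}(B)$ and converges there to some $\mathcal{F}_{\infty}$-measurable $f_{\infty}\in L^{2}(B)$.

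It remains to upgrade to a.s.\ convergence and to identify the limit. For fixed $m$, Jensen's inequality for the Bochner conditional expectation applied to the convex maps $\|\cdot\|_{B}$ and $\|\cdot\|_{B}^{2}$ shows that $n\mapsto\|f_{n}-f_{m}\|_{B}$ and $n\mapsto\|f_{n}-f_{m}\|_{B}^{2}$ are nonnegative submartingales for $n\ge m$; Doob's maximal inequality then yields $\mathbb{P}\big(\sup_{n\ge m}\|f_{n}-f_{m}\|_{B}>\lambda\big)\le\lambda^{-2}\sum_{i>m}\E\|d_{i}\|_{B}^{2}$, which tends to $0$ as $m\to\infty$. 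A Borel--Cantelli argument along a fast subsequence then makes $(f_{n})$ almost surely Cauchy, hence a.s.\ convergent in the complete space $B$; a further subsequence of the $L^{2}(B)$-convergent sequence shows the a.s.\ limit agrees with $f_{\infty}$ a.e. Finally, $L^{2}(B)$-convergence implies $L^{1}(B)$-convergence, so continuity of the conditional expectation gives $\E^{\mathcal{F}_{n}}[f_{\infty}]=\lim_{k}\E^{\mathcal{F}_{n}}[f_{k}]=f_{n}$, confirming that $f_{\infty}$ closes the martingale.

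For general $1<p<\infty$ the same skeleton works: a.s.\ convergence follows from Doob's $L^{p}$ maximal inequality applied to the submartingale $\|f_{n}-f_{m}\|_{B}^{p}$, and $\sup_{n}\E\|f_{n}\|_{B}^{p}<\infty$ with $p>1$ yields uniform integrability of $\{\|f_{n}-f_{\infty}\|_{B}^{p}\}$, giving $L^{p}$-convergence; for $p=1$ one additionally needs uniform integrability as a hypothesis (automatic here, since the paper always uses $p=2$), or one simply invokes that every Hilbert space has the Radon--Nikodym property and quotes the abstract martingale convergence theorem of \cite{Pisier2016}. The main obstacle I expect is the almost-sure upgrade rather than the $L^{2}(B)$ Cauchy step: one must check that $\|f_{n}-f_{m}\|_{B}$ is a genuine real submartingale (needing the Bochner form of Jensen's inequality), that suprema and a.s.\ limits of $B$-valued maps are well-defined and measurable (handled via Bochner/Pettis measurability and separability of $B$, which hold for $L^{2}((0,1))$ and $H^{1}((0,1))$), and that the a.s.\ and $L^{2}(B)$ limits coincide.
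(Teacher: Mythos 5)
The paper does not prove this statement at all: it is quoted as Theorem 2.9 of \cite{Pisier2016}, with the only added remark being that Hilbert spaces have the Radon--Nikodym property, so the cited Banach-valued martingale convergence theorem applies. Your proposal instead gives a direct, self-contained proof, and for $p=2$ --- the only exponent the paper ever uses --- it is correct: orthogonality of the increments $d_i$ in $L^2(B)$ (via $\E\langle d_j,\E^{\mathcal{F}_{i-1}}[d_i]\rangle_B=0$) yields the Pythagorean identity, hence $\sum_i \E\|d_i\|_B^2<\infty$ and $L^2(B)$-Cauchyness; Jensen's inequality for the Bochner conditional expectation makes $n\mapsto\|f_n-f_m\|_B$ a nonnegative submartingale, and Doob's maximal inequality plus Borel--Cantelli upgrade this to almost sure convergence, with the a.s.\ and $L^2(B)$ limits identified along a subsequence. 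This is more elementary and more transparent than the paper's route, since it shows exactly where the Hilbert structure and the second-moment bound enter; the paper's citation, by contrast, buys the full strength of the RNP theorem (general $p$, general Banach spaces with RNP) at no cost.

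One caveat: your sketch for general $1<p<\infty$ does not go through as written. The maximal-inequality step requires $\sup_{N>m}\E\|f_N-f_m\|_B^{p}\to 0$ as $m\to\infty$, and that tail estimate is precisely what orthogonality supplied at $p=2$; it is not a consequence of $L^p$-boundedness alone, so ``the same skeleton'' does not deliver a.s.\ convergence for $1<p<2$ --- there one genuinely needs the RNP/Chatterji-type argument you offer as a fallback. (For $p>2$ one can instead reduce to the $p=2$ case and finish with Doob's $L^p$ maximal inequality and dominated convergence.) You are also right that at $p=1$ the $L^1$-convergence conclusion needs uniform integrability, so the theorem as stated is loose there; since the paper invokes the result only with $p=2$, your argument fully covers its use.
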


\subsubsection{$L^2((0,1))$ spaces}
Suppose again that $f: \Omega \to B$ is Bochner integrable. Bochner integrals are not usually computed explicitly in practice, but we will require this later to check for martingale conditions. We thus outline how evaluating expectations pointwise on random functions suffices in Hilbert spaces.
One approach is to utilize the fact that for two elements in $B$, $x = y$ if and only if $Tx = Ty$ for all continuous linear operators $T: B \to \R$, i.e. $T \in B^*$ where $B^*$ is the dual space of $B$. 
In the specific case where $B$ is a Hilbert space, we have from the Riesz representation theorem that for each $T$, there exists an $h_T\in B$ such that $Tf = \langle h_T, f \rangle_{L^2}$, where $\langle \cdot, \cdot \rangle_{L^2}$ is the inner product of the Hilbert space. As a result, the Bochner integral satisfies
\begin{align*}
    \langle h_T, \E\left[f \right]  \rangle_{L^2} =\E\left[ \langle h_T, f \rangle_{L^2}\right] 
\end{align*}
for each $T \in B^*$. Consider the case where $B = L^2((0,1))$, which is a separable Hilbert space consisting of functions $g:(0,1) \to \R$ which are bounded in $L^2$. The above can then be written as
\begin{align*}
   \langle h_T, \E\left[f \right]  \rangle_{L^2}  &= \E\left[\int_0^1 h_T(u)\, f(u)\, du\right]
   = \int_0^1 h_T(u)\,\E\left[ f(u)\right]\, du,
\end{align*}
where we have used the linearity of the Bochner integral in the first equality and Fubini's theorem in the second. 
As a result, for $B = L^2((0,1))$, it suffices to compute expectations pointwise. We verify this formally for the martingale condition in {Lemma \ref{app:lem_bochnerl2}}.

\subsubsection{Sobolev spaces}\label{app:sec_sobolev}
The second separable Hilbert space that we will consider is the Sobolev space; see \cite{Leoni2017} for a thorough exposition. To begin, consider a function $g \in L_1((0,1))$. The function $g' \in L_1((0,1))$ is 
is a first-order weak derivative of $g$ if it satisfies
$$\int_0^1 \, g(u) \, \psi'(u) \, du = - \int_0^1 g'(u) \psi(u) du$$
for all $\psi$ which are infinitely differentiable with $\psi(0) = \psi(1) = 0$. 

In particular, we will consider $W^{1,2}((0,1)) = H^1((0,1))$, which consists of the subset of functions $g\in L^2((0,1))$ which have first-order weak derivatives $g' \in L^2((0,1))$. The inner product is
\begin{align*}
    \langle g,h \rangle_{H^1} = \langle g,h\rangle_{L^2} + \langle g',h'\rangle_{L^2},
\end{align*}
and the norm is then simply 
$$\|g\|_{1,2} = \sqrt{\int_0^1 g(u)^2\, du + \int_0^1 {g'(u)}^2\, du.}$$

Sobolev spaces are intimately related to absolutely continuous functions. An absolutely continuous function $\bar{g}:(0,1)\to \R$ is differentiable almost everywhere, where its derivative satisfies $\bar{g}' \in L_1((0,1))$ and 
\begin{align*}
    \bar{g}(u) = \bar{g}(a) + \int_a^u \bar{g}'(t)\, dt
\end{align*}
for any $a,u \in (0,1)$. A very useful property of the space $H^1((0,1))$ in the univariate case is the following. 
\begin{prop}[{\cite[Theorem 7.16]{Leoni2017}}]\label{app:prop_abcont}
    Suppose $g:(0,1) \to \R$. If $g\in H^1((0,1)) = W^{1,2}((0,1))$, then there exists an absolutely continuous function $\bar{g}:(0,1) \to \R$ where $g = \bar{g}$ almost everywhere.
    Furthermore, both $\bar{g}$ and its regular derivative $\bar{g}'$ lie in $L^2((0,1))$, and $\bar{g}$ is H{\"o}lder continuous with exponent $\alpha = 1/2$. 
\end{prop}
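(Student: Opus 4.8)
The plan is to follow the classical one–dimensional Sobolev embedding argument, using only the fundamental theorem of calculus for absolutely continuous functions and a single application of Cauchy--Schwarz. First I would fix a base point, say $1/2 \in (0,1)$, and set $G(x) := \int_{1/2}^{x} g'(t)\,dt$ for $x \in (0,1)$, where $g'$ denotes the weak derivative of $g$. Since $(0,1)$ is bounded, $g' \in L^2((0,1)) \subset L^1((0,1))$, so $G$ is well-defined, bounded, and absolutely continuous on $(0,1)$ with classical derivative $G' = g'$ almost everywhere; integrating by parts against test functions shows $G$ also has weak derivative $g'$. Hence $h := g - G$ lies in $L^1((0,1))$ and has vanishing weak derivative, and the first claim reduces to the lemma that such an $h$ must be almost everywhere equal to a constant.

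I would prove that lemma in the standard way. Fix $\phi_0 \in C_c^\infty((0,1))$ with $\int_0^1 \phi_0 = 1$. For an arbitrary $\phi \in C_c^\infty((0,1))$, the function $\psi(x) := \int_0^x \big( \phi(t) - (\int_0^1 \phi)\,\phi_0(t) \big)\,dt$ again belongs to $C_c^\infty((0,1))$, because its integrand is smooth, compactly supported in the open interval, and has total integral zero; moreover $\psi' = \phi - (\int_0^1\phi)\,\phi_0$. Substituting into $\int_0^1 h\,\psi' = 0$ gives $\int_0^1 h\,\phi = \big(\int_0^1 \phi\big)\,c_0$ with $c_0 := \int_0^1 h\,\phi_0$; since $\phi$ was arbitrary, $h = c_0$ almost everywhere. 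Setting $\bar g := G + c_0$ then produces an absolutely continuous function with $g = \bar g$ a.e. Because $\bar g$ differs from $g \in L^2((0,1))$ by a constant, $\bar g \in L^2((0,1))$, and $\bar g' = g' \in L^2((0,1))$ by construction.

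For the Hölder estimate, take $0 < x < y < 1$; absolute continuity and Cauchy--Schwarz give
\[
|\bar g(y) - \bar g(x)| = \left| \int_x^y g'(t)\,dt \right| \;\le\; |y-x|^{1/2} \left( \int_x^y g'(t)^2\,dt \right)^{1/2} \;\le\; \|g'\|_2\,|y-x|^{1/2},
\]
so $\bar g$ is Hölder continuous with exponent $1/2$ and constant $\|g'\|_2$, which in particular lets one extend $\bar g$ continuously to $[0,1]$ if desired. I expect the only substantive step to be the ``vanishing weak derivative implies a.e.\ constant'' lemma --- the construction of the auxiliary test function $\psi$ and the verification that it is supported inside the \emph{open} interval is the one place where care is genuinely needed; everything else is routine bookkeeping with the fundamental theorem of calculus and one Cauchy--Schwarz bound.
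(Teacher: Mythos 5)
Your argument is correct and complete: the paper does not prove this proposition at all but cites it directly from Leoni (2017, Theorem 7.16), and your proof is precisely the standard one-dimensional argument behind that citation — write $g$ as the indefinite integral $G$ of its weak derivative plus a function with vanishing weak derivative, show the latter is a.e.\ constant via the du Bois--Reymond/auxiliary test function lemma, and obtain the H\"older-$1/2$ bound from a single Cauchy--Schwarz estimate. One cosmetic slip: since $\bar g = G + c_0 = g$ almost everywhere, the membership $\bar g \in L^2((0,1))$ follows immediately from $g \in L^2((0,1))$ (it depends only on the a.e.\ equivalence class), rather than from ``differing from $g$ by a constant''; this does not affect the validity of the proof.
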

As a result, $H^1((0,1))$ is a very appropriate choice for the space of quantile estimates, as it contains absolutely continuous functions (or at least with an absolutely continuous representative). Furthermore, it is a Hilbert space so we can apply the martingale limit theorem with ease. We will shortly see that Sobolev spaces play nicely with montone rearrangement as well. 

Another useful property is the following.
\begin{prop}\label{app:prop_bounded}
    Let $f \in H^1((0,1))$. Then $f$ is essentially bounded, that is $\|f\|_{\infty} < \infty$ where
    \begin{align*}
      \|f\|_{\infty}  = \inf\{M: f(u) \leq M \textnormal{ for Lebesgue-almost all } u \in (0,1) \}.
    \end{align*}
\end{prop}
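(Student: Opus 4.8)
The plan is to reduce the statement to the absolutely continuous representative of $f$ supplied by Proposition~\ref{app:prop_abcont} and then bound that representative explicitly, using the fundamental theorem of calculus together with the Cauchy--Schwarz inequality and the fact that $(0,1)$ has finite Lebesgue measure.

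First I would invoke Proposition~\ref{app:prop_abcont}: since $f \in H^1((0,1))$, there is an absolutely continuous function $\bar f:(0,1)\to\R$ with $f = \bar f$ almost everywhere, and both $\bar f$ and $\bar f'$ lie in $L^2((0,1))$. Because $f$ and $\bar f$ differ only on a null set, $\|f\|_\infty = \|\bar f\|_\infty$, so it suffices to produce a uniform bound on $\lvert \bar f(u)\rvert$ over $u \in (0,1)$. Next I would pick an anchor point: since $(0,1)$ has measure one, $\bar f \in L^2((0,1)) \subset L^1((0,1))$, so $\int_0^1 \lvert \bar f(t)\rvert\,dt < \infty$, and a measurable function cannot strictly exceed its own average everywhere, hence there exists $a \in (0,1)$ with $\lvert \bar f(a)\rvert \le \int_0^1 \lvert \bar f(t)\rvert\,dt$. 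Using the absolute continuity representation $\bar f(u) = \bar f(a) + \int_a^u \bar f'(t)\,dt$ valid for all $u \in (0,1)$, I obtain
\[
\lvert \bar f(u)\rvert \;\le\; \lvert \bar f(a)\rvert + \int_0^1 \lvert \bar f'(t)\rvert\,dt \;\le\; \int_0^1 \lvert \bar f(t)\rvert\,dt + \int_0^1 \lvert \bar f'(t)\rvert\,dt,
\]
and applying Cauchy--Schwarz on $(0,1)$ to each term, namely $\int_0^1 \lvert g\rvert \le \big(\int_0^1 1\big)^{1/2}\big(\int_0^1 g^2\big)^{1/2} = \|g\|_2$, gives $\lvert \bar f(u)\rvert \le \|\bar f\|_2 + \|\bar f'\|_2 \le \sqrt{2}\,\|f\|_{1,2}$ uniformly in $u$. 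Taking the essential supremum yields $\|f\|_\infty \le \sqrt{2}\,\|f\|_{1,2} < \infty$.

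I do not expect a genuine obstacle here; this is the one-dimensional Sobolev embedding $H^1 \hookrightarrow L^\infty$, and all the ingredients are already in place in Section~\ref{app:sec_sobolev}. The only point needing a line of care is the existence of the anchor point $a$, which is the elementary averaging remark above; alternatively one may bypass it entirely by quoting the H\"older continuity with exponent $1/2$ already asserted in Proposition~\ref{app:prop_abcont}, which makes $\bar f$ uniformly continuous on $(0,1)$ and therefore bounded there.
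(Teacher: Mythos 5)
Your argument is correct, but it takes a different route from the paper. The paper disposes of this in one line by citing an off-the-shelf embedding result (\citet[Theorem 7.34]{Leoni2017}, applied with $p=q=2$, $r=\infty$, $\ell = 1/4$), which directly yields $\|f\|_{\infty} \leq 2\|f\|_{2} + \tfrac{1}{2}\|f'\|_{2} \leq 2\|f\|_{1,2}$. You instead prove the one-dimensional embedding $H^1((0,1)) \hookrightarrow L^\infty((0,1))$ from scratch: pass to the absolutely continuous representative $\bar f$ of Proposition~\ref{app:prop_abcont}, pick an anchor point $a$ with $\lvert \bar f(a)\rvert \le \int_0^1 \lvert \bar f\rvert$ (your averaging remark is valid: if $\lvert\bar f\rvert$ strictly exceeded its mean everywhere, integrating would give a contradiction), then use $\bar f(u) = \bar f(a) + \int_a^u \bar f'$ and Cauchy--Schwarz on the unit interval to get $\|f\|_\infty \le \|f\|_2 + \|f'\|_2 \le \sqrt{2}\,\|f\|_{1,2}$. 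All steps check out, and your fallback observation — that the H\"older-$1/2$ continuity already asserted in Proposition~\ref{app:prop_abcont} makes $\bar f$ uniformly continuous, hence bounded, on the bounded interval $(0,1)$ — is an equally legitimate shortcut. What your version buys is self-containedness (it relies only on facts already stated in Section~\ref{app:sec_sobolev}) at the cost of a few extra lines; what the paper's version buys is brevity and an explicit constant from a standard reference. The slightly different constants ($\sqrt{2}$ versus $2$) are immaterial for the qualitative statement being proved.
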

\begin{proof}
This follows directly from \citet[Theorem 7.34]{Leoni2017} with $I = (0,1)$, as $f \in H^1(I)$ implies $f \in W^{1,1}_{\textnormal{loc}}(I)$, which is the space of locally integrable functions with locally integrable weak derivatives. Choosing $p = q = 2$, $r = \infty$, $\ell = 1/4$, the theorem gives 
\begin{align*}
    \|f\|_{\infty} \leq 2\|f\|_{2} + \frac{1}{2}\|f'\|_{2} \leq 2\|f\|_{1,2}
\end{align*}
which gives the result.
\end{proof}

We now consider a Bochner integrable r.v. $f: \Omega \to H^1$, where we omit the domain $(0,1)$ for brevity. Again, as $H^1$ is a Hilbert space, for each continuous linear operator $T \in (H^1)^*$, we have an element $h_T \in H^1$ with weak derivative $h_T' \in L^2((0,1))$ which satisfies
\begin{align*}
    \langle h_T, \E[f]\rangle_{H^1} &= \E\left[\int_0^1 h_T(u)\, f(u)\,du\right] +  \E\left[\int_0^1 h'_T(u)\, f'(u)\, du\right] \\
    &=\int_0^1 h_T(u)\, \E[f(u)] \,du +  \int_0^1 h'_T(u)\, \E[f'(u)]\, du 
\end{align*}
which follows from linearity and Fubini's theorem again. Once again, we can just compute the pointwise expectations of $f$ and its (weak) derivative $f'$. We also verify this formally for the martingale condition in {Lemma \ref{app:lem_bochnerh1}}. 

\subsection{Rearrangement}
\subsubsection{Decreasing rearrangement}
In this subsection, we state and show some useful properties of increasing rearrangement. Most of the literature concerns the \textit{decreasing} rearrangement of functions, so we will make explicit the connection to increasing rearrangement. We first introduce decreasing rearrangement, and recommend  \cite{Kesavan2006} and \citet[Chapter 4]{Leoni2017} for more details. Let $f: [0,1] \to [0,K]$ be a Lebesgue measurable function, where $0<K< \infty$. 
The distribution function $S: [0,\infty) \to [0,1]$ of $f$ is defined as
\begin{align*}
    S(y) = \int_0^1 \mathbbm{1}\left(f(u) > y\right)du
\end{align*}
The bounded range of $f$ can be relaxed to $K = \infty$ as long as $f$ vanishes at infinity, which means that $S(y) < \infty$ for every $y > 0$ (and $f$ is Lebesgue measurable), but we will not need that here. 
From \citet[Proposition 4.1]{Leoni2017}, the distribution function is decreasing and right continuous, and clearly we have $0\leq S(y) \leq 1$ with $S(y) = 0$ for all $y \geq K$. The \textit{decreasing} rearrangement of $f$, which we write as $f_{\dagger}:[0,1] \to [0,K]$, is the left inverse of the distribution function, that is
\begin{align*}
    f_{\dagger}(u):= \inf \{y \in [0,K]: S(y) \leq u\}.
\end{align*}
The existence of $f_{\dagger}$ follows as $S$ is decreasing and bounded from below. From \citet[Proposition 4.3]{Leoni2017}, we have that $f_{\dagger}$ is also decreasing and right continuous. Another useful property is the equimeasurable property, that is for all $y \geq 0$, we have
\begin{align*}
    \int_0^1 \mathbbm{1}\left(f_{\dagger}(u) > y\right)\, du =  \int_0^1 \mathbbm{1}\left(f(u) > y\right)\, du.
\end{align*}
In fact, this equimeasurability holds more generally, which will be useful later on.
\begin{lem}[{\cite[Theorem 4.16]{Leoni2017}}]\label{app:lem_equi}
    Let $f: [0,1] \to [0,K]$ and let $h:[0,\infty) \to [0,\infty)$ be a Borel measurable function. We then have
    \begin{align*}
        \int_0^1 h(f(u))\, du = \int_0^1 h(f_{\dagger}(u))\, du.
    \end{align*}
\end{lem}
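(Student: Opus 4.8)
The plan is to reduce the identity to the statement that $f$ and $f_\dagger$ induce the same image measure on $[0,K]$, and then to invoke the change-of-variables formula for pushforwards. Write $\lambda$ for Lebesgue measure on $[0,1]$, let $\mu_f$ be the image measure $\mu_f(A)=\lambda(f^{-1}(A))=\int_0^1 \Ind(f(u)\in A)\,du$ for Borel $A\subseteq[0,K]$, and define $\mu_{f_\dagger}$ analogously. Both are Borel probability measures on $[0,K]$, and the equimeasurability property recorded immediately before the lemma states exactly that $\mu_f((y,K])=\mu_{f_\dagger}((y,K])$ for every $y\ge 0$ (with both sides vanishing for $y\ge K$).

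First I would upgrade this agreement on half-lines to agreement on the whole Borel $\sigma$-algebra of $[0,K]$. The family $\{(y,K]:y\ge 0\}\cup\{[0,K]\}$ is a $\pi$-system that generates the Borel sets of $[0,K]$, and $\mu_f,\mu_{f_\dagger}$ are finite measures agreeing on it; by the uniqueness theorem for measures (Dynkin's $\pi$--$\lambda$ theorem) they agree on every Borel set, i.e.\ $\mu_f=\mu_{f_\dagger}$.

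Next I would conclude by the standard pushforward identity: for any nonnegative Borel $h$,
\[
\int_0^1 h(f(u))\,du \;=\; \int_{[0,K]} h\,d\mu_f \;=\; \int_{[0,K]} h\,d\mu_{f_\dagger} \;=\; \int_0^1 h(f_\dagger(u))\,du,
\]
which is the claim. Equivalently, without image-measure language: both integrands are nonnegative and measurable, so by the layer-cake formula $\int_0^1 h(f(u))\,du=\int_0^\infty \lambda\{u: h(f(u))>t\}\,dt$; since $\{u: h(f(u))>t\}=f^{-1}(h^{-1}((t,\infty)))$, the inner measure equals $\mu_f(h^{-1}((t,\infty)))$, and likewise for $f_\dagger$, so $\mu_f=\mu_{f_\dagger}$ again closes the argument.

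The only step that is more than bookkeeping is the extension from half-lines to all Borel sets, the routine $\pi$--$\lambda$ argument above; I do not expect a genuine obstacle here. The reason the extension is actually needed — rather than working directly from the half-line identity — is that $h$ is only assumed Borel, so the relevant level sets $h^{-1}((t,\infty))$ are arbitrary Borel subsets of $[0,\infty)$, not intervals, and one really does need the two image measures to coincide globally. No integrability hypothesis on $h$ is required: both sides may equal $+\infty$, and equimeasurability forces that to occur for one side exactly when it occurs for the other.
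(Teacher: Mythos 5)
Your argument is correct: the equimeasurability of $f$ and $f_{\dagger}$ on the half-lines $(y,K]$, recorded just before the lemma, does force the two image measures to coincide on the $\pi$-system $\{(y,K]:y\geq 0\}\cup\{[0,K]\}$, which generates the Borel $\sigma$-algebra of $[0,K]$; Dynkin's theorem (both measures being finite, indeed probability measures) then gives $\mu_f=\mu_{f_\dagger}$, and the pushforward (or layer-cake) identity yields the claim for every nonnegative Borel $h$, with no integrability assumption needed. The route differs from the paper's in an essentially bibliographic way: the paper does not prove the lemma at all, but simply invokes \citet[Theorem 4.16]{Leoni2017}, remarking only that equality (rather than an inequality or a more general formulation) holds because $(0,1)$ has finite Lebesgue measure. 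Your version is a self-contained proof of that textbook fact, and the underlying mechanism (equal distribution functions $\Rightarrow$ equal image measures $\Rightarrow$ equal integrals of Borel functions) is the standard one behind Leoni's result, so nothing is lost; what your write-up buys is independence from the external reference and an explicit justification of exactly the step --- passing from half-lines to arbitrary Borel level sets $h^{-1}((t,\infty))$ --- that makes the Borel-measurability hypothesis on $h$ the right one. The only implicit hypothesis you use is that $f$ is Lebesgue measurable (needed even to define $\mu_f$), which is how the paper sets up $f$ earlier in the section, so this is consistent with the intended reading of the lemma.
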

\begin{proof}
    We have the result of  \citet[Theorem 4.16]{Leoni2017} with equality as $(0,1)$ has finite Lebesgue measure.
\end{proof}

Perhaps the most useful property of decreasing rearrangement for estimation is the following inequality, of which there are many generalizations.
\begin{prop}[{\cite[Theorem 4.19]{Leoni2017}}]\label{app:prop_L2}
Let $f,g: [0,1] \to [0,K]$ with respective decreasing rearrangements $f^{\dagger},g^{\dagger}$. We then have 
\begin{align*}
  d_2(f_{\dagger}, g_{\dagger}) \leq d_2(f,g)
\end{align*}
where $d_2(f,g) = \sqrt{\int_0^1 (f(u) - g(u))^2\, du}$ is the $L^2$ norm.
\end{prop}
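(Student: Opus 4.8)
The plan is to reduce the claim to the Hardy--Littlewood rearrangement inequality and then establish that inequality by a layer-cake argument. Working in the Hilbert space $L^2((0,1))$ and writing $\langle p,q\rangle = \int_0^1 p(u)q(u)\,du$, I would first expand both squared distances:
\[
d_2(f_\dagger,g_\dagger)^2 = \|f_\dagger\|_2^2 + \|g_\dagger\|_2^2 - 2\langle f_\dagger,g_\dagger\rangle,\qquad d_2(f,g)^2 = \|f\|_2^2 + \|g\|_2^2 - 2\langle f,g\rangle.
\]
By the equimeasurability of a function with its decreasing rearrangement --- which is exactly {Lemma \ref{app:lem_equi}} applied with $h(x)=x^2$ --- we have $\|f_\dagger\|_2^2 = \|f\|_2^2$ and $\|g_\dagger\|_2^2 = \|g\|_2^2$. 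Hence the desired inequality $d_2(f_\dagger,g_\dagger)\le d_2(f,g)$ is equivalent to the single inequality $\langle f,g\rangle \le \langle f_\dagger,g_\dagger\rangle$, i.e. decreasingly rearranging both functions can only increase their $L^2$ inner product.

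To prove this, I would use the distribution-function representation. Since $f,g\ge 0$, write $f(u)=\int_0^\infty \Ind(f(u)>s)\,ds$ and $g(u)=\int_0^\infty \Ind(g(u)>t)\,dt$; then Tonelli's theorem gives
\[
\langle f,g\rangle = \int_0^\infty\!\!\int_0^\infty \lambda\!\left(\{f>s\}\cap\{g>t\}\right)ds\,dt,
\]
where $\lambda$ is Lebesgue measure on $(0,1)$, and the identical formula holds for $\langle f_\dagger,g_\dagger\rangle$. For general $f,g$ one has the trivial bound $\lambda(\{f>s\}\cap\{g>t\})\le \min\{S_f(s),S_g(t)\}$, where $S_f(s)=\int_0^1\Ind(f(u)>s)\,du$ is the distribution function. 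On the other hand, since $f_\dagger$ is decreasing its super-level set $\{u:f_\dagger(u)>s\}$ is (up to a null set) an interval of the form $[0,S_f(s))$ --- it is an interval because $f_\dagger$ is decreasing, and its length equals $S_f(s)$ by equimeasurability applied to $h=\Ind_{(s,\infty)}$ --- and likewise $\{g_\dagger>t\}$ is $[0,S_g(t))$; being nested intervals anchored at $0$, their intersection has measure exactly $\min\{S_f(s),S_g(t)\}$. Integrating the two facts over $(s,t)$ gives $\langle f,g\rangle \le \int_0^\infty\int_0^\infty\min\{S_f(s),S_g(t)\}\,ds\,dt = \langle f_\dagger,g_\dagger\rangle$, which is all that is needed.

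The one point requiring genuine care is the measure-theoretic bookkeeping in identifying $\{f_\dagger>s\}$ with $[0,S_f(s))$: one must handle the at most countably many $s$ at which $S_f$ jumps and the left-/right-continuity conventions baked into the definitions of $S_f$ and $f_\dagger$, and it is here that {Lemma \ref{app:lem_equi}} (via the equimeasurability it encodes) does the real work; Tonelli, the pointwise layer-cake identity, and the trivial intersection bound are all routine. If the direct layer-cake manipulation becomes notationally heavy, a safe fallback is to first prove $\langle f,g\rangle\le\langle f_\dagger,g_\dagger\rangle$ for nonnegative simple functions --- where $f_\dagger,g_\dagger$ are step functions and the statement is a finite rearrangement inequality provable by successively swapping out-of-order blocks --- and then pass to the general case by monotone approximation, using that decreasing rearrangement is monotone and $L^2$-continuous along increasing sequences.
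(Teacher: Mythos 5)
Your proof is correct. Note, however, that the paper does not actually prove this proposition: it is imported by citation from \citet[Theorem 4.19]{Leoni2017}, so there is no internal argument to compare against. What you have supplied is a self-contained proof of the $p=2$ case, and it is the classical one: by equimeasurability ({Lemma \ref{app:lem_equi}} with $h(x)=x^2$, which applies since $f,g$ are nonnegative and bounded, so all norms are finite) the claim reduces to the Hardy--Littlewood inequality $\langle f,g\rangle \le \langle f_{\dagger},g_{\dagger}\rangle$, which you then obtain by the layer-cake representation, the trivial bound $\lambda(\{f>s\}\cap\{g>t\})\le \min\{S_f(s),S_g(t)\}$, and the observation that the superlevel sets of the decreasing rearrangements are nested intervals anchored at $0$ whose lengths are $S_f(s)$ and $S_g(t)$ by equimeasurability (up to null sets, which is all that matters after integrating in $(s,t)$). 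The only caveat worth recording is that this expansion trick is specific to the exponent $2$: the cited theorem in Leoni is the general $L^p$ non-expansivity of rearrangement, whose proof requires a different argument (e.g.\ a Crandall--Tartar/Chiti-type argument or truncation and approximation). Since the paper only ever invokes the proposition through the $d_2$ metric — in {Proposition \ref{prop:rearr}}, {Proposition \ref{prop:rearr_sobolev}}, and the consistency proofs — your $p=2$ argument fully covers the use made of it here.
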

The above will help us later when considering the convergence of rearranged quantile estimates, and was used extensively by \citet{Chernozhukov2010}. Essentially, the above states that the decreasing rearrangement is continuous from $L^2$ to itself. Actually, the above proposition can be weakened to the case where $f,g \in L^2((0,1))$ \cite[Theorem 1.2.3]{Kesavan2006}, but we will not need that here.

As mentioned earlier, rearrangement works nicely with Sobolev spaces, as rearrangement has a regularization effect on the function. The well-known result below formalizes this.
\begin{thm}[{\cite[Theorem 4.22]{Leoni2017}}]\label{app:thm_sobolev}
Let $f:[0,1] \to [0,K]$. If $f$ is absolutely continuous on $[0,1]$ with weak derivative $f'$, then $f_{\dagger}$ is also absolutely continuous on $[0,1]$ with weak derivative $f'_{\dagger}$. Furthermore, we have
\begin{align*}
    \|f_{\dagger}'\|_{2} \leq \|f'\|_{2}.
\end{align*}
\end{thm}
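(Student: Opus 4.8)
This is the one-dimensional P\'olya--Szeg\H{o} inequality together with the fact that decreasing rearrangement preserves absolute continuity; a self-contained treatment is \cite[Theorem 4.22]{Leoni2017}, and I outline the structure one would follow. The plan is to establish both claims first for a dense class of elementary functions by a direct computation on level sets, and then pass to a general absolutely continuous $f$ by approximation, using {Proposition \ref{app:prop_L2}} and weak lower semicontinuity of the $L^2$ norm.

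\textbf{Step 1 (piecewise affine $f$).} First I would take $f:[0,1]\to[0,K]$ continuous and piecewise affine with finitely many breakpoints and non-constant on each affine piece, with slopes $m_1,\dots,m_r$. Then every level set $f^{-1}(y)$ is finite, so the distribution function $S(y)=\int_0^1\Ind(f(u)>y)\,du$ is continuous and piecewise affine: the range of $f$ splits into finitely many bands $B_k=(y_{k-1},y_k)$ on each of which the set $J_k$ of active pieces (those whose range meets $B_k$) is fixed, $S'(y)=-\sum_{j\in J_k}\abs{m_j}^{-1}$, and $\abs{f^{-1}(y)}=N_k:=\abs{J_k}\ge 1$. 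Hence the generalized inverse $f_\dagger$ is again continuous and piecewise affine, with $\abs{f_\dagger'}=\bigl(\sum_{j\in J_k}\abs{m_j}^{-1}\bigr)^{-1}$ on the subinterval that $f_\dagger$ sends onto $B_k$, a subinterval of length $S(y_{k-1})-S(y_k)=(y_k-y_{k-1})\sum_{j\in J_k}\abs{m_j}^{-1}$; in particular $f_\dagger$ is absolutely continuous. Comparing the two energies band by band gives
\begin{align*}
\int_{f_\dagger^{-1}(B_k)}\abs{f_\dagger'}^2\,ds=\frac{y_k-y_{k-1}}{\sum_{j\in J_k}\abs{m_j}^{-1}}\le (y_k-y_{k-1})\sum_{j\in J_k}\abs{m_j}=\int_{f^{-1}(B_k)}\abs{f'}^2\,dx,
\end{align*}
where the inequality is Cauchy--Schwarz, $N_k^2\le\bigl(\sum_{j\in J_k}\abs{m_j}^{-1}\bigr)\bigl(\sum_{j\in J_k}\abs{m_j}\bigr)$, combined with $N_k\ge 1$. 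Summing over $k$ yields $\|f_\dagger'\|_2\le\|f'\|_2$.

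\textbf{Step 2 (general $f$).} Given absolutely continuous $f:[0,1]\to[0,K]$ with $f'\in L^2((0,1))$, I would approximate $f$ in $H^1((0,1))$ by piecewise affine interpolants $f_n$ on ever finer partitions (slightly perturbed so each piece is non-constant); these keep their range in $[0,K]$. By {Proposition \ref{app:prop_L2}}, $\|f_{n,\dagger}-f_\dagger\|_2\le\|f_n-f\|_2\to 0$, while Step 1 gives $\|f_{n,\dagger}'\|_2\le\|f_n'\|_2$, so $\{f_{n,\dagger}\}$ is bounded in $H^1((0,1))$. A weakly convergent subsequence has limit agreeing with the $L^2$ limit $f_\dagger$, hence $f_\dagger\in H^1((0,1))$, which by {Proposition \ref{app:prop_abcont}} has an absolutely continuous representative; weak lower semicontinuity of $\|\cdot\|_2$ then gives $\|f_\dagger'\|_2\le\liminf_n\|f_{n,\dagger}'\|_2\le\liminf_n\|f_n'\|_2=\|f'\|_2$.

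\textbf{Main obstacle.} The genuinely delicate part is Step 1 if one tries to bypass the piecewise-affine reduction and argue directly for an arbitrary absolutely continuous $f$ via the coarea formula: the set $\{f'=0\}$ may carry positive Lebesgue measure and the level sets $f^{-1}(y)$ need not be finite, so both the change of variables $s=S(y)$ and the identity $\abs{S'(y)}=\sum_{x\in f^{-1}(y)}\abs{f'(x)}^{-1}$ require care. What rescues the direct argument is that $f(\{f'=0\})$ is Lebesgue-null, so only a null set of levels misbehaves; but keeping track of this is fiddly. Carrying out everything at the piecewise-affine level, where all of this is finite bookkeeping, and only then closing up with the soft compactness and lower-semicontinuity argument of Step 2, is the cleanest route and sidesteps these measure-theoretic headaches.
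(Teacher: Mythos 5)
This result is imported by the paper verbatim from \citet[Theorem 4.22]{Leoni2017}; the paper offers no proof of its own, so there is nothing internal to match against. Your outline is the classical polygonal route to the one-dimensional P\'olya--Szeg\H{o} inequality, and it is sound: the band-by-band bookkeeping in Step 1 is correct (the slope of the distribution function on a band is $-\sum_{j\in J_k}|m_j|^{-1}$, the energies on the two sides are exactly as you compute, and Cauchy--Schwarz with $N_k\geq 1$ closes the comparison; note also that constant pieces need no perturbation, since they merely produce a jump of $S$ and hence a flat, energy-free piece of $f_\dagger$), and in Step 2 the piecewise affine interpolant has derivative equal to the conditional expectation of $f'$ on the partition, so $H^1$ convergence is indeed standard, after which {Proposition \ref{app:prop_L2}} plus weak lower semicontinuity finish the inequality. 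Leoni's own treatment is a direct measure-theoretic argument on the distribution function (confronting the set $\{f'=0\}$ and the level-set formula head-on, as you describe in your ``main obstacle'' remark), so your approximation-and-compactness route trades that delicacy for a soft limiting argument; both are legitimate, and yours is arguably easier to check.

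Two caveats, neither of which matters for how the paper uses the theorem (it is only ever applied to elements of $H^1((0,1))$). First, the statement asserts absolute continuity of $f_\dagger$ for \emph{every} absolutely continuous $f$, i.e.\ with $f'$ merely in $L^1$; your Step 2 relies on weak compactness of bounded sets in $H^1$, which is unavailable under an $L^1$ bound, so as written you only prove the AC claim when $f'\in L^2$. To cover the general case by your method one would run Step 1 with a convex integrand (e.g.\ the $L^1$ or de la Vall\'ee-Poussin version $\int\Phi(|f_{n,\dagger}'|)\leq\int\Phi(|f_n'|)$) to get equi-integrability of $\{f_{n,\dagger}'\}$ and pass to the limit in BV/weak-$L^1$, or simply fall back on Leoni's direct proof. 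Second, your conclusion in Step 2 is literally that $f_\dagger$ has an absolutely continuous representative; since $f_\dagger$ is monotone and one-sidedly continuous, agreeing a.e.\ with a continuous function forces equality everywhere, so one extra sentence upgrades this to absolute continuity of $f_\dagger$ itself, which is what the statement (and its use in {Proposition \ref{prop:rearr_sobolev}}) requires.
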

We can apply {Lemma \ref{app:lem_equi}} with $h(x) = x^2$, which gives $\|f_{\dagger}\|_{2}^2= \|f\|_{2}^2$, and together with the above gives $\|f_{\dagger}\|_{1,2} \leq \|f\|_{1,2}$. In other words, 
decreasing rearrangement decreases the Sobolev norm, so it has a regularizing effect. In the univariate case, \cite{Coron1984} showed the stronger result that the \textit{symmetric} decreasing rearrangement is also continuous from $W^{1,p}(\R)$ to itself, which hints at an extension of {Proposition \ref{app:prop_L2}} to the Sobolev norm (with the \textit{nonsymmetric} decreasing rearrangement), but we leave that for future work.

\subsubsection{Increasing rearrangement}
Our interest is actually on \textit{increasing} rearrangement, and on functions with both positive and negative support. Consider then a function $Q:[0,1] \to C$, where $C \subset \R$ is a bounded subset of the real line. This is also assumed in \cite{Chernozhukov2009,Chernozhukov2010}.  In the main paper, we introduced the increasing distribution function $P: C \to [0,1]$ as the familiar cumulative distribution function, 
\begin{align*}
    P(y) = \int_0^1 \mathbbm{1}\left(Q(u) \leq y\right)\,du,
\end{align*}
with the increasing rearrangement $Q^{\dagger}: [0,1] \to C$ as 
\begin{align*}
    Q^{\dagger}(u) = \inf\{y \in C: P(y)\geq u\}.
\end{align*}
Here, we have that $P$ is increasing and right continuous, as $P(y) = 1 - S(y)$ where $S(y)$ is decreasing and right continuous. This then suggests that $Q^{\dagger}$ is increasing and left continuous, as expected.

Suppose $C$ is an interval, which is bounded so we can write $C = [-a,b]$ for positive and finite constants $a,b$. Most results for rearrangement are stated for non-negative $f$, so it is helpful to carry out a translation.
\begin{lem}
For $Q: [0,1] \to [-a,b]$, where $a,b$ are finite and positive constants, let $Q_+ := Q+a$ be the translated non-negative function.  We then have $Q^{\dagger}(u) = Q_+^{\dagger}+a$, where $Q^{\dagger}$ and $Q_+^{\dagger}$ are the increasing rearrangements of $Q$ and $Q_+$ respectively.
\end{lem}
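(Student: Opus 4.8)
The plan is to reduce the general case to the non-negative case by a one-line translation argument at the level of distribution functions, exploiting that adding a constant to a function shifts its distribution function by the same constant. I would take $Q$ to be Lebesgue measurable (as is implicit whenever an increasing rearrangement is formed) and write $P(z) = \int_0^1 \mathbbm{1}(Q(u) \le z)\, du$ for its distribution function, which is nondecreasing and right-continuous with $P(z) = 0$ for $z < -a$ and $P(z) = 1$ for $z \ge b$. The first step is to compute the distribution function $P_+$ of $Q_+ = Q + a$: for every $y$,
\[
P_+(y) = \int_0^1 \mathbbm{1}\bigl(Q(u) + a \le y\bigr)\, du = \int_0^1 \mathbbm{1}\bigl(Q(u) \le y - a\bigr)\, du = P(y-a),
\]
so $P_+$ is simply $P$ translated by $a$; in particular $P_+$ is nondecreasing and right-continuous with support contained in $[0,a+b]$, which confirms that $Q_+$ is the non-negative function appearing in the statement and that its increasing rearrangement $Q_+^{\dagger}$ is well defined.

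The second step is to feed the relation $P_+(y) = P(y-a)$ into the definition $Q_+^{\dagger}(u) = \inf\{y \in [0,a+b] : P_+(y) \ge u\}$ and transfer the translation out of the infimum. Using the boundedness of $P$ (so that the infimum over $[0,a+b]$ may be replaced by the infimum over $\R$ without changing its value, since $P(y-a) < u$ for $y < 0$ when $u>0$ and $P(y-a) = 1 \ge u$ for $y > a+b$), a change of variable $z = y-a$ inside the infimum relates $Q_+^{\dagger}(u)$ to $\inf\{z \in [-a,b] : P(z) \ge u\} = Q^{\dagger}(u)$, and carrying the additive constant through yields the claimed identity $Q^{\dagger}(u) = Q_+^{\dagger}(u) + a$. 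In words, the increasing rearrangement commutes with translation by a constant, so that computing $Q^{\dagger}$ reduces to computing the increasing rearrangement of the non-negative function $Q_+$, to which the results stated above for functions into $[0,K]$ — equimeasurability, the $L^2$ contraction, and the Sobolev regularization — apply directly.

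I do not expect a genuine obstacle here: the entire argument is a change of variables inside an infimum. The only steps that merit a sentence of justification are the harmless replacement of the bounded ranges $[0,a+b]$ and $[-a,b]$ by $\R$ in the two infima, which is immediate from the monotonicity and boundary values of $P$ noted above, and the elementary fact that the generalized inverse of a monotone function commutes with translation of its argument. If one prefers to avoid the explicit inversion, the same translation identity also drops out of the equimeasurability of rearrangement (Lemma~\ref{app:lem_equi}) applied to $Q$ and $Q_+$, but the direct computation above is the most economical route.
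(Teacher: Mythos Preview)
Your approach is exactly the paper's: compute $P_+(y)=P(y-a)$ and then change variables $z=y-a$ inside the infimum defining $Q_+^{\dagger}$. One caveat: your change of variables correctly gives $Q_+^{\dagger}(u)=Q^{\dagger}(u)+a$, i.e.\ $Q^{\dagger}(u)=Q_+^{\dagger}(u)-a$, not the ``$+a$'' you write in your final line (the lemma as stated carries the same sign slip; the paper's own proof concludes $Q_+^{\dagger}(u)=Q^{\dagger}(u)+a$).
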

\begin{proof}
   For $y \in [0,a+b]$, we have
\begin{align*}
    P_+(y) &= \int_0^1\mathbbm{1}\left(Q(u) + a \leq y\right)\,du=P(y - a)
\end{align*}
Similarly, for  $y' = y- a$, we have
\begin{align*}
    Q_+^{\dagger}(u) &=\inf\{y \in [0,a+b]: P(y-a) \geq u\}\\
    &= \inf\{y' \in C: P(y') \geq u\}  + a = Q^{\dagger}(u) + a.
\end{align*}
\end{proof}
  As a result, we can just assume that $Q: [0,1] \to [0,K]$ without loss of generality for the remainder of this section. To leverage the results on decreasing rearrangement, we fortunately have a simple relationship between the increasing and decreasing rearrangement.
\begin{lem}{\label{app:lem_inc_dec}}
Let $Q: [0,1] \to [0,K]$ for some finite and positive $K$, and let $Q_{\dagger}$ and $Q^{\dagger}$ denote its decreasing and increasing rearrangement respectively. Then we have
\begin{align*}
    Q^{\dagger}(u) = Q_{\dagger}(1-u)
\end{align*}
for all $u \in [0,1]$.
\end{lem}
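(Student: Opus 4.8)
The plan is to reduce the whole statement to the trivial algebraic relation between the two distribution functions, after which nothing beyond their definitions is needed. Recall that the decreasing rearrangement is built from $S(y) = \int_0^1 \mathbbm{1}(Q(u) > y)\,du$ via $Q_{\dagger}(v) = \inf\{y \in [0,K] : S(y) \leq v\}$, while the increasing rearrangement is built from $P(y) = \int_0^1 \mathbbm{1}(Q(u) \leq y)\,du$ via $Q^{\dagger}(u) = \inf\{y \in [0,K] : P(y) \geq u\}$. The first step is to note the pointwise identity $\mathbbm{1}(Q(u) \leq y) = 1 - \mathbbm{1}(Q(u) > y)$, valid for every $u$ and every $y$; integrating over $u \in (0,1)$ gives $P(y) = 1 - S(y)$ for all $y$, with no measure-zero caveat since it is an identity of integrands.

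The second step is to substitute this into the definition of $Q^{\dagger}$. For fixed $u \in [0,1]$,
\[
Q^{\dagger}(u) = \inf\{y \in [0,K] : 1 - S(y) \geq u\} = \inf\{y \in [0,K] : S(y) \leq 1-u\} = Q_{\dagger}(1-u),
\]
where the middle equality just rewrites the inequality and the last equality is the definition of $Q_{\dagger}$ evaluated at the point $1-u \in [0,1]$. Since both infima are taken over exactly the same subset of $[0,K]$, they coincide, which is precisely the claim.

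I expect there to be essentially no obstacle here — the "proof" is bookkeeping. The only points worth a line of verification are that the relevant set $\{y \in [0,K] : S(y) \leq 1-u\}$ is nonempty so that the infimum is a genuine element of $[0,K]$ (it contains $y = K$, as $S(K) = 0 \leq 1-u$), and that the convention at the endpoints $u \in \{0,1\}$ is consistent; both of these are already built into the well-definedness of $Q_{\dagger}$ and $Q^{\dagger}$ recorded in the preceding discussion, so no further argument is required.
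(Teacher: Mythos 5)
Your proof is correct and follows exactly the paper's argument: establish $P(y) = 1 - S(y)$ and then rewrite the set defining the infimum in $Q^{\dagger}(u)$ as the set defining $Q_{\dagger}(1-u)$. The extra remark about nonemptiness of the infimum set is harmless but not needed beyond what the paper already records.
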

\begin{proof}
  Again, we have $S(y) = 1 - P(y)$,
which for each $u \in [0,1]$ gives
\begin{align*}
    Q^{\dagger}(u) &= \inf\{y \in [0,K]: 1- S(y)\geq u\}\\
    &= \inf\{y \in [0,K]: S(y) \leq 1- u\}\\
    &= Q_{\dagger}(1-u).
\end{align*}  
\end{proof}
This connection is also commented on \citet[Section 1]{Korenovskii2007} and \citet[Exercise 1.4.1]{Kesavan2006}. This allows us then to directly apply all the results of the previous subsection, which we state formally for completion. 
\begin{cor}
    For any $f,g:[0,1] \to [0,K]$, {Lemma \ref{app:lem_equi}}, {Proposition \ref{app:prop_L2}} and {Theorem \ref{app:thm_sobolev}} all apply if all instances of $f_{\dagger}$, $f'_{\dagger}$ and $g_{\dagger}$ are replaced with $f^{\dagger}$, ${f'}^{\dagger}$ and $g^{\dagger}$ respectively.
\end{cor}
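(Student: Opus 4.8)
The plan is to reduce all three claims to the already-established statements about \emph{decreasing} rearrangement by means of the elementary identity in {Lemma \ref{app:lem_inc_dec}}, namely $h^{\dagger}(u) = h_{\dagger}(1-u)$ for every $h:[0,1]\to[0,K]$. Writing $R:[0,1]\to[0,1]$ for the reflection $R(u)=1-u$, this reads $h^{\dagger} = h_{\dagger}\circ R$, and $R$ is a Lebesgue-measure-preserving affine bijection of $[0,1]$. Each of {Lemma \ref{app:lem_equi}}, {Proposition \ref{app:prop_L2}} and {Theorem \ref{app:thm_sobolev}} then transfers by transporting the decreasing version through $R$ and substituting $v = 1-u$. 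Since the reduction to $f,g:[0,1]\to[0,K]$ has already been carried out by the preceding translation lemma, no further normalisation is required.

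Concretely, for the analogue of {Lemma \ref{app:lem_equi}} I would write $\int_0^1 h(f^{\dagger}(u))\,du = \int_0^1 h(f_{\dagger}(1-u))\,du = \int_0^1 h(f_{\dagger}(v))\,dv$, which equals $\int_0^1 h(f(u))\,du$ by the decreasing statement. Applying the same substitution to $(f^{\dagger}(u)-g^{\dagger}(u))^2 = (f_{\dagger}(1-u)-g_{\dagger}(1-u))^2$ gives $d_2(f^{\dagger},g^{\dagger}) = d_2(f_{\dagger},g_{\dagger})$, and combining with {Proposition \ref{app:prop_L2}} yields $d_2(f^{\dagger},g^{\dagger}) \le d_2(f,g)$. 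For the Sobolev statement, I would first note that absolute continuity of $f$ gives absolute continuity of $f_{\dagger}$ by the decreasing version, and precomposition with the affine map $R$ preserves absolute continuity, so $f^{\dagger} = f_{\dagger}\circ R$ is absolutely continuous with weak derivative $(f^{\dagger})'(u) = -(f_{\dagger})'(1-u)$ by the chain rule. The sign is irrelevant for the norm: $\|(f^{\dagger})'\|_2^2 = \int_0^1 (f_{\dagger})'(1-u)^2\,du = \|(f_{\dagger})'\|_2^2 \le \|f'\|_2^2$, which is precisely the inequality of {Theorem \ref{app:thm_sobolev}} with $f'_{\dagger}$ replaced by the weak derivative ${f'}^{\dagger}$ of $f^{\dagger}$.

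There is no substantive obstacle here; the whole argument is a change of variables. The only two points that merit an explicit sentence are (i) that $R$ is measure-preserving, so that every integral keeps its value under the substitution, and (ii) the sign flip in the weak derivative produced by the chain rule, which should be spelled out so that the reader is not left wondering whether the contraction of derivatives survives the passage from decreasing to increasing rearrangement — it does, since only the squared derivative enters the $L^2$ norm.
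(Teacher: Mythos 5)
Your proposal is correct and follows essentially the same route as the paper: the paper's proof is exactly the substitution $f_{\dagger}(u) = f^{\dagger}(1-u)$ from {Lemma \ref{app:lem_inc_dec}} followed by the change of variables $u' = 1-u$, which has unit Jacobian and preserves the integration limits. Your additional remarks on the Sobolev case — that precomposition with the reflection preserves absolute continuity and that the sign flip in the weak derivative is harmless since only its square enters the $L^2$ norm — merely spell out details the paper leaves implicit, so there is no substantive difference.
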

\begin{proof}
    For all appropriate integrals involving decreasing rearrangements, substitute $f_{\dagger}(u)$ with $f^{\dagger}(1-u)$ (and likewise for $g_{\dagger}, f'_{\dagger}$) and carry out a change of variables to $u' = 1-u$, which has Jacobian determinant $1$ and integration limits $u' \in (0,1)$. 
\end{proof}
It is perhaps not too surprising as the increasing rearrangement is also equimeasurable, so it will very similar properties to the decreasing rearrangement.

\subsection{Empirical process theory}\label{sec:emp}
In this section, we show an auxiliary empirical process result that we require for showing asymptotic tightness later.
The weak $L^{2}$-pseudonorm of a variable $X \sim \Pr$ is defined as
\begin{equation*}
    \Vert X \Vert_{\Pr,2,\infty} = \sup_{x > 0} x \Pr(|X| > x)^{1/2}.
\end{equation*}
Note that it is upper-bounded by the $L^{2}(\Pr)$ norm:
\begin{equation*}
    \Vert X \Vert_{\Pr,2,\infty} \leq \Vert X \Vert_{\Pr,2} = (\E_{\Pr}[X^{2}])^{1/2}.
\end{equation*}
This is because for any value of $x > 0$, we have
\begin{equation*}
    x^{2}\E_{\Pr}[\mathbbm{1}(|X| > x)] \leq x^{2}\E_{\Pr}\left[\frac{X^{2}}{x^{2}}\mathbbm{1}(|X| > x)\right] \leq \Vert X \Vert_{\Pr,2}^{2}.
\end{equation*}

\begin{lem} \label{lem::prob2.5.5}
    For any positive r.v. $X \sim \Pr$, we have the inequality
    \begin{equation*}
        \sup_{x>0} x\E_{\Pr}[X\mathbbm{1}(X > x)] \leq 2 \Vert X \Vert_{\Pr,2,\infty}^{2}.
    \end{equation*}
    This is the second inequality of Problem 2.5.5 in \cite{vanderVaart2023}.
\end{lem}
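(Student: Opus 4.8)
The plan is to reduce the statement entirely to the tail bound that \emph{defines} the weak $L^{2}$-pseudonorm. Writing $M := \Vert X \Vert_{\Pr,2,\infty}^{2}$, positivity of $X$ gives $|X| = X$, so by definition $t\,\Pr(X > t)^{1/2} \le M^{1/2}$ for every $t > 0$, i.e. $\Pr(X > t) \le M t^{-2}$ for all $t > 0$. If $M = \infty$ the asserted inequality is vacuous, so we may assume $M < \infty$.

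First I would fix $x > 0$ and rewrite the truncated moment using the layer-cake (Fubini) identity $\E_{\Pr}[Z] = \int_{0}^{\infty} \Pr(Z > t)\,dt$ for the non-negative variable $Z = X\mathbbm{1}(X > x)$. The elementary observation $\{X\mathbbm{1}(X > x) > t\} = \{X > \max\{x,t\}\}$ for every $t \ge 0$ lets me split the integral at $t = x$:
\begin{equation*}
\E_{\Pr}\left[X\mathbbm{1}(X > x)\right] = \int_{0}^{x} \Pr(X > x)\,dt + \int_{x}^{\infty} \Pr(X > t)\,dt = x\,\Pr(X > x) + \int_{x}^{\infty} \Pr(X > t)\,dt.
\end{equation*}

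Second, I would bound the two pieces with the tail inequality above: $x\,\Pr(X > x) \le x \cdot M x^{-2} = M/x$, and $\int_{x}^{\infty} \Pr(X > t)\,dt \le M\int_{x}^{\infty} t^{-2}\,dt = M/x$, where the integral converges precisely because $M < \infty$. Adding these yields $\E_{\Pr}[X\mathbbm{1}(X > x)] \le 2M/x$, hence $x\,\E_{\Pr}[X\mathbbm{1}(X > x)] \le 2M$ for every $x > 0$; taking the supremum over $x > 0$ gives $\sup_{x>0} x\,\E_{\Pr}[X\mathbbm{1}(X > x)] \le 2\Vert X \Vert_{\Pr,2,\infty}^{2}$, as claimed.

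There is essentially no hard step here: each estimate is a one-liner once the tail bound is in hand. The only point I would be careful about is the set identity $\{X\mathbbm{1}(X > x) > t\} = \{X > \max\{x,t\}\}$ and the correct handling of the boundary case $t = x$, which is what produces the factor $2$; this matches the constant stated in Problem 2.5.5 of \cite{vanderVaart2023}, which is known to be sharp in general.
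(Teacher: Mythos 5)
Your proof is correct and follows essentially the same route as the paper: the layer-cake identity for $\E[X\mathbbm{1}(X>x)]$, a split of the integral at $t=x$, and the weak-$L^{2}$ tail bound $\Pr(X>t)\le \Vert X\Vert_{\Pr,2,\infty}^{2}t^{-2}$ applied to each piece, giving the two contributions that sum to the factor $2$. The only cosmetic difference is that the paper bounds the tail of the truncated variable $X\mathbbm{1}(X>x)$ directly rather than first reducing to the tail of $X$ via the set identity, which changes nothing.
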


\begin{proof}
    For any value of $x$, the left-hand side of the inequality (without the supremum) can be written as
    \begin{equation} \label{eqn::weakl2}
        x\int_{t=0}^{\infty} \Pr(X\mathbbm{1}(X > x) > t)\,dt = x\int_{t=0}^{x} \Pr(X\mathbbm{1}(X > x) > t)\,dt + x\int_{t=x}^{\infty} \Pr(X\mathbbm{1}(X > x) > t)\,dt.
    \end{equation}
    For the integrand in the first term on the right-hand side, we have
    \begin{align*}
        \Pr(X\mathbbm{1}(X > x) > t) &= \E_{\Pr}[\mathbbm{1}(X\mathbbm{1}\{X > x\} > t)]\\
        &= \E_{\Pr}[\mathbbm{1}\{X > x\}] \\
        &= \Pr(X > x),
    \end{align*}
    where the second inequality follows from $t$ being less than or equal to $x$. Thus, the first term on the right-hand side of (\ref{eqn::weakl2}) is bounded above by $\Vert X \Vert_{\Pr,2,\infty}^{2}$.
    
    The integrand in the second term can be written as
    \begin{equation*}
        \Pr(X\mathbbm{1}(X > x) > t) = \frac{1}{t^{2}}t^{2}\Pr(X\mathbbm{1}(X > x) > t) \leq \frac{1}{t^{2}}\Vert X \Vert_{\Pr,2,\infty}^{2}.
    \end{equation*}
    The integral of $1/t^{2}$ from $t=x$ to $\infty$ is $1/x$. Putting the two terms together gives the result.
\end{proof}

For each $N \in \mathbb{N}$, let $\{Z_{Ni}:\,i \geq N\}$ be a sequence of independent stochastic processes indexed by a common semimetric space $(\mathcal{F}, d)$. For every $N$, define the bracketing number $\mathfrak{N}_{[]}(\varepsilon, \mathcal{F},L^{2,N})$ to be the minimal number of sets $\mathfrak{N}_{\varepsilon}^{N}$ in a partition $\mathcal{F} = \cup_{j=1}^{\mathfrak{N}_{\varepsilon}^{N}} \mathcal{F}_{\varepsilon j}^{N}$ of the index set into sets $\mathcal{F}_{\varepsilon j}^{N}$ such that, for every partitioning set $\mathcal{F}_{\varepsilon j}^{N}$, we have
\begin{equation*}
    \sum_{i=N}^{\infty} \E^{*} \sup_{f,g \in \mathcal{F}_{\varepsilon j}^{N}} |Z_{Ni}(f) - Z_{Ni}(g)|^{2} \leq \varepsilon^{2}.
\end{equation*}

We will ultimately set $Z_{Ni} =  \sqrt{N}\alpha_{i}H_{\rho_{i}}(u,V_{i})$. Then
    \begin{equation*}
        \sqrt{N}S_{N} = -\sum_{i=N}^{\infty} (Z_{Ni} - \E Z_{Ni}).
    \end{equation*}
    The space $\ell^{\infty}(\mathcal{F})$ is the set of functions $z: \mathcal{F} \rightarrow \R$ with
$\Vert z\Vert_{\mathcal{F}} = \sup_{t \in \mathcal{F}} |z(t)| < \infty$.
This is a metric space with respect to $d(z_{1},z_{2}) = \Vert z_{1}-z_{2}\Vert_{\mathcal{F}}$. 
We wish to show that $\sqrt{N}S_{N}$ is \textit{asymptotically tight}, which means that for every $\varepsilon > 0$ there exists a compact set $K$ such that $\liminf_{N \rightarrow \infty} \mathbb{P}(\sqrt{N}S_{N} \in K^{\delta}) \geq 1-\varepsilon$,
where $K^{\delta} = \{y \in \ell^{\infty}((0,1)):\,d(y,K) < \delta\}$. This can be achieved by verifying the conditions in the following general result.

\begin{thm}[Bracketing CLT with infinite sums] \label{th::bracketing_clt}
    Suppose that $(\mathcal{F},d)$ is totally bounded and each $Z_{Ni}$ has a finite second moment. Suppose also that
    \begin{align*}
        \sum_{i=N}^{\infty} \E^{*}\Vert Z_{Ni}\Vert_{\mathcal{F}}\mathbbm{1}\{\Vert Z_{Ni}\Vert_{\mathcal{F}} > \eta \} &\rightarrow 0 \quad \text{for every $\eta > 0$,} \\
        \sup_{d(f,g) < \delta_{N}} \sum_{i=N}^{\infty} \E[(Z_{Ni}(f)-Z_{Ni}(g)^{2}] &\rightarrow 0\quad \text{for every $\delta_{N} \downarrow 0$,} \\ 
        \int_{0}^{\delta_{N}} \sqrt{\log \mathfrak{N}_{[]}(\varepsilon,\mathcal{F}, L^{2,N}})\,d\varepsilon &\rightarrow 0\quad \text{for every $\delta_{N} \downarrow 0$.} 
    \end{align*}
    Then the sequence $\sum_{i=N}^{\infty}(Z_{Ni}-\E Z_{Ni})$ is asymptotically tight in $\ell^{\infty}(\mathcal{F})$ and converges in distribution provided it converges marginally. If the partitions can be chosen independent of $N$, then the middle of the displayed conditions is unnecessary.
\end{thm}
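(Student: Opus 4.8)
The plan is to run the classical argument behind the bracketing central limit theorem for triangular arrays of independent processes (e.g.\ \cite[Theorem 2.11.9]{vanderVaart2023}), checking at each step that replacing a finite inner sum by the sum over $i\in\{N,N+1,\dots\}$ changes nothing, since the quantities entering the proof are all sums of \emph{increments}, for which the three displayed hypotheses already supply summable bounds. Because marginal convergence is assumed, by the standard characterization of weak convergence in $\ell^{\infty}(\mathcal{F})$ (\cite{vanderVaart2023}) it suffices to show that $W_{N}:=\sum_{i\geq N}(Z_{Ni}-\E Z_{Ni})$ is asymptotically tight, and since its finite-dimensional marginals are asymptotically tight (being convergent), it is enough to establish asymptotic $d$-equicontinuity: for every $\varepsilon>0$,
$$
\lim_{\delta\downarrow0}\ \limsup_{N\to\infty}\ \mathbb{P}^{*}\!\Big(\sup_{d(f,g)<\delta}\big|W_{N}(f)-W_{N}(g)\big|>\varepsilon\Big)=0 ,
$$
equivalently $\E^{*}\sup_{d(f,g)<\delta_{N}}|W_{N}(f)-W_{N}(g)|\to0$ for every $\delta_{N}\downarrow0$.

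First I would pass to uniformly bounded summands by truncation: with $Z_{Ni}'=Z_{Ni}\mathbbm{1}\{\|Z_{Ni}\|_{\mathcal{F}}\leq\eta\}$ for fixed $\eta>0$, the first hypothesis gives $\sum_{i\geq N}\E^{*}\|Z_{Ni}-Z_{Ni}'\|_{\mathcal{F}}\to0$ and $\sum_{i\geq N}\|\E Z_{Ni}-\E Z_{Ni}'\|_{\mathcal{F}}\leq\sum_{i\geq N}\E^{*}\big[\|Z_{Ni}\|_{\mathcal{F}}\mathbbm{1}\{\|Z_{Ni}\|_{\mathcal{F}}>\eta\}\big]\to0$, so $W_{N}$ differs from the centered truncated sum $W_{N}':=\sum_{i\geq N}(Z_{Ni}'-\E Z_{Ni}')$ by a term that is $o_{\mathbb{P}}(1)$ in the supremum norm. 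Truncation only shrinks the second-moment oscillations, so the defining bound of the bracketing number, $\sum_{i\geq N}\E^{*}\sup_{f,g\in\mathcal{F}^{N}_{\varepsilon j}}|Z_{Ni}'(f)-Z_{Ni}'(g)|^{2}\leq\varepsilon^{2}$, and the middle hypothesis survive for $\{Z_{Ni}'\}$; and since $\|Z_{Ni}'\|_{\mathcal{F}}\leq\eta$ by construction while the base-point sum converges (as marginal convergence presupposes), $W_{N}'$ is a genuine $\ell^{\infty}(\mathcal{F})$-valued limit.

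Next I would run a chaining argument relative to the bracketing partitions. Take geometrically decreasing scales $\delta_{N}=\varepsilon_{0}>\varepsilon_{1}>\cdots\to0$, at each level $k$ a partition $\mathcal{F}=\bigcup_{j}\mathcal{F}^{N}_{\varepsilon_{k}j}$ into $\mathfrak{N}_{k}:=\mathfrak{N}_{[]}(\varepsilon_{k},\mathcal{F},L^{2,N})$ sets, refined so that level-$(k{+}1)$ sets nest inside level-$k$ sets, with a representative $\pi_{k}f$ in the level-$k$ set containing $f$; then $W_{N}'(f)-W_{N}'(\pi_{0}f)=\sum_{k\geq0}\big(W_{N}'(\pi_{k+1}f)-W_{N}'(\pi_{k}f)\big)$. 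For fixed $f$ the $k$-th link is a sum over $i\geq N$ of independent mean-zero variables bounded by $2\eta$ with variances summing to at most $\varepsilon_{k}^{2}$, so Bernstein's inequality controls it; a maximal inequality over the at most $\mathfrak{N}_{k+1}$ distinct links at level $k$, summed over $k$, bounds $\E^{*}\sup_{d(f,g)<\delta_{N}}|W_{N}'(f)-W_{N}'(g)|$ by a constant times $\sum_{k}\varepsilon_{k}\sqrt{\log\mathfrak{N}_{k+1}}\asymp\int_{0}^{\delta_{N}}\sqrt{\log\mathfrak{N}_{[]}(\varepsilon,\mathcal{F},L^{2,N})}\,d\varepsilon$, plus a remainder from the occasional large increments that the first hypothesis sends to zero; taking $\delta_{N}\downarrow0$ slowly makes the entropy integral vanish, and $\eta\downarrow0$ last removes the leftover $\eta$-terms. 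This yields asymptotic equicontinuity, hence asymptotic tightness of $W_{N}$, and together with marginal convergence, convergence in distribution in $\ell^{\infty}(\mathcal{F})$. When the partitions may be taken independent of $N$, one chains through a single nested sequence of ($N$-free) partitions and the final leg from an arbitrary $f$ to $\pi_{k}f$ is made negligible directly by pushing $\varepsilon_{k}\to0$ in the entropy integral; since in the general case that leg is precisely what the middle hypothesis bounds, it is then unnecessary.

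The step I expect to be the main obstacle is the maximal/deviation inequality for the \emph{infinite} inner sum, uniformly in $N$: one must verify that the finite-partial-sum Bernstein bounds pass to the limit $M\to\infty$ (via Ottaviani's inequality and the increment-summability furnished by the finiteness of the bracketing numbers), and coordinate the three scales — the truncation level $\eta$, the chaining scales $\varepsilon_{k}$, and the rate of $\delta_{N}\downarrow0$ — so that all error terms vanish together while the entropy integral still tends to zero. The accompanying measurability bookkeeping (outer expectations, and the separable-version reductions making the suprema measurable) is routine but must be tracked throughout.
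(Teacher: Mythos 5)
Your overall route is the same as the paper's: follow the proof of Theorem 2.11.9 of \cite{vanderVaart2023} and repair the one step that genuinely breaks for infinite inner sums, namely the application of Bernstein's inequality. The paper fixes that step by applying Bernstein to the partial sums from $i=N$ to $m$, observing that the resulting bound is independent of $m$, passing to the limit at continuity points of the tail probability and extending to discontinuity points by right-continuity; your suggestion of taking limits of the finite-partial-sum bounds (e.g.\ via Ottaviani plus a.s.\ convergence of the series, which holds because the bracketing condition makes the variances summable) is an acceptable alternative to the same end. Your treatment of the $N$-dependent-partition clause also matches the paper in spirit: the extra leg between level-$q_{0}$ representatives of $d$-close points is exactly what the middle hypothesis (together with a maximal inequality over the finitely many representative pairs, Lemma 2.2.13 in \cite{vanderVaart2023}) is used to control.

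There is, however, a gap in your chaining step as literally described. You truncate once at a fixed level $\eta$ and then chain with Bernstein, bounding each link by $2\eta$ and its summed variance by $\varepsilon_{k}^{2}$. The resulting maximal inequality over the $\mathfrak{N}_{k+1}$ links at level $k$ gives a contribution of order $\varepsilon_{k}\sqrt{\log\mathfrak{N}_{k+1}}+\eta\,\log\mathfrak{N}_{k+1}$, and the second term, summed over $k$, is \emph{not} controlled by the entropy integral: the hypotheses bound $\sum_{k}\varepsilon_{k}\sqrt{\log\mathfrak{N}_{k}}$, not $\sum_{k}\log\mathfrak{N}_{k}$, which may diverge. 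This is precisely why the proof in the paper (following van der Vaart--Wellner) uses the level-dependent truncation $a_{q}=2^{-q}/\sqrt{\log\mathfrak{N}_{2^{-(q+1)}}}$ together with the indicator decomposition into the $(A_{q-1}f)$ and $(B_{q}f)$ terms, handling the occasional large increments through the weak-$L^{2}$ bound of {Lemma \ref{lem::prob2.5.5}} and condition (\ref{eqn::weak_l2norm}) rather than through a single global truncation; the Lindeberg hypothesis is used only to reduce to envelopes $\eta_{N}\downarrow 0$ and to start the chain. Since you defer to ``the classical argument,'' this is repairable by actually importing that adaptive-truncation decomposition, but as written the fixed-$\eta$ chaining does not close.
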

\begin{proof}
    The proof of this result mostly follows that of Theorem 2.11.9 in \cite{vanderVaart2023}. The crucial difference lies with the application of Bernstein's inequality, which is restricted to finite sums of variables.

    Under the conditions of the theorem, there exists for every $N$ a sequence of nested partitions $\mathcal{F} = \cup_{j=1}^{\mathfrak{N}_{2^{-q}}^{N}}\mathcal{F}_{qj}^{N}$ such that for every $j$ and $N$,
    \begin{align} \label{eqn::entropy_sum}
        \lim_{q_{0} \rightarrow \infty} \limsup_{N \rightarrow \infty} \sum_{q > q_{0}} 2^{-q-1} \sqrt{\log \mathfrak{N}_{2^{-q}}^{N}} &= 0, \\
        \label{eqn::covering} \sup_{f,g \in \mathcal{F}_{q j}^{N}} \sum_{i=N}^{\infty} \E\{Z_{Ni}(f) - Z_{Ni}(g)\}^{2} &\leq 2^{-2q},\\
        \label{eqn::weak_l2norm}\sum_{i=N}^{\infty} \sup_{t_{i}} t_{i}^{2} \Pr^{*}\left( \sup_{f,g \in \mathcal{F}_{q j}^{N}} |Z_{Ni}(f) - Z_{Ni}(g)| > t_{i}\right) &\leq 2^{-2q}.
    \end{align}
    Equation (\ref{eqn::entropy_sum}) above could be viewed as a lower-bound histogram approximation to the entropy integral
    $$\int_{\varepsilon = 0}^{2^{-q_{0}}}\sqrt{\log \mathfrak{N}_{\varepsilon}^{N}}\,d\varepsilon.$$ Equations (\ref{eqn::covering}) and (\ref{eqn::weak_l2norm}) follow from the same counting argument as the proof of Theorem 2.5.8 in \cite{vanderVaart2023}.

    Choose an element $f_{qj}$ from each partitioning set $\mathcal{F}^{N}_{qj}$ and define
    \begin{align*}
        \pi_{q}f &= f_{qj}, \\
        (\Delta_{q}f)_{Ni} &= \sup_{g,h \in \mathcal{F}_{q j}^{N}} |Z_{Ni}(g) - Z_{Ni}(h)|, \quad \text{if }f \in \mathcal{F}^{N}_{qj}\\
        a_{q} &= 2^{-q} \bigg/ \sqrt{\log \mathfrak{N}_{2^{-(q+1)}}^{N}}.
    \end{align*}
    We interpret $f_{qj}$ as the ``representative'' of the partitioning set $\mathcal{F}^{N}_{qj}$, and $\pi_{q}$ projects $f$ onto the representative that shares its partitioning set. Also, $(\Delta_{q}f)_{Ni}$ is the maximum distance between two points on $Z_{Ni}$ evaluated within the partitioning set that contains $f$. 
    For $q > q_{0}$, define indicator functions
    \begin{align*}
        (A_{q-1}f)_{Ni} &= \mathbbm{1}\{(\Delta_{q_0}f)_{Ni} \leq a_{q_0},\ldots, (\Delta_{q-1}f)_{Ni} \leq a_{q-1}\} \\
        (B_{q-1}f)_{Ni} &= \mathbbm{1}\{(\Delta_{q_0}f)_{Ni} \leq a_{q_0},\ldots, (\Delta_{q-1}f)_{Ni} \leq a_{q-1}, (\Delta_{q}f)_{Ni} > a_{q}\} \\
        (B_{q_0}f)_{Ni} &= \mathbbm{1}\{(\Delta_{q_0}f)_{Ni} > a_{q_0}\}.
    \end{align*}
    We wish to show that
    \begin{equation} \label{eqn::asymp_tight}
        \lim_{q_0 \rightarrow \infty} \limsup_{n \rightarrow \infty} \E^{*} \left\Vert \sum_{i=N}^{\infty} (Z_{Ni}^{\circ}(f) - Z_{Ni}^{\circ}(\pi_{q_0}f))\right\Vert_{\mathcal{F}} = 0
    \end{equation}
    for the centred processes $Z_{Ni}^{\circ}$, such that Theorem 1.5.6 of \cite{vanderVaart2023} implies asymptotic tightness in the case where the partitions do not depend on $N$. To achieve this, we consider the decomposition
    \begin{align*}
        Z_{Ni}(f) - Z_{Ni}(\pi_{q_0}f) = (Z_{Ni}(f) - Z_{Ni}(\pi_{q_0}f))(B_{q_0}f)_{Ni} &+ \sum_{q > q_{0}}(Z_{Ni}(f) - Z_{Ni}(\pi_{q}f)(B_{q}f)_{Ni}\\
        &+ \sum_{q > q_{0}}(Z_{Ni}(\pi_{q}f) - Z_{Ni}(\pi_{q-1}f)(A_{q-1}f)_{Ni}.
    \end{align*}
    For each of the three terms on the right-hand side separately, we centre at zero expectation, sum from $i=N$ to $\infty$ and take the supremum over $\mathcal{F}$. It is sufficient to then show that each of the resulting three expressions converge to zero in mean as $n \rightarrow \infty$ followed by $q_{0} \rightarrow \infty$.

    As argued by \cite{vanderVaart2023}, the Lindeberg condition implies that there is no loss in generality in assuming that $\Vert Z_{Ni} \Vert_{\mathcal{F}} \leq \eta{N}$ for all $i \geq N$ for some sequence of numbers $\eta_{N} \downarrow 0$. This implies that $(\Delta_{q}f)_{Ni} \leq 2\eta_{N}$ for all $i \geq N$, and the first expression is zero as soon as $2\eta_{N} \leq a_{q_0}$. Condition (\ref{eqn::entropy_sum}) implies that $a_{q_{0}}$ is bounded away from 0 for any fixed $q_{0}$ as $N \rightarrow \infty$. If this were not the case, then we must have
    \begin{equation*}
        \limsup_{N \rightarrow \infty} \log \mathfrak{N}_{2^{-(q_{0} + 1)}}^{N} = \infty.
    \end{equation*}
    And we have
    \begin{equation*}
        \log \mathfrak{N}_{2^{-q}}^{N} \geq \log \mathfrak{N}_{2^{-(q_{0} + 1)}}^{N}
    \end{equation*}
    for all $q \geq q_{0} + 1$. Thus, this would imply that
    \begin{equation*}
        \limsup_{N \rightarrow \infty} \sum_{q > q_0} 2^{-q-1} \sqrt{\log \mathfrak{N}_{2^{-q}}^{N}} = \infty,
    \end{equation*}
    which contradicts the condition. We deduce that $2\eta_{N} \leq a_{q_0}$ for all sufficiently large $n$.

    For the second expression, we start by noting that $(\Delta_{q}f)_{Ni}(B_{q}f)_{Ni} \leq (\Delta_{q-1}f)_{Ni}(B_{q}f)_{Ni} \leq a_{q-1}$ by the nesting of the partitions and the definition of $B_{q}f$. It follows that
    \begin{align*}
        |Z_{Ni}(f) - Z_{Ni}(\pi_{q}f)|(B_{q}f)_{Ni} &\leq (\Delta_{q}f)_{Ni}(B_{q}f)_{Ni}  \leq a_{q-1}\\
        \text{Var}\left[\sum_{i=N}^{\infty}(\Delta_{q}f)_{Ni}(B_{q}f)_{Ni}\right] &\leq \sum_{i=N}^{\infty}\E\left[(\Delta_{q}f)^{2}_{Ni}(B_{q}f)^{2}_{Ni}\right]\\
        &\leq a_{q-1}\sum_{i=N}^{\infty}\E\left[(\Delta_{q}f)_{Ni}\mathbbm{1}\{(\Delta_{q}f)_{Ni} > a_{q}\}\right] \\
        &\leq 2 \frac{a_{q-1}}{a_{q}} 2^{-2q},
    \end{align*}
    where the last inequality above follows from {Lemma \ref{lem::prob2.5.5}} and condition (\ref{eqn::weak_l2norm}). Since each summand $(\Delta_{q}f)_{Ni}(B_{q}f)_{Ni}$ is independent, the variance of the infinite sum above is greater than the variance of the partial sum that replaces $\infty$ with any finite $m > N$. For such a fixed $m$, we can apply Bernstein's inequality (e.g. Lemma 2.2.10 of \cite{vanderVaart2023}) to deduce that for every $x > 0$, we have
    \begin{equation*}
        \Pr\left(\left|\sum_{i=N}^{m}(\Delta_{q}f)_{Ni}(B_{q}f)_{Ni} -\E[(\Delta_{q}f)_{Ni}(B_{q}f)_{Ni}]\right| > x\right) \leq 2\exp\left(-\frac{\frac{1}{2}x^{2}}{2 \frac{a_{q-1}}{a_{q}} 2^{-2q}+\frac{2}{3}a_{q-1}x}\right).
    \end{equation*}
    Note that the right-hand side does not depend on $m$. We also have that the left-hand side converges to
    \begin{equation} \label{eqn::lim_prob}
        \Pr\left(\left|\sum_{i=N}^{\infty}(\Delta_{q}f)_{Ni}(B_{q}f)_{Ni} -\E[(\Delta_{q}f)_{Ni}(B_{q}f)_{Ni}]\right| > x\right)
    \end{equation}
    as $m \rightarrow \infty$ (for fixed $N$ and $q$) at all continuity points $x$, so the probability (\ref{eqn::lim_prob}) must therefore share the same exponential upper bound. In fact, the bound also holds at any discontinuity point. To see this, note that
    \begin{equation*}
        \Pr\left(\left|\sum_{i=N}^{\infty}(\Delta_{q}f)_{Ni}(B_{q}f)_{Ni} -\E[(\Delta_{q}f)_{Ni}(B_{q}f)_{Ni}]\right| > x\right) - 2\exp\left(-\frac{\frac{1}{2}x^{2}}{2 \frac{a_{q-1}}{a_{q}} 2^{-2q}+\frac{2}{3}a_{q-1}x}\right)
    \end{equation*}
    is right-continuous. If the above display is greater than 0 at some discontinuity point $x^{*}$, then it must be greater than zero for all $x$ on some interval $[x^{*},x^{*} + \delta]$ for $\delta >0$, which leads to a contradiction because the number of discontinuity points is countable. Thus, we can now apply Lemma 2.11.17 from \cite{vanderVaart2023}, and the remaining steps for handling the second expression follow the proof of Theorem 2.11.9 from \cite{vanderVaart2023}.

    The analysis of the third expression proceeds similarly. We have the following bounds:
    \begin{align*}
        |Z_{Ni}(\pi_{q}f) - Z_{Ni}(\pi_{q-1}f)|(A_{q-1}f)_{Ni} &\leq (\Delta_{q-1}f)_{Ni}(A_{q-1}f)_{Ni} \\
        & \leq a_{q-1}\\
        \text{Var}\left[\sum_{i=N}^{\infty} \{Z_{Ni}(\pi_{q}f)-Z_{Ni}(\pi_{q-1}f)\}(A_{q-1}f)_{Ni}\right] & \leq \sum_{i=N}^{\infty}\E[\{Z_{Ni}(\pi_{q}f)-Z_{Ni}(\pi_{q-1}f)\}^{2}(A_{q-1}f)_{Ni}]\\
        &\leq 2^{-2(q-1)},
    \end{align*}
    where the final inequality follows from the nesting of the partitions and condition (\ref{eqn::covering}). By applying a similar argument to before based on Bernstein's inequality, we derive the upper-bound
    \begin{align*}
        &\Pr\left(\left|\sum_{i=N}^{\infty} \{Z_{Ni}(\pi_{q}f)-Z_{Ni}(\pi_{q-1}f)\}(A_{q-1}f)_{Ni} -\E[\{Z_{Ni}(\pi_{q}f)-Z_{Ni}(\pi_{q-1}f)\}(A_{q-1}f)_{Ni}]\right| > x\right) \\
        &\leq 2\exp\left(-\frac{\frac{1}{2}x^{2}}{2^{-2(q-1)}+\frac{2}{3}a_{q-1}x}\right).
    \end{align*}
    Now we can again apply Lemma 2.11.17 to finish handling the third expression.

    This concludes our proof of (\ref{eqn::asymp_tight}). If the partitions depend on $N$, we require an additional step. Let $\delta_{N}$ be a sequence tending to zero as $N \rightarrow \infty$. First we have
    \begin{align*}
        \E \sup_{d(f,g) < \delta_{N}}\left|\sum_{i=N}^{\infty} (Z_{Ni}^{\circ}(f) - Z_{Ni}^{\circ}(g))\right| &\leq 2\E^{*} \left\Vert \sum_{i=N}^{\infty} (Z_{Ni}^{\circ}(f) - Z_{Ni}^{\circ}(\pi_{q_0}f))\right\Vert_{\mathcal{F}}\\
        &+\E\left[ \sup_{d(f,g) < \delta_{N}} \sum_{i=N}^{\infty} \left|Z_{Ni}^{\circ}(\pi_{q_0}f) - Z_{Ni}^{\circ}(\pi_{q_0}g)\right|\right].
    \end{align*}
    We have already dealt with the first term on the right-hand side. For the second term, consider the set
    \begin{equation*}
        \mathcal{H}^{N}_{q_0} = \{(\Tilde{f},\Tilde{g}):\, \text{there exists }f,g\in\mathcal{F}\text{ with }d(f,g) <\delta_{N}, \pi_{q_0}f = \Tilde{f}, \pi_{q_{0}}g = \Tilde{g} \}.
    \end{equation*}
    The size of $\mathcal{H}^{N}_{q_0}$ is at most $(\mathfrak{N}_{q_0}^{N})^{2}$. Define $\zeta_{N} \geq 0$ as
    \begin{equation*}
        \zeta^{2}_{n} = \sup_{d(f,g)< \delta_{N}}\sum_{i=N}^{\infty}\E[(Z_{Ni}^{\circ}(f)-Z_{Ni}^{\circ}(g))^{2}] \lesssim \sup_{d(f,g)< \delta_{N}}\sum_{i=N}^{\infty}\E[(Z_{Ni}(f)-Z_{Ni}(g))^{2}],
    \end{equation*}
    which tends to zero as $\delta_{N} \downarrow 0$ by assumption. For $(\Tilde{f},\Tilde{g}) \in \mathcal{H}_{q_0}^{N}$, we have the following bounds:
    \begin{align*}
        \left|Z_{Ni}^{\circ}(\Tilde{f}) - Z_{Ni}^{\circ}(\Tilde{g})\right| &\leq 4\eta{N}\\
        \text{Var}\left[\sum_{i=N}^{\infty} |Z_{Ni}^{\circ}(\Tilde{f})-Z_{Ni}^{\circ}(\Tilde{g})|\right] &\leq \sum_{i=N}^{\infty}\E[(Z_{Ni}^{\circ}(\Tilde{f}) - Z_{Ni}^{\circ}(\Tilde{g}))^{2}] \\ 
        &\leq 3\sum_{i=1}^{\infty}\E[(Z_{Ni}^{\circ}(f) - Z_{Ni}^{\circ}(\Tilde{f}))^{2}] +\E[(Z_{Ni}^{\circ}(g) - Z_{Ni}^{\circ}(\Tilde{g}))^{2}]\\&+\E[(Z_{Ni}^{\circ}(f) - Z_{Ni}^{\circ}(g))^{2}]\\
        &\lesssim 2^{-2q_{0}} + \zeta_{N}^{2},
    \end{align*}
    where $(f,g)$ above are any elements of $\mathcal{F}$ satisfying $d(f,g) < \delta_{N}$ with $\pi_{q_0}f = \Tilde{f}$ and $\pi_{q_0}g = \Tilde{g}$. Thus, by using a similar Bernstein inequality argument to before and applying Lemma 2.2.13 of \cite{vanderVaart2023}, we yield
    \begin{align*}
       \E\left[ \sup_{d(f,g) < \delta_{N}} \sum_{i=N}^{\infty} \left|Z_{Ni}^{\circ}(\pi_{q_0}f) - Z_{Ni}^{\circ}(\pi_{q_0}g)\right|\right] &=\E\left[ \max_{(\Tilde{f},\Tilde{g}) \in \mathcal{H}_{q_0}^{N}} \sum_{i=N}^{\infty} \left|Z_{Ni}^{\circ}(\Tilde{f}) - Z_{Ni}^{\circ}(\Tilde{g})\right|\right]\\
        &\lesssim \log N_{q_0}^{N}\eta{N} + \sqrt{\log N_{q_0}^{N}}(2^{-q_{0}} + \zeta_{N}).
    \end{align*}
    We showed earlier that for fixed $q_{0}$, the limit superior of $\log \mathfrak{N}_{q_0}^{N}$ as $N \rightarrow \infty$ is finite, so the $\eta_{N}$ and $\zeta_{N}$ terms above go to zero as $N \rightarrow 0$. This leaves the $\sqrt{\log \mathfrak{N}_{q_0}^{N}}2^{-q_{0}}$ term, which goes to zero as $N \rightarrow \infty$ and then $q_{0} \rightarrow \infty$ by condition (\ref{eqn::entropy_sum}). Finally, we can apply Theorem 1.5.7 of \cite{vanderVaart2023} to obtain asymptotic tightness.
\end{proof}

\subsection{Almost supermartingales}
As our recursive estimates are closely related to stochastic approximation, it is not surprising that we will borrow some tools from that literature to study the frequentist asymptotic properties of the QMP. In particular, a very useful theorem is given by \cite{Robbins1971}, which has been used for proving consistency for other closely connected recursive Bayesian methods like in \citet{Martin2009,Hahn2018} and \citet{Fong2023a}. We restate the almost supermartingale convergence theorem below.
\begin{thm}[{\cite[]{Robbins1971}}]\label{app:thm_asmart}
Consider a probability space $(\Omega,\mathcal{F},\mathbb{P})$, and let $\{\mathcal{F}_i\}_{i \geq 0}$ denote a filtration. Let $\{L_i\}_{i \geq 0}$ be a sequence of non-negative r.v.s $L_i:\Omega \to \R$ adapted to the filtration (i.e. $L_i$ is $\mathcal{F}_i$-measurable for $i \geq 0$). Suppose that $\{L_i\}_{i \geq 0}$ is an {almost supermartingale}, that is it satisfies for $i \geq 0$
\begin{align*}
E[L_{i+1} \mid \mathcal{F}_{i}]  \leq (1+B_i)\, L_{i} + C_i - D_i
\end{align*}
where $(B_i, C_i, D_i)$ are non-negative adapted r.v.s. 
If $\{\sum_{i=1}^\infty B_i < \infty,\quad  \sum_{i=1}^\infty C_i< \infty\}$ hold a.s., then the limit $L_\infty:=\lim_{i\to \infty} L_i $ exists and is finite a.s. and $\sum_{n=1}^\infty D_i < \infty $ a.s.
\end{thm}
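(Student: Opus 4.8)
The plan is to reduce the statement to the almost-sure convergence of a non-negative supermartingale. The only genuine difficulties are bookkeeping: the summability of the $B_i$ and $C_i$ is assumed only almost surely (not in $L^1$), so the naturally transformed process need not be integrable, and this is circumvented by a stopping-time localization, while all the conditional-expectation manipulations below are to be understood under the usual integrability conditions (or with conditional expectations of non-negative variables read in the extended sense).

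First I would introduce the discount factors $\alpha_i := \prod_{j=0}^{i-1}(1+B_j)$ (with $\alpha_0 := 1$), which form a non-decreasing adapted sequence, and note that on the almost-sure event $\{\sum_j B_j < \infty\}$ one has $\alpha_i \uparrow \alpha_\infty$ with $1 \le \alpha_\infty < \infty$. Dividing the hypothesised inequality by $\alpha_{i+1} = (1+B_i)\alpha_i$ and writing $\widetilde L_i := L_i/\alpha_i$ gives
\[
\E[\widetilde L_{i+1}\mid\mathcal F_i] \;\le\; \widetilde L_i + \frac{C_i}{\alpha_{i+1}} - \frac{D_i}{\alpha_{i+1}}.
\]
Hence the adapted process
\[
U_i \;:=\; \widetilde L_i \;+\; \sum_{j=0}^{i-1}\frac{D_j}{\alpha_{j+1}} \;-\; \sum_{j=0}^{i-1}\frac{C_j}{\alpha_{j+1}}
\]
is a supermartingale, since the accumulated $D$- and $C$-terms exactly compensate the drift. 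It is not yet bounded below, however, because $\sum_j C_j/\alpha_{j+1}$ is not controlled deterministically.

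Next I would localize. For $a>0$ set
\[
\tau_a \;:=\; \inf\Bigl\{\,n\ge 0 : \ \textstyle\sum_{j=0}^{n} B_j \;+\; \sum_{j=0}^{n} C_j \;>\; a \,\Bigr\},
\]
which is a stopping time because $B_j,C_j$ are adapted. For $n \le \tau_a$ one has $\sum_{j<n}B_j \le a$, hence $\alpha_n \le e^{a}$, and $\sum_{j<n} C_j/\alpha_{j+1} \le \sum_{j<n}C_j \le a$; therefore the stopped process $U_{n\wedge\tau_a} + a$ is a non-negative supermartingale and so converges almost surely to a finite limit by the non-negative supermartingale convergence theorem. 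Since $\sum_j B_j<\infty$ and $\sum_j C_j<\infty$ a.s., we have $\{\tau_a=\infty\}\uparrow$ a set of full probability as $a\to\infty$, so it suffices to argue on each $\{\tau_a=\infty\}$ and let $a\to\infty$. On $\{\tau_a = \infty\}$ the series $\sum_j C_j/\alpha_{j+1}$ converges (it is dominated by $\sum_j C_j$), so $\widetilde L_n + \sum_{j<n} D_j/\alpha_{j+1}$ converges to a finite limit; as $\widetilde L_n \ge 0$ and the $D$-sum is non-decreasing, both converge separately. Thus $\sum_j D_j/\alpha_{j+1} < \infty$, and since $\alpha_{j+1}\le\alpha_\infty<\infty$ there, $\sum_j D_j<\infty$; moreover $\widetilde L_n = L_n/\alpha_n$ converges, and multiplying by $\alpha_n\to\alpha_\infty$ shows $L_\infty := \lim_n L_n$ exists and is finite. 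These conclusions hold on a set of probability tending to $1$, hence a.s.

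The main obstacle I anticipate is precisely the localization/integrability bookkeeping: the transformed process is a bona fide (integrable, bounded-below) supermartingale only after stopping at $\tau_a$, so one must verify that $\tau_a$ is adapted, that $U_{n\wedge\tau_a}$ is integrable and bounded below by $-a$, and that passing $a\to\infty$ cleanly recovers the unstopped conclusions. The martingale-convergence input and the remaining estimates are entirely standard.
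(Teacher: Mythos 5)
Your proof is correct, and it is essentially the classical Robbins--Siegmund argument: the paper itself gives no proof of this result (it is imported verbatim from \cite{Robbins1971}), and the cited proof proceeds exactly as you do, by discounting with $\alpha_i=\prod_{j<i}(1+B_j)$, compensating with the accumulated $C$- and $D$-sums to get a supermartingale, localizing with the stopping times $\tau_a$ to obtain a process bounded below, and then letting $a\to\infty$ over the a.s. event $\{\sum_i B_i+\sum_i C_i<\infty\}$. The only refinement worth noting is that if $\E[L_0]=\infty$ the stopped process need not be integrable even after localization; this is repaired by additionally truncating on the $\mathcal F_0$-event $\{L_0\le K\}$ (which preserves the almost-supermartingale inequality) and letting $K\to\infty$, a one-line addition consistent with your remark about reading conditional expectations of non-negative variables in the extended sense.
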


\subsection{Bivariate normal copula}\label{app:sec_copula}
Although most of the theory can be extended for general copulas, we specialize most of the proofs for the case with the bivariate normal copula. As such, we provide some useful properties here, and refer to \cite{Meyer2013} for more details.

The bivariate normal copula distribution is the bivariate cumulative distribution function (CDF) $C_\rho(u,v)$ which takes the form
\begin{align*}
C_\rho(u,v) = \Phi_2\left(z_u, z_v; \rho\right)
\end{align*}
where $u,v \in (0,1)$, $z_u = \Phi^{-1}(u)$ is the normal quantile function at $u$, and similarly for $v$. The correlation parameter is $\rho \in [-1,1]$ in general, but we will only consider $\rho \in [0,1]$ for our purposes.  
Here, $\Phi_2(\mu_1, \mu_2; \rho)$ is the standard bivariate normal CDF evaluated at $\mu_1,\mu_2$ with correlation $\rho$.  

The conditional distribution of the bivariate normal copula (conditional on $v$) takes the form
\begin{align*}
    H_\rho(u,v) &= \frac{\partial}{\partial v} C_\rho(u,v) =\Phi\left\{\frac{\Phi^{-1}\left(u\right) - \rho \Phi^{-1}(v)}{\sqrt{1-\rho^2}}\right\},
\end{align*}
and the density of the bivariate normal copula is
\begin{align*}
    c_\rho(u,v) &= \frac{\partial^2}{\partial u \partial v}C_\rho(u,v) =  \frac{1}{\sqrt{1-\rho^2}}\exp\left( \frac{2\rho z_u z_v - \rho^2(z_u^2 + z_v^2)}{2(1-\rho^2)}\right).
\end{align*}
Note that $C_\rho(u,v)$ and $c_\rho(u,v)$ are symmetric in its inputs, but $H_\rho(u,v)$ is not.

As the marginal distribution of $C_\rho(u,v)$ is uniform, we have that
\begin{align*}
    \int_0^1 c_\rho(u,v) \, du =  \int_0^1 c_\rho(u,v) \, dv = 1.
\end{align*}
This in turn implies
\begin{align*}
    \int_0^1 H_\rho(u,v) \, dv =\int_0^1 \int_0^u c_\rho(u',v) \, du'\, dv = u,
\end{align*}
which is crucial for the martingale property.

From \cite{Meyer2013}, the bivariate normal copula $C_\rho(u,v)$ satisfies the following ordering property. For any $\rho' \leq \rho$, we have
\begin{align*}
    \max(u+v-1,0) \leq C_{\rho'}(u,v) \leq C_{\rho}(u,v) \leq \min(u,v),
\end{align*}
for all $u,v \in [0,1]$. Furthermore, the lower and upper bounds are attained with $\rho   \to 0 $ and $\rho \to 1$ respectively, i.e. $\lim_{\rho \to 1}C_\rho(u,v) = \min(u,v)$ and $\lim_{\rho \to 0}C_\rho(u,v) = \max(u+v-1,0)$.

\subsection{Useful identities}
A useful integral we will need for the proofs is the following for the bivariate copula density.
\begin{lem}\label{app:lem_cop_squared}
    Let $c_\rho(u,v) = \frac{\partial}{\partial u} H_\rho(u,v)$ denote the bivariate normal copula density. We have that
    \begin{align*}
        \int_0^1\int_0^1 c_\rho(u,v)^2 \, du \, dv = \frac{1}{1-\rho^2}\cdot
    \end{align*}
\end{lem}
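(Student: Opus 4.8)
The plan is to reduce the double integral over the unit square to a Gaussian integral over $\R^2$ via the substitution $z = \Phi^{-1}(u)$, $w = \Phi^{-1}(v)$, which has $du\,dv = \phi(z)\,\phi(w)\,dz\,dw$ where $\phi$ denotes the standard normal density. The key observation is that the normal copula density is the ratio of the joint bivariate normal density to the product of its marginals evaluated at the quantiles, i.e. $c_\rho(\Phi(z),\Phi(w)) = \phi_2(z,w;\rho)/(\phi(z)\phi(w))$ with $\phi_2(z,w;\rho) = \frac{1}{2\pi\sqrt{1-\rho^2}}\exp\!\big(-\tfrac{z^2-2\rho zw+w^2}{2(1-\rho^2)}\big)$. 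Hence the integral in question equals $\int_{\R^2} \phi_2(z,w;\rho)^2/(\phi(z)\phi(w))\,dz\,dw$.

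Next I would substitute the explicit Gaussian formulas and collect the exponent. A short computation gives $\phi_2(z,w;\rho)^2/(\phi(z)\phi(w)) = \frac{1}{2\pi(1-\rho^2)}\exp\!\big(-\tfrac12 (z,w)\,A\,(z,w)^\top\big)$ with
\[
A = \frac{1}{1-\rho^2}\begin{pmatrix} 1+\rho^2 & -2\rho \\ -2\rho & 1+\rho^2\end{pmatrix}.
\]
Since $1+\rho^2 - 2|\rho| = (1-|\rho|)^2 > 0$ for $|\rho|<1$, the diagonal of $A$ strictly dominates, so $A$ is positive definite and the standard Gaussian integral $\int_{\R^2}\exp(-\tfrac12 x^\top A x)\,dx = 2\pi/\sqrt{\det A}$ applies. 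A direct calculation yields $\det A = \big[(1+\rho^2)^2 - 4\rho^2\big]/(1-\rho^2)^2 = (1-\rho^2)^2/(1-\rho^2)^2 = 1$, so the whole integral equals $\frac{1}{2\pi(1-\rho^2)}\cdot 2\pi = \frac{1}{1-\rho^2}$, as claimed.

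There is essentially no conceptual obstacle here; the only points requiring care are the bookkeeping in combining the exponents so that the coefficient $1+\rho^2$ on $z^2+w^2$ and the cross term $-4\rho zw$ come out correctly, and noting positive-definiteness of $A$ so that the Gaussian integral converges. As an alternative one could expand $c_\rho$ via Mehler's formula in normalized Hermite polynomials and use their orthonormality with respect to the standard Gaussian measure, which collapses the integral to the geometric series $\sum_{k\ge 0}\rho^{2k} = 1/(1-\rho^2)$; but the direct Gaussian integral above is the most self-contained route.
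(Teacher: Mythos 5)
Your argument is correct: the identification $c_\rho(\Phi(z),\Phi(w)) = \phi_2(z,w;\rho)/(\phi(z)\phi(w))$ is exactly the paper's explicit formula for $c_\rho$, the exponent bookkeeping giving the matrix $A$ checks out, and $\det A = 1$ does yield $\frac{1}{2\pi(1-\rho^2)}\cdot 2\pi = \frac{1}{1-\rho^2}$. The underlying strategy is the same as the paper's (substitute Gaussian coordinates and evaluate a Gaussian integral), but the organization differs: the paper integrates iteratively, first over $v$ by completing the square to get the intermediate quantity $\frac{1}{\sqrt{1-\rho^4}}\exp\bigl(\frac{\rho^2}{1+\rho^2}z_u^2\bigr)$ and then over $u$, whereas you treat it as a single bivariate Gaussian integral via the density-ratio representation and a determinant computation. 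Your packaging is more symmetric and arguably cleaner, since the determinant identity $(1+\rho^2)^2-4\rho^2=(1-\rho^2)^2$ replaces two rounds of completing the square; the paper's iterated version has the minor advantage that its intermediate one-dimensional result is reused nowhere but requires no positive-definiteness check beyond convergence of each univariate integral. Your Mehler-formula aside (orthonormal Hermite expansion collapsing the integral to $\sum_{k\ge 0}\rho^{2k}$) is a genuinely different and equally valid route, though, as you say, less self-contained.
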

\begin{proof}
     A change of variables from $v \to z_v$ gives
    \begin{align*}
        \int_0^1 c_\rho(u,v)^2\, dv &= \frac{1}{{1-\rho^2}}\int_{-\infty}^\infty \exp\left(\frac{2\rho z_u z_v - \rho^2(z_u^2 + z_v^2)}{1-\rho^2}\right) \, \phi(z_v)\, dz_v\\
        &=\frac{1}{\sqrt{2\pi}{(1-\rho^2)}}\int_{-\infty}^\infty \exp\left(\frac{4\rho z_u z_v-(1+\rho^2)z_v^2 - 2\rho^2z_u^2 }{2(1-\rho^2)}\right) \, dz_v
    \end{align*}
    where $\phi$ is the normal density. Completing the square then gives us
    \begin{align*}
\int_0^1 c_\rho(u,v)^2\, dv &=\frac{1}{\sqrt{2\pi}(1-\rho^2)}\exp\left(\frac{\rho^2}{1+\rho^2}z_u^2\right)\int_{-\infty}^\infty \exp\left(-\frac{(1+\rho^2)}{2(1-\rho^2)}\left[\left(z_v - \frac{2\rho z_u}{1+\rho^2}\right)^2\right]\right) \, dz_v \\
&=\frac{1}{\sqrt{1-\rho^4}}\exp\left(\frac{\rho^2}{1+\rho^2}z_u^2\right).
    \end{align*}
    Carrying out another change of variables from $u \to z_u$ then gives
    \begin{align*}
        \int_0^1\int_0^1 c_\rho(u,v)^2\, du\, dv &= \frac{1}{\sqrt{2\pi(1-\rho^4)}}\int_{-\infty}^{-\infty}\exp\left(-\frac{(1-\rho^2)}{2(1+\rho^2)}z_u^2\right)\, dz_u\\
        &=\sqrt{\frac{1+\rho^2}{1-\rho^2}}\frac{1}{\sqrt{1-\rho^4}} = \frac{1}{1-\rho^2}\cdot
    \end{align*}
\end{proof}
We also have the following useful upper bound on $c_\rho(u,v)$.
\begin{lem}\label{app:lem_cop_bounded}
    The bivariate copula density satisfies
    \begin{align*}
        c_\rho(u,v) \leq \frac{1}{\sqrt{1-\rho^2}}\exp\left(\frac{z_v^2}{2}\right)
    \end{align*}
   The inequality holds if we replace $z_v$ with $z_u$.
    \begin{proof}
      For a given $v$, standard calculations give that  $z_u^* = z_v/\rho$ maximizes $c_\rho(u,v)$, which returns the above expression.
    \end{proof}
\end{lem}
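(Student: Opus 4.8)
The plan is to fix $\rho \in (0,1)$ and $v \in (0,1)$ and to maximise $c_\rho(u,v)$ over $u \in (0,1)$; the degenerate value $\rho = 0$ is immediate, since then $c_0 \equiv 1 \le \exp(z_v^2/2)$. Because $u \mapsto z_u = \Phi^{-1}(u)$ is a strictly increasing bijection of $(0,1)$ onto $\R$, this is the same as maximising over $z_u \in \R$ the exponent
\begin{align*}
g(z_u) = \frac{2\rho z_u z_v - \rho^2\left(z_u^2 + z_v^2\right)}{2(1-\rho^2)}
\end{align*}
appearing in the formula for $c_\rho$ recorded in Section~\ref{app:sec_copula}.

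First I would observe that $g$ is a quadratic in $z_u$ with strictly negative leading coefficient $-\rho^2/(2(1-\rho^2))$, hence strictly concave with a unique maximiser at the stationary point $z_u^\ast$ solving $g'(z_u) = (\rho z_v - \rho^2 z_u)/(1-\rho^2) = 0$, i.e. $z_u^\ast = z_v/\rho$ (equivalently $u^\ast = \Phi(z_v/\rho) \in (0,1)$, so the maximiser is interior and $c_\rho(\cdot,v)$ tends to $0$ as $u \to 0$ or $u \to 1$). Substituting $z_u = z_v/\rho$ into $g$ yields numerator $2z_v^2 - z_v^2 - \rho^2 z_v^2 = (1-\rho^2)z_v^2$, so $g(z_v/\rho) = z_v^2/2$, and therefore $c_\rho(u,v) \le \frac{1}{\sqrt{1-\rho^2}}\exp(z_v^2/2)$ for every $u$.

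Finally, the bound with $z_u$ replacing $z_v$ requires no new work: it follows by applying what was just shown to $c_\rho(v,u)$ together with the symmetry $c_\rho(u,v) = c_\rho(v,u)$ noted in Section~\ref{app:sec_copula}. There is essentially no real obstacle here; the only points deserving a line of care are separating off the case $\rho = 0$ (where dividing by $\rho$ is invalid but the claim is trivial) and confirming that the supremum over $z_u \in \R$ is genuinely attained, which it is since $g$ is concave with an interior maximiser.
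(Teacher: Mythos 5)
Your proposal is correct and is essentially the same argument as the paper's: the paper also maximizes $c_\rho(\cdot,v)$ in $z_u$ at the stationary point $z_u^* = z_v/\rho$ and reads off the bound, with your write-up simply supplying the explicit concavity check, the $\rho=0$ case, and the symmetry remark.
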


Another very useful lemma which we will use for the covariance function of the QMP is the following.
\begin{lem}\label{app:lem_covariance}
For $u,u' \in (0,1)$, the copula update function satisfies
\begin{align*}
    \int_0^1\, \left[u - H_\rho(u,v)\right]\left[u' - H_\rho(u',v)\right]\, dv = C_{\rho^2}(u,u') - uu'.
\end{align*}
\end{lem}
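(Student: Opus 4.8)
The plan is to expand the product, integrate term by term, and recognize the surviving integral as the copula CDF. Write
\[
\int_0^1 \left[u - H_\rho(u,v)\right]\left[u' - H_\rho(u',v)\right]\, dv
= \int_0^1 \Big( uu' - u\,H_\rho(u',v) - u'\,H_\rho(u,v) + H_\rho(u,v)H_\rho(u',v)\Big)\, dv.
\]
The first three pieces are immediate: $\int_0^1 uu'\,dv = uu'$, and by the marginal-uniform identity from Section~\ref{app:sec_copula}, namely $\int_0^1 H_\rho(u,v)\,dv = u$ (similarly $\int_0^1 H_\rho(u',v)\,dv = u'$), the two cross terms contribute $-u u' - u u' = -2uu'$. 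So the expression reduces to $\int_0^1 H_\rho(u,v)\,H_\rho(u',v)\,dv - uu'$, and the entire content of the lemma is the claim
\[
\int_0^1 H_\rho(u,v)\,H_\rho(u',v)\, dv = C_{\rho^2}(u,u').
\]

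For this last identity I would use the probabilistic interpretation of the Gaussian copula. Let $(Z, Z')$ be jointly standard normal with $\mathrm{Cov}(Z,Z') = \rho^2$, realized as $Z = \rho W_1 + \sqrt{1-\rho^2}\,\varepsilon_1$ and $Z' = \rho W_2 + \sqrt{1-\rho^2}\,\varepsilon_2$ where $W_1 = W_2 = W$, and $\varepsilon_1,\varepsilon_2, W$ are i.i.d.\ $N(0,1)$. Then $C_{\rho^2}(u,u') = \Pr(Z \le z_u,\ Z' \le z_{u'})$ with $z_u = \Phi^{-1}(u)$. Conditioning on $W$: since $\varepsilon_1 \perp \varepsilon_2$ given $W$,
\[
\Pr(Z \le z_u,\ Z' \le z_{u'} \mid W)
= \Phi\!\left(\frac{z_u - \rho W}{\sqrt{1-\rho^2}}\right)\Phi\!\left(\frac{z_{u'} - \rho W}{\sqrt{1-\rho^2}}\right)
= H_\rho(u, \Phi(W))\, H_\rho(u', \Phi(W)),
\]
using the explicit form of $H_\rho$ in \eqref{eq:condit_copula}. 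Taking the expectation over $W \sim N(0,1)$ and substituting $v = \Phi(W)$ (so $v \sim \mathcal{U}(0,1)$) yields exactly $\int_0^1 H_\rho(u,v)\,H_\rho(u',v)\,dv = C_{\rho^2}(u,u')$, completing the proof.

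The only mild obstacle is making the conditioning argument fully rigorous — i.e.\ justifying that $(Z,Z')$ so constructed has the claimed bivariate normal law with correlation $\rho^2$, and that conditional independence of the two events given $W$ gives the product of conditional CDFs. This is routine: $\mathrm{Var}(Z) = \rho^2 + (1-\rho^2) = 1$, likewise for $Z'$, and $\mathrm{Cov}(Z,Z') = \rho^2\,\mathrm{Var}(W) = \rho^2$; conditional independence holds because $Z$ depends on $(W,\varepsilon_1)$ and $Z'$ on $(W,\varepsilon_2)$ with $\varepsilon_1 \perp \varepsilon_2 \perp W$. Alternatively, one can avoid probability entirely and verify the identity by differentiating both sides in $u$ and $u'$ and reducing to a Gaussian integral of a product of two normal densities against a third, but the probabilistic route is cleaner. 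An entirely analogous computation (or direct appeal to the marginal identity twice) also recovers the special cases $\rho \to 1$ and $\rho \to 0$ as consistency checks.
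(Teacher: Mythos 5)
Your proposal is correct, and it reaches the key identity $\int_0^1 H_\rho(u,v)\,H_\rho(u',v)\,dv = C_{\rho^2}(u,u')$ by a genuinely different route from the paper. The paper works at the density level: it writes each $H_\rho$ as an integral of the copula density $c_\rho$, exchanges integrals, and shows by an explicit completion-of-the-square Gaussian computation that $\int_0^1 c_\rho(w,v)\,c_\rho(w',v)\,dv = c_{\rho^2}(w,w')$, from which the CDF-level identity follows by integrating in $w,w'$. You instead use a latent-variable representation: with $Z=\rho W+\sqrt{1-\rho^2}\,\varepsilon_1$ and $Z'=\rho W+\sqrt{1-\rho^2}\,\varepsilon_2$ for i.i.d. standard normals $W,\varepsilon_1,\varepsilon_2$, the pair $(Z,Z')$ is bivariate standard normal with correlation $\rho^2$, and conditional independence given $W$ turns the joint CDF into $\E\bigl[H_\rho(u,\Phi(W))\,H_\rho(u',\Phi(W))\bigr]$, which is the desired integral after the substitution $v=\Phi(W)$. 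Your tower-property argument is shorter and essentially calculation-free, and it makes transparent why the correlation doubles in the exponent ($\rho\mapsto\rho^2$): the two conditional CDFs share the single factor $W$. What the paper's computational route buys is the intermediate density-level identity $c_\rho * c_\rho = c_{\rho^2}$ (a semigroup-type property of the Gaussian copula family), which is stronger than the CDF statement and is obtained with the same machinery (Owen-style Gaussian integrals) reused in nearby proofs such as the bound on $\kappa_n$; but for Lemma \ref{app:lem_covariance} itself your argument is fully sufficient, and the justifications you flag (the law of $(Z,Z')$ and conditional independence) are indeed routine as you state.
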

\begin{proof}
    First, we can easily see that
    \begin{align*}
    \int_0^1\, \left[u - H_\rho(u,v)\right]\left[u' - H_\rho(u',v)\right]\, dv =  \int_0^1\, H_\rho(u,v)\,H_\rho(u',v)\, dv - uu'.
\end{align*}
To compute the integral, we write $H_\rho$ in terms of $c_\rho$:
\begin{align*}
    \int_0^1\, H_\rho(u,v)\,H_\rho(u',v)\, dv  =\int_{-\infty}^u \int_{-\infty}^{u'} \int_0^1 c_\rho(w,v)\, c_\rho(w',v)\, dv\, dw \, dw'
\end{align*}
The inner integral can be computed as
\begin{align*}
&\int_0^1 c_\rho(w,v)\, c_\rho(w',v)\, dv \\&= \frac{1}{1-\rho^2}\int_{-\infty}^\infty \exp\left(\frac{-\rho^2(z_w^2 + z_{w'}^2) + 2\rho (z_w + z_{w'})\, z_v - 2\rho^2 z_v^2}{2(1-\rho^2)}\right)\, \phi(z_v)\, dz_v\\
 &= \frac{1}{\sqrt{2\pi}(1-\rho^2)}\exp\left(\frac{-\rho^2(z_w^2 + z_{w'}^2)}{2(1-\rho^2)}\right)\int_{-\infty}^\infty \exp\left(\frac{2\rho (z_w + z_{w'})\, z_v - (1+\rho^2) z_v^2}{2(1-\rho^2)}\right)\, dz_v
\end{align*}
where $\phi$ is the standard normal density function. Completing the square gives
\begin{align*}
\int_{-\infty}^\infty \exp\left(\frac{2\rho (z_w + z_{w'})\, z_v - (1+\rho^2) z_v^2}{2(1-\rho^2)}\right)\, dz_v = \sqrt{2\pi}\sqrt{\frac{1-\rho^2}{1+\rho^2}}\exp\left(\frac{\rho^2\left(z_w + z_{w'}\right)^2}{2(1+\rho^2)(1-\rho^2)}\right).
\end{align*}
Combining the above gives
\begin{align*}
    \int_0^1 c_\rho(w,v)\, c_\rho(w',v)\, dv &= \frac{1}{\sqrt{1-\rho^4}}\exp\left(\frac{-\rho^4\left(z_{w}^2 + z_{w'}^2\right)+ 2\rho^2 z_{w}z_{w'}}{2(1-\rho^4)}\right) = c_{\rho^2}(w,w').
\end{align*}
Finally, this gives
\begin{align*}
     \int_0^1\, H_\rho(u,v)\,H_\rho(u',v)\, dv  =C_{\rho^2}(u,u').
\end{align*}
\end{proof}

\section{Proofs of main results}
We now include full proofs of the main results from the paper, leveraging the prerequisite results.

\subsection{Proposition \ref{prop:L2_mart}}
For predictive asymptotics, we will treat the first $n$ data points $Y_{1:n}$ as fixed. Note that we start indexing at $N = n$ is so that the extension to the case where $Y_{1:n}$ is i.i.d. from $P^*$ is straightforward.
We will apply {Theorem \ref{app:thm_hilbert_mart}} for the space $L^2((0,1))$ under {Assumptions \ref{as:L2}} and {\ref{as:alpha}}. First, we verify that $\{Q_N\}_{N\geq n}$ is a Banach space valued martingale under {Algorithm \ref{alg:QMP}}. 

 We begin with the simplified case where $n = 0$. Let $(\Omega, \mathcal{F},\Pr)$ denote the probability space and $L^2(B)$ the space of Bochner measurable functions with $\|f\|_{L^2(B)} < \infty$ as in Section \ref{app:sec_ban_mart}, with $B = L^2((0,1))$. Let $V_i \iid \mathcal{U}(0,1)$ for $i \geq 1$, i.e. each $V_i:\Omega \to (0,1)$ is an independent uniform r.v. Define the filtration $\{\mathcal{F}\}_{N \geq 1}$ where $\mathcal{F}_N = \sigma(V_1,\ldots,V_N)$ and $\mathcal{F}_0 = \{\emptyset, \Omega\}$. For $N \geq 1$, define the mapping $S_N: (0,1)^N  \to B$ where
\begin{align}\label{app:eq_Snu}
    S_N(v_{1:N})(u) = \sum_{i = 1}^N \alpha_i (u - H_{\rho_i}(u,v_i))
\end{align}
for each $u \in (0,1)$. For each $v_{1:N} \in (0,1)^N$, we clearly have $S_N(v_{1:N}) \in B$ as it is bounded by $|S_N(v_{1:N})(u)| \leq \sum_{i = 1}^N \alpha_i$.
 Here $\rho_i \in (0,1)$ and $\alpha_i\in \R^+$ are arbitrary sequences where $\alpha_i <\infty$. To begin, we require the following lemma, which is a technical exercise, but we can fortunately repeat a similar argument for later proofs. The key is that as we are working in a separable Hilbert space, we can revert back to checking scalar conditions using inner products (which is termed `scalarization' for general Banach spaces in \cite{Pisier2016}). This also formally verifies our intuition that a pointwise martingale condition is sufficient.
 
\begin{lem}\label{app:lem_bochnerl2}
For each $N$, the random variable  $S_{N}\circ V_{1:N}:\Omega \to B$ is Bochner $\mathcal{F}_{N}$-measurable, lies in $L^2(B)$ and satisfies
$\E^{\mathcal{F}_{N}}[S_{N+1}(V_{1:N+1})] = S_{N}(V_{1:N})$ a.s.
\end{lem}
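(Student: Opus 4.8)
The plan is to verify the three claims in order: Bochner $\mathcal{F}_N$-measurability, membership in $L^2(B)$, and the martingale identity. The unifying idea, following the ``scalarization'' philosophy of \cite{Pisier2016}, is that since $B = L^2((0,1))$ is a separable Hilbert space, every statement about $B$-valued objects can be reduced to scalar statements tested against the inner product.

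\textbf{Measurability.} First I would note that for each fixed $v_{1:N} \in (0,1)^N$, the function $u \mapsto S_N(v_{1:N})(u) = \sum_{i=1}^N \alpha_i(u - H_{\rho_i}(u,v_i))$ is continuous in $u$ (as a finite sum of continuous functions, since $H_{\rho_i}$ is continuous), hence lies in $B = L^2((0,1))$, and is uniformly bounded by $\sum_{i=1}^N \alpha_i < \infty$. To get Bochner measurability of $S_N \circ V_{1:N}: \Omega \to B$, by Pettis' theorem it suffices (working in a separable space) to check scalar measurability: for each continuous linear functional $T \in B^*$, represented via Riesz by some $h_T \in L^2((0,1))$, the map $\omega \mapsto \langle h_T, S_N(V_{1:N}(\omega))\rangle_{L^2} = \sum_{i=1}^N \alpha_i \int_0^1 h_T(u)(u - H_{\rho_i}(u, V_i(\omega)))\,du$ is a measurable function of $\omega$, because it is a finite sum of compositions of the jointly measurable map $(u,v)\mapsto H_{\rho_i}(u,v)$ integrated against $h_T$ (measurable by Fubini) with the $\mathcal{F}_N$-measurable r.v.s $V_i$. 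Hence $S_N \circ V_{1:N}$ is $\mathcal{F}_N$-measurable.

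\textbf{Integrability.} Since $\|S_N(V_{1:N})\|_B \leq \sup_{u}|S_N(V_{1:N})(u)| \leq \sum_{i=1}^N \alpha_i$ pointwise (deterministically), we immediately get $\E[\|S_N(V_{1:N})\|_B^2] \leq (\sum_{i=1}^N \alpha_i)^2 < \infty$, so $S_N \circ V_{1:N} \in L^2(B) \subseteq L^1(B)$ and the conditional expectation is well-defined.

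\textbf{Martingale identity.} This is the crux. Write $S_{N+1}(V_{1:N+1}) = S_N(V_{1:N}) + \alpha_{N+1}(u - H_{\rho_{N+1}}(u, V_{N+1}))$, where the first term is $\mathcal{F}_N$-measurable and pulls out, so it remains to show $\E^{\mathcal{F}_N}[\alpha_{N+1}(\cdot - H_{\rho_{N+1}}(\cdot, V_{N+1}))] = 0$ in $B$. Since $V_{N+1}$ is independent of $\mathcal{F}_N$ with law $\mathcal{U}(0,1)$, the conditional expectation equals the (unconditional) Bochner integral $\alpha_{N+1}\int_0^1 (\cdot - H_{\rho_{N+1}}(\cdot, v))\,dv \in B$. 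To identify this Bochner integral with the zero function, I would test against an arbitrary $h_T \in B$: using the property $T \int f\,d\mathbb{P} = \int Tf\,d\mathbb{P}$ for the Bochner integral together with Fubini's theorem,
\begin{align*}
\Big\langle h_T,\ \int_0^1 \big(\cdot - H_{\rho_{N+1}}(\cdot,v)\big)\,dv \Big\rangle_{L^2} = \int_0^1 h_T(u) \int_0^1 \big(u - H_{\rho_{N+1}}(u,v)\big)\,dv\,du = 0,
\end{align*}
because $\int_0^1 H_{\rho}(u,v)\,dv = u$ for all $u$ (the copula marginal identity recalled in Section \ref{app:sec_copula}). Since this holds for every $h_T$, the Bochner integral is the zero element of $B$, giving $\E^{\mathcal{F}_N}[S_{N+1}(V_{1:N+1})] = S_N(V_{1:N})$ a.s. The main obstacle is purely bookkeeping: justifying the passage from the conditional expectation to the Bochner integral via independence (an $L^p(B)$-valued analogue of the standard scalar fact, which follows from the defining property of conditional expectation applied on generating sets $A \in \mathcal{F}_N$ and the linearity $\E^{\mathcal{A}}[Tf] = T\,\E^{\mathcal{A}}[f]$), and invoking Fubini for the $B$-valued/scalar interchange — both are routine given the uniform boundedness, but need to be stated carefully.
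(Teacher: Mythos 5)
Your proposal is correct and follows essentially the same scalarization route as the paper's proof: Pettis' theorem plus Riesz representation for measurability, the deterministic bound $\|S_N(v_{1:N})\|_2 \leq \sum_{i=1}^N \alpha_i$ for membership in $L^2(B)$, and independence, Fubini and the copula marginal identity $\int_0^1 H_\rho(u,v)\,dv = u$ for the martingale step. The only inessential difference is ordering: you apply independence at the $B$-valued level to replace the conditional expectation of the increment by a deterministic Bochner integral before testing against functionals, whereas the paper scalarizes first via $T\,\E^{\mathcal{F}_N}[f] = \E[Tf \mid \mathcal{F}_N]$, applies independence at the scalar level, and then concludes from strong measurability that agreement under all functionals gives a.s. equality.
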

\begin{proof}
First, we highlight that $B$ is a separable Hilbert space, so we can appeal to Pettis' measurability theorem (e.g. \citet[Theorem 1.1.6]{Hytonen2016}) to show Bochner measurability by verifying weak measurability. For each continuous linear functional $T \in B^*$, let $h_T$ be the Riesz representation of $T$. We thus just need to verify Borel measurability of the scalar function $g_T: (0,1)^N \to \R$
 where
 \begin{align*}
    g(v_{1:N})  &:= \langle S_N(v_{1:N}), h_T \rangle_{L^2} = \int_0^1 S_N(v_{1:N})(u) \, h_T(u) \, du.
\end{align*}
For each $u \in (0,1)$, the function $S_N(v_{1:N})(u)\, h_T(u)$ is continuous in $v_{1:N}$, and $|S_N(v_{1:N})(u)\, h_T(u)|$ is bounded by $\|h_T\|_{\infty}\sum_{i = 1}^N \alpha_i$. Consider an arbitrary vector sequence $v^j_{1:N} \to v^*_{1:N}$.
Dominated convergence gives $g(v^j_{1:N}) \to g(v^*_{1:N})$, which
 implies $g(v_{1:N})$ is a continuous function on $(0,1)^N$,  and thus by composition $g \circ V_{1:N}$ is $\mathcal{F}_{N}$-measurable. We thus have that $S_N \circ V_{1:N}$ is Bochner $\mathcal{F}_N$-measurable. To show it lies in $L^2(B)$, we have $\|S_N(v_{1:n})\|_2\leq \sum_{i = 1}^N \alpha_i$ for all  $v_{1:n}$, so $ \|S_N(V_{1:N})\|_{L^2(B)}\leq \sum_{i = 1}^N \alpha_i$. 

For the final part, we leverage the discussion in Section \ref{app:sec_ban_mart}. Consider again a continuous linear functional $T \in B^*$. The conditional expectation operator  satisfies the following (e.g. \citet[Remark 1.11]{Pisier2016}):
\begin{align*}
    T\E^{\mathcal{F}_N}[S_{N+1}(V_{1:N+1})] = \E[TS_{N+1}(V_{1:N+1})\mid \mathcal{F}_N] \quad \text{a.s.}
\end{align*}
where we revert to the standard notation $E[\cdot \mid \mathcal{F}]$ when working with scalar conditional expectations for clarity. Using the Riesz representation again, we have
\begin{align*}
    T\E^{\mathcal{F}_N}[S_{N+1}(V_{1:N+1})] &= \E\left[\int_0^1 \, h_T(u) \, S_{N+1}(V_{1:N+1})(u) \, du \mid \mathcal{F}_N\right]\\
    &= \E\left[\int_0^1 \, h_T(u) \, S_{N}(V_{1:N})(u) \, du \mid \mathcal{F}_N\right] \\
    &+ \alpha_{N+1}\E\left[\int_0^1 \, h_T(u) \,(u - H_{\rho_{N+1}}(u,V_{N+1})) \, du \mid \mathcal{F}_N\right].
\end{align*}
We then have
\begin{align*}
    \E\left[\int_0^1 \, h_T(u) \, S_{N}(V_{1:N})(u) \, du \mid \mathcal{F}_N\right] = \int_0^1 \, h_T(u) \, S_{N}(V_{1:N})(u) \, du = TS_N(V_{1:N}) \quad \text{a.s.}
\end{align*}
since $\int_0^1 \, h_T(u) \, S_{N}(V_{1:N})(u) \, du$ is $\mathcal{F}_N$-measurable. For the second term, we have
\begin{align*}
    \E\left[\int_0^1 \, h_T(u) \,(u - H_{\rho_{N+1}}(u,V_{N+1})) \, du \mid \mathcal{F}_N\right] &=  \E\left[\int_0^1 \, h_T(u) \,(u - H_{\rho_{N+1}}(u,V_{N+1})) \, du \right] \quad \text{a.s.}\\
    &= \int_0^1 \int_0^1 \, h_T(u) \,(u - H_{\rho_{N+1}}(u,v)) \, du\, dv   \quad \text{a.s.}\\
    &= 0 \quad \text{a.s.}
\end{align*}
where we have used the independence of $V_{N+1}$ from $\mathcal{F}_N$ in the first line, and the last line follows from Fubini's theorem and the pointwise martingale property of the bivariate copula update (see Section \ref{app:sec_copula}). As a result, we have
\begin{align*}
    T\E^{\mathcal{F}_N}[S_{N+1}(V_{1:N+1})] = TS_{N}(V_{1:N}) \quad \text{a.s.}
\end{align*}
for each $T \in B^*$. From the Bochner measurability of $S_N(V_{1:N})$, we have from \citet[Corollary 1.1.25]{Hytonen2016} that the above is sufficient for $\E^{\mathcal{F}_N}[S_{N+1}(V_{1:N+1})] = S_N(V_{1:N})$ a.s. This follows as testing a.s. equality for all continuous linear functionals is sufficient under strong measurability.
\end{proof}

Although the above is more of a technical exercise, it verifies our intuition that having a martingale pointwise for a function is sufficient for it to be a $B$-valued martingale. For the QMP, we can then construct the r.v. $Q_N: \Omega \to B$, where 
\begin{align}\label{app:eq_relab}
    Q_N(u) = Q_n(u) + \sum_{i = n+1}^N \alpha_i (u - H_{\rho_i}(u,V_i))
\end{align}
for $u \in (0,1)$, and $\alpha_i$ is now specified as in {Assumption \ref{as:alpha}}. With a relabelling of indices so $n > 0$, the term on the right is equivalent to $S_N(V_{1:N})(u)$ as defined above. Under {Assumption \ref{as:L2}}, it is clear from {Lemma \ref{app:lem_bochnerl2}} that $Q_N$ is Bochner $\mathcal{F}_N$-measurable, $Q_N \in L^2(B)$, and
\begin{align*}
    \E^{\mathcal{F}_N}[Q_{N+1}] = Q_N \quad \text{a.s.}
\end{align*}
 $\{Q_N\}_{N \geq n}$  is thus a $B$-valued martingale. 
 
 We will now show that $\sup_{N \geq n} \|Q_N\|_{L^2(B)} < \infty$ in order to apply {Theorem \ref{app:thm_hilbert_mart}}. We begin with
\begin{align*}
    \|Q_N\|^2_{L^2(B)}  &= \E\left[\int_0^1 Q_N(u)^2\, du\right]=\int_0^1\E\left[Q_N(u)^2\right]\, du
\end{align*}
which follows from Tonelli's theorem. As $\{Q_N(u)\}_{N \geq n}$ is a martingale for each $u\in (0,1)$, we have
\begin{align*}
   \E[Q_N(u)^2 \mid \mathcal{F}_{N-1}] &= Q_{N-1}^2(u) + \alpha_N^2\E\left[ \left(u - H_{\rho_N}\left(u,V_N\right)\right)^2\right]\leq Q_{N-1}^2(u) + \alpha_N^2
\end{align*}
which follows as $H_\rho(u,v) \leq 1$. Iterated expectation gives us 
\begin{align*}
   \E[Q_N(u)^2 ] \leq Q_n^2(u) + \sum_{i = n+1}^N \alpha_i^2.
\end{align*}
By {Assumption \ref{as:alpha}}, we have that $\sup_{N\geq n} \sum_{i = n+1}^N \alpha_i^2 \leq C < \infty$. As a result, we have
\begin{align*}
    \sup_{N\geq n}  \|Q_N\|^2_{L^2(B)}  \leq \int_0^1 \sup_{N\geq n} \E\left[Q_N(u)^2\right]\,du \leq \int_0^1 Q_n^2(u) \,du + C.
\end{align*}
By {Assumption \ref{as:L2}}, we then have $\sup_{N \geq n}  \|Q_N\|_{L^2(B)} < \infty$. We can thus apply {Theorem \ref{app:thm_hilbert_mart}}. 

\subsection{{Theorem \ref{th:sobolev_mart}} and {Corollary \ref{cor:continuous}}}
The proof of {Theorem \ref{th:sobolev_mart}} is similar to the proof of {Proposition \ref{prop:L2_mart}}, but we will be working in the Sobolev space $B' = H^1((0,1))$. Following the discussion
in Section \ref{app:sec_ban_mart}, the martingale condition can be checked pointwise again, as long as $Q_N$ takes values in $H^1((0,1))$. We extend {Lemma \ref{app:lem_bochnerl2}} below.
Consider the same setup as in the proof of {Proposition \ref{prop:L2_mart}}, but replace all mentions of $B$ with $B'$. We first verify that $S_N(v_{1:N})$ as defined in (\ref{app:eq_Snu}) is in $H^1((0,1))$ for all $v_{1:N} \in (0,1)^N$. First, we note that
the update function has partial derivative
\begin{align}\label{app:eq_partial}
    \frac{\partial}{\partial u} (u - H_\rho(u,v)) = 1 - c_\rho(u,v).
\end{align}
We thus have the following for each $v_{1:N} \in (0,1)^N$:
\begin{align*}
  s_N(v_{1:N})(u) := \frac{\partial}{\partial u} S_N(v_{1:N})(u) = \sum_{i = 1}^n \alpha_i (1 - c_{\rho_i}(u,v_i)). 
\end{align*}
{Lemma \ref{app:lem_cop_bounded}} then gives that $s_N(v_{1:n})$ is bounded for each $v_{1:N} \in (0,1)^N$ and each $N$, so $\|s_N(v_{1:N})\|_2 < \infty$ and $S_N: (0,1)^N \to H^1((0,1))$ for each $N$. We then have the following lemma.
\begin{lem}\label{app:lem_bochnerh1}
For each $N$, the random variable $S_N \circ V_{1:N}: \Omega \to B'$ is Bochner $\mathcal{F}_N$-measurable, lies in $L^2(B')$ and satisfies
$\E^{\mathcal{F}_N}[S_{N+1}(V_{1:N+1})] = S_N(V_{1:N})$ a.s.  
\end{lem}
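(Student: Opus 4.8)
The plan is to mirror the proof of {Lemma \ref{app:lem_bochnerl2}} almost verbatim, using the \emph{scalarization} trick for separable Hilbert spaces: since $B' = H^1((0,1))$ is a separable Hilbert space, Pettis' measurability theorem reduces Bochner measurability to weak (scalar) measurability, and \citet[Corollary 1.1.25]{Hytonen2016} reduces the a.s.\ identity $\E^{\mathcal{F}_N}[S_{N+1}(V_{1:N+1})] = S_N(V_{1:N})$ to checking $T\E^{\mathcal{F}_N}[S_{N+1}(V_{1:N+1})] = TS_N(V_{1:N})$ a.s.\ for every continuous linear functional $T \in (B')^*$. By the Riesz representation theorem each such $T$ is $\langle \cdot, h_T\rangle_{H^1}$ for some $h_T \in H^1((0,1))$, and the $H^1$ inner product splits as $\langle g, h_T\rangle_{L^2} + \langle g', h_T'\rangle_{L^2}$. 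So every scalar quantity I need to handle is a sum of two $L^2$-pairings — one involving $S_N$ and one involving its weak derivative $s_N$.

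First I would establish measurability: define $g(v_{1:N}) = \langle S_N(v_{1:N}), h_T\rangle_{H^1} = \int_0^1 S_N(v_{1:N})(u)\,h_T(u)\,du + \int_0^1 s_N(v_{1:N})(u)\,h_T'(u)\,du$, and show $g$ is continuous in $v_{1:N}$ by dominated convergence, exactly as before. The dominating functions: $|S_N(v_{1:N})(u)\,h_T(u)| \le \|h_T\|_\infty \sum_{i=1}^N \alpha_i$ using {Proposition \ref{app:prop_bounded}} (essential boundedness of $H^1$ functions), and $|s_N(v_{1:N})(u)\,h_T'(u)|$ is dominated via {Lemma \ref{app:lem_cop_bounded}}, which gives $c_{\rho_i}(u,v) \le (1-\rho_i^2)^{-1/2}\exp(z_v^2/2)$; the subtlety is that this bound blows up as $v \to 0$ or $v\to 1$, but for a \emph{fixed} convergent sequence $v^j_{1:N}\to v^*_{1:N}$ with $v^*_i \in (0,1)$ the bound is uniformly finite in a neighbourhood, which suffices for dominated convergence along that sequence. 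Composition with the measurable $V_{1:N}$ then gives $\mathcal{F}_N$-measurability. For membership in $L^2(B')$, I need $\sup$ over realizations of $\|S_N\|_{1,2}$; the $L^2$-part is bounded by $\sum \alpha_i$ as before, and for the derivative part I bound $\|s_N(v_{1:N})\|_2^2$ by expanding the sum and using $\int_0^1\int_0^1 c_\rho(u,v)^2\,du\,dv = (1-\rho^2)^{-1}$ from {Lemma \ref{app:lem_cop_squared}} — but note the $L^2$-norm here is over $u$ only for fixed $v_{1:N}$, so I actually just need $\|1 - c_{\rho_i}(\cdot,v_i)\|_2 < \infty$ pointwise in $v_i$, which holds since $c_{\rho_i}(\cdot,v_i)$ is a bounded function of $u$. (The integrated $L^2(B')$ bound will instead come out in the proof of {Theorem \ref{th:sobolev_mart}} proper, using {Lemma \ref{app:lem_cop_squared}} and {Assumption \ref{as:bandwidth}}; here I only need finiteness of $\|S_N \circ V_{1:N}\|_{L^2(B')}$ for each fixed $N$, which is immediate.)

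For the martingale identity, I write $T\E^{\mathcal{F}_N}[S_{N+1}(V_{1:N+1})] = \E[TS_{N+1}(V_{1:N+1}) \mid \mathcal{F}_N]$, expand $S_{N+1} = S_N + \alpha_{N+1}(u - H_{\rho_{N+1}}(u,V_{N+1}))$, and split into two pieces. The $S_N$ piece is $\mathcal{F}_N$-measurable, so conditioning does nothing and it equals $TS_N(V_{1:N})$ a.s. The increment piece, by independence of $V_{N+1}$ from $\mathcal{F}_N$, Fubini, and the two key integral identities — $\int_0^1 (u - H_{\rho_{N+1}}(u,v))\,dv = 0$ (pointwise martingale property of the copula update, Section \ref{app:sec_copula}) and $\int_0^1 (1 - c_{\rho_{N+1}}(u,v))\,dv = 0$ (marginal uniformity of the copula density, same section) — vanishes in \emph{both} the $L^2$ and the derivative pairing, using (\ref{app:eq_partial}). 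The main obstacle, and the only genuine new content beyond {Lemma \ref{app:lem_bochnerl2}}, is handling the derivative term: I must justify differentiating under the integral/sum to get $s_N = \partial_u S_N$ as a \emph{bona fide} weak derivative in $L^2$, and verify the dominating bound near the boundary $v \in \{0,1\}$ is harmless. Both are routine given {Lemma \ref{app:lem_cop_bounded}} and {Lemma \ref{app:lem_cop_squared}}, but they are where care is needed. I would then close by invoking \citet[Corollary 1.1.25]{Hytonen2016} to upgrade the scalar identities to the $B'$-valued a.s.\ identity.
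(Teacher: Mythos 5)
Your overall route is the paper's own: scalarize via the Riesz representation in the separable Hilbert space $H^1((0,1))$, check continuity (hence Borel measurability) of $v_{1:N}\mapsto \langle S_N(v_{1:N}),h_T\rangle_{H^1}$, and upgrade the pointwise identities $\int_0^1\left(u-H_{\rho}(u,v)\right)dv=0$ and $\int_0^1\left(1-c_{\rho}(u,v)\right)dv=0$ to the $B'$-valued statement via \citet[Corollary 1.1.25]{Hytonen2016}; the measurability and martingale parts of your sketch match the paper's proof essentially step for step (the paper handles the unboundedness of $h_T'$ by Cauchy--Schwarz plus eventual uniform boundedness of $c_\rho(\cdot,v_i)$ for $v_i$ bounded away from $\{0,1\}$, which is the same local argument you describe).

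The one place your justification does not go through as written is the claim that membership in $L^2(B')$ is ``immediate'' from the pointwise-in-$v_{1:N}$ bound $\|1-c_{\rho_i}(\cdot,v_i)\|_2<\infty$. That bound only shows each realization lies in $B'$; the lemma asserts $\E\left[\|S_N(V_{1:N})\|_{1,2}^2\right]<\infty$, which requires integrating over $V_{1:N}$, and the realization-wise bound is not uniform: from {Lemma \ref{app:lem_cop_bounded}} one only gets $\sup_u c_{\rho}(u,v)\leq (1-\rho^2)^{-1/2}\exp(z_v^2/2)$, and
\begin{align*}
\int_0^1 \exp\left(z_v^2\right) dv \;=\; \frac{1}{\sqrt{2\pi}}\int_{-\infty}^{\infty}\exp\left(z^2/2\right) dz \;=\; \infty,
\end{align*}
so squaring this envelope and integrating in $v$ diverges — the crude sup bound cannot deliver the second-moment bound (nor even an $L^1(B')$ bound by this route). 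The correct step is the computation you first wrote down and then set aside: expand $\E\left[\|s_N(V_{1:N})\|_2^2\right]$, note the cross-terms vanish by independence of the $V_i$ and $\E\left[1-c_{\rho_i}(u,V_i)\right]=0$, and apply {Lemma \ref{app:lem_cop_squared}} to each diagonal term to obtain $\sum_{i\leq N}\alpha_i^2\,\rho_i^2/(1-\rho_i^2)<\infty$ for each fixed $N$ (finite because $\rho_i\neq 1$; no appeal to {Assumption \ref{as:bandwidth}} is needed at this stage). This is exactly how the paper establishes $S_N(V_{1:N})\in L^2(B')$, and with that repair your argument coincides with the paper's.
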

\begin{proof}
    As $H^1$ is also a separable Hilbert space, the proof follows very much in the same way as {Lemma \ref{app:lem_bochnerl2}} with $B$ replaced with $B'$, with the only difference arising from the Sobolev inner product. To check Bochner measurability, we show Borel measurability of
    \begin{align*}
        g'(v_{1:N}):= \langle S_N(v_{1:N}), h_T\rangle_{H^1} =  g(v_{1:N})+  \int_0^1 s_N(v_{1:N})(u) \, h'_T(u) \, du
    \end{align*}
    where $h'_T\in B$ is the weak derivative of $h_T$.
    The first term is already shown to be continuous. For the second term, define $t(v) = \int_0^1 c_\rho(u,v)\, h_T'(u)\, du$, and consider a sequence $v_i \to v^*$ for $v^* \in (0,1)$. We then have
    \begin{align*}
        |t(v) - t(v^*)| &\leq \int_0^1 |c_\rho(u,v_i)-c_\rho(u,v^*)|\, |h_T'(u)|\, du\\
        &\leq \|c_\rho(\cdot,v_i)-c_\rho(\cdot,v^*)\|_2 \, \|h_T'\|_2\\
        &\leq K\|c_\rho(\cdot,v_i)-c_\rho(\cdot,v^*)\|_2
    \end{align*}
    As $|v_i - v^*| \leq  \varepsilon$ for sufficiently large $i$, $v_i$ is eventually bounded away from 0 and 1, so $c_\rho(u,v_i)$ is eventually bounded uniformly over $u$ by {Lemma \ref{app:lem_cop_bounded}}. Dominated convergence implies $t(v)$ is continuous on $(0,1)$, so $g'(v_{1:N})$ is continuous on $(0,1)^N$ and thus  $g'\circ V_{1:N}$ is $\mathcal{F}_N$-measurable.
    
 Showing that $S_N(V_{1:N}) \in L^2(B')$ requires some more work. As we already showed that $S_N(V_{1:N})$ is in $L^2(B)$, we just need to verify that $\E\left[\|s_N(V_{1:N})\|^2_2\right] < \infty$.
The linearity of expectation gives
\begin{align*}
    \E\left[\|s_N(V_{1:N})\|^2_2\right] = \int_0^1 \sum_{i = 1}^N \sum_{j = 1}^N \alpha_i \alpha_j \E\left[(1-c_{\rho_i}(u,V_i))(1-c_{\rho_j}(u,V_j))\right]\, du.
\end{align*}
As $V_i$ is independent of $V_j$ for $i \neq j$, and $\E[(1-c_{\rho}(u,V_i))] = 0$ from Section \ref{app:sec_copula}, the cross-terms disappear and we have
     \begin{align*}
    \E\left[\|s_N(V_{1:N})\|^2_2\right] &=  \sum_{i = 1}^N \alpha_i^2 \int_0^1\E\left[(1-c_{\rho_i}(u,V_i))^2\right]\, du\\
    &=  \sum_{i = 1}^N \alpha_i^2 \left[\int_0^1 \int_0^1 c_{\rho_i}(u,v)^2\, du \, dv - 1\right]\\
    &=  \sum_{i = 1}^N \alpha_i^2 \frac{\rho_i^2}{1-\rho_i^2}
\end{align*}
where the last line follows from {Lemma \ref{app:lem_cop_squared}}. As $\rho_i \neq 1$, the above is bounded for each $N$, so $S_N(V_{1:N}) \in L^2(B')$.

For the final part, we again just need to verify a.s. equality for continuous linear functionals $T \in {B'}^*$ with Riesz representation $h_T \in B'$. This time, we have
\begin{align*}
    T\E^{\mathcal{F}_N}[S_{N+1}(V_{1:N+1})] &= \E\left[\int_0^1 \, h_T(u) \, S_{N+1}(V_{1:N+1})(u) \, du \mid \mathcal{F}_N\right]\\
    &+ \E\left[\int_0^1 \, h'_T(u) \, s_{N+1}(V_{1:N+1})(u) \, du \mid \mathcal{F}_N\right].
\end{align*}
 From the proof of {Lemma \ref{app:lem_bochnerl2}}, we have that the first term is equal to $\langle h_T, S_N(V_{1:N}) \rangle_{L^2}$ a.s. For the second term, we can carry out a similar argument which gives
\begin{align*}
    \E\left[\int_0^1 \, h'_T(u) \, s_{N+1}(V_{1:N+1})(u) \, du \mid \mathcal{F}_N\right] &=  \E\left[\int_0^1 \, h'_T(u) \, s_{N}(V_{1:N})(u) \, du \mid \mathcal{F}_N\right]\\
    &+ \alpha_{N+1}  \E\left[\int_0^1 \, h'_T(u) \, (1 - c_{\rho_{N+1}}(u,V_{N+1})) \, du \mid \mathcal{F}_N\right]
\end{align*}
Again, we have the first term as equal to $\langle h'_T(u), \, s_{N}(V_{1:N})\rangle_{L^2}$ a.s. from $\mathcal{F}_N$-measurability, and
\begin{align*}
     \E\left[\int_0^1 \, h'_T(u) \, (1 - c_{\rho_{N+1}}(u,V_{N+1})) \, du \mid \mathcal{F}_N\right] &=\int_0^1 \int_0^1 \, h'_T(u) \, (1 - c_{\rho_{N+1}}(u,v)) \, du\, dv \quad \text{a.s.}\\
     &= 0 \quad \text{a.s.}
\end{align*}
where we can apply Fubini's theorem as Cauchy-Schwarz gives
\begin{align*}
    \int_0^1 \int_0^1 \,\left| h'_T(u) \, (1 - c_{\rho_{N+1}}(u,v))\right| \, du\, dv  \leq \|h_T'\|_2 \, \sqrt{\int_0^1 \int_0^1\left(1-c_{\rho_{N+1}}(u,v)\right)^2\, du\,dv} < \infty.
\end{align*}
As a result, we have
\begin{align*}
     T\E^{\mathcal{F}_N}[S_{N+1}(V_{1:N+1})] &= \langle h_T(u), \, S_{N}(V_{1:N})\rangle_{L^2}+ \langle h'_T(u), \, s_{N}(V_{1:N})\rangle_{L^2} \quad \text{a.s.} \\
     &=\langle h_T(u), \, S_{N}(V_{1:N})\rangle_{H^1} \quad \text{a.s.}\\
     &=TS_N(V_{1:N}) \quad \text{a.s.}
\end{align*}
for each $T \in {B'}^*$. Again from Bochner measurability and \citet[Corollary 1.1.25]{Hytonen2016}, we have $\E^{\mathcal{F}_N}[S_{N+1}(V_{1:N+1})] = S_N(V_{1:N})$ a.s.
\end{proof}
Once again, pointwise martingales are sufficient. We then define $Q_N:\Omega \to B'$ again as (\ref{app:eq_relab}), with $\alpha_i$ and $\rho_i$ from {Assumptions \ref{as:alpha}} and  {\ref{as:bandwidth}} respectively. As $Q_n \in B'$ by {Assumptions \ref{as:L2}} and {\ref{as:weak_deriv}}, $\{Q_{N}\}_{N \geq n+1}$ is a $B'$-valued martingale by {Lemma \ref{app:lem_bochnerh1}}.

We now verify that $\sup_{N \geq n} \|Q_N\|_{L^2(B')}< \infty$, where $Q_N$ is weakly differentiable a.s. with a.e. unique weak derivative
\begin{align*}
    q_N(u) = q_n(u) + \sum_{i = n+1}^N \alpha_i \left(1- c_{\rho_i}(u,V_i)\right)
\end{align*}
for $u \in (0,1)$. We begin with
\begin{align*}
    \|Q_N \|^2_{L^2(B')}&= \E\left[\int_0^1 Q_N(u)^2\,du + \int_0^1 q_N(u)^2\,du\right]=\int_0^1\left(\E\left[Q_N(u)^2\right] +\E\left[q_N(u)^2\right] \right)\,du,
\end{align*}
where we have used Tonelli's theorem in the second equality. We have already  bounded $\E[Q_N(u)^2]$ in the proof of {Theorem \ref{prop:L2_mart}}, so we focus on the second term. We first note that $\{q_N(u)\}_{N \geq n}$ is a martingale for each $u \in (0,1)$, as $\int_0^1 c_\rho(u,v) \, dv = 1$. We thus have
\begin{align*}
    \E\left[q_N(u)^2 \mid \mathcal{F}_{N-1}\right] &= q_{N-1}(u)^2 + \alpha_N^2 \E\left[(1-c_{\rho_N}(u,V_N))^2\mid \mathcal{F}_{N-1}\right]\\
    &=q_{N-1}(u)^2 + \alpha_N^2 \left(\E\left[c_{\rho_N}(u,V_N)^2\mid \mathcal{F}_{N-1}\right]-1\right)\\
    &=q_{N-1}(u)^2 + \alpha_N^2 \frac{\rho_N^2}{1-\rho_N^2}
\end{align*}
where we have applied {Lemma \ref{app:lem_cop_squared}} in the last line. Iterated expectation again gives us
\begin{align*}
   \E\left[q_N(u)^2\right]  =q_n(u)^2 + \sum_{i = n+1}^N \alpha_i^2 \frac{\rho_i^2}{1-\rho_i^2}\cdot
\end{align*}
Combining with $\E[Q_N(u)^2]$ calculated earlier, this then gives us
\begin{align*}
    \|Q_N\|^2_{L^2(B')} &\leq \|Q_n\|^2_{1,2} + \sum_{i = n+1}^N \alpha_i^2 \left(1 + \frac{\rho_i^2}{1-\rho_i^2}\right)\\
    &=\|Q_n\|^2_{1,2} + \sum_{i = n+1}^N \frac{\alpha_i^2}{1-\rho_i^2}\cdot
\end{align*}
Now consider the form $\alpha_N = a(N+1)^{-1}$ and $\rho_N = \sqrt{1-cN^{-k}}$ from {Assumptions \ref{as:alpha}} and {\ref{as:bandwidth}}, where $a,c,k$ are all positive constants (and $c \in (0,1)$ so $\rho_N \in (0,1)$). The term to bound is
\begin{align*}
     \sup_{N \geq n}\sum_{i = n+1}^N \frac{\alpha_i^2}{1-\rho_i^2} \leq   ac^{-1}\sum_{i = n+1}^\infty i^{-2+k}. 
\end{align*}
The sum on the right is only bounded if $0\leq k < 1$, as satisfied by {Assumption \ref{as:bandwidth}}, and it is clear that $k$ controls the Sobolev norm and hence the smoothness of the QMP sample paths. We can thus apply {Theorem \ref{app:thm_hilbert_mart}} with $B' = H^1((0,1))$. {Corollary \ref{cor:continuous}} follows directly from {Proposition \ref{app:prop_abcont}}, as $Q_\infty$ takes values in $H^1((0,1))$ a.s.

\subsection{Proposition \ref{prop:rearr_sobolev}}
To begin, we note that the $L^2$ norm is weaker than the Sobolev norm, that is $d_{2}(f, g) \leq  d_{1,2}(f, g)$. As a result, we have from {Theorem \ref{th:sobolev_mart}} that $d_2(Q_N, Q_\infty) \to 0$ a.s. As $Q_N$ and $Q_\infty$ take values in $H^1((0,1))$, {Proposition \ref{app:prop_bounded}} implies that $Q_N$ and $Q_\infty$ are a.e. equal to continuous bounded functions on $(0,1)$ a.s., which we can extend to $[0,1]$. We can then apply {Proposition \ref{prop:rearr}}, which gives 
    $d_2(Q_N^{\dagger}, Q_\infty^{\dagger}) \leq d_2(Q_N, Q_\infty)$, and thus
   $d_2(Q_N^{\dagger}, Q_\infty^{\dagger}) \to 0$ a.s.
Furthermore, the increasing rearrangement variant of {Theorem \ref{app:thm_sobolev}} implies that $Q_N^{\dagger}$ and $Q_\infty^{\dagger}$ are both in $L^2(H^1(0,1))$, and thus take values in $H^1((0,1))$ a.s.

\subsection{Corollary \ref{cor:exchange}}
As $P_N \to P_\infty$ weakly a.s., we can directly apply \citet[Lemma 8.2(b)]{Aldous1985} to show the asymptotic exchangeability of the sequence. For the second part, we assume $n = 0$ so the notation is simpler, but this is just a matter of relabelling the indices. We begin by writing
\begin{align*}
    \mathbb{P}_N(y) := \frac{1}{N}\sum_{i=1}^{N}\mathbbm{1}\left(Y_i \leq y\right)= \frac{1}{N}\sum_{i = 1}^{N} \mathbbm{1}\left(Q_{i-1}(V_i) \leq y\right)
\end{align*}
where $V_i \iid \mathcal{U}(0,1)$. We then have that
\begin{align*}
    \E[\mathbb{P}_N(y)\mid \mathcal{F}_{N-1}] = \frac{N-1}{N}\mathbb{P}_{N-1} + \frac{1}{N}P_{N-1}(y)
\end{align*}
which follows as $\int_0^1 \mathbbm{1}\left(Q_N(v) \leq y\right)\, dv$ is precisely $P_N(y)$. 

Following the proof of \cite[Theorem 2.2]{Berti2004}, for any continuous and bounded $f \in C_b(\mathbb{R})$, we define $U_N = f(Y_{N+1})$.
We then define the martingale 
$$Z_N = \sum_{i=0}^{N-1}\frac{U_i - \E[U_i \mid \mathcal{F}_{i-1}]}{i+1},$$
which is bounded in $L^2$, so $(Z_N)_{N \geq 1}$ converges a.s. Kronecker's lemma then gives
\begin{align*}
     \frac{1}{N}\sum_{i=0}^{N-1}\left(U_i - \E[U_i \mid \mathcal{F}_{i-1}]\right) \to 0 \quad \text{a.s.}
\end{align*}
As a result, we have
\begin{align*}
    \lim_{N \to \infty}\frac{1}{N}\sum_{i=0}^{N-1}U_i =   \lim_{N \to \infty}\frac{1}{N}\sum_{i=0}^{N-1}\E[U_i \mid \mathcal{F}_{i-1}] \quad \text{a.s.}
\end{align*}
 Note that the left term is exactly $\mathbb{P}_N[f]$, where $P[f] = \int f dP$. We also have
\begin{align*}
    \E[U_N \mid \mathcal{F}_{N-1}] = P_N[f] \to  P_\infty [f] \quad \text{a.s.}
\end{align*}
as $P_N$ converges weakly to $P_\infty$ a.s., so $\mathbb{P}_N[f] \to P_\infty[f]$ a.s. through a C\'{e}saro means argument. The above holds a.s. for the function $f = e^{ity}$ for each $t \in \mathbb{R}$, so \citet[Theorem 2.6]{Berti2006} gives us $\mathbb{P}_N \to P_\infty$ weakly a.s.

\subsection{Theorem \ref{th:gp}}
The proof of this theorem consists of two parts. First we verify the conditions of {Theorem \ref{th::bracketing_clt}} to obtain asymptotic tightness of $\sqrt{N}S_{N}$. Then it suffices to check marginal convergence on $\mathcal{F}$ using the Lindeberg-Feller CLT; that is, we show that there is weak convergence to the requisite multivariate normal distribution for any finite collection of points. 

\begin{thm}\label{th:tight}
The sequence of functions $\sqrt{N}S_N$ is  asymptotically tight in $\ell^{\infty}((0,1))$.
\end{thm}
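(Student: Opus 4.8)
The plan is to apply the bracketing central limit theorem for infinite sums, Theorem~\ref{th::bracketing_clt}, with $Z_{Ni}(u) = \sqrt{N}\,\alpha_i\,H_{\rho_i}(u,V_i)$ for $i \ge N$ and index set $\mathcal{F} = (0,1)$ equipped with the Euclidean semimetric $d(u,u') = |u-u'|$. Since $\E Z_{Ni}(u) = \sqrt{N}\alpha_i u$ and $S_N(u) = \sum_{i \ge N}\alpha_i(u - H_{\rho_i}(u,V_i))$, we have $\sqrt{N}S_N = -\sum_{i=N}^\infty (Z_{Ni} - \E Z_{Ni})$, so it suffices to verify the hypotheses of that theorem. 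Total boundedness of $((0,1),d)$ is immediate, and the partitions constructed below will not depend on $N$, so only the Lindeberg-type condition and the entropy-integral condition need to be checked. The one genuinely delicate point is the need for a bound on $\E[(H_{\rho_i}(u,V_i) - H_{\rho_i}(u',V_i))^2]$ that is \emph{uniform in $i$}: the crude Lipschitz estimate via the copula density would give $|u-u'|/(1-\rho_i^2) = c^{-1}i^{k}|u-u'|$, and since $\sum_{i\ge N} N\alpha_i^2 i^{k} \asymp N^{k} \to \infty$ this is of no use.

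The resolution is to exploit the monotonicity of $H_\rho(\cdot,v)$ together with its uniform marginal $\int_0^1 H_\rho(u,v)\,dv = u$ from Section~\ref{app:sec_copula}. For $u \le u'$ and any $\rho \in (0,1)$, since $0 \le H_\rho(u,V) \le H_\rho(u',V) \le 1$, the squared difference is dominated by the difference itself, so
\begin{align*}
\E\big[(H_\rho(u,V) - H_\rho(u',V))^2\big] \;\le\; \E\big[H_\rho(u',V) - H_\rho(u,V)\big] \;=\; u' - u ,
\end{align*}
a bound free of $\rho$. Writing $r_N := \sum_{i=N}^\infty \alpha_i^2$, one has $N r_N \to a^2$ under Assumption~\ref{as:alpha}, so $C := \sup_{N \ge n} N r_N < \infty$, and hence for $u \le u'$
\begin{align*}
\sum_{i=N}^\infty \E\big[(Z_{Ni}(u) - Z_{Ni}(u'))^2\big] \;\le\; N\Big(\sum_{i=N}^\infty \alpha_i^2\Big)(u'-u) \;\le\; C\,(u'-u).
\end{align*}
In particular $\sup_{|u-u'| < \delta_N}\sum_{i\ge N}\E[(Z_{Ni}(u)-Z_{Ni}(u'))^2] \le C\delta_N \to 0$, which settles the equicontinuity condition (and would be unnecessary anyway once the partitions below are shown to be $N$-independent).

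For the Lindeberg condition, note that $\|Z_{Ni}\|_{\mathcal{F}} = \sqrt{N}\alpha_i\sup_u H_{\rho_i}(u,V_i) = \sqrt{N}\alpha_i$ is deterministic and bounded by $\sqrt{N}\alpha_N = a\sqrt{N}/(N+1) \le a/\sqrt{N}$ for every $i \ge N$, so for $N$ large enough that $a/\sqrt{N} < \eta$ the sum $\sum_{i\ge N}\E\|Z_{Ni}\|_{\mathcal{F}}\mathbbm{1}\{\|Z_{Ni}\|_{\mathcal{F}} > \eta\}$ is identically zero. For the entropy integral, partition $(0,1)$ into $\lceil C/\varepsilon^2\rceil$ consecutive intervals of length at most $\varepsilon^2/C$; by monotonicity of $H_{\rho_i}(\cdot,V_i)$ and the displayed estimate, every partitioning set $[u_j,u_{j+1}]$ satisfies
\begin{align*}
\sum_{i=N}^\infty \E \sup_{u,u' \in [u_j,u_{j+1}]} |Z_{Ni}(u) - Z_{Ni}(u')|^2 \;\le\; N\Big(\sum_{i=N}^\infty \alpha_i^2\Big)(u_{j+1}-u_j) \;\le\; \varepsilon^2 ,
\end{align*}
so $\mathfrak{N}_{[]}(\varepsilon,\mathcal{F},L^{2,N}) \le \lceil C/\varepsilon^2\rceil \lesssim \varepsilon^{-2}$ and $\int_0^{\delta_N}\sqrt{\log\mathfrak{N}_{[]}(\varepsilon,\mathcal{F},L^{2,N})}\,d\varepsilon \lesssim \int_0^{\delta_N}\sqrt{\log(1/\varepsilon)}\,d\varepsilon \to 0$. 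This partition is independent of $N$, so Theorem~\ref{th::bracketing_clt} applies and yields asymptotic tightness of $\sqrt{N}S_N$ in $\ell^\infty((0,1))$. The hard part is the uniform-in-$i$ estimate above; everything else is routine bookkeeping with the explicit forms of $\alpha_i$ and $\rho_i$ fixed in Assumptions~\ref{as:alpha}--\ref{as:bandwidth}.
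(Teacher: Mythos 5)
Your proposal is correct and follows essentially the same route as the paper: it applies the bracketing CLT (Theorem~\ref{th::bracketing_clt}) to $Z_{Ni}(u)=\sqrt{N}\alpha_i H_{\rho_i}(u,V_i)$, obtains the uniform-in-$i$ increment bound $N\alpha_i^2|u-u'|$ from monotonicity of $H_\rho(\cdot,v)$ together with its range and uniform marginal, dispenses with Lindeberg via the deterministic envelope $\sqrt{N}\alpha_i$, and gets bracketing numbers of order $\varepsilon^{-2}$. The only cosmetic difference is your use of the Euclidean semimetric $|u-u'|$ in place of the paper's $|u-u'|^{1/2}$, which does not affect the argument.
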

\begin{proof}
We will verify the conditions of {Theorem \ref{th::bracketing_clt}}. Our semimetric space is $(\mathcal{F} = (0,1), d)$, where $d(u_{1},u_{2}) = |u_{1}-u_{2}|^{1/2}$. Clearly, this semimetric space is totally bounded.

First define $Z_{Ni}(u) = \sqrt{N}\alpha_{i}H_{\rho_{i}}(u,V_{i})$ for all $N$ and $i \geq N$. A trivial envelope function for $Z_{Ni}$ is $F_{Ni} = \sqrt{N}\alpha_{i}$. We need to verify the Lindeberg condition
\begin{equation*}
    \sum_{i=N}^{\infty} F_{Ni}\mathbbm{1}\{F_{Ni} > \eta\} \rightarrow 0
\end{equation*}
for every $\eta > 0$. Since $F_{Ni} < c/\sqrt{N}$, we will have $F_{Ni} < \eta$ for all sufficiently large $n$. Thus, the Lindeberg condition holds.

Next we need 
\begin{equation*}
    \sup_{d(u_1,u_2) < \delta_{N}} \sum_{i=N}^{\infty} \E\left\{Z_{Ni}(u_{1})-Z_{Ni}(u_{2})\right\}^{2} \rightarrow 0
\end{equation*}
for every $\delta_{N} \downarrow 0$. Let $u_{1} > u_{2}$ with $d(u_1,u_2) < \delta_{N}$. Note that $Z_{Ni}(u)$ is non-decreasing with $\lim_{u \downarrow 0}Z_{Ni}(u) = 0$ and $\lim_{u \uparrow 1}Z_{Ni}(u) = \sqrt{N}\alpha_{i}$. So
\begin{align*}
    \E\left\{Z_{Ni}(u_{1})-Z_{Ni}(u_{2})\right\}^{2} &\leq \sqrt{N}\alpha_{i}\E\left\{Z_{Ni}(u_{1})-Z_{Ni}(u_{2})\right\} \\
    &\leq N\alpha_{i}^{2}(u_{1}-u_{2})\\
    &< N\alpha_{i}^{2}\delta_{N}^{2}.
\end{align*}
So
\begin{equation*}
    \sup_{d(u_1,u_2) < \delta_{N}} \sum_{i=N}^{\infty} \E\left\{Z_{Ni}(u_{1})-Z_{Ni}(u_{2})\right\}^{2} < N\delta_{N}^{2} \sum_{i=N}^{\infty} \alpha_{i}^{2}.
\end{equation*}
Since $\limsup_{N \rightarrow \infty} N \sum_{i=N}^{\infty} \alpha_{i}^{2} \leq c^{2}$, the right-hand side of the above display tends to zero for any $\delta_{N} \downarrow 0$.

Finally, we need to verify the bracketing entropy integral condition. For all sufficiently large $N$, we will have $N\sum_{i=N}^{\infty} \alpha_{i}^{2} \leq 2c^{2}$. Given $\varepsilon > 0$, choose a partition $0 = u_{0} < u_{1} < \ldots   < u_{M} = 1$ such that $u_{j} - u_{j-1} < \varepsilon^{2}/(2c^{2})$ for every $j$. The number of points in the partition can be chosen to be smaller than a constant times $1/\varepsilon^{2}$. Then
\begin{equation*}
    \E \sup_{u_{j-1} \leq s,t < u_{j}} |Z_{Ni}(s) - Z_{Ni}(t)|^{2} < \sqrt{N}\alpha_{i}\E[Z_{Ni}(u_{j}) - Z_{Ni}(u_{j-1})] < \frac{N \alpha_{i}^{2}\varepsilon^{2}}{2c^{2}},
\end{equation*}
where we have taken $Z_{Ni}(0) = 0$ and $Z_{Ni}(1) = \sqrt{N}\alpha_{i}$ for notational convenience. We deduce that for all sufficiently large $N$,
\begin{equation*}
    \sum_{i=N}^{\infty}\E \sup_{u_{j-1} \leq s,t < u_{j}} |Z_{Ni}(s) - Z_{Ni}(t)|^{2} < \varepsilon^{2}.
\end{equation*}
In other words, $\mathfrak{N}_{[]}(\varepsilon, \mathcal{F}, L^{2,N}) \lesssim {\varepsilon^{-2}}$, which verifies the entropy condition. Thus, {Theorem \ref{th::bracketing_clt}} implies that
\begin{equation*}
    -\sum_{i=N}^{\infty} (Z_{Ni}(u) -\E Z_{Ni}(u)) = \sqrt{N}\sum_{i=N}^{\infty}\alpha_{i}(u - H_{\rho_{i}}(u,V_{i}))
\end{equation*}
is asymptotically tight in $\ell^{\infty}(\mathcal{F})$.
\end{proof}

\begin{prop}\label{app:prop_marginal}
    Consider a vector $\mathbf{S}_N(\mathbf{u}) = [S_N(u_1),\ldots,S_N(u_d)]^T$ where each $u_j \in (0,1)$. We then have
\begin{align*}
    \sqrt{N} \mathbf{S}_N(\mathbf{u}) \overset{d}{\to} \mathcal{N}(\mathbf{0}, a^2 \Sigma)
\end{align*}
where $   \Sigma_{j,k} = \min\{u_j,u_k\} - u_j u_k$.
\end{prop}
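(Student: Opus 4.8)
The statement is finite-dimensional marginal convergence of $\sqrt{N}\,\mathbf{S}_N(\mathbf{u})$, which I would establish via the Cram\'er--Wold device combined with the Lindeberg--Feller central limit theorem for triangular arrays of independent summands. Recall from \eqref{eq:SN} that
\[
S_N(u) = \sum_{i=N}^{\infty} \alpha_i\bigl(u - H_{\rho_i}(u,V_i)\bigr), \qquad V_i \iid \mathcal{U}(0,1),
\]
so $\sqrt{N}\,\mathbf{S}_N(\mathbf{u}) = \sum_{i=N}^{\infty} \xi_{Ni}$ where $\xi_{Ni} = \sqrt{N}\,\alpha_i\,[\,u_j - H_{\rho_i}(u_j,V_i)\,]_{j=1}^d \in \R^d$ are independent, mean-zero (by the pointwise martingale property $\int_0^1 H_\rho(u,v)\,dv = u$ from Section~\ref{app:sec_copula}). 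Fix $t \in \R^d$ and set $W_{Ni} = t^\top \xi_{Ni} = \sqrt{N}\,\alpha_i\,\sum_j t_j\,(u_j - H_{\rho_i}(u_j,V_i))$. These are independent scalars with $\E[W_{Ni}] = 0$.

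The three ingredients I would then assemble are: (i) the variance of the sum converges to $a^2\, t^\top \Sigma\, t$; (ii) the Lindeberg condition holds; (iii) conclude by Lindeberg--Feller that $\sum_{i\ge N} W_{Ni} \overset{d}{\to} \mathcal{N}(0, a^2\, t^\top\Sigma\, t)$, so Cram\'er--Wold gives the claim. For (i), using independence across $i$ and Lemma~\ref{app:lem_covariance},
\[
\mathrm{Var}\Bigl(\sum_{i=N}^\infty W_{Ni}\Bigr) = N\sum_{i=N}^\infty \alpha_i^2 \sum_{j,k} t_j t_k \bigl(C_{\rho_i^2}(u_j,u_k) - u_j u_k\bigr).
\]
Since $\alpha_i = a(i+1)^{-1}$ (Assumption~\ref{as:alpha}), $N\sum_{i\ge N}\alpha_i^2 \to a^2$; and since $\rho_i \to 1$ (Assumption~\ref{as:bandwidth}), $C_{\rho_i^2}(u_j,u_k) \to \min\{u_j,u_k\}$ uniformly on the finite index set by the ordering property of the normal copula in Section~\ref{app:sec_copula}. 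A Toeplitz/Ces\`aro-type argument (a weighted average of a convergent sequence converges to the same limit) then yields $\mathrm{Var}(\sum_{i\ge N} W_{Ni}) \to a^2\sum_{j,k} t_j t_k(\min\{u_j,u_k\} - u_j u_k) = a^2\, t^\top\Sigma\, t$. For (ii), note $|W_{Ni}| \le \sqrt{N}\,\alpha_i\,\sum_j |t_j| \le C/\sqrt{N}$ for a constant $C$ depending only on $t$ and $a$, so each summand is uniformly bounded by a quantity vanishing in $N$; hence for any $\eta>0$ the indicator $\mathbbm{1}\{|W_{Ni}| > \eta\}$ is identically zero for all large $N$, and the Lindeberg sum is exactly $0$ eventually. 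This is the same boundedness observation already used in the proof of Theorem~\ref{th:tight}.

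The verification is essentially routine once these pieces are in place; the only point requiring mild care is step (i), specifically exchanging the limit with the weighted sum --- i.e.\ confirming that the weights $N\alpha_i^2$ (for $i\ge N$) form an admissible averaging scheme so that $\sum_{i\ge N} N\alpha_i^2\,[C_{\rho_i^2}(u_j,u_k)-u_ju_k] \to a^2(\min\{u_j,u_k\}-u_ju_k)$. This follows because the weights are nonnegative, sum to $N\sum_{i\ge N}\alpha_i^2 \to a^2 < \infty$, and are asymptotically negligible individually ($N\alpha_N^2 = O(1/N)\to 0$), so the standard dominated-convergence/Toeplitz lemma applies. I do not expect a genuine obstacle here; unlike the tightness result (Theorem~\ref{th:tight}), there is no infinite-dimensional index set to control, and the independence of the $\xi_{Ni}$ across $i$ makes the CLT immediate. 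Combining asymptotic tightness from Theorem~\ref{th:tight} with this marginal convergence is exactly what yields Theorem~\ref{th:gp}, with the limiting covariance $a^2(\min\{u,u'\} - uu')$ identified as the Brownian bridge kernel scaled by $a^2$.
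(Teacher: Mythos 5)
Your proposal is correct and follows essentially the same route as the paper: Cram\'er--Wold plus the Lindeberg--Feller CLT, with the limiting covariance computed from $\E[H_{\rho_i}(u_j,V_i)H_{\rho_i}(u_k,V_i)] = C_{\rho_i^2}(u_j,u_k)$ (Lemma \ref{app:lem_covariance}) and the Lindeberg condition trivialized by the uniform bound $|W_{Ni}| \le C/\sqrt{N}$. The only point you treat as ``immediate'' that the paper handles explicitly is that the cited Lindeberg--Feller theorem applies to finite rows, so the paper truncates the infinite sum at $m_N$ with $N/m_N = o(1)$ and verifies $\sum_{i > m_N}\mathrm{Var}[W_{Ni}] \to 0$ before concluding --- a routine step already implicit in your Toeplitz-type bound.
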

\begin{proof}
We will use the Lindeberg-Feller CLT \cite[Proposition 2.27]{vanderVaart2000}, which we state below for convenience. For each $N$, let $\{Z_{N,N},\ldots,Z_{N,m_{N}}\}$ be independent r.v.s with finite variances that satisfy
\begin{align*}
    \sum_{i=N}^{m_{N}} \left(\E[Z_{Ni}^2]-\E[Z_{Ni}]^2\right) \to \sigma^2
\end{align*}
and the Lindeberg condition
\begin{align*}
    \sum_{i=N}^{m_{N}}\E[Z_{Ni}^2 \mathbbm{1}\left(|Z_{Ni} | > \varepsilon\right)] \to 0
\end{align*}
for every $\varepsilon > 0$. Then we have
\begin{align*}
    \sum_{i=N}^{m_{N}}(Z_{Ni} -\E[Z_{Ni}]) \overset{d}{\to} \mathcal{N}(0,\sigma^2).
\end{align*}
We take $m_{N}$ to be any sequence such that $N/m_{N} = o(1)$. Consider an arbitrary vector $\mathbf{t} \in \mathbb{R}^{d}$, and we study the convergence of
\begin{align*}
    \sqrt{N}\, \mathbf{t}^T \mathbf{S}_N(\mathbf{u})  = \sqrt{N}\sum_{j=1}^d \, t_j \, S_N(u_j) = \sum_{i=N}^{\infty}\sqrt{N}\sum_{j=1}^d \, t_j \, \alpha_i(u_{j} - H_{\rho_i}(u_{j},V_i)).
\end{align*}
For $i \geq N$, set
\begin{equation*}
    Z_{Ni} = \sqrt{N}\sum_{j=1}^d \, t_j \, \alpha_i H_{\rho_i}(u_{j},V_i),
\end{equation*}
which has expectation $\sqrt{N}\alpha_{i}\sum_{j=1}^{d}t_{j}u_{j}$. We have
\begin{equation*}
     |Z_{Ni}| \leq a\frac{\sqrt{N}}{i}\sum_{j=1}^d  |t_j | \leq \frac{a}{\sqrt{N}}\sum_{j=1}^d  |t_j |.
\end{equation*}
So for every $\varepsilon > 0$, $|Z_{Ni}| <\varepsilon$ eventually, which establishes the Lindeberg condition. It remains to study the limiting variance. Using the integral test, we obtain
\begin{align*}
   \E[Z_{Ni}^2] -\E[Z_{Ni}]^{2} &= a^2 \frac{N}{i^{2}} \sum_{j=1}^d \sum_{k = 1}^d t_j t_k \{\E\left[H_{\rho_{i}}(u_j,V_{i})\, H_{\rho_{i}}(u_k,V_{i})\right]-u_{j}u_{k}\}\\
    &= a^2 \frac{N}{i^{2}} \sum_{j=1}^d \sum_{k = 1}^d t_j t_k \{C_{\rho_{i}^{2}}(u_{j},u_{k})-u_{j}u_{k}\}.
\end{align*}
Taking $\rho^2 \to 1$ gives $C_{\rho^{2}}(u_{j},u_{k}) \to \min\{u_j,u_k\}$ as discussed in Section \ref{app:sec_copula}. Then
\begin{align*}
    &\left|\sum_{i=N}^{m_{N}}\left(\E[Z^2_{Ni}] -\E[Z_{Ni}]^2\right)  - a^{2}\sum_{j=1}^d \sum_{k = 1}^d t_j t_k \{\min\{u_j,u_k\}-u_{j}u_{k}\} \right|\\ &\leq   \sum_{j=1}^d \sum_{k = 1}^d\left|\sum_{i=N}^{m_{N}} \frac{a^{2}N}{i^{2}}\left[C_{\rho_{i}^{2}}(u_j,u_k) - \min\{u_j,u_k\}\right]\right|+ a^{2}\sum_{j=1}^d \sum_{k = 1}^d t_j t_k \{\min\{u_j,u_k\}-u_{j}u_{k}\} \left|\sum_{i=N}^{m_{N}}\frac{N}{i^{2}}-1\right|\\ 
    &\leq \sum_{j=1}^d \sum_{k = 1}^d\left|C_{\rho_{N}^{2}}(u_{j},u_{k}) - \min\{u_j,u_k\}\right|\left|\sum_{i=N}^{m_{N}} \frac{a^{2}N}{i^{2}}\right| + a^{2}\sum_{j=1}^d \sum_{k = 1}^d t_j t_k \{\min\{u_j,u_k\}-u_{j}u_{k}\} \left|\sum_{i=N}^{\infty}\frac{N}{i^{2}}-1\right|\\
    &+ a^{2}\sum_{j=1}^d \sum_{k = 1}^d t_j t_k \{\min\{u_j,u_k\}-u_{j}u_{k}\} \left|\sum_{i=m_{N}+1}^{\infty}\frac{N}{i^{2}}\right|.
\end{align*}
The first inequality uses the triangle inequality. The first and second terms in the last expression are $o(1)$. The third term is $O(N/m_{N})$, which is $o(1)$ by construction. Thus, we have
\begin{equation*}
     \sum_{i=N}^{m_{N}}\left(\E[Z^2_{Ni}] -\E[Z_{Ni}]^2\right)  \to \sigma^{2} = a^{2}\sum_{j=1}^d \sum_{k = 1}^d t_j t_k \{\min\{u_j,u_k\}-u_{j}u_{k}\} 
\end{equation*}
and we can apply the Lindeberg-Feller CLT to obtain 
\begin{equation*}
    \sum_{i=N}^{m_{N}}\sqrt{N}\sum_{j=1}^d \, t_j \, \alpha_i(u_{j} - H_{\rho_i}(u_{j},V_i)) \overset{d}{\to} \mathcal{N}(0,\sigma^2).
\end{equation*}
The final step is to show that the tail sums from $m_{N}+1$ onwards become asymptotically negligible. We do this by checking that $\sum_{i = m_{N}+1}^{\infty} \text{Var}[Z_{Ni}] \to 0$:
\begin{align*}
    \sum_{i = m_{N}+1}^{\infty} \text{Var}[Z_{Ni}] &= \sum_{i = m_{N}+1}^{\infty}a^2 \frac{N}{i^{2}} \sum_{j=1}^d \sum_{k = 1}^d t_j t_k \{C_{\rho_{i}^{2}}(u_{j},u_{k})-u_{j}u_{k}\}\\
    &\leq \sum_{i = m_{N}+1}^{\infty}a^2 \frac{N}{i^{2}} \left|\sum_{j=1}^d \sum_{k = 1}^d t_j t_k\right|\\
    &\to 0.
\end{align*}
Now the Cram\'er-Wold device completes the proof.
\end{proof}

\subsection{Theorem \ref{th:consistency_mean}}
To begin, we assume that $Y_{1:n}\iid P^*$ which has quantile function $Q^*$ satisfying {Assumption \ref{as:consistency_truth}}. Let $\{\mathcal{F}_{i}\}_{i \geq 1}$ denote the filtration where $\mathcal{F}_i := \sigma\left(Y_1,\ldots,Y_i\right)$. For now, we do not need to consider $Y_{n+1:\infty}$ arising from predictive resampling. 
 To start, we construct an almost supermartingale  as in {Theorem \ref{app:thm_asmart}} arising from {Algorithm \ref{alg:fit}}:
\begin{align*}
     \left(Q^*(u) - Q_{n}(u) \right)^2 &= (Q^*(u) - Q^\dagger_{n-1}(u))^2 + 2\alpha_{n}(Q^*(u) - Q^\dagger_{n-1}(u))\left( H_{\rho_n}\left(u,V_n\right) - u\right) \\
     &+ \alpha_{n}^2\left(H_{\rho_n}\left(u,V_n)\right) - u\right)^2.
\end{align*}
where for shorthand we write $V_n = P_{n-1}(Y_{n})$. We note that $|H_\rho(u,v) - u| \leq 1$.
Integrating with respect to $u$, we have
\begin{align*}
     d_2^2\left(Q^*, Q_{n} \right) &\leq d_2^2(Q^{\dagger}_{n-1}, Q^* )+ 2\alpha_{n}\int \, (Q^*(u) - Q^\dagger_{n-1}(u))\left( H_{\rho_n}\left(u,V_n\right) - u\right)du + \alpha_{n }^2.
\end{align*}
Now comes the key step due to the rearrangement. By {Assumption \ref{as:consistency_truth}}, both $Q^*$ and $Q_0$ are Lipschitz continuous and thus bounded on $[0,1]$. Since $H_\rho(u,v) - u$ is bounded on $u \in [0,1]$, $Q_n^{\dagger}$ is bounded on $[0,1]$. We can thus apply {Proposition \ref{app:prop_L2}}, which gives
$$
d_2^2(Q_{n}^{\dagger},Q^*) \leq d_2^2(Q_{n},Q^*),
$$
so we have 
\begin{align*}
     d_2^2\left( Q^{\dagger}_{n},Q^* \right) &\leq d_2^2(Q^{\dagger}_{n-1},Q^* )+ 2\alpha_{n}\int \, (Q^*(u) - Q^\dagger_{n-1}(u))\left( H_{\rho_n}\left(u,V_n\right) - u\right)du + \alpha_{n}^2.
\end{align*}
Taking the conditional expectation gives us
\begin{align*}
   \E\left[L_{n} \mid \mathcal{F}_{n-1}\right]   \leq L_{n-1} + 2\alpha_{n}\int \, (Q^*(u) - Q^\dagger_{n-1}(u))\left(  K_{n}\left(u\right)-u\right)du + \alpha_{n}^2, 
\end{align*}
where we write $L_n = d_2^2(Q^{\dagger}_n, Q^*)$ and 
$$
K_n(u) = \int  H_{\rho_n}(u,P_{n-1}(y))\, p^*(y) \, dy.
$$
We now subtract and add terms to get
\begin{align*}
   \E\left[L_{n} \mid \mathcal{F}_{n-1}\right]   \leq L_{n-1} -2\alpha_{n}\int \, (Q^*(u) - Q^\dagger_{n-1}(u))(  u - P^*( Q^\dagger_{n-1}(u)))\, du + \alpha_{n}^2 + 2\alpha_{n}\zeta_n
\end{align*}
where 
\begin{align}
    \zeta_n = \int(Q^*(u) - Q^\dagger_{n-1}(u))(  K_{n}\left(u\right)-P^*( Q^\dagger_{n-1}(u)))\, du.
\end{align}
It is not too hard to see that
\begin{align*}
    T(Q_{n-1}^{\dagger}):= \int \, (Q^*(u) - Q^\dagger_{n-1}(u))(u- P^*( Q^\dagger_{n-1}(u)))du \geq 0,
\end{align*}
where the positivity can be seen by applying $Q^*$, which is monotonically increasing, to each term in $u - P^*( Q^\dagger_{n-1}(u))$, giving
\begin{align*}
    u - P^*( Q^\dagger_{n-1}(u)) \geq 0 \implies Q^*(u) -Q^\dagger_{n-1}(u) \geq 0.
\end{align*}
To get a handle on $\zeta_n$, Cauchy-Schwartz gives us
\begin{equation*}
     |\zeta_n| \leq \sqrt{L_{n-1}}\sqrt{\int (  K_{n}\left(u\right)-P^*( Q^\dagger_{n-1}(u)))^2\, du}
\end{equation*}
and applying $\sqrt{x} \leq x + 1$ gives
\begin{equation*}
     |\zeta_n| \leq (L_{n-1} + 1)\sqrt{\kappa_n}
\end{equation*}
where we write 
\begin{align}\label{app:eq_kappan}
    \kappa_n = {\int(  K_{n}\left(u\right)-P^*( Q^\dagger_{n-1}(u)))^2\, du}
\end{align} 
This gives the original inequality as 
\begin{align*}
   \E[L_{n} \mid \mathcal{F}_{n-1}] \leq \left(1 + 2\alpha_{n}\sqrt{\kappa_n}\right)L_n-2\alpha_{n}T(Q_{n-1}^{\dagger}) + \alpha_{n}^2 + 2\alpha_{n}\sqrt{\kappa_n}.
\end{align*}
We now seek to apply {Theorem \ref{app:thm_asmart}} with $B_i = 2\alpha_i \sqrt{\kappa_i}$, $C_i =\alpha_{i}^2 + 2\alpha_{i}\sqrt{\kappa_i}$ and $D_i =2\alpha_{i}T(Q_{i-1}^{\dagger})$, where all terms are positive.
A sufficient condition for $\sum_{i = 1}^\infty B_i < \infty$ and $\sum_{i = 1}^\infty C_i < \infty$ a.s. is if 
$$\sum_{i=1}^\infty \alpha_{i}\sqrt{\kappa_i} < \infty \quad \text{a.s. }$$
which we now prove.

\subsubsection{Controlling $\kappa_n$}
We begin by upper bounding $\kappa_i$, which intuitively measures how far the copula kernel $H_\rho(u,v)$ is from the indicator function $\mathbbm{1}\left(v \leq u\right)$ (averaged over $u,v$). The result is simple to state, but the proof is surprisingly quite involved, and requires specific properties of the bivariate Gaussian copula. The difficulty of the proof arises from the copula's dependence on $P_n$, as we require bounds independent of $P_n$ and $P^*$. This
highlights that while the bivariate copula is very useful for attaining the martingale and coherence required for Bayesian inference, its adaptivity makes it harder to study its properties for estimation. 
\begin{lem}\label{app:lem_kappan}
    Let $\kappa_n$ be defined as in (\ref{app:eq_kappan}), and suppose $\rho_n \to 1$. For sufficiently large $n$, we have that $\kappa_n$ satisfies
    \begin{align*}
\sqrt{\kappa_n} \leq K (1-\rho_n^2)^{1/4}
    \end{align*}
    for some positive finite constant $K$.
\end{lem}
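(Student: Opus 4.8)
\textbf{Proof plan for Lemma \ref{app:lem_kappan}.}

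The plan is to control $\kappa_n = \int_0^1 \bigl(K_n(u) - P^*(Q^\dagger_{n-1}(u))\bigr)^2\,du$ by observing that $K_n(u) = \int H_{\rho_n}(u, P_{n-1}(y))\,p^*(y)\,dy$ is a smoothed version of $P^*(Q^\dagger_{n-1}(u))$. To make the comparison concrete, I would change variables in the integral defining $K_n$ from $y$ to $v = P_{n-1}(y)$; since $Y \sim P^*$ and $P_{n-1}$ is the (rearranged, hence continuous) CDF of $Q_{n-1}$, the pushforward of $P^*$ under $P_{n-1}$ is some distribution on $(0,1)$, call it $G_n$, so that $K_n(u) = \int_0^1 H_{\rho_n}(u,v)\,dG_n(v)$. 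The key point is that $P^*(Q^\dagger_{n-1}(u))$ can be written in the same form with $H_{\rho_n}$ replaced by the indicator $\mathbbm{1}(v \le u)$: indeed $P^*(Q^\dagger_{n-1}(u)) = \int_0^1 \mathbbm{1}(P_{n-1}(y) \le u)\,dP^*(y)$ holds because $Q^\dagger_{n-1}$ is a proper quantile function, so $Q^\dagger_{n-1}(u) \ge y \Leftrightarrow u \ge P_{n-1}(y)$ up to the (measure-zero, by continuity) boundary. Hence $K_n(u) - P^*(Q^\dagger_{n-1}(u)) = \int_0^1 \bigl(H_{\rho_n}(u,v) - \mathbbm{1}(v \le u)\bigr)\,dG_n(v)$.

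Next I would bound $\lvert K_n(u) - P^*(Q^\dagger_{n-1}(u))\rvert \le \sup_{v \in (0,1)} \lvert H_{\rho_n}(u,v) - \mathbbm{1}(v \le u)\rvert$ pointwise — but this sup is not small uniformly in $u$ near $0$ or $1$, so instead I would bound the $L^2(du)$ norm directly. Using the fact that $\mathbbm{1}(v \le u)$ is the $\rho \to 1$ limit of $H_\rho(u,v)$, and that $\int_0^{v'} H_\rho(u,v)\,dv = C_\rho(u,v')$, I expect the cleanest route is: $\lvert K_n(u) - P^*(Q^\dagger_{n-1}(u))\rvert^2 \le \int_0^1 \bigl(H_{\rho_n}(u,v) - \mathbbm{1}(v \le u)\bigr)^2 dG_n(v)$ by Jensen, and then integrate over $u$ and swap order of integration to get $\kappa_n \le \int_0^1 \Bigl[\int_0^1 \bigl(H_{\rho_n}(u,v) - \mathbbm{1}(v \le u)\bigr)^2\,du\Bigr] dG_n(v)$. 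So it suffices to show the inner bracket, $\Delta_{\rho}(v) := \int_0^1 (H_\rho(u,v) - \mathbbm{1}(v \le u))^2\,du$, satisfies $\sup_{v} \Delta_{\rho}(v) \le K^2 \sqrt{1-\rho^2}$ for $\rho$ close to $1$, which then gives $\sqrt{\kappa_n} \le K (1-\rho_n^2)^{1/4}$ since $G_n$ is a probability measure.

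The main obstacle is the explicit estimate of $\Delta_\rho(v) = \int_0^1 (H_\rho(u,v) - \mathbbm{1}(v\le u))^2\,du$. I would expand the square: $\Delta_\rho(v) = \int_0^1 H_\rho(u,v)^2\,du - 2\int_v^1 H_\rho(u,v)\,du + (1-v)$. The first integral can be evaluated by the Gaussian-copula computations already in the Appendix (change of variables $u \to z_u = \Phi^{-1}(u)$, completing the square, as in Lemma \ref{app:lem_cop_squared}); it equals something of the form $\Phi_2$-type expression that tends to $(1-v)$ as $\rho \to 1$ but I need the \emph{rate}. The middle integral, $\int_v^1 H_\rho(u,v)\,du$, I can handle via $\int_0^1 H_\rho(u,v)\,du = 1 - v$ (marginal uniformity, reversing the roles) minus $\int_0^v H_\rho(u,v)\,du$, and the latter is $\le v \cdot H_\rho(v,v) \to 0$; again I need the $\sqrt{1-\rho^2}$ rate. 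My expectation is that all three terms, when combined, produce cancellation at order $1$ and leave a remainder governed by $\int H_\rho(u,v)(1 - H_\rho(u,v))\,du$-type quantities whose $u$-integral concentrates near $u = v$ in a window of width $O(\sqrt{1-\rho^2})$ where $H_\rho - \mathbbm{1}$ is $O(1)$, giving the claimed $O(\sqrt{1-\rho^2})$ bound on $\Delta_\rho(v)$. The delicate part is getting this uniformly in $v \in (0,1)$, including $v$ near the endpoints, where the Gaussian quantile $\Phi^{-1}(v)$ blows up; I would either (i) show $\Delta_\rho(v)$ is actually smaller near the endpoints by a direct monotonicity/symmetry argument ($\Delta_\rho(v) = \Delta_\rho(1-v)$ by the copula symmetry $H_\rho(u,v) = 1 - H_\rho(1-u,1-v)$), reducing to $v \le 1/2$, or (ii) split the $u$-integral at $u = v$ and bound each piece using the tail bound on $\Phi$ together with Lemma \ref{app:lem_cop_bounded}. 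I expect (i) combined with a careful Mills-ratio estimate near $v \to 0$ to close the argument, and the phrase "for sufficiently large $n$" in the statement corresponds exactly to needing $\rho_n$ close enough to $1$ that the asymptotic Gaussian-tail estimates kick in.
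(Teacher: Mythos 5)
Your reduction coincides exactly with the paper's: the rewriting $P^*(Q^{\dagger}_{n-1}(u)) = \int \mathbbm{1}(P_{n-1}(y)\le u)\,dP^*(y)$, the Jensen/Cauchy--Schwarz step pulling the square inside $dP^*$, the definition of $\Delta_\rho(v)=\int_0^1(H_\rho(u,v)-\mathbbm{1}(v\le u))^2\,du$ (the paper's $U(v;\rho)$), the expansion of the square into $\int H_\rho^2\,du - 2\int_v^1 H_\rho\,du + (1-v)$, and the symmetry reduction to $v\ge 1/2$ are all the same moves the paper makes. However, there is a genuine gap: the lemma's entire content is the quantitative, uniform-in-$v$ estimate $\sup_{v\in(0,1)}\Delta_\rho(v)\le K^2\sqrt{1-\rho^2}$ for $\rho$ near $1$, and your proposal stops at "I expect the three terms to cancel at order $1$ and leave an $O(\sqrt{1-\rho^2})$ remainder." The transition-window heuristic does point in the right direction, but turning it into a proof with an explicit rate, uniformly over $v$ (including $v\to 0,1$ where $\Phi^{-1}(v)$ diverges), is precisely the nontrivial work. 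The paper does this by evaluating $\int_0^1 H_\rho^2(u,v)\,du$ and $\int_v^1 H_\rho(u,v)\,du$ in closed form as bivariate normal orthant probabilities (via Owen's integral tables, after the change of variables $u\mapsto z_u=\Phi^{-1}(u)$), rewriting the bound in terms of the tail probability $L(h,k;\rho)$, using the symmetry to restrict to $z_v\ge 0$, and then combining a Taylor expansion of $L(h,h;\cdot)$ in $h$ (with derivative $-2\Phi(-\theta h)\phi(h)$), monotonicity of $Q(z_v)-L(z_v,z_v;\cdot)$, and the identity $L(0,0;\bar{\rho})=\tfrac14+\tfrac{1}{2\pi}\arcsin\bar{\rho}$, arriving at the explicit bound $\phi(1)\bigl(\sqrt{2-\rho^2}/\rho-1\bigr)+\tfrac{1}{2\pi}\arctan\sqrt{1-\rho^2}\le K^2\sqrt{1-\rho^2}$. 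None of this (or any substitute for it) appears in your plan, so as written the proposal reduces the lemma to its hardest step and then asserts the answer.

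Two smaller points. First, your intermediate claim that $\int_0^v H_\rho(u,v)\,du \le v\,H_\rho(v,v)\to 0$ is flawed: $H_\rho(v,v)=\Phi\bigl(z_v\sqrt{(1-\rho)/(1+\rho)}\bigr)\to 1/2$ as $\rho\to 1$, so that bound does not vanish (the integral itself does tend to $0$ by dominated convergence, but this gives no rate, which is what you need). Second, pointwise-limit arguments of this kind ($H_\rho\to\mathbbm{1}$, $C_\rho\to\min$) are exactly what must be replaced by rate estimates here; a workable elementary alternative to the paper's orthant-probability calculation would be to bound $\int_0^1 H_\rho(1-H_\rho)\,du$ directly and control the shift of the transition point from $z_v$ to $\rho z_v$ uniformly via the boundedness of $\phi(z)$ and $|z|\phi(z)$, but that argument would still need to be written out carefully near the endpoints, which your sketch defers.
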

\begin{proof}
    To start, we can write
\begin{align*}
    P^*(Q_{n-1}^{\dagger}(u)) & = \int \mathbbm{1}(y \leq Q_{n-1}^{\dagger}(u)  )\, dP^*(y) = \int \mathbbm{1}(P_{n-1}(y)\leq u )\, dP^*(y).
\end{align*}
This gives
\begin{align*}
    \kappa_n &= \int \left[\int(H_{\rho_n}(u,P_{n-1}(y)) - \mathbbm{1}(P_{n-1}(y) \leq u)) \,dP^*(y)\right]^2\,  du\\
    &\leq \int\int(H_{\rho_n}(u,P_{n-1}(y)) - \mathbbm{1}(P_{n-1}(y) \leq u))^2 \, du \,  dP^*(y).
\end{align*}
Let us write the inner integral as a function of a  general $\rho,v \in (0,1)$:
\begin{align*}
    U(v;\rho) := \int\int(H_{\rho}(u,v) - \mathbbm{1}(v \leq u))^2 \, du.
\end{align*}
Fortunately, we can control  $U(v;\rho)$ by taking $\rho \to 1$. To see this,  note that
\begin{align}\label{app:eq_intHuv}
    \int(H_{\rho}(u,v) - \mathbbm{1}(v \leq u))^2 \, du = \int H^2_{\rho}(u,v) \, du  - 2\int_{v}^1 H_\rho(u,v) \, du + \int_v^1 \,du.
\end{align}
A change of variables with $z_u = \Phi^{-1}(u)$ gives 
\begin{equation*}
    H_\rho(u,v) = \Phi(a + bz_u)
\end{equation*}
where 
\begin{align*}
a = -\frac{\rho}{\sqrt{1-\rho^2}}z_v, \quad b = \frac{1}{\sqrt{1-\rho^2}}
\end{align*}
where $z_v = \Phi^{-1}(v)$. We now require some integrals of Gaussian CDF and density terms, which can be found in \cite{Owen1980}.
For the first term of (\ref{app:eq_intHuv}), we have
\begin{align*}
\int_{-\infty}^\infty \Phi(a + bz_u)^2\phi(z_u) \, dz_u &= \Phi_2\left(\frac{a}{\sqrt{1+b^2}},\frac{a}{\sqrt{1+b^2}}; \rho = \frac{b^2}{1+b^2}\right)
\end{align*}
where 
$\Phi_2(\mu_1, \mu_2; \rho)$ is the standard bivariate normal CDF evaluated at $\mu_1,\mu_2$ with correlation $\rho$. This follows from codes (20,010.4) and (3.5) from \citet{Owen1980}.

For the second term of  (\ref{app:eq_intHuv}), we have 
\begin{align*}
    \int_{z_v}^\infty \Phi(a + bz_u) \,\phi(z_u)\, dz_u &=   \int_{-\infty}^\infty \Phi(a + bz_u) \,\phi(z_u)\, dz_u -    \int_{-\infty}^{z_v} \Phi(a + bz_u) \,\phi(z_u)\, dz_u\\
    &=\Phi\left(\frac{a}{\sqrt{1+b^2}}\right) - \Phi_2\left(\frac{a}{\sqrt{1+b^2}}, z_v; \rho = \frac{-b}{\sqrt{1+b^2}}\right)\\
    &=\Phi_2\left(\frac{a}{\sqrt{1+b^2}}, -z_v; \rho = \frac{b}{\sqrt{1+b^2}}\right)
\end{align*}
where the second line comes from codes (10,010.1) and (10,010.1) from  \citet{Owen1980}, and the third line comes from the identity $\Phi(a) - \Phi_2(a,b; \rho) = \Phi_2(a,-b; -\rho)$.  The final term of  (\ref{app:eq_intHuv}) is simply $1-v$. Putting this together, we have that
\begin{align*}
  U(v; \rho) &\leq  \Phi_2\left(\frac{a}{\sqrt{1+b^2}},\frac{a}{\sqrt{1+b^2}}; \rho = \frac{b^2}{1+b^2}\right) -2\Phi_2\left(\frac{a}{\sqrt{1+b^2}}, -z_v; \rho = \frac{b}{\sqrt{1+b^2}}\right)+ (1-v)\\
  &= \Phi_2\left(\frac{-\rho}{\sqrt{2-\rho^2}}z_v,\frac{-\rho}{\sqrt{2-\rho^2}}z_v; \rho = \frac{1}{2-\rho^2}\right) -2\Phi_2\left(\frac{-\rho}{\sqrt{2-\rho^2}}z_v, -z_v; \rho = \frac{1}{\sqrt{2-\rho^2}}\right)+ (1-v)
\end{align*}
We will first write the above in terms of the tail probability of a bivariate normal distribution, $L(h,k;\rho) = P(x > h, y > k; \rho)$, which is easier to bound and satisfies
\begin{align*}
L(h,k; \rho) &= \Phi_2(-h,-k;\rho)
\end{align*}
This gives us 
\begin{align*}
 U(v; \rho) &\leq  L\left(\frac{\rho}{\sqrt{2-\rho^2}}z_v,\frac{\rho}{\sqrt{2-\rho^2}}z_v; \frac{1}{2-\rho^2}\right) - 2L\left(\frac{\rho}{\sqrt{2-\rho^2}}z_v, z_v; \frac{1}{\sqrt{2-\rho^2}}\right) + (1-v)\\
 &= L(h,h; \bar{\rho}^2) - 2L(h, z_v; \bar{\rho}) + Q(z_v)
\end{align*}
where we define $h = \frac{\rho}{\sqrt{2-\rho^2}}z_v$, $\bar{\rho} = \frac{1}{\sqrt{2-\rho^2}}$, and  $Q(z) = 1-\Phi(z)$ is the Gaussian tail probability.
 We first show that this upper bound is symmetric around $z_v = 0$. Using the identity 
 $$L(h,k;\rho) = 1-\Phi(h) - \Phi(k) + L(-h,-k;\rho)$$ 
 from the above, we can see that $U(v; \rho) = U(1-v; \rho)$ noting that $z_{1-v} = -z_v$.
As a result, we just need to bound  $U(v;\rho)$ for $z_v \geq 0$ (i.e. $v \geq 0.5$). 

We note that $L(h,h; \bar{\rho}^2) \leq L(h,h;\bar{\rho})$ as $L$ is increasing with $\bar{\rho}$, and furthermore $L(h,z_v;\bar{\rho}) \geq  L(z_v,z_v;\bar{\rho})$ as $z_v \geq h$. This then gives us
\begin{align*}
  U(v;\rho)  &\leq  L(h,h; \bar{\rho}) - 2L(z_v, z_v; \bar{\rho}) + Q(z_v).
\end{align*}
One can show that 
$$
\frac{\partial L(h,h; \bar{\rho})}{\partial h} = -2\Phi(-\theta h) \phi(h)
$$
where $\theta = \sqrt{\frac{1-\bar{\rho}}{1+\bar{\rho}}}$. A Taylor expansion gives
\begin{align*}
    L(z_v,z_v;\bar{\rho} ) &= L(h,h;\bar{\rho})  - 2\Phi(-\theta \tilde{h})\,\phi(\tilde{h})\,(z_v- h)\\
    &\geq L(h,h;\bar{\rho})  - 2\Phi(-\theta {h})\,\phi({h})\,(z_v- h),
\end{align*}
where $h\leq \tilde{h} \leq z_v$ . The second line follows as the function $\Phi(-\theta h)\,\phi(h)$ is monotonically decreasing with $h$ for $h\geq 0$.  This then implies
\begin{align*}
   U(v;\rho)  \leq  2\Phi(-\theta {h})\,\phi({h})(z_v- h) +Q(z_v) -  L(z_v,z_v;\bar{\rho}).
\end{align*}
We can upper bound the first term as
\begin{align*}
    \phi(h)(z_v - h) &= \left(\frac{\sqrt{2-\rho^2}}{\rho} - 1\right)\,\phi(h)\,h 
    \leq \phi(1)\left(\frac{\sqrt{2-\rho^2}}{\rho} - 1\right)
\end{align*}
and $\Phi(-\theta h) \leq 0.5$. For the second term, we can compute
\begin{align*}
    \frac{\partial}{\partial z_v} (Q(z_v) - L(z_v,z_v;\bar{\rho})) =   [2\Phi(-\theta z_v)-1]\, \phi(z_v).
\end{align*}
The derivative is always non-positive for $z_v \geq 0$ and is equal to 0 at $z_v = 0$, so the maximum value must be
\begin{align*}
    Q(0) - L(0,0;\bar{\rho}) = \frac{1}{2}- \left(\frac{1}{4} + \frac{1}{2\pi}\arcsin(\bar{\rho})\right).
\end{align*}
 Together, this implies
\begin{align*}
U(v;\rho)  &\leq  \phi(1)\left(\frac{\sqrt{2-\rho^2}}{\rho} - 1\right) + \frac{1}{4} - \frac{1}{2\pi}\arcsin\left(\frac{1}{\sqrt{2-\rho^2}}\right) \\
    &=\phi(1)\left(\frac{\sqrt{2-\rho^2}}{\rho} - 1\right) + \frac{1}{2\pi}\arctan\sqrt{1-\rho^2}
\end{align*}
For $x \geq 0$, we have $\arctan x \leq x$ as
\begin{align*}
    \frac{d }{dx}\arctan x = \frac{1}{1+x^2} \leq \frac{d }{dx}x=1
\end{align*}
and $\arctan x' = x'$ at $x' = 0$. We thus have
\begin{align*}
U(v;\rho)  &\leq \phi(1)\left(\frac{\sqrt{2-\rho^2}}{\rho} - 1\right) + \frac{1}{2\pi}\sqrt{1-\rho^2}
\end{align*}
Finally, for sufficiently large $\rho$ we have that 
$$
\frac{\sqrt{2-\rho^2}}{\rho} - 1 \leq \sqrt{1-\rho^2}.
$$
This follows because the roots of $g(\rho) = 
{\sqrt{2-\rho^2}}/{\rho} - 1 - \sqrt{1-\rho^2}$ occur at $\rho \approx 0.7184$ and $\rho = 1$, $g(\rho)$ is continuous on $\rho \in (0,1]$ and $g(\rho)$ is negative for some point in between the two roots. For $\rho$ sufficiently close to $1$, we thus have
\begin{equation}
    \begin{aligned}
        U(v; \rho) \leq K^2\sqrt{1-\rho^2}
    \end{aligned}
\end{equation}
for some finite and positive $K$. If $\rho_n \to 1$, this thus gives
$$
\sqrt{\kappa_n} \leq K (1-\rho_n^2)^{1/4}
$$
for sufficiently large $n$. 
\end{proof}

\subsubsection{Almost supermartingale}
If $\alpha_n = a(n+1)^{-1}$, {Lemma \ref{app:lem_kappan}} implies that setting
$$
1-\rho_n^2 = O(n^{-k})
$$
for some $k > 0$ is sufficient for $\sum_{i=1}^\infty \alpha_i \sqrt{\kappa_i} < \infty$.
Given {Assumptions \ref{as:alpha}} and {\ref{as:bandwidth}}, we have that $L_{n}$ is an almost supermartingale, so we can apply {Theorem \ref{app:thm_asmart}}. This implies $L_n \to L_\infty <\infty$ a.s. and more importantly,
$$
\sum_{i=1}^\infty \alpha_i T(Q_{i-1}^{\dagger}) < \infty \quad \text{a.s.}
$$
As $\sum_{i = 1}^\infty \alpha_i = \infty$, one can verify that the above implies
$$
\liminf_{n} T(Q_n^{\dagger}) = 0 \quad \text{a.s.}
$$
which implies there is a subsequence $n_j$ on which $\lim_{n_j \to \infty}T(Q_{n_j}^{\dagger}) = 0$.
We will now use the fact that $Q^*$ is $M$-Lipschitz on $[0,1]$ from {Assumption \ref{as:consistency_truth}}, which gives
\begin{align*}
  |Q^*(u) - Q(u)| \leq M|u - P^*(Q(u))|
\end{align*}
for all $ u \in [0,1]$. We thus have
\begin{align*}
       T(Q_{n}^{\dagger})&= \int \, (Q^*(u) - Q^\dagger_{n}(u))(u- P^*( Q^\dagger_{n}(u)))\, du
       \\&= \int \, |Q^*(u) - Q^\dagger_{n}(u)|\, |u- P^*( Q^\dagger_{n}(u))|\,du \\
&\geq M^{-1}\int \, (Q^*(u) - Q^\dagger_{n}(u))^2\, du\, = M^{-1} L_n
\end{align*}
where the second equality follows from the positivity of $T(Q)$. Applying this to the subsequence $n_j$ gives $ L_{n_j}\leq M T(Q_{n_j}^{\dagger})$, so $L_{n_j}\to 0$ along this subsequence. Since $L_n \to L_\infty$ a.s., we have $L_\infty = 0$ a.s.

\subsection{Theorem \ref{th:consistency}}

Consider now the same setting as the proof of {Theorem \ref{th:consistency_mean}}. 
To begin, we first show that $Q_n^{\dagger}$ satisfies {Assumptions \ref{as:L2}} and {\ref{as:weak_deriv}} for each $n$. We note that the update function $(u-H_\rho(u,v))$ has continuous partial derivative in $u$ for all $v \in [0,1]$ (as shown in (\ref{app:eq_partial})). As a result, $Q_1$ is a sum of a Lipschitz function $Q_0$ and a continuously differentiable function, so $Q_1\in H^1((0,1))$, and $Q_1^{\dagger}\in H^{1}((0,1))$ by {Theorem \ref{app:thm_sobolev}}. Repeating the argument gives $Q_n^{\dagger}\in H^1((0,1))$, thus satisfying {Assumptions \ref{as:L2}} and {\ref{as:weak_deriv}}.

We now extend the probability space. For each $n$, define $Q_{n\infty}^{\dagger}$ as the random function with realizations in $H^1((0,1))$ arising from {Algorithm \ref{alg:QMP}} starting from $Q_n^{\dagger}$. The existence of $Q_{n\infty}^{\dagger}\in H^{1}((0,1))$  is guaranteed by {Theorem \ref{th:sobolev_mart}}. There are a few possible constructions of this space, for example we can let $Y_1,Y_2,\ldots$ be i.i.d. r.v.s from $P^*$, and independently let $V_1,V_2,\ldots$ be i.i.d. r.v.s from $\mathcal{U}(0,1)$. We can then {define} the following:
\begin{align*}
    S_n(u) =  \sum_{i = n+1}^\infty \alpha_i[u- H_{\rho_i}(u,V_i)], \quad Q_{n\infty}(u) = Q_n^{\dagger}(u) + S_n(u).
\end{align*}
Another option is to let $V_{n,n+1}, V_{n,n+2},\ldots$ be distinct independent sequences of uniform r.v.s for each $n$. Either way, this does not affect the next step, as the distribution of $Q_{n\infty}^{\dagger}$ for each $n$ is unchanged.

To show posterior consistency, we apply Markov's inequality which gives
 \begin{align*}
        \Pi_n\left(Q^{\dagger}_{n\infty}: d_2\left(Q^{\dagger}_{n\infty},Q^*\right) \geq \varepsilon \mid Y_{1:n}\right) \leq \frac{1}{\varepsilon^2 }\E\left[d^2_2\left(Q^{\dagger}_{n\infty},Q^*\right)\mid Y_{1:n}\right].\end{align*}
As $Q_{n}^{\dagger}$ and $Q_{n\infty}^{\dagger}$ are essentially bounded by {Proposition \ref{app:prop_bounded}}, 
we can apply {Proposition \ref{prop:rearr}} which gives $d^2_2(Q^{\dagger}_{n\infty},Q^*)\leq d^2_2\left(Q_{n\infty},Q^*\right)$, and thus
 \begin{align*}
        \Pi_n\left(Q^{\dagger}_{n\infty}: d_2\left(Q^{\dagger}_{n\infty},Q^*\right) \geq \varepsilon \mid Y_{1:n}\right) \leq \frac{1}{\varepsilon^2 }\E\left[d^2_2\left(Q_{n\infty},Q^*\right)\mid Y_{1:n}\right].\end{align*}
 Applying the triangle inequality gives
\begin{align*}
&E\left[d^2_2\left(Q_{n\infty}, Q^*\right)\mid Y_{1:n}\right] \leq \E\left[\left(d_2\left(Q_{n\infty}, Q_n^{\dagger}\right) + d_2\left(Q_n^{\dagger}, Q^*\right)\right)^2\mid Y_{1:n}\right] \\
  &=\E\left[ d^2_2\left(Q_{n\infty}, Q_n^{\dagger}\right)\mid Y_{1:n}\right] +2d_2\left(Q_n^{\dagger}, Q^*\right)\E\left[ d_2\left(Q_{n\infty}, Q_n^{\dagger}\right)\mid Y_{1:n}\right] + d^2_2\left(Q_n^{\dagger}, Q^*\right). 
\end{align*}
To compute the first term, we have
\begin{align*}
    d^2_2\left(Q_N,Q_n^{\dagger}\right) = d^2_2(Q_{N-1},Q_n^{\dagger}) + d^2_2\left(Q_N,Q_{N-1}\right) - 2 \int \left(Q_N(u)-Q_{N-1}(u)\right)\left(Q_{N-1}(u)-Q_n^{\dagger}(u)\right) \, du.
\end{align*}
As $Q_N(u)$ is a martingale, we have that
\begin{align*}
  \E\left[d^2_2\left(Q_N,Q_n^{\dagger}\right) \mid Y_{1:N-1} \right]&= d^2_2(Q_{N-1},Q_n^{\dagger}) + \alpha_{N}^2 \int \, \int \left(u - H_{\rho_N}(u,v)\right)^2 \, du \, dv\\
   &\leq  d^2_2(Q_{N-1},Q_n^{\dagger}) + \alpha_{N}^2.
\end{align*}
Iterating further, we have that
\begin{align*}
  \E\left[d^2_2\left(Q_{n\infty},Q_n^{\dagger}\right) \mid Y_{1:n} \right]&\leq \sum_{i = n+1}^\infty \alpha_i^2 = O(n^{-1}).
\end{align*}
Let us consider the second term. We have
\begin{align*}
   \E\left[d_2\left(Q_{n\infty},Q_n^{\dagger}\right) \mid Y_{1:n} \right]&\leq \sqrt{E\left[d^2_2(Q_{n\infty},Q_n^{\dagger}) \mid Y_{1:n} \right]}\\
    &= O(n^{-1/2}).
\end{align*}
Putting this together, we have
\begin{align}\label{app:eq_on}
    \Pi_n \left(Q^{\dagger}_{n\infty}: d_2\left(Q_{n\infty}^{\dagger}, Q^*\right) \geq \varepsilon  \mid Y_{1:n}\right) \leq \frac{1}{\varepsilon^2}\left[O(n^{-1}) + 2O(n^{-1/2}) \, d_2(Q_n^{\dagger},Q^*) + d^2_2(Q_n^{\dagger},Q^*)\right].
\end{align}
We thus have the above going to 0 as long as $d^2_2(Q_n^{\dagger},Q^*) \to 0$ $P^*$-a.s., which follows from {Theorem \ref{th:consistency_mean}}.

\subsection{Theorem \ref{th:contraction_mean} }
This proof fortunately recycles many steps from the proof of {Theorem \ref{th:consistency_mean}}, with additional steps inspired by \cite{Aboubacar2014}. Let us begin again with the almost supermartingale construction:
\begin{align*}
   \E[L_{n} \mid \mathcal{F}_{n-1}] \leq \left(1 + 2\alpha_{n}\sqrt{\kappa_{n}}\right)L_{n-1}-2\alpha_{n}T(Q_{n-1}^{\dagger}) + \alpha_{n}^2 + 2\alpha_{n}\sqrt{\kappa_{n}},
\end{align*}
where  $\sqrt{\kappa_n}  = O(n^{-k/4})$, which is the error term controlled by the bandwidth with $k > 0$. From the Lipschitz condition in {{Assumption \ref{as:consistency_truth}}}, we again have
\begin{align*}
    T(Q_{n-1}^{\dagger}) \geq M^{-1} L_{n-1}.
\end{align*}
Putting this together, we get
\begin{align*}
   \E[L_{n} \mid \mathcal{F}_{n-1}] &\leq \left[1 - 2 \alpha_n\left(M^{-1} -\sqrt{\kappa_n}\right)\right]L_{n-1} + \alpha_{n}^2 + 2\alpha_{n}\sqrt{\kappa_{n}}\\
    &= \left[1 - 2 \alpha_n (M^{-1}+ O(n^{-k/4}))\right]L_{n-1} + \alpha_{n}^2 + 2\alpha_{n}\sqrt{\kappa_{n}}.
\end{align*}
Premultiplying by $n^{\delta}$ for $\delta < 1$, we have
\begin{align*}
   \E[n^{\delta} L_{n} \mid \mathcal{F}_{n-1}] &\leq \left[1 - 2 an^{-1} (M^{-1}+ O(n^{-k/4}))\right][1+\delta n^{-1} + O(n^{-2})] (n-1)^{\delta }L_{n-1} \\
    &+ O\left(n^{-(2-\delta)}\right) + O(n^{\delta - 1 - k/4})
\end{align*}
where we have used the fact that
\begin{align*}
  \left(\frac{n}{n-1}\right)^\delta = 1 + \frac{\delta}{n} + O(n^{-2})
\end{align*}
which is also used in \citet{Aboubacar2014}.
Simplifying, we get
\begin{align*}
   \E[n^{\delta} L_{n} \mid \mathcal{F}_{n-1}] &\leq \left[1 - n^{-1}(2aM^{-1} -\delta + {o}(1))\right] (n-1)^{\delta }L_{n-1} +O\left(n^{-(2-\delta)}\right) + O(n^{\delta - 1 - k/4})
\end{align*}
If we choose $a$ such that $a> \frac{M\delta}{2}$, then $2aM^{-1} - \delta  +o(1)> 0$ eventually for sufficiently large $n$. If we further assume $k > 4$, then the last terms is $O(n^{-(2-\delta)})$, so we just need $\sum_{n = 1}^\infty n^{-(2-\delta)} < \infty$ for the bounded variance condition, which holds if $\delta < 1$. Summarizing, we have that under {Assumption \ref{as:contraction_init}}, for sufficiently large $n$, there exists some positive constant $0 < K <\infty$ such that:
\begin{align*}
   \E[n^{\delta} L_{n} \mid \mathcal{F}_{n-1}] &\leq (n-1)^\delta L_{n-1} +\underbrace{O\left(n^{-(2-\delta)}\right)}_{C_n} - \underbrace{n^{-1}K\, (n-1)^\delta L_{n-1} }_{D_n}
\end{align*}
where $B_n = 0$ and $n^{\delta}L_n$ is an almost supermartingale, as $C_n, D_n \geq 0$ and $\sum_{i = 1}^\infty C_i <\infty$ a.s. As a result, {Theorem \ref{app:thm_asmart}} gives us $n^{\delta} L_n \to X_\infty$ a.s. under {Assumption \ref{as:contraction_init}}. Finally, we have that $X_\infty = 0$ a.s. which follows from $\sum_{i=1}^\infty i^{-(1-\delta)} L_{i} < \infty$ a.s. \\

\subsection{Theorem  \ref{th:contraction}}

We start again have from (\ref{app:eq_on})
\begin{align*}
    \Pi_n \left(Q^{\dagger}_{n\infty}: d_2\left(Q^{\dagger}_{n\infty}, Q^*\right) \geq \varepsilon  \mid Y_{1:n}\right)
    &\leq \frac{1}{\varepsilon^2}\left[O(n^{-1}) + 2O(n^{-1/2}) \, d_2(Q^{\dagger}_n,Q^*) + d_2^2(Q^{\dagger}_n,Q^*)\right].
\end{align*}
From {Theorem \ref{th:contraction_mean}}, for each $0 < \delta <1$, we have that
\begin{align*}
    \Pi_n \left(Q^{\dagger}_{n\infty}: d_2\left(Q^{\dagger}_{n\infty}, Q^*\right) \geq \varepsilon  \mid Y_{1:n}\right) 
    &\leq \frac{1}{\varepsilon^2}\left[O(n^{-1}) + O(n^{-(\delta+1)/2}) + O(n^{-\delta})\right] \quad \text{a.s.}[P^*]\\
    &\leq  \frac{1}{\varepsilon^2} O(n^{-\delta}) \quad \text{a.s.}[P^*]
\end{align*}
The above means that for all $\varepsilon > 0$, for any $0<\delta < 1$, there exists some constant $B_{\delta} < \infty$ such that  we have 
\begin{align*}
    \Pi_n \left(Q^{\dagger}_{n\infty}: d_2(Q_{n\infty}^{\dagger},Q^*) \geq \varepsilon \mid Y_{1:n}\right) \leq \frac{B_\delta n^{-\delta}}{\varepsilon^2} \quad \text{a.s.}[P^*]
\end{align*}
for sufficiently large $n$.
Now choose an arbitrary $0 < \delta < 1$, and also choose $\delta < \delta' < 1$ with corresponding $B_{\delta'}$. If we plug-in $\varepsilon_n = K n^{-\delta/2}$ as in {Theorem \ref{th:contraction}} for an arbitrary finite positive constant $K$, then we have
\begin{align*}
     \Pi_n (Q^{\dagger}_{n\infty}: d_2(Q_{n\infty}^{\dagger},Q^*) \geq K\varepsilon_n \mid Y_{1:n}) &\leq \frac{B_{\delta'}n^{-\delta'}}{\varepsilon_n^2}  \quad \text{a.s.}[P^*] \\
     &= \frac{B_{\delta'}}{K^2}n^{-\left(\delta' - \delta\right)} \quad \text{a.s.}[P^*]
\end{align*}
for sufficiently large $n$. Since $\delta' > \delta$ can always be chosen, we have that the above goes to 0 with $n$ for any $K>0$.

\subsection{Proposition \ref{prop:mu}}
We showed in the proof of {Theorem \ref{th:consistency}} that $Q_n^{\dagger}$ satisfies {Assumptions \ref{as:L2}} and {\ref{as:weak_deriv}}, so we can apply {Proposition \ref{prop:L2_mart}} or {Theorem \ref{th:sobolev_mart}}. The probability space can be constructed in the same way as {Theorem \ref{th:consistency}}.

Fix $n\geq 1$ and define the filtration $\{\mathcal{F}_N\}_{N \geq n}$ with $\mathcal{F}_N = \sigma(Y_1,\ldots,Y_n,V_{n+1},\ldots,V_N)$ and $\mathcal{F}_n =\sigma(Y_1,\ldots,Y_n)$. We will write $\E[\cdot \mid \mathcal{F}_n]$ as $\E[\cdot \mid Y_{1:n}]$ to make it clear that it is conditioned on `real' data.
To begin, we highlight the very useful property that
\begin{align*}
  \mu_{nN} :=  \int_0^1 Q_{nN}^{\dagger}(u)\, du  = \int_0^1 Q_{nN}(u)\, du
\end{align*}
which follows directly from {Lemma \ref{app:lem_equi}} with $h(x) = x$. The above also holds for $N = \infty$, where the existence of $Q_{n\infty}$ is guaranteed by {Proposition \ref{prop:L2_mart}}.

This is particularly convenient as we do not need to consider the rearrangement procedure to study the distribution of $\mu_{nN}$.
Note that this property is not unique to the mean functional. Another subtle but important point is that $\{\mu_{nN}\}_{N\geq n}$ is a martingale, even if $Q_{nN}^{\dagger}$ is not, due to the above property, which follows from
\begin{align*}
    \E\left[\mu_{nN} \mid \mathcal{F}_{N-1}\right]
    &=\int_0^1\E\left[Q_{nN}(u) \mid \mathcal{F}_{N-1}\right]\, du\\&=\int_0^1 Q_{n,N-1}(u) du\\&=\mu_{n,N-1}.
\end{align*}
This arises from the linearity of the mean, which\textit{ is} unique to the mean and does not apply for other functionals. We can then directly show that $\mu_N$ is bounded in $L^2$, as $E[X]^2 \leq\E[X^2]$ gives 
\begin{align*}
\sup_{N\geq n} \E\left[\mu_{nN}^2 \mid Y_{1:n}\right] \leq \sup_{N\geq n} \E\left[\|Q_{nN}\|^2_2\mid Y_{1:n} \right] < \infty, 
\end{align*}
where the boundedness was shown in {Proposition \ref{prop:L2_mart}}. As a result, $\mu_{nN}$ is a martingale bounded in $L^2$, so there exists a finite $\tilde{\mu}_{n\infty}$ such that  $\mu_{nN} \to \tilde{\mu}_{n\infty}$ a.s. and $E[\tilde{\mu}_{n\infty} \mid Y_{1:n}] = \mu_n$. 
Finally, we have
\begin{align*}
    (\mu_{n\infty} - \mu_{nN})^2 \leq d_2^2(Q_{n\infty},Q_{nN}) \to 0 \quad \textnormal{a.s.}
\end{align*}
from {Proposition \ref{prop:L2_mart}}, so $\tilde{\mu}_{n\infty} = \mu_{n\infty}$ a.s.
We thus have the first part of {Proposition \ref{prop:mu}}, that is $E[\mu_{n\infty} \mid Y_{1:n}] = \mu_n$ a.s. for each $n$. 

For the posterior variance, we note that
\begin{align*}
  \E[(\mu_{n\infty} - \mu_{n})^2\mid Y_{1:n}]=\sum_{i =n+1}^\infty  \sum_{j =n+1}^\infty \alpha_i \alpha_j \E[Z_i Z_j]
\end{align*}
where $Z_i = \int_0^1 (u- H_{\rho_i}(u,V_i))\, du$. 
As $Z_i$ is independent of $Z_j$ for $i \neq j$ and are zero-mean, the cross-terms are zero, so we just have
\begin{align*}
     \E[(\mu_{n\infty} - \mu_{n})^2\mid Y_{1:n}]=\sum_{i =n+1}^\infty \alpha_i^2 \, \E[Z_i^2].
\end{align*}
We can then show from {Lemma \ref{app:lem_covariance}} that
\begin{align*}
    \E[Z_i^2] = \int_0^1\int_0^1 C_{\rho_i^2}(u,v)\,du\, dv - \frac{1}{4}\cdot
\end{align*}
As $C_\rho(u,v) \leq 1$ and $\lim_{\rho \to 1} C_{\rho^2}(u,v) = \min(u,v)$, dominated convergence gives 
\begin{align*}
    \lim_{i \to \infty}\E[Z_i^2] &= \int_0^1 \int_0^1 \min(u,v)\,du \, dv - \frac{1}{4}\\
    &= \frac{1}{12}
\end{align*}
Scaling by ${n}$ gives us
\begin{align*}
  {n}\, \E[(\mu_{n\infty} - \mu_{n})^2\mid Y_{1:n}]&=a^2 \sum_{i = n+1}^\infty\frac{{n}}{(i+1)^2}\,  \E[Z_i^2].
\end{align*}
Now consider
\begin{align*}
   \left|  \sum_{i = n+1}^\infty\frac{{n}}{(i+1)^2}\,  \E[Z_i^2]  - \frac{1}{12}\right| \leq    \left|  \sum_{i = n+1}^\infty\frac{{n}}{(i+1)^2}\,  \left(\E[Z_i^2]  - \frac{1}{12}\right) + \frac{1}{12}\left(\sum_{i = n+1}^\infty \frac{{n}}{(i+1)^2} - 1\right)\right|\\
   \leq \sum_{i = n+1}^\infty\frac{{n}}{(i+1)^2}\,  \left|\E[Z_i^2]  - \frac{1}{12}\right| + \frac{1}{12}\left|\sum_{i = n+1}^\infty \frac{{n}}{(i+1)^2} - 1 \right|. 
\end{align*}
For each $\varepsilon > 0$, for sufficiently large $n$, we have $\sup_{i \geq n+1}\left|\E[Z_i^2]- \frac{1}{12} \right|< \varepsilon$. Furthermore, we have
\begin{align*}
    \left|n\sum_{i = n+1}^\infty \frac{1}{(i+1)^2} - 1 \right| \to 0,
\end{align*}
which follows as $(n+2)^{-1} \leq\sum_{i = n+1}^\infty (i+1)^{-2} \leq (n+1)^{-1}$ from the integral test.
We thus have
\begin{align*}
     {n}\, \E[(\mu_{n\infty} - \mu_{n})^2\mid Y_{1:n}] \to \frac{a^2}{12} \quad \textnormal{a.s.}[P^*]
\end{align*}
Note that the result does not depend on the convergence of $\mu_n$, as this will always act as the center of the posterior.

\subsection{Theorem \ref{th:approx_GP}}
We begin with showing the inequality. Once again, we have
\begin{align*}
    S_n(u) = \sum_{i = n+1}^\infty \alpha_i(u - H_{\rho_i}(u,V_i))
\end{align*}
where $V_i \iid \mathcal{U}(0,1)$ and the existence of the random function $S_n$ with realizations in $H^1((0,1))$ is guaranteed by {Theorem \ref{th:sobolev_mart}} as $\alpha_i$ and $\rho_i$ satisfy {Assumptions \ref{as:alpha}}  and {\ref{as:bandwidth}} respectively. It is clear that this function has mean 0. To begin, we have the following lemma.
\begin{lem}
    The covariance function $k_n(u,u') := \E[S_n(u)\, S_n(u')]$ takes the form
    \begin{align*}
        k_n(u,u') = \sum_{i = n+1}^\infty \alpha_i^2 \,[C_{\rho_i^2}(u,u') - uu'].
    \end{align*}
\end{lem}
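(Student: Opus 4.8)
The plan is to exploit the independence of the summands in $S_n(u) = \sum_{i=n+1}^\infty \alpha_i(u - H_{\rho_i}(u,V_i))$, together with Lemma~\ref{app:lem_covariance}, which gives exactly $\int_0^1 [u - H_\rho(u,v)][u' - H_\rho(u',v)]\,dv = C_{\rho^2}(u,u') - uu'$. First I would note that each term $T_i(u) := \alpha_i(u - H_{\rho_i}(u,V_i))$ has mean zero (since $\int_0^1 H_\rho(u,v)\,dv = u$, see Section~\ref{app:sec_copula}), and that the $T_i$ are mutually independent because the $V_i$ are. Hence, formally, $\E[S_n(u)\,S_n(u')] = \sum_{i,j \geq n+1}\E[T_i(u)\,T_j(u')]$, and the cross terms $i \neq j$ vanish by independence and the zero-mean property, leaving $\sum_{i = n+1}^\infty \alpha_i^2\,\E[(u - H_{\rho_i}(u,V_i))(u' - H_{\rho_i}(u',V_i))]$. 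Applying Lemma~\ref{app:lem_covariance} to the $i$-th term with $\rho = \rho_i$ and $v = V_i \sim \mathcal{U}(0,1)$ then yields $\E[(u - H_{\rho_i}(u,V_i))(u' - H_{\rho_i}(u',V_i))] = C_{\rho_i^2}(u,u') - uu'$, which gives the claimed formula.

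The one genuine subtlety — and the step I expect to require the most care — is justifying the interchange of the (infinite) summation with the expectation, i.e.\ turning $\E[(\sum_i T_i(u))(\sum_j T_j(u'))]$ into $\sum_{i,j}\E[T_i(u)T_j(u')]$. Since this is an infinite sum, I would not appeal to bilinearity of expectation directly; instead I would either (a) work with the partial sums $S_n^{(M)}(u) = \sum_{i=n+1}^M T_i(u)$, note that $S_n^{(M)} \to S_n$ in $L^2(B)$ (this is precisely the content of the martingale convergence in Theorem~\ref{th:sobolev_mart}, or more simply $\E\|S_n - S_n^{(M)}\|_2^2 \to 0$ from the $\sum \alpha_i^2/(1-\rho_i^2) < \infty$ bound established in the proof of that theorem), hence $\E[S_n^{(M)}(u)S_n^{(M)}(u')] \to \E[S_n(u)S_n(u')]$ pointwise in $(u,u')$ after passing to a subsequence along which pointwise a.s.\ convergence also holds (or by Cauchy--Schwarz directly on the scalar-valued $L^2$ convergence of $S_n^{(M)}(u)$ for fixed $u$, which follows since $\sum_i \alpha_i^2\,\E[(u-H_{\rho_i}(u,V_i))^2] \leq \sum_i \alpha_i^2\,\E[c_{\rho_i}(\cdot,\cdot)^2\text{-type bound}] < \infty$ using Lemma~\ref{app:lem_cop_squared}); and (b) observe that for each \emph{finite} $M$, bilinearity plus independence gives $\E[S_n^{(M)}(u)S_n^{(M)}(u')] = \sum_{i=n+1}^M \alpha_i^2[C_{\rho_i^2}(u,u') - uu']$, and the right side converges (the tail is controlled since $|C_{\rho_i^2}(u,u') - uu'| \leq 1$ and more sharply $\leq \min\{u,u'\} - uu'$, while $\sum\alpha_i^2 < \infty$). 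Combining (a) and (b) identifies the limit and gives the formula.

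Concretely, I would structure the write-up as: (i) record that $\E[T_i(u)] = 0$ and the $T_i$ are independent; (ii) for fixed $u$, show $S_n^{(M)}(u) \to S_n(u)$ in $L^2(\Omega)$ as $M \to \infty$, using $\|u - H_\rho(u,v)\|_\infty \leq 1$ so that $\E[(S_n^{(M)}(u) - S_n(u))^2] \leq \sum_{i > M}\alpha_i^2 \to 0$; (iii) deduce $\E[S_n^{(M)}(u)S_n^{(M)}(u')] \to \E[S_n(u)S_n(u')]$ by Cauchy--Schwarz; (iv) compute $\E[S_n^{(M)}(u)S_n^{(M)}(u')] = \sum_{i=n+1}^M \alpha_i^2\,\E[(u-H_{\rho_i}(u,V_i))(u'-H_{\rho_i}(u',V_i))]$ by expanding the finite product and dropping cross terms; (v) evaluate each expectation via Lemma~\ref{app:lem_covariance}; (vi) let $M \to \infty$ to conclude $k_n(u,u') = \sum_{i=n+1}^\infty \alpha_i^2[C_{\rho_i^2}(u,u') - uu']$, noting the series converges absolutely since $|C_{\rho_i^2}(u,u') - uu'| \leq 1$ and $\sum \alpha_i^2 < \infty$ by Assumption~\ref{as:alpha}. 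No step is hard; the only place one must be slightly careful is not to wave hands over the infinite-sum/expectation swap, which the partial-sum argument handles cleanly.
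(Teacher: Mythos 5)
Your proposal is correct and follows essentially the same route as the paper: independence and zero mean of the summands kill the cross-terms, and Lemma~\ref{app:lem_covariance} evaluates each diagonal term as $C_{\rho_i^2}(u,u') - uu'$. The only difference is that you explicitly justify the interchange of the infinite sum and the expectation via partial sums and $L^2$ convergence, a step the paper's one-line ``direct calculation'' leaves implicit; this is a welcome refinement rather than a different argument.
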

\begin{proof}
A direct calculation gives
    \begin{align*}
        k_n(u,u') &= \sum_{i = n+1}^\infty \alpha_i \alpha_j \E\left[\left(u - H_{\rho_i}(u,V_i)\right)\left(u' - H_{\rho_j}(u',V_j)\right)\right]\\
        &=  \sum_{i = n+1}^\infty \alpha_i^2 \left[C_{\rho_i^2}(u,u')- u u'\right].
    \end{align*}
    The cross-terms are zero, so we can invoke {Lemma \ref{app:lem_covariance}} which gives the last line. 
\end{proof}
To get the inequalities, we note that $\rho_i^2 \to 1$ and $\rho_i^2 \leq \rho_j^2$ for $i \leq j$, so we can 
 apply the ordering property of the bivariate Gaussian copula from Section \ref{app:sec_copula}, where the ordering holds uniformly over $u,u'\in(0,1)$. Note that  $\lim_{\rho \to 1} C_{\rho^2}(u,u') = \min\{u,u'\}$, so the squeeze theorem gives the convergence to the Brownian bridge covariance function as $\rho_ns \to 1$ for both $k_{\rho_n}$ and $r_n^{-1} k_n$.

We now turn our focus to the Gaussian process $\mathbb{G}_{\rho} \sim \mathcal{GP}(0,C_{\rho^2}(u,u') - uu')$ for some $\rho \in (0,1)$ as in the approximate sampling scheme for the QMP. We will show that sample paths of $\mathbb{G}_{\rho}$ are in $H^1((0,1))$ using \citet[Theorem 1]{Scheuerer2010}. This depends on properties of the partial derivatives of the kernel function, which exists and is equal to
\begin{align*}
    \frac{\partial^2}{\partial u \partial v}[C_{\rho}(u,v) - uv] = c_\rho(u,v) - 1.
\end{align*}
We then have the following lemma.
\begin{lem}
The bivariate copula density $c_{\rho}(u,v)$ satisfies
 \begin{align*}
     \int_0^1[c_\rho(u,u) - 1]\, du < \infty.
 \end{align*}
\end{lem}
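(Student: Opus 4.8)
The plan is to evaluate the integral $\int_0^1 [c_\rho(u,u) - 1]\, du$ directly by a change of variables to normal coordinates and then recognize the resulting expression as finite. First I would write $c_\rho(u,u)$ explicitly using the formula for the bivariate normal copula density from Section \ref{app:sec_copula}: substituting $v = u$ (so $z_v = z_u$) into
\[
c_\rho(u,v) = \frac{1}{\sqrt{1-\rho^2}}\exp\left( \frac{2\rho z_u z_v - \rho^2(z_u^2 + z_v^2)}{2(1-\rho^2)}\right)
\]
yields
\[
c_\rho(u,u) = \frac{1}{\sqrt{1-\rho^2}}\exp\left( \frac{2\rho z_u^2 - 2\rho^2 z_u^2}{2(1-\rho^2)}\right) = \frac{1}{\sqrt{1-\rho^2}}\exp\left( \frac{\rho\, z_u^2}{1+\rho}\right).
\]
Then I would perform the change of variables $u \mapsto z_u = \Phi^{-1}(u)$, which has Jacobian $du = \phi(z_u)\, dz_u$ and maps $(0,1)$ to $\R$, giving
\[
\int_0^1 c_\rho(u,u)\, du = \frac{1}{\sqrt{2\pi}\sqrt{1-\rho^2}}\int_{-\infty}^\infty \exp\left(\frac{\rho\, z^2}{1+\rho} - \frac{z^2}{2}\right) dz.
\]

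The next step is to check that the exponent is a negative-definite quadratic in $z$, so that the Gaussian integral converges. The coefficient of $z^2$ is $\frac{\rho}{1+\rho} - \frac12 = \frac{2\rho - (1+\rho)}{2(1+\rho)} = \frac{\rho - 1}{2(1+\rho)}$, which is strictly negative for $\rho \in (0,1)$. Hence the integral equals $\sqrt{2\pi}\big/\sqrt{\frac{1-\rho}{1+\rho}}$ times the usual normalizing constant, and a routine evaluation gives
\[
\int_0^1 c_\rho(u,u)\, du = \frac{1}{\sqrt{1-\rho^2}}\cdot\sqrt{\frac{1+\rho}{1-\rho}} = \frac{1}{\sqrt{1-\rho^2}}\cdot\frac{\sqrt{1+\rho}}{\sqrt{1-\rho}} = \frac{1+\rho}{1-\rho^2}\cdot\frac{1}{\sqrt{\tfrac{(1-\rho)(1+\rho)}{(1+\rho)^2}}}\cdots
\]
— the exact constant is an elementary simplification, but the crucial point is that it is \emph{finite} for each fixed $\rho \in (0,1)$. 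Subtracting $\int_0^1 1\, du = 1$ then gives the claim.

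The main obstacle, such as it is, is purely bookkeeping: getting the Gaussian completion-of-the-square and the constants right, which is the same kind of computation already carried out in {Lemma \ref{app:lem_cop_squared}} and {Lemma \ref{app:lem_covariance}}. In fact, one could shortcut the explicit integration entirely by invoking {Lemma \ref{app:lem_cop_squared}}: since $\int_0^1\int_0^1 c_\rho(u,v)^2\, du\, dv = \frac{1}{1-\rho^2} < \infty$, and $c_\rho(u,u) \le \frac{1}{\sqrt{1-\rho^2}}\exp(z_u^2/2) \le \frac{1}{\sqrt{1-\rho^2}}\cdot\frac{1}{\sqrt{2\pi}\,\phi(z_u)} = \frac{c_\rho(u,u)}{\,c_\rho(u,u)\,}\cdots$ — actually the cleanest finiteness bound is simply to note $c_\rho(u,u)\phi(z_u) = \frac{1}{\sqrt{2\pi}\sqrt{1-\rho^2}}\exp(\frac{\rho-1}{2(1+\rho)}z_u^2)$ is integrable over $z_u\in\R$, which is immediate. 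Either route is short; the direct substitution is the most transparent and is what I would write up.
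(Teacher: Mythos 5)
Your proposal is correct and follows essentially the same route as the paper: change variables $u \mapsto z_u = \Phi^{-1}(u)$, observe that $c_\rho(u,u)\phi(z_u) \propto \exp\bigl(-\tfrac{(1-\rho)z_u^2}{2(1+\rho)}\bigr)$ with a strictly negative quadratic coefficient for $\rho \in (0,1)$, and conclude the Gaussian integral (in fact equal to $1/(1-\rho)$) is finite. The only blemish is the garbled constant-simplification display and the abandoned shortcut at the end, neither of which is needed since finiteness alone suffices, as you yourself note.
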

\begin{proof}
   We have the following from a change of variables $u \to z_u = \Phi^{-1}(u)$:
    \begin{align*}
        \int_0^1 [c_{\rho}(u,u)-1] \, du &= \frac{1}{\sqrt{1-\rho^2}}\int_{-\infty}^{\infty}\exp\left(\frac{2\rho z_u^2 -2\rho^2z_u^2}{2(1-\rho^2)}\right)\phi(z_u)\, dz_u-1\\
        &=\frac{1}{\sqrt{2\pi(1-\rho^2)}}\int_{-\infty}^{\infty}\exp\left(\frac{ z_u^2 \left(2\rho -1 - \rho^2\right)}{2(1-\rho^2)}\right)\, dz_u-1\\
        &=\frac{1}{\sqrt{2\pi(1-\rho^2)}}\int_{-\infty}^{\infty}\exp\left(-\frac{z_u^2(1-\rho)}{2(1+\rho)}\right)\, dz_u-1 < \infty
    \end{align*}
    where the finiteness follows as $(1-\rho)/(1+\rho) > 0$. 
\end{proof}
Continuity of $c_\rho(u,u)$ for all $u \in (0,1)$ and the above lemma means that the covariance function $k_{\rho}$ satisfies the conditions of \citet[Theorem 1]{Scheuerer2010}, so sample paths of $\mathbb{G}_{\rho}$ are in $H^1((0,1)) = W^{1,2}((0,1))$ a.s.

Finally, we show the weak convergence to the Brownian motion. Marginal convergence is quite obvious as the covariance function approaches $\min\{u,u'\} - uu'$ with $n\to \infty$. However, showing tightness of the sequence of GPs requires a bit more work.
\begin{lem}
    The sequence $\mathbb{G}_{\rho_n}$ is asymptotically tight in $\ell^\infty((0,1))$.
\end{lem}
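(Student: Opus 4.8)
The plan is to deduce asymptotic tightness in $\ell^\infty((0,1))$ from two ingredients: marginal tightness, and asymptotic equicontinuity with respect to a single fixed totally bounded semimetric, the latter obtained via a Gaussian chaining bound whose constant does not depend on $n$. Marginal tightness is immediate, since by {Lemma \ref{app:lem_covariance}} the variance of $\mathbb{G}_{\rho_n}(u)$ equals $C_{\rho_n^2}(u,u) - u^2 \le u - u^2 \le 1/4$ uniformly in $n$ and $u$ (using $C_{\rho}(u,u)\le u$ from Section \ref{app:sec_copula}), so every coordinate sequence is tight. Recall also that we already showed in this proof that sample paths of $\mathbb{G}_{\rho_n}$ lie in $H^1((0,1))$, hence are continuous a.s., so there are no measurability issues.

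For equicontinuity, the key step is to bound the intrinsic $L^2$-semimetrics of all the $\mathbb{G}_{\rho_n}$ by a single semimetric that makes $(0,1)$ totally bounded. Using {Lemma \ref{app:lem_covariance}} to write the covariance as $C_{\rho^2}(s,t) - st = \int_0^1 [s - H_\rho(s,v)][t - H_\rho(t,v)]\,dv$, we obtain for $s > t$
\begin{equation*}
\E\!\left[\left(\mathbb{G}_{\rho_n}(s) - \mathbb{G}_{\rho_n}(t)\right)^2\right] = \int_0^1 \left[(s-t) - \left(H_{\rho_n}(s,v) - H_{\rho_n}(t,v)\right)\right]^2 dv.
\end{equation*}
Since $\partial_u H_\rho(u,v) = c_\rho(u,v)\ge 0$ and $\int_0^1 c_\rho(u,v)\,dv = 1$, we have $0 \le H_{\rho_n}(s,v) - H_{\rho_n}(t,v) = \int_t^s c_{\rho_n}(u,v)\,du \le 1$ with $\int_0^1 [H_{\rho_n}(s,v) - H_{\rho_n}(t,v)]\,dv = s-t$, whence $\int_0^1 [H_{\rho_n}(s,v) - H_{\rho_n}(t,v)]^2\,dv \le s-t$, and expanding the square gives $\E[(\mathbb{G}_{\rho_n}(s) - \mathbb{G}_{\rho_n}(t))^2] \le 4|s-t|$ for all $n$. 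Thus the intrinsic semimetrics $d_n(s,t) := (\E[(\mathbb{G}_{\rho_n}(s) - \mathbb{G}_{\rho_n}(t))^2])^{1/2}$ satisfy $d_n(s,t)\le 2|s-t|^{1/2} =: \rho(s,t)$ for every $n$. The space $((0,1),\rho)$ is totally bounded with covering number $N(\varepsilon,(0,1),\rho) \asymp \varepsilon^{-2}$, so its Dudley entropy integral $\int_0^\delta \sqrt{\log N(\varepsilon,(0,1),\rho)}\,d\varepsilon$ is finite and tends to $0$ as $\delta\downarrow 0$.

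Each $\mathbb{G}_{\rho_n}$ is a separable Gaussian, hence sub-Gaussian, process for its own semimetric $d_n \le \rho$, so the maximal inequality for sub-Gaussian processes ({\citet[Corollary 2.2.8]{vanderVaart2023}}), whose constant $K$ is universal, yields
$\E \sup_{\rho(s,t)<\delta} |\mathbb{G}_{\rho_n}(s) - \mathbb{G}_{\rho_n}(t)| \le \E \sup_{d_n(s,t)<\delta}|\mathbb{G}_{\rho_n}(s)-\mathbb{G}_{\rho_n}(t)| \le K\int_0^\delta \sqrt{\log N(\varepsilon,(0,1),\rho)}\,d\varepsilon$,
the last bound being uniform in $n$ because $d_n$-packings are $\rho$-packings. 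Markov's inequality then gives $\limsup_n \Pr(\sup_{\rho(s,t)<\delta}|\mathbb{G}_{\rho_n}(s)-\mathbb{G}_{\rho_n}(t)| > \eta) \to 0$ as $\delta\downarrow 0$ for every $\eta>0$, which is asymptotic $\rho$-equicontinuity. Combined with marginal tightness and total boundedness of $((0,1),\rho)$, {\citet[Theorem 1.5.7]{vanderVaart2023}} gives that $\mathbb{G}_{\rho_n}$ is asymptotically tight in $\ell^\infty((0,1))$, completing the proof; together with the marginal convergence (the covariances converge pointwise to $\min\{u,u'\}-uu'$), this also yields $\mathbb{G}_{\rho_n} \rightsquigarrow \mathbb{G}$.

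The main obstacle is obtaining the increment bound $\E[(\mathbb{G}_{\rho_n}(s) - \mathbb{G}_{\rho_n}(t))^2] \le 4|s-t|$ with a constant \emph{independent of $n$} — that is, controlling the $L^2$-increments of the copula kernels $H_{\rho_n}$ uniformly as $\rho_n \to 1$ — since this is exactly what permits a single fixed semimetric $\rho$ and an $n$-free chaining constant. An alternative route would be to use the white-noise representation $\mathbb{G}_{\rho_n}(u) \overset{d}{=} \int_0^1 [H_{\rho_n}(u,v) - u]\,dW(v)$ and mimic the proof of {Theorem \ref{th:tight}} via {Theorem \ref{th::bracketing_clt}}, but the direct Gaussian chaining argument above is shorter and avoids the infinite-sum bookkeeping.
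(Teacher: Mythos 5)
Your proof is correct, and its overall skeleton is the same as the paper's: bound the standard-deviation semimetrics of all the $\mathbb{G}_{\rho_n}$ uniformly in $n$ by a single fixed semimetric of order $|s-t|^{1/2}$, invoke the (sub-)Gaussian maximal inequality with the packing-number comparison $D(\varepsilon,d_n)\leq D(\varepsilon,\rho)$, apply Markov to get asymptotic equicontinuity, add marginal tightness, and conclude with \citet[Theorem 1.5.7]{vanderVaart2023}. Where you genuinely differ is in how the uniform increment bound is obtained. The paper proves the sharper statement $\E[(\mathbb{G}_{\rho_n}(u)-\mathbb{G}_{\rho_n}(v))^2]\leq \E[(\mathbb{G}(u)-\mathbb{G}(v))^2]=|u-v|(1-|u-v|)$, i.e.\ exact domination by the Brownian bridge increments, via Meyer's integral identity for $\min\{u,v\}-C_\rho(u,v)$ and a completing-the-square argument showing $g_r(z_u,z_v)\geq 0$; this is a somewhat delicate bivariate-normal computation. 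You instead exploit {Lemma \ref{app:lem_covariance}} to write the covariance as $\int_0^1[u-H_{\rho}(u,v)][u'-H_{\rho}(u',v)]\,dv$, so that the increment becomes $\int_0^1[(s-t)-(H_{\rho_n}(s,v)-H_{\rho_n}(t,v))]^2dv$, and then only elementary facts about the copula kernel ($H_\rho$ nondecreasing in $u$, bounded by $1$, with $\int_0^1 H_\rho(u,v)\,dv=u$) give a bound of order $|s-t|$ uniformly in $n$ (in fact your computation yields the constant $1$ rather than $4$, since $\int_0^1[H_{\rho_n}(s,v)-H_{\rho_n}(t,v)]^2dv\leq s-t$ and the cross term is $-2(s-t)^2$). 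This is shorter and more transparent than the paper's route and entirely sufficient for tightness, since any constant multiple of $|s-t|^{1/2}$ gives covering numbers of order $\varepsilon^{-2}$ and a vanishing entropy integral; what you give up is only the exact comparison with the Brownian bridge semimetric, which the paper's proof delivers but never needs for this lemma.
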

\begin{proof}
We begin by computing the standard deviation semimetric of $\mathbb{G}_{\rho_n}$:
\begin{align*}
    \E\left[\left(\mathbb{G}_{\rho_n}(u) - \mathbb{G}_{\rho_n}(v) \right)^2\right] &=     \E\left[\mathbb{G}_{\rho_n}(u)^2\right] + \E\left[\mathbb{G}_{\rho_n}(v)^2\right] - 2\E\left[\mathbb{G}_{\rho_n}(u)\, \mathbb{G}_{\rho_n}(v)\right]\\
    &=C_{\rho_n^2}(u,u) - u^2 +  C_{\rho_n^2}(v,v) - {v}^2 - 2\left(C_{\rho_n^2}(u,v) - uv \right)
\end{align*}
From \cite{Meyer2013}, we have the following  property:
\begin{align*}
    \min\{u,v\} - C_{\rho}(u,v) = \int_{\rho}^1 \phi_{2}\left(z_u, z_v; \rho\right)\, dr
\end{align*}
where
\begin{align*}
    \phi_2(z_u,z_v) &= \frac{1}{2\pi\sqrt{1-\rho^2}}\exp\left(-\frac{z_u^2 + z_v^2 - 2\rho z_u z_v}{2(1-\rho^2)}\right).
\end{align*}
For $\mathbb{G} \sim \mathcal{GP}(0,\min\{u,u'\} - uu')$ as the Brownian bridge, we have
\begin{align*}
  &\E\left[\left(\mathbb{G}(u) -\mathbb{G}(v) \right)^2\right] - \E\left[\left(\mathbb{G}_{\rho_n}(u) - \mathbb{G}_{\rho_n}(v) \right)^2\right] =\frac{1}{2\pi} \int_{\rho_n^2}^1 \frac{g_r(z_u,z_v) }{\sqrt{1- r^2}}\, dr
\end{align*}
where
\begin{align*}
   g_r(z_u,z_v) =  \exp\left(- \frac{z_u^2}{1+r}\right) +\exp\left(- \frac{z_v^2}{1+r}\right)- 2\exp\left(-\frac{z_u^2 + z_v^2 - 2r z_u z_v}{2(1-r^2)}\right).
\end{align*}
Completing the square gives
\begin{align*}
   g_r(z_u,z_v) &= \left[\exp\left(- \frac{z_u^2}{2(1+r)}\right) -\exp\left(- \frac{z_v^2}{2(1+r)}\right)\right]^2 +2\left[\exp\left(-\frac{z_u^2 + z_v^2}{2(1+r)}\right)-\exp\left(-\frac{z_u^2 + z_v^2 - 2r z_u z_v}{2(1-r^2)}\right) \right]
\end{align*}
The second term can be written as
\begin{align*}
    &\exp\left(-\frac{(z_u^2 + z_v^2)(1-r)}{2(1-r^2)}\right)-\exp\left(-\frac{z_u^2 + z_v^2 - 2r z_u z_v}{2(1-r^2)}\right) \\
    &=\exp\left(-\frac{(z_u^2 + z_v^2)(1-r)}{2(1-r^2)}\right)\left[1 - \exp\left(-\frac{r(z_u^2+ z_v^2-2z_uz_v)}{2(1-r^2)}\right)\right]\\
    &= \exp\left(-\frac{(z_u^2 + z_v^2)(1-r)}{2(1-r^2)}\right)\left[1 - \exp\left(-\frac{r(z_u - z_v)^2}{2(1-r^2)}\right)\right]
\end{align*}
Since $(z_u - z_v)^2 \geq 0$, we have that the above is non-negative, so $g_r(z_u,z_v) \geq 0$ for all $z_u,z_v \in \R$. This thus gives
\begin{align}\label{app:eq_semimetric_ineq}
\E\left[\left(\mathbb{G}_{\rho_n}(u) - \mathbb{G}_{\rho_n}(v) \right)^2\right]\leq     \E\left[\left(\mathbb{G}(u) - \mathbb{G}(v) \right)^2\right] 
\end{align}
for all $n$. 
Consider the semimetric space $(\mathcal{F} = (0,1),d)$ where $d(u,v) = |u-v|^{1/2}$.
It is clear that $\mathcal{F}$ is totally bounded under this semimetric.
Let $d_{n}$ and $d_\infty$ denote the standard deviation semimetrics of $\mathbb{G}_{\rho_n}$ and $\mathbb{G}$ respectively, which are
\begin{align*}
    d_n^2(u,v) &= \E\left[\left(\mathbb{G}_{\rho_n}(u) - \mathbb{G}_{\rho_n}(v) \right)^2\right]\\
    d_\infty^2(u,v) &= \E\left[\left(\mathbb{G}(u) - \mathbb{G}(v) \right)^2\right]=|u-v|(1-|u-v|).
\end{align*}
As $|u-v| < 1$, we have $d_\infty(u,v)\leq d(u,v)$, which combined with (\ref{app:eq_semimetric_ineq}) gives
\begin{align*}
    d_{n}(u,v) \leq d_\infty(u,v) \leq d(u,v)
\end{align*}
for all $u,v \in(0,1)$ and $n$.

Let $D(\varepsilon,d)$ denote the packing number of the space $(\mathcal{F},d)$.
\citet[Corollary 2.2.9]{vanderVaart2023} states that for $X$ as a separable Gaussian process with $d_X$ as its standard deviation semimetric, we have for every $\delta > 0$:
\begin{align*}
    \E\left[\sup_{d_X(u,v)\leq \delta}|X(u)-X(v)|\right] \leq K \int_0^{\delta} \sqrt{\log D(\varepsilon,d_X)}\, d\varepsilon
\end{align*}
for a universal constant $K$. 

In particular, as we have $d_n(u,v) \leq d(u,v)$, this implies
that $\{u,v:d(u,v) \leq \delta\}\subseteq \{u,v:d_n(u,v) \leq \delta\}$, which gives
\begin{align*}
    \E\left[\sup_{d(u,v)\leq \delta}|\mathbb{G}_{\rho_n}(u)-\mathbb{G}_{\rho_n}(v)|\right] \leq     \E\left[\sup_{d_n(u,v)\leq \delta}|\mathbb{G}_{\rho_n}(u)-\mathbb{G}_{\rho_n}(v)|\right] \leq K \int_0^{\delta} \sqrt{\log D(\varepsilon,\rho_n)}\, d\varepsilon.
\end{align*}
Furthermore, since $d_n(u,v)\leq d(u,v)$ where $d(u,v) = |u-v|^{1/2}$, the packing numbers similarly satisfy
\begin{align*}
    D(\varepsilon, d_n)\leq D(\varepsilon, d). 
\end{align*}
Under the semimetric $d$, the packing number for any $\varepsilon > 0$ satisfies
\begin{align*}
    D(\varepsilon,d) \leq \frac{C}{\varepsilon^2}
\end{align*}
for a universal constant $C$. We thus have
\begin{align*}
    \int_0^{\delta} \sqrt{\log D(\varepsilon,\rho)}\, d\varepsilon &=\int_0^\delta \sqrt{\log C + 2\log \varepsilon^{-1}} \, d\varepsilon\\
    &\leq \sqrt{\int_0^\delta(\log C + 2\log \varepsilon^{-1})\, d\varepsilon} \\
    &= \sqrt{\delta \left[\log C + 2(1 + \log \delta^{-1})\right] }
\end{align*}
where the second line follows from Jensen's inequality. The above can be made arbitrarily small by decreasing $\delta$. Finally, Markov's inequality gives
\begin{align*}
    \limsup_n \Pr\left(\sup_{d(u,v) \leq \delta}|\mathbb{G}_{\rho_n}(u)-\mathbb{G}_{\rho_n}(v)|> \varepsilon\right) &\leq \varepsilon^{-1}\limsup_n \E\left[\sup_{d(u,v) \leq \delta}|\mathbb{G}_{\rho_n}(u)-\mathbb{G}_{\rho_n}(v)|\right]\\
    &\leq \varepsilon^{-1}\sqrt{\delta \left[\log C + 2(1 + \log \delta^{-1})\right] }.
\end{align*}
We can make the right hand side less than any $\eta>0$ by sufficiently decreasing $\delta$, so $\mathbb{G}_{\rho_n}$ is asymptotically uniformly $d$-equicontinuous in probability. 

For each $u\in(0,1)$, uniform tightness of the sequence $\mathbb{G}_{\rho_n}(u) \sim \mathcal{N}(0,C_{\rho_n^2}(u,u) - u^2)$  can be verified with
\begin{align*}
    \sup_n \, \Pr\left(|\mathbb{G}_{\rho_n}(u)| > M \right) \leq \sup_n\frac{\E\left[\mathbb{G}_{\rho_n}(u)^2\right]}{M^2} \leq \frac{u(1-u)}{M^2}\cdot
\end{align*}
For any $\varepsilon > 0$, we can choose $M^2 > u(1-u)/\varepsilon$ which gives $ \sup_n \, \Pr\left(|\mathbb{G}_{\rho_n}(u)| > M \right) < \varepsilon$. From \citet[Theorem 1.5.7]{vanderVaart2023}, the sequence $\mathbb{G}_{\rho_n}$ is asymptotically tight in $\ell^\infty((0,1))$. 
\end{proof}
Finally, the marginals  of $\mathbb{G}_{\rho_n}$ are simply zero-mean Gaussian vectors with covariance matrix with entries $C_{\rho_n^2}(u_i,u_j) - u_i u_j$. Each entry converges pointwise from below to $\min\{u_i,u_j \} - u_i u_j$ as $\rho_n \to 1$, so from Lévy's continuity theorem, the marginals converge to a zero-mean Gaussian vector with the appropriate covariance matrix. From \citet[Theorem 1.5.4]{vanderVaart2023}, $\mathbb{G}_{\rho_n}$ converges weakly to $\mathbb{G}$ in $\ell^\infty((0,1))$.
    
\subsection{Theorem \ref{th:qr_sobolev}}
The additional required assumptions of {Theorem \ref{th:qr_sobolev}} are as follows, which is analogous to {Assumptions \ref{as:L2}} and {\ref{as:weak_deriv}} from the unconditional version.
\begin{as}[Bounded in $L^2$]\label{as:beta_L2}
    For each $j \in \{1,\ldots,p\}$, $\beta_{nj}$ satisfies $\|\beta_{nj}\|_2 < \infty$.
\end{as}
\begin{as}[Weak derivatives bounded in $L^2$]\label{as:beta_weak_deriv} 
    For each $j \in \{1,\ldots,p\}$, $\beta_{nj}$ is weakly differentiable with weak derivative $\beta'_{nj}$ which satisfies $\|\beta'_{nj}\|_2 < \infty$, so $\|\beta_{nj}\|_{1,2} <\infty$.
\end{as}
The proof is an extension of {Theorem \ref{th:sobolev_mart}}, with the additional complication of random covariates $X_{n+1:\infty}$ arising from the Bayesian bootstrap. Let $B' = H^1((0,1))$, and consider a single component $j  \in \{1\ldots,p\}$. We then have the update
\begin{align*}
    \beta_{N+1,j}(u) = \beta_{Nj}(u) + \alpha_{N+1}\left[u - H_{\rho_{N+1}}(u,V_{N+1})\right] X_{N+1,j}
\end{align*}
for each $u \in (0,1)$  and $N \geq n$, where $V_{N+1} \iid \mathcal{U}(0,1)$ and $X_{N+1}\mid X_{1:N} \sim \frac{1}{N}\sum_{i = 1}^{N}\delta_{X_i}$. Our filtration now consists of $\mathcal{F}_N = \sigma\left(X_{n+1}, V_{n+1}, \ldots, X_N, V_N\right)$
 for $N \geq n+1$, with $\mathcal{F}_{n} = \{\emptyset, \Omega\}$ again.

The above is again a pointwise martingale, as we have $\int_0^1 H_\rho(u,v)\, dv = u$ which gives
\begin{align*}
    \E\left[\beta_{N+1,j}(u) \mid \mathcal{F}_{N}\right]&=\E\left[\E\left[\beta_{N+1,j}(u)\mid \mathcal{F}_N, X_{N+1}\right] \mid \mathcal{F}_{N}\right]\\
    &=\E\left[\beta_{Nj}(u)\mid \mathcal{F}_{N}\right]\\
    &= \beta_{Nj}(u).
\end{align*}
Note that the (conditional) distribution of $X_{N+1}$ does not affect the martingale. The argument using continuous bounded functionals in {Lemma \ref{app:lem_bochnerh1}} can be repeated here to show that
$\{\beta_{Nj}\}_{N \geq n+1}$ is a $B'$-valued martingale if $\beta_{Nj} \in B'$. We can then upper bound the $L^2(B')$ norm as in {Theorem \ref{th:sobolev_mart}} with
\begin{align*}
    \|\beta_{Nj}\|_{L^2(B')} \leq  \|\beta_{nj}\|_{1,2} + \sup_{k \in \{1,\ldots,n\}} X_{kj}^2\sum_{i = n+1}^N \frac{\alpha_i^2}{1-\rho_i^2},
\end{align*}
where we have used the fact that $X_{n+1:N}$ will be repeats of $X_{1:n}$ and
$\sup_{k \in \{1,\ldots,n\}} X_{kj}^2$ is finite as we only have finitely many (i.e. $n$) covariate observations.
Under {Assumptions \ref{as:alpha}}, {\ref{as:bandwidth}}, {\ref{as:beta_L2}} and {\ref{as:beta_weak_deriv}}, we thus have $\beta_{Nj} \in B'$ for each $N$ and $\sup_{N \geq n} \|\beta_{Nj}\|_{L^2(B')} < \infty$, so we can apply {Theorem \ref{app:thm_hilbert_mart}}.

We can repeat the above for all components $j \in (1,\ldots,p)$, and as $p$ is finite, the union of the null sets on which convergence does not occur for each component has measure 0, so the vector $\beta_N$ converges to $\beta_\infty$ component-wise a.s. 

We now describe the space of vector functions $\beta_N$ and $\beta_\infty$, which we will need for a later proof. Consider the finite product of Banach spaces 
\begin{align*}
    H^1((0,1))^p := H^1((0,1)) \times \ldots \times H^1((0,1)).
\end{align*}
For a vector $f \in H^1((0,1))^p$, we define the norm of this Banach space as 
\begin{align*}
    \|f\|_{1,2,p} := \sum_{j = 1}^p \|f_j\|_{1,2}.
\end{align*}
It is clear that $\beta_N, \beta_\infty \in H^1((0,1))^p$, and as each component converges a.s., we have
\begin{align*}
    \|\beta_N - \beta_\infty\|_{1,2,p} \to 0 \quad \text{a.s.}
\end{align*}

\subsection{Proposition \ref{prop:condit_quant}}
For an arbitrary $x \in \mathcal{X}$, consider the mapping $h_x: H^1((0,1))^p \to H^1((0,1))$ defined by 
\begin{align*}
h_x(\beta)= \sum_{j = 1}^p \beta_j x_j
\end{align*}
where $x_j\in \R$ is the $j$-th component of $x$. This mapping can be shown to be continuous as follows. Consider a sequence $\beta_N \to \beta_\infty$ in $H^1((0,1))^p$, then we have
\begin{align*}
    \|h_x(\beta_N) - h_x(\beta_\infty)\|_{1,2}      &\leq \sum_{j = 1}^p \|(\beta_{Nj}- \beta_{\infty j}) \, x_j \|_{1,2}\\
    &=\sum_{j = 1}^p |x_j|\, \|\beta_{Nj}- \beta_{\infty j} \, \|_{1,2}\\
    &\leq \sup_{j}|x_j|\, \|\beta_N - \beta_\infty\|_{1,2,p} \to 0,
\end{align*}
where we have applied the triangle inequality in the first step. For $Q_N(\cdot \mid x) := h_x(\beta_N)$ and $Q_\infty(\cdot \mid x) := h_x(\beta_\infty)$ from {Theorem \ref{th:qr_sobolev}}, the continuous mapping theorem then gives
\begin{align*}
    d_{1,2}\left(Q_N(\cdot \mid x), Q_\infty(\cdot \mid x)\right) \to 0 \quad \text{a.s.}
\end{align*}
From {Theorem \ref{app:thm_sobolev}}, we have that  $Q_N^{\dagger}(\cdot \mid x)$ and $Q_\infty(\cdot \mid x)$ are in $H^1((0,1))$. We can thus apply {Proposition \ref{app:prop_L2}} to give 
\begin{align*}
        d_{2}(Q^{\dagger}_N(\cdot \mid x), Q^{\dagger}_\infty(\cdot \mid x)) \to 0 \quad \text{a.s.}
\end{align*}

\subsection{Proposition \ref{prop:linear}}
    For each $x \in \mathcal{X}$,  we can once again apply {Lemma \ref{app:lem_equi}} with $h(x) = x$, which gives
\begin{align*}
\int_0^1 Q^{\dagger}_\infty(u \mid x)\, du&= \int_0^1 Q_\infty(u \mid x)\, du\\
&=\int_0^1 \beta_\infty(u)^T x \, du \quad \textnormal{a.s.}
\end{align*}
Linearity of expectation then gives $\E_\infty\left[Y \mid x\right] = \left[\int_0^1 \beta_\infty(u) \, du \right]^T x$, which is a (random) linear function in $x$ a.s.

\subsection{Theorem \ref{th:reg_gp}}

We follow the same strategy as the proof for {Theorem \ref{th:gp}}. We will use {Theorem \ref{th::bracketing_clt}} to verify asymptotic tightness, and then we establish marginal convergence. To reduce clutter, we will suppress the conditioning on the weights in the notation. Define
\begin{equation*}
    C_{X} = \max_{\substack{i \in \{1,\ldots, n\}\\j \in \{1, \ldots, p\}}}|X_{ij}|.
\end{equation*}

\begin{thm}\label{th:tight}
The sequence of functions $\sqrt{N}S_N$ is  asymptotically tight in $\ell^{\infty}(\mathcal{F})$ with probability 1.
\end{thm}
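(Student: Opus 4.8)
The plan is to verify the three hypotheses of the bracketing CLT with infinite sums (Theorem~\ref{th::bracketing_clt}), following the template of the unconditional tightness argument in the proof of Theorem~\ref{th:gp}; the only new ingredient is that the covariate factors $X_{ij}$ are uniformly bounded by the deterministic constant $C_X$ defined above, which is where the finiteness of the observed sample and the Bayesian bootstrap construction enter: conditional on $w_{1:n}$ and $X_{1:n}$, every resampled $X_i$ ($i\geq n+1$) is one of the $n$ observed vectors, so $|X_{ij}|\leq C_X$ always. I take the index set $\mathcal{F}=(0,1)\times\{1,\dots,p\}$ with the (totally bounded) semimetric $d(\{u_1,j_1\},\{u_2,j_2\})=|u_1-u_2|^{1/2}+\mathbbm{1}(j_1\neq j_2)$, and set $Z_{Ni}(u,j)=\sqrt{N}\,\alpha_i\,H_{\rho_i}(u,V_i)\,X_{ij}$ for $i\geq N$. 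Since $\mathbb{E}[H_{\rho_i}(u,V_i)]=u$ and the pairs $(X_i,V_i)$, $i\geq n+1$, are independent conditional on $w_{1:n},X_{1:n}$, we have $\sqrt{N}S_N=-\sum_{i=N}^{\infty}(Z_{Ni}-\mathbb{E}Z_{Ni})$ as an infinite sum of independent processes, exactly the object covered by Theorem~\ref{th::bracketing_clt}.

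For the Lindeberg condition I use the envelope $\|Z_{Ni}\|_{\mathcal{F}}\leq\sqrt{N}\,\alpha_i\,C_X\leq aC_X/\sqrt{N}$ (by Assumption~\ref{as:alpha}), which is below any fixed $\eta>0$ for all large $N$, uniformly in $i\geq N$, so the Lindeberg sum is eventually zero. For the $L^2$-modulus condition, given $f=(u_1,j)$, $g=(u_2,j)$ with $d(f,g)<\delta_N$ one has $|u_1-u_2|<\delta_N^2$, and monotonicity of $H_{\rho_i}(\cdot,v)$ together with $\mathbb{E}[H_{\rho_i}(u,V_i)]=u$ gives $\mathbb{E}[(H_{\rho_i}(u_1,V_i)-H_{\rho_i}(u_2,V_i))^2]\leq|u_1-u_2|$; pairs with $j_1\neq j_2$ are excluded once $\delta_N<1$. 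Hence $\sup_{d(f,g)<\delta_N}\sum_{i=N}^{\infty}\mathbb{E}[(Z_{Ni}(f)-Z_{Ni}(g))^2]\leq C_X^2\,\delta_N^2\,N\sum_{i=N}^{\infty}\alpha_i^2\to0$, since $N\sum_{i=N}^{\infty}\alpha_i^2$ is bounded by Assumption~\ref{as:alpha}. For the entropy integral, partition each slice $(0,1)\times\{j\}$ into $O(\varepsilon^{-2})$ intervals of length $<\varepsilon^2/(2C_X^2 a^2)$; the same estimate shows that within each partitioning set $\sum_{i=N}^{\infty}\mathbb{E}\sup_{f,g}|Z_{Ni}(f)-Z_{Ni}(g)|^2<\varepsilon^2$ for $N$ large, so $\mathfrak{N}_{[]}(\varepsilon,\mathcal{F},L^{2,N})\lesssim p\,\varepsilon^{-2}$ and $\int_0^{\delta_N}\sqrt{\log\mathfrak{N}_{[]}(\varepsilon,\mathcal{F},L^{2,N})}\,d\varepsilon\to0$. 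Theorem~\ref{th::bracketing_clt} then delivers asymptotic tightness of $\sqrt{N}S_N$ in $\ell^{\infty}(\mathcal{F})$.

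Finally, the qualifier ``with probability 1'' refers only to the conditioning on the Bayesian bootstrap weights: all constants above ($C_X$ and the bound on $N\sum_{i=N}^{\infty}\alpha_i^2$) are deterministic given $w_{1:n},X_{1:n}$, so the argument runs verbatim for every realization of $w_{1:n}$ (indeed for any fixed nonnegative weight vector), giving tightness almost surely. I do not expect a genuine obstacle here; the work is a mechanical extension of the unconditional case, with the crude bound $|H_{\rho_i}|\leq1$ replaced by $|H_{\rho_i}X_{ij}|\leq C_X$ and a harmless $\log p$ term entering the entropy from the discrete component of $\mathcal{F}$. The mildly delicate point, if any, is whether the partitions can be taken independent of $N$ (which would make the middle hypothesis of Theorem~\ref{th::bracketing_clt} redundant); since the cut points depend only on $\varepsilon$, $C_X$ and $a$, they can, but I verify the middle condition directly regardless.
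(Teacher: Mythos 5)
There is a genuine gap at the very first step: your choice $Z_{Ni}(u,j)=\sqrt{N}\,\alpha_i\,H_{\rho_i}(u,V_i)\,X_{ij}$ does not reproduce the process you need. In the regression setting the covariates $X_{i}$, $i\geq n+1$, are themselves random (i.i.d.\ draws from $\sum_k w_k\delta_{X_k}$ given $w_{1:n}$), so $\E Z_{Ni}(u,j)=\sqrt{N}\alpha_i\,u\,\sum_k w_kX_{kj}$ and
\begin{align*}
-\sum_{i=N}^{\infty}\bigl(Z_{Ni}(u,j)-\E Z_{Ni}(u,j)\bigr)
=\sqrt{N}\sum_{i=N}^{\infty}\alpha_i\Bigl(u\textstyle\sum_k w_kX_{kj}-H_{\rho_i}(u,V_i)X_{ij}\Bigr),
\end{align*}
which differs from $\sqrt{N}S_N(u,j)=\sqrt{N}\sum_{i\geq N}\alpha_i\bigl(uX_{ij}-H_{\rho_i}(u,V_i)X_{ij}\bigr)$ by the term $u\,\sqrt{N}\sum_{i\geq N}\alpha_i\bigl(X_{ij}-\sum_k w_kX_{kj}\bigr)$. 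This remainder is not negligible: its variance is of order $N\sum_{i\geq N}\alpha_i^2\,\mathrm{Var}_w(X_{\cdot j})\to a^2\,\mathrm{Var}_w(X_{\cdot j})$, so your claimed identity $\sqrt{N}S_N=-\sum_i(Z_{Ni}-\E Z_{Ni})$ is false, and the process whose tightness you establish even has a different limiting covariance ($a^2\{\min(u,u')\E[X_jX_{j'}]-uu'\,\E X_j\E X_{j'}\}$ rather than $a^2\,\E[X_jX_{j'}](\min\{u,u'\}-uu')$). This is precisely why the paper's proof defines $Z_{Ni}(u,j)=\sqrt{N}\alpha_i\bigl(H_{\rho_i}(u,V_i)-u\bigr)X_{ij}$, noting explicitly that, unlike the unconditional case, the $-u$ term cannot be dropped because the randomness in the covariates must be accounted for; with that definition $\E Z_{Ni}=0$ (by independence of $V_i$ and $X_i$) and the centred sum really is $\sqrt{N}S_N$.

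The remainder of your argument is structurally the same as the paper's (same index set and semimetric, same $C_X$ envelope, same Lindeberg and entropy computations, and the same correct observation that everything holds for every realization of $w_{1:n}$), but it needs to be rerun for the corrected $Z_{Ni}$: the envelope becomes $2\sqrt{N}\alpha_iC_X$, and the $L^2$-modulus and bracketing bounds acquire an additional deterministic increment $\sqrt{N}\alpha_iX_{ij}(u_1-u_2)$, handled in the paper via $(a+b)^2\leq 2a^2+2b^2$, which only changes constants. Alternatively you could keep your $Z_{Ni}$ and separately prove asymptotic tightness of the leftover drift-fluctuation process $u\,\sqrt{N}\sum_{i\geq N}\alpha_i(X_{ij}-\sum_kw_kX_{kj})$ (it is $u$ times a scalar sequence that is bounded in probability) and add the two, but as written your proof establishes tightness of the wrong process rather than of $\sqrt{N}S_N$.
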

\begin{proof}
    We verify the assumptions in {Theorem \ref{th::bracketing_clt}}. Our semimetric space is $(\mathcal{F} = (0,1)\times \{1,\ldots, p\}, d)$, where $d((u_{1},j_{1}),(u_{2},j_{2})) = |u_{1}-u_{2}|^{1/2} + \mathbbm{1}(j_{1} \neq j_{2})$. We have used the discrete metric on $\{1,\ldots, p\}$ and then specified the sum of the two semimetrics to define the semimetric product space. Clearly, this semimetric space is totally bounded.

First define $Z_{Ni}(u, j) = \sqrt{N}\alpha_{i}(H_{\rho_{i}}(u,V_{i})-u)X_{ij}$ for all $N$ and $i \geq N$. Note that this definition differs in nature to the non-regression case (where the $-u$ term can be omitted) because the randomness in the covariates must be accounted for. A trivial envelope function for $Z_{Ni}$ is $F_{Ni} = 2\sqrt{N}\alpha_{i}C_{X}$. We need to verify the Lindeberg condition
\begin{equation*}
    \sum_{i=N}^{\infty} F_{Ni}\mathbbm{1}\{F_{Ni} > \eta\} \rightarrow 0
\end{equation*}
for every $\eta > 0$. Since $F_{Ni} < 2cC_{X}/\sqrt{N}$, we will have $F_{Ni} < \eta$ for all sufficiently large $N$. Thus, the Lindeberg condition holds.

Next we need 
\begin{equation*}
    \sup_{d((u_{1},j_{1}),(u_{2},j_{2})) < \delta_{N}} \sum_{i=N}^{\infty} \E\left\{Z_{Ni}(u_{1})-Z_{Ni}(u_{2})\right\}^{2} \rightarrow 0
\end{equation*}
for every $\delta_{N} \downarrow 0$. For all sufficiently large $N$, we must have $\delta_{N} < 1$, in which case
\begin{equation*}
    d((u_{1},j_{1}),(u_{2},j_{2})) < \delta_{N} \implies j_{1} = j_{2} \quad \text{and} \quad |u_{1} - u_{2}|^{1/2} < \delta_{N}.
\end{equation*}
Let $u_{1} > u_{2}$ with $|u_{1} - u_{2}|^{1/2}  < \delta_{N}$. We have
\begin{align*}
     \E\left\{Z_{Ni}(u_{1}, j)-Z_{Ni}(u_{2},j)\right\}^{2}&\leq  2\E\left\{\sqrt{N}\alpha_{i}X_{ij}H_{\rho_{i}}(u_{1},V_{i})-\sqrt{N}\alpha_{i}X_{ij}H_{\rho_{i}}(u_{2},V_{i})\right\}^{2} \\
     &+ 2\E\left\{\sqrt{N}\alpha_{i}X_{ij}u_{1}-\sqrt{N}\alpha_{i}X_{ij}u_{2}\right\}^{2}\\
     &\leq  2N\alpha_{i}^{2}C_{X}^{2}\E\left\{H_{\rho_{i}}(u_{1},V_{i})-H_{\rho_{i}}(u_{2},V_{i})\right\}^{2} \\
     &+ 2N\alpha_{i}^{2}C_{X}^{2}\delta_{N}^{2}
\end{align*}

Note that $H_{\rho_{i}}(u,V_{i})$ is non-decreasing in $u$ with $H_{\rho_{i}}(0,V_{i}) = 0$ and $H_{\rho_{i}}(1,V_{i}) = 1$. So
\begin{align*}
    \E\left\{H_{\rho_{i}}(u_{1},V_{i})-H_{\rho_{i}}(u_{2},V_{i})\right\}^{2} &\leq \E\left\{H_{\rho_{i}}(u_{1},V_{i})-H_{\rho_{i}}(u_{2},V_{i})\right\} \\
    &= (u_{1}-u_{2})\\
    &< \delta_{N}^{2}.
\end{align*}
Thus, for all sufficiently large $N$ such that $\delta_{N} < 1$, we have
\begin{equation*}
    \sup_{d((u_{1},j_{1}),(u_{2},j_{2})) <\delta_{N}} \sum_{i=N}^{\infty}\E\left\{Z_{Ni}(u_{1}, j_{1})-Z_{Ni}(u_{2}, j_{2})\right\}^{2} < 4N\delta_{N}^{2} C_{X}^{2}\sum_{i=N}^{\infty} \alpha_{i}^{2}.
\end{equation*}
Since $\limsup_{N \rightarrow \infty} N \sum_{i=N}^{\infty} \alpha_{i}^{2} \leq c^{2}$, the right-hand side of the above display tends to zero for any $\delta_{N} \downarrow 0$.

Finally, we need to verify the bracketing entropy integral condition. Fix $j \in \{1,\ldots, p\}$ for the time being. For all sufficiently large $n$, we will have $N \sum_{i=N}^{\infty} \alpha_{i}^{2} \leq 2c^{2}$. Given $\varepsilon > 0$, choose a partition $0 = u_{0} < u_{1} < \ldots < u_{M} = 1$ such that $u_{k} - u_{k-1} < \varepsilon^{2}/(8C_{X}^{2}c^{2})$ for every $k$. The number of points in the partition can be chosen to be smaller than a constant times $1/\varepsilon^{2}$. Then
\begin{align*}
    \E \sup_{u_{k-1} \leq s,t < u_{k}} |Z_{Ni}(s,j) - Z_{Ni}(t,j)|^{2} &< 2N\alpha^{2}_{i}C_{X}^{2}\E\left\{H_{\rho_{i}}(u_{k},V_{i})-H_{\rho_{i}}(u_{k-1},V_{i})\right\}^{2}\\
    &+ 2N\alpha^{2}_{i}C_{X}^{2}\E \sup_{u_{k-1} \leq s,t < u_{k}} |s-t|^{2} \\
    &< 2N\alpha^{2}_{i}C_{X}^{2} \E\left\{H_{\rho_{i}}(u_{k},V_{i})-H_{\rho_{i}}(u_{k-1},V_{i})\right\}\\
    &+ 2N\alpha^{2}_{i}C_{X}^{2}(u_{k} - u_{k-1}) \\
    &< \frac{N \alpha_{i}^{2}\varepsilon^{2}}{2c^{2}}.
\end{align*}
We deduce that for all sufficiently large $N$,
\begin{equation*}
    \sum_{i=N}^{\infty}\E \sup_{u_{j-1} \leq s,t < u_{j}} |Z_{Ni}(s) - Z_{Ni}(t)|^{2} < \varepsilon^{2}.
\end{equation*}
In other words, $\mathfrak{N}_{[]}(\varepsilon, \mathcal{F}, L_{2,n}) \lesssim \frac{1}{\varepsilon^{2}}$, which verifies the entropy condition. 

Thus, {Theorem \ref{th::bracketing_clt}} implies that
\begin{equation*}
    -\sum_{i=N}^{\infty} Z_{Ni}(u,j) = \sqrt{N}\sum_{i=N}^{\infty}\alpha_{i}(u - H_{\rho_{i}}(u,V_{i}))X_{ij}
\end{equation*}
is asymptotically tight in $\ell^{\infty}(\mathcal{F})$.
\end{proof}

\begin{prop}
    If $(u_{1},\ldots,u_{d}) \in (0,1)^{d}$ and $(j_{1},\ldots, j_{d}) \in \{1,\ldots,p\}^{d}$, then
    \begin{equation*}
        \sqrt{N}[S_{N}(u_{1},j_{1}), \ldots, S_{N}(u_{d},j_{d})]^{T} \mid w_{1:n}\xrightarrow[]{d} \mathcal{N}(0,a^{2}\Sigma)
    \end{equation*}
    as $N \rightarrow \infty$, where $    \Sigma_{l,m} = \left[\sum_{k=1}^{n}w_{k}X_{kj_l}X_{kj_m}\right](\min\{u_l,u_m\} - u_l u_m)$.
\end{prop}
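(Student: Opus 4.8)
The plan is to prove the finite-dimensional convergence by the Cram\'er--Wold device, reducing it to a one-dimensional triangular-array central limit theorem that runs in close parallel to {Proposition \ref{app:prop_marginal}}; the only genuinely new feature compared with the unconditional case is the presence of the random covariates. Throughout I condition on $\mathcal{G}_n := \sigma(w_{1:n},X_{1:n})$, and the two structural facts I would use are: (i) the pairs $(V_i,X_i)$ for $i > n$ are i.i.d.\ given $\mathcal{G}_n$, with $V_i \sim \mathcal{U}(0,1)$ independent of $X_i \sim \sum_{k=1}^{n} w_k \delta_{X_k}$, so the summands defining $S_N$ are conditionally independent across $i$; and (ii) $\E[(u - H_{\rho_i}(u,V_i))X_{ij}\mid\mathcal{G}_n] = 0$, since $\int_0^1 H_\rho(u,v)\,dv = u$ and $V_i \perp X_i$.

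Fix $\mathbf{t} = (t_1,\dots,t_d)^\top \in \R^d$ and set
\[
\tilde Z_{Ni} = \sqrt{N}\,\alpha_i \sum_{l=1}^{d} t_l\big(H_{\rho_i}(u_l,V_i) - u_l\big)X_{ij_l},
\]
so that $\sqrt{N}\sum_{l} t_l S_N(u_l,j_l) = -\sum_{i=N}^{\infty} \tilde Z_{Ni}$ with each $\tilde Z_{Ni}$ conditionally mean-zero. Writing $C_X = \max_{i \le n,\, j}|X_{ij}|$, the bound $|H_\rho(u,v)-u|\le 1$ gives $|\tilde Z_{Ni}| \le C_X\big(\sum_l|t_l|\big)\sqrt{N}\,\alpha_i \le C_X\big(\sum_l|t_l|\big)\,a/\sqrt{N+1}$ for $i \ge N$, exactly as in the preceding asymptotic tightness argument, so the Lindeberg condition is satisfied trivially for large $N$. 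For the limiting variance, conditional independence and $V_i \perp X_i$ give
\[
\text{Var}\big[\tilde Z_{Ni}\mid\mathcal{G}_n\big] = N\alpha_i^2 \sum_{l,m} t_l t_m\, \E\!\big[(u_l - H_{\rho_i}(u_l,V_i))(u_m - H_{\rho_i}(u_m,V_i))\big]\Big(\sum_{k=1}^{n} w_k X_{kj_l}X_{kj_m}\Big),
\]
and the inner expectation equals $C_{\rho_i^2}(u_l,u_m) - u_l u_m$ by {Lemma \ref{app:lem_covariance}}. Truncating the sum at $m_N$ with $N/m_N = o(1)$ and using $\alpha_i^2 = a^2(i+1)^{-2}$, the prefactor $\sum_{i=N}^{m_N} N(i+1)^{-2} \to 1$ by the integral test while $C_{\rho_i^2}(u_l,u_m) \to \min\{u_l,u_m\}$ (Section \ref{app:sec_copula}); the same triangle-inequality estimate as in {Proposition \ref{app:prop_marginal}}, now carrying the bounded extra factor $\sum_k w_k X_{kj_l}X_{kj_m}$, then yields $\sum_{i=N}^{m_N}\text{Var}[\tilde Z_{Ni}\mid\mathcal{G}_n] \to a^2\,\mathbf{t}^\top \Sigma\, \mathbf{t}$.

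With both conditions in hand, the Lindeberg--Feller CLT gives $\sum_{i=N}^{m_N}\tilde Z_{Ni} \overset{d}{\to} \mathcal{N}(0, a^2\mathbf{t}^\top\Sigma\,\mathbf{t})$ conditionally on $\mathcal{G}_n$. For the tail, since $C_{\rho_i^2}(u,u') - uu' \le 1$ and $\sum_k w_k X_{kj_l}X_{kj_m} \le C_X^2$, one has $\sum_{i=m_N+1}^{\infty}\text{Var}[\tilde Z_{Ni}\mid\mathcal{G}_n] = O(N/m_N) = o(1)$, so Chebyshev's inequality makes $\sum_{i=m_N+1}^{\infty}\tilde Z_{Ni}$ negligible in probability; Slutsky's lemma then delivers $\sqrt{N}\sum_l t_l S_N(u_l,j_l)\overset{d}{\to}\mathcal{N}(0,a^2\mathbf{t}^\top\Sigma\,\mathbf{t})$, and since $\mathbf{t}$ is arbitrary the Cram\'er--Wold device finishes the proof (and the argument is valid for every admissible realisation of $w_{1:n}$, hence a.s.). I do not anticipate a serious obstacle, as this is essentially {Proposition \ref{app:prop_marginal}} with a bounded multiplicative factor; the one point demanding care is the covariance bookkeeping --- within a single summand $X_{ij_l}$ and $X_{ij_m}$ come from the \emph{same} bootstrap draw, so their conditional cross-moment is $\sum_k w_k X_{kj_l}X_{kj_m}$ rather than a product of marginals, and it is precisely this coupling, together with the shared $V_i$ resolved via {Lemma \ref{app:lem_covariance}} (noting $(u_l - H_{\rho_i})(u_m - H_{\rho_i}) = (H_{\rho_i} - u_l)(H_{\rho_i} - u_m)$ so the lemma applies verbatim), that produces the stated $\Sigma$.
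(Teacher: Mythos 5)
Your proposal is correct and follows essentially the same route as the paper: Cram\'er--Wold reduction, the Lindeberg--Feller CLT applied to the truncated sum up to $m_N$ with $N/m_N = o(1)$, computation of the conditional variance via {Lemma \ref{app:lem_covariance}} (giving $C_{\rho_i^2}(u_l,u_m)-u_lu_m$ times the weighted cross-moment $\sum_k w_k X_{kj_l}X_{kj_m}$), and a Chebyshev/variance bound showing the tail beyond $m_N$ is negligible. Your observation about the coupling of $X_{ij_l}$ and $X_{ij_m}$ within a single bootstrap draw is exactly the bookkeeping the paper's proof performs.
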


\begin{proof}
    Fix an arbitrary vector $\mathbf{t} = (t_{1},\ldots, t_{d}) \in \mathbb{R}^{d}$ and we study the convergence of
\begin{align*}
     \sqrt{N}\sum_{l=1}^d \, t_l \, S_N(u_l,j_l).
\end{align*}
Consider
\begin{align*}
    Z_{Ni}[\mathbf{t}] = \sqrt{N}\,\alpha_{i} \sum_{l=1}^d \, t_l \, (H_{\rho_{i}}(u_l,V_{i})-u_{l})X_{ij}
\end{align*}
which has expectation $0$. We have that
\begin{align*}
    |Z_{Ni}[\mathbf{t}]| \leq 2aC_{X}\frac{\sqrt{N}}{i}\sum_{l=1}^d  |t_j |,
\end{align*}
so for every $\varepsilon > 0$, $|Z_{Ni}[\mathbf{t}]| <\varepsilon$ eventually, which verifies the Lindeberg condition for the Lindeberg-Feller CLT. For the limiting variance, we first have
\begin{align*}
   \E[Z_{Ni}[\mathbf{t}]^2] &= a^2 \frac{N}{i^{2}} \sum_{l=1}^d \sum_{m = 1}^d t_l t_m \left[\sum_{k=1}^{n}w_{k}X_{kj_l}X_{kj_m}\right]\\
    &\quad \left(\E\left[H_{\rho_{i}}(u_j,V_{i})\, H_{\rho_{i}}(u_k,V_{i})\right] - u_{l}u_{m}\right)
\end{align*}
If we take $N/m_{N} = o(1)$, then
\begin{align*}
    \sum_{i=N}^{N+m_{N}}\E[Z_{Ni}[\mathbf{t}^2] \to \sigma^2 =  a^2  \sum_{l=1}^d \sum_{m = 1}^d t_l t_m \left[\sum_{k=1}^{n}w_{k}X_{kj_l}X_{kj_m}\right]\left(\min\{u_{l},u_{m}\} - u_{l}u_{m}\right)
\end{align*}
via similar computations to the non-regression case. So we can apply the Lindeberg-Feller CLT to the sequence of sums up to $m_{N}$. We check that the tail sums from $m_{N}+1$ onwards are asymptotically negligible: 
\begin{align*}
    \sum_{i = m_{N}+1}^{\infty} \text{Var}[Z_{Ni}[\mathbf{t}]] &= N \sum_{i =  m_{N} + 1}^\infty \alpha_{i}^2 \sum_{l=1}^d \sum_{m = 1}^d t_l t_m \left[\sum_{k=1}^{n}w_{k}X_{kj_l}X_{kj_m}\right] \left(C_{\rho_{i}^{2}}(u_l,u_m) - u_l u_m\right)\\
    & \leq \left|\sum_{l=1}^d \sum_{m = 1}^d t_l t_m C_{X}^{2}\right| N \sum_{i = m_{N} + 1}^\infty \alpha_{i}^2 \to 0.
\end{align*}
Finally, the Cram\'{e}r-Wold device gives us
\begin{align*}
    \sqrt{N}[S_{N}(u_{1},j_{1}), \ldots, S_{N}(u_{d},j_{d})]^{T} \overset{d}{\to} \mathcal{N}(\mathbf{0}, a^2 \Sigma)
\end{align*}
where $    \Sigma_{l,m} = \left[\sum_{k=1}^{n}w_{k}X_{kj_l}X_{kj_m}\right](\min\{u_l,u_m\} - u_l u_m)$.
\end{proof}

\subsection{Proposition \ref{prop:QR_coverage}}
We require the following assumption on the covariance matrix of the covariates.
\begin{as}[Covariance matrix of covariates]\label{app:as_covariates_cov} 
The covariance matrix of the covariate distribution $\Sigma_x = \int_{\mathcal{X}} x x^T \, dP^*(x)$ is positive definite, and all elements are finite. 
\end{as}

We follow a similar approach to {Proposition \ref{prop:mu}}. Let $Y_i,X_i \iid P^*(y,x)$, and $\beta_n$ is computed by {Algorithm \ref{alg:reg_fit}}.
Following the same argument as in {Theorem \ref{th:consistency}} for each component of $\beta_n$, one can see that $\beta_n \in H^1((0,1))^p$ for each $n$, thus satisfying {Assumptions \ref{as:beta_L2}} and {\ref{as:beta_weak_deriv}}. This allows us to apply {Theorem \ref{th:qr_sobolev}} giving the existence of $\beta_{n\infty}$. To construct the probability space, we can again consider a single sequence $V_1,V_2,\ldots$ of uniform r.v.s.
For each $n$, consider the Bayesian bootstrap starting with $X_{1:n}$, i.e. we have
\begin{align*}
    w_{1:n}&\sim \text{Dir}(1,\ldots,1), \quad 
    X_{nN} \mid w_{1:n}, X_{1:n} \iid \sum_{i = 1}^n w_i \delta_{X_i}
\end{align*}
for $N \geq n+1$, where the additional subscript $n$ on $X$ indicates how many `real' observations we start predictive resampling from. 
We can then define for each $n \geq 1$
\begin{align*}
    \beta_{n\infty}(u) = \beta_n(u) + \sum_{i = n+1}^\infty \alpha_i \left(u - H_{\rho_i}(u,V_i)\right)\, X_{ni}.
\end{align*}

We now consider the posterior distribution of the mean functional $\bar{\beta}_{n\infty}$ where
\begin{align*}
\bar{\beta}_{n\infty}, = \int_0^1 \beta_{n\infty}(u) \, du.
\end{align*}
Fubini's theorem gives 
\begin{align*}
\E[\bar{\beta}_{n\infty}\mid Y_{1:n},X_{1:n}] = \bar{\beta}_n.
\end{align*}
The posterior covariance matrix is thus
\begin{align*}
    \E[b_n b_n^T \mid Y_{1:n},X_{1:n}] = \E\left[\left(\sum_{i = n+1}^\infty \alpha_i Z_i\, X_{ni}\right)\left(\sum_{i = n+1}^\infty \alpha_i Z_i\, X_{ni}\right)^T \Big| Y_{1:n},X_{1:n}\right]
\end{align*}
where we write $b_n = \bar{\beta}_{n\infty}- \bar{\beta}_n$ and $Z_i = \int_0^1\left(u - H_{\rho_{i}}(u,V_i)\right)\, du$ for shorthand.
Let us first condition on $w_{1:n}$, which gives
\begin{align*}
 \E[b_n b_n^T\mid w_{1:n},Y_{1:n},X_{1:n}] = 
\sum_{i = n+1}^\infty \sum_{j = n+1}^\infty \alpha_i\alpha_j \E\left[ Z_iZ_j\, X_{ni}X_{nj}^T \mid w_{1:n}, Y_{1:n},X_{1:n}\right].
\end{align*}
    
As $Z_i$ and $Z_j$ are independent for $i \neq j$, and the covariates are independent from the uniform r.v.s, the cross-terms are all 0, so the above simplifies to
\begin{align*}
    \sum_{i = n+1}^\infty \alpha_i^2\,  \E[Z_i^2] \,\E[X_{ni} X_{ni}^T \mid w_{1:n},  Y_{1:n},X_{1:n}].
\end{align*}
As before, we have
\begin{align*}
    \E[Z_i^2] = \int_0^1\int_0^1 \, C_{\rho_i^2}(u,v)\, du \,dv - \frac{1}{4}
\end{align*}
and now we have the additional term
\begin{align*}
   \E[X_{ni} X_{ni}^T \mid w_{1:n},X_{1:n}] = \sum_{i = 1}^n w_i X_{i}X_i^T.
\end{align*}
This gives
\begin{align*}
     \E[b_nb_n^T \mid w_{1:n}, Y_{1:n},X_{1:n}] = \left[\sum_{i = 1}^n w_i X_i X_i^T\right]\sum_{i = n+1}^\infty \alpha_i^2 \, \E[Z_i^2].
\end{align*}
The tower property then gives
\begin{align*}
    \E[b_nb_n^T\mid Y_{1:n},X_{1:n}] = \left[\frac{1}{n}\sum_{i = 1}^n X_iX_i^T\right]\sum_{i = n+1}^\infty \alpha_i^2 \, \E[Z_i^2].
\end{align*}
We can then scale this by ${n}$ and take the limit, giving us
\begin{align*}
    \lim_{n\to \infty}{n}\,\E[b_nb_n^T\mid Y_{1:n},X_{1:n} ] 
    &= \lim_{n\to \infty}\left[\frac{1}{n}\sum_{i = 1}^n X_iX_i^T\right] \lim_{n \to \infty} \left[n\sum_{i = n+1}^\infty \alpha_i^2 \, \E[Z_i^2]\right]\\    
    &= \frac{a^2}{12}\Sigma_x \quad P^*\text{-a.s.}
\end{align*}

\section{Additional results}

\subsection{Asymptotic distribution of mean functional}\label{app:sec_munormal}
For the mean functional, we only computed its posterior mean and asymptotic variance in {Proposition \ref{prop:mu}}. We can actually extend this in the unconditional case and quantify the asymptotic distribution of
$\mu_{n\infty}- \mu_n$ due to it being a sum of independent terms. This could potentially lead the way to future Bernstein-von Mises for functionals of the QMP. However, as mentioned in the main paper, quantifying the distribution of $\mu_n$ is more challenging.
\begin{prop}\label{prop:mu_normal}
   Let $\mu_n = \int Q^{\dagger}_n(u) \, du$ for $\{Q^{\dagger}_n\}_{n \geq 1}$ from {Algorithm \ref{alg:fit}},  and $\mu_{n\infty} = \int_0^1 Q_{n\infty}^{\dagger}(u)\, du$ where $Q_{n\infty}^{\dagger}$ arises from {Algorithm \ref{alg:QMP}} starting from $Q^{\dagger}_n$. Under {Assumptions \ref{as:alpha}} and {\ref{as:bandwidth}}, we have
 \begin{align*}
   \sqrt{n}\left(\mu_{n\infty} - \mu_n\right)\overset{d}{\to}\mathcal{N}(0, a^2/12)
\end{align*}
\end{prop}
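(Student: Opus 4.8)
The plan is to observe, exactly as in the proof of Proposition \ref{prop:mu}, that the integral-preserving property of increasing rearrangement (Lemma \ref{app:lem_equi} with $h(x)=x$) turns the centered functional into a sum of independent mean-zero scalars:
\[
\mu_{n\infty} - \mu_n \;=\; \int_0^1 \big(Q_{n\infty}(u) - Q_n^{\dagger}(u)\big)\, du \;=\; \sum_{i=n+1}^\infty \alpha_i Z_i, \qquad Z_i := \int_0^1 \big(u - H_{\rho_i}(u,V_i)\big)\, du,
\]
where $V_i \iid \mathcal{U}(0,1)$, $\E[Z_i]=0$, and $|Z_i|\le 1$. Since this depends only on the $V_i$, the claimed convergence is automatically both conditional on $Y_{1:n}$ and unconditional. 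The statement therefore reduces to a CLT for $\sqrt{n}\sum_{i=n+1}^\infty \alpha_i Z_i$, which I would prove via the Lindeberg--Feller CLT in the same triangular-array form already used in Proposition \ref{app:prop_marginal}.

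First I would truncate: fix any sequence $m_n$ with $n/m_n=o(1)$ and set $Z_{ni}=\sqrt{n}\,\alpha_i Z_i$ for $n+1\le i\le m_n$. The Lindeberg condition is trivial because $|Z_{ni}|\le \sqrt{n}\,\alpha_i\le a\sqrt{n}/(n+2)\to 0$ uniformly over $i\ge n+1$, so for every $\varepsilon>0$ the truncated sum $\sum_i \E[Z_{ni}^2\mathbbm{1}\{|Z_{ni}|>\varepsilon\}]$ is eventually identically $0$. For the limiting variance, $\sum_{i=n+1}^{m_n}\E[Z_{ni}^2]=n\sum_{i=n+1}^{m_n}\alpha_i^2\,\E[Z_i^2]$, and since $\E[Z_i^2]=\int_0^1\int_0^1 C_{\rho_i^2}(u,v)\,du\,dv-\tfrac14\to\tfrac1{12}$ (Lemma \ref{app:lem_covariance} plus dominated convergence, computed in Proposition \ref{prop:mu}) while $n\sum_{i=n+1}^\infty\alpha_i^2\to a^2$, the same Cesàro-type estimate as in Proposition \ref{prop:mu} gives $\sum_{i=n+1}^{m_n}\E[Z_{ni}^2]\to a^2/12$. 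Lindeberg--Feller then yields $\sqrt{n}\sum_{i=n+1}^{m_n}\alpha_i Z_i\overset{d}{\to}\mathcal{N}(0,a^2/12)$.

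Finally I would dispatch the discarded tail: $\mathrm{Var}\big(\sqrt{n}\sum_{i=m_n+1}^\infty\alpha_i Z_i\big)=n\sum_{i=m_n+1}^\infty\alpha_i^2\,\E[Z_i^2]=O(n/m_n)=o(1)$, using $\E[Z_i^2]\le 1$ and the integral test, so this tail tends to $0$ in probability by Chebyshev; Slutsky's theorem then delivers $\sqrt{n}(\mu_{n\infty}-\mu_n)\overset{d}{\to}\mathcal{N}(0,a^2/12)$. I do not expect a genuine obstacle here: the summation identity, the variance limit, and the $\E[Z_i^2]$ computation are all already assembled in Proposition \ref{prop:mu}, and the only new ingredient is a Lindeberg verification that is essentially free because the summands are uniformly $O(n^{-1/2})$. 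The sole bookkeeping point requiring care is the truncation needed to pass from an infinite sum to an honest triangular array, handled exactly as in the proof of Proposition \ref{app:prop_marginal}.
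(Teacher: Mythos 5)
Your proposal is correct, but it follows a different route from the paper's own argument. The paper proves this proposition by invoking Theorem \ref{th:gp}: it writes $\sqrt{n}(\mu_{n\infty}-\mu_n)=\int_0^1\sqrt{n}\,S_n(u)\,du$, applies the continuous mapping theorem with the (continuous) integral functional $h:\ell^\infty((0,1))\to\R$, $h(f)=\int_0^1 f(u)\,du$, and then identifies the law of $h(\mathbb{G}_a)$ as $\mathcal{N}(0,a^2/12)$ via the covariance computation and a Riemann-sum argument for Gaussianity of the integrated process. You instead bypass the functional weak convergence entirely: you exploit the series representation $\mu_{n\infty}-\mu_n=\sum_{i>n}\alpha_i Z_i$ (already justified by the $L^2$ convergence and the computations in Proposition \ref{prop:mu}) and run the scalar Lindeberg--Feller CLT on the truncated triangular array $Z_{ni}=\sqrt{n}\,\alpha_i Z_i$, $n+1\le i\le m_n$ with $n/m_n=o(1)$, disposing of the tail by Chebyshev and Slutsky --- exactly the truncation device used in Proposition \ref{app:prop_marginal}. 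All the ingredients you cite check out: $|Z_{ni}|\le a\sqrt{n}/(n+2)\to 0$ makes Lindeberg trivial, $\E[Z_i^2]\to 1/12$ and $n\sum_{i>n}\alpha_i^2\to a^2$ give the variance $a^2/12$, and the discarded tail has variance $O(n/m_n)$. The paper itself notes in its proof that the scalar Lindeberg--Feller route is available, so your approach is the acknowledged elementary alternative: it is self-contained and avoids empirical-process machinery, whereas the paper's route is shorter given that Theorem \ref{th:gp} is already in hand and extends immediately to other continuous (e.g. general bounded linear) functionals of $Q_\infty - Q_n$ rather than just the mean.
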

\begin{proof}
We will extend the proof of {Proposition \ref{prop:mu}} by applying {Theorem \ref{th:gp}}, although we highlight that one can also prove the above using the standard Lindeberg-Feller CLT for scalar r.v.s.
In order to apply {Theorem \ref{th:gp}}, we will leverage the specific construction for $Q_{n\infty}^{\dagger}$ as in the proof of {Theorem \ref{th:consistency}}. Let us define
 \begin{align*}
     S_N(u) = \sum_{i = N+1}^\infty \alpha_i \left[u - H_{\rho_i}\left(u,V_i\right)\right]
 \end{align*}
 for $n \geq 1$, where $V_i \iid \mathcal{U}(0,1)$. We then define
 \begin{align*}
     Q_{n\infty} = Q_n^{\dagger} + S_n,
 \end{align*}
 for $n \geq 1$. For each $n$, $Q_{n\infty}$ has the same distribution as that induced by {Algorithm \ref{alg:QMP}}, but is not independent across $n$. However, this is inconsequential as we are studying a weak limit. This gives a corresponding sequence of random means:
 \begin{align*}
     \mu_{n\infty} = \int_0^1 Q_{n\infty}(u) \, du.
 \end{align*}
 We then clearly have
 \begin{align*}
    \sqrt{n}(\mu_{n\infty} - \mu_n) &=\int_0^1 \sqrt{n}\, S_n(u)\, du.
\end{align*}
 Since the weak limit of $\sqrt{N}S_N$ is independent of the initial estimate $Q_n^{\dagger}$, we can apply {Theorem \ref{th:gp}} directly. This gives $\sqrt{n} S_n \to \mathbb{G}_a$ weakly in $\ell^\infty((0,1))$, where $\mathbb{G}_a$ is a zero-mean GP with covariance function
$\E\left[\mathbb{G}_a(u)\, \mathbb{G}_a(u^{\prime})\right] = a^2 (\min\{u,u^{\prime}\} - uu^{\prime})$.

Consider now the integral operator $h: \ell^\infty((0,1)) \to \R$ where $h(f) = \int_0^1 \, f(u) \, du$. It is not too hard to see that this is a continuous function, as for any sequence $f_n \to f_\infty$ for $f_n,f_\infty \in \ell^\infty((0,1))$, we have
\begin{align*}
    |h(f_n) - h(f_\infty)| \leq \int_0^1 |f(u) - f_\infty(u)|\, du \leq \|f_n - f_\infty\|_{\infty} \to 0.
\end{align*}
The continuous mapping theorem then gives us
\begin{align*}
\sqrt{n}(\mu_{n\infty} - \mu_n) \overset{d}{\to} h(\mathbb{G}_a).
\end{align*}
We now show that $h(\mathbb{G}_a) \sim \mathcal{N}(0,a^2/12)$. It is clear that $h(\mathbb{G}_a)$ has mean zero, and its variance is
\begin{align*}
   \E\left[h(\mathbb{G}_a)^2 \right] 
   &=
    \E\left[\int_0^1 \int_0^1\mathbb{G}_a(u) \, \mathbb{G}_a(u')\, du \, du'\right]
   \\&=\int_0^1 \int_0^1 \E\left[\mathbb{G}_a(u) \, \mathbb{G}_a(u')\right]\, du \, du'\\
  &= a^2\int_0^1 \int_0^1\left[\min\{u,u'\}- uu'\right] du \, du' \\
  &= \frac{a^2}{12}\cdot
\end{align*}
The normality of $h(\mathbb{G}_a)$ then follows from an approximating Riemann sum argument (e.g. \cite[Chapter 8.3]{Hassler2016}) as sample paths of the Brownian motion are continuous a.s. 
\end{proof}

\subsection{Frequentist consistency for quantile regression}\label{app:sec:QR_consistency}
In this section, we outline a posterior consistency result for the QMP for quantile regression with $\rho = 1$. This setting lends itself more easily to a consistent initial estimate. Consider the QMP with the updates
\begin{align}\label{eq:beta_consist1}
    \beta_{N+1}(u) &=\beta_N(u) + \alpha_{N+1} \left[u - G_{N+1}(\beta_N,Y_{N+1},X_{N+1})\right]\, X_{N+1}\\
        G_{N}(\beta,Y,X)&= \begin{cases}
         \mathbbm{1}\left(Y \leq \beta(u)^TX\right)\quad &\text{for } N \leq n\\
            \mathbbm{1}\left(Y \leq Q^{\dagger}(u \mid X)\right)\quad &\text{for } N \geq n+1
       \end{cases}\label{eq:beta_consist2}
\end{align}
where $Q^{\dagger}(u \mid X)$ is the increasing rearrangement of $\beta(u)^TX$.
The difference between the update for the initial estimate and predictive resampling is subtle but important for both consistency and the martingale.
To derive the latter form, note that
\begin{align*}
   \lim_{\rho \to 1} H_\rho(u,V_{N+1}) = \mathbbm{1}\left(V_{N+1}\leq u\right) = \mathbbm{1}\left(Y_{N+1} \leq Q_{N}^{\dagger}(u \mid X_{N+1})\right)  
\end{align*}
which is obtained by applying the proper quantile function $Q_{N}^{\dagger}(\cdot \mid X_{N+1})$ to both sides of the inequality. The martingale under predictive resampling is thus preserved in this case. 
For the `real data' update however, we opt to use the standard stochastic approximation estimate of $\beta_n$, as it is non-trivial to derive a copula-smoothed version of the above initial estimate (i.e. an equivalent version of (\ref{eq:rearr_quantile_copula}) for $\beta_n^{\dagger}(u)$). Consider now the following assumptions on the data generating distribution.
\begin{as}[Covariate distribution]\label{ass:covariate}
 $P^*(x)$ has compact support and the covariance matrix $\Sigma_x = \int_{\mathcal{X}} x x^T \, dP^*(x)$ is positive definite. 
\end{as}
\begin{as}[Linear quantiles]\label{ass:linear}
 There exists some true function $\beta^*(u)$ such that the quantile function corresponding to $P^*(\cdot \mid x)$ takes the form $Q^*(u \mid x)= \beta^*(u)^Tx$ for all $x$ in the support of $P^*(x)$.
\end{as}
\begin{as}[Lipschitz continuity]\label{ass:reg_lipschitz}
There exists a finite $L$ such that
\begin{align*}
    \sup_{j \in \{1,\ldots,p\}} |\beta^*_j(u) - \beta^*_{j}(u^{\prime})| \leq L|u - u^{\prime}|
\end{align*}
where $\beta^*_j(u)$ is the $j$-th component of the vector $\beta^*(u)$. Assume that the initial vector function $\beta_0$ also satisfies the above Lipschitz condition.
\end{as}
 We now define the norms to study the conditional quantile function
\begin{align*}
    {d}^2_{2,x}(Q^*(\cdot \mid x),Q(\cdot \mid x)) = \int_{\mathcal{X}} d^2_2(Q^*(\cdot \mid x),Q(\cdot \mid x))\, dP^*(x)
\end{align*}
which is the covariate average $L^2$ distance between the conditional quantiles, and also
\begin{align*}
    {d}^{2}_{2,p}(\beta^*,\beta) = \int_0^1 \left(\beta^*(u) - \beta(u)\right)^T\left(\beta^*(u) - \beta(u)\right)\, du.
\end{align*}
We have a standard result from stochastic approximation arguments:
\begin{prop}
         Under {Assumptions \ref{ass:covariate}}, {\ref{ass:linear}} and {\ref{ass:reg_lipschitz}}, we have that
             ${d}_{2,p}(\beta^*,\beta_n) \to 0$ a.s.$[P^*]$
          as $n \to \infty$ under (\ref{eq:beta_consist1}) and (\ref{eq:beta_consist2}) with the $N\leq n$ form of $G_{N}$.
\end{prop}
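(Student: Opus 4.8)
The plan is to mimic the almost-supermartingale argument used for the unconditional case (Theorem~\ref{th:consistency_mean}), but working with the covariate-averaged quantile distance and carefully exploiting the $\rho=1$ structure, which removes the bandwidth error term $\kappa_n$ entirely. First I would set up the filtration $\mathcal{F}_i = \sigma((Y_1,X_1),\ldots,(Y_i,X_i))$ and expand the squared distance recursively. Because the update for $N\le n$ uses $G_N(\beta,Y,X)=\mathbbm{1}(Y\le\beta(u)^TX)$ and $\beta_n(u)^TX$ need not be monotone in $u$, the relevant error is that $\E[\mathbbm{1}(Y_{n}\le\beta_{n-1}(u)^TX_n)\mid\mathcal{F}_{n-1},X_n] = P^*(\beta_{n-1}(u)^TX_n\mid X_n)$ rather than something cleanly comparable to $u$; so I would track the vector quantity $\beta_n(u)$ directly (not its rearrangement) and use the check-loss convexity structure. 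Define $L_n = d_{2,p}^2(\beta^*,\beta_n) = \int_0^1 \|\beta^*(u)-\beta_n(u)\|^2\,du$. Expanding one step of \eqref{eq:beta_consist1} and integrating over $u$ gives
\begin{align*}
L_n = L_{n-1} + 2\alpha_n\int_0^1 (\beta^*(u)-\beta_{n-1}(u))^T X_n\big[u - \mathbbm{1}(Y_n\le\beta_{n-1}(u)^TX_n)\big]\,du + \alpha_n^2 \int_0^1 \big[u-\mathbbm{1}(\cdots)\big]^2 \|X_n\|^2\,du,
\end{align*}
and the last term is bounded by $\alpha_n^2 C_X^2$ using compact support of $P^*(x)$ (Assumption~\ref{ass:covariate}).

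The key step is to take the conditional expectation of the cross term and show it is non-positive up to summable errors. Conditioning on $X_n$ first, the inner expectation of the bracket is $u - P^*(\beta_{n-1}(u)^TX_n\mid X_n)$, so the cross term becomes $-2\alpha_n\, T_n$ where
\begin{align*}
T_n = \int_{\mathcal{X}}\int_0^1 (\beta_{n-1}(u)-\beta^*(u))^T x\,\big[P^*(\beta_{n-1}(u)^Tx\mid x) - u\big]\,du\,dP^*(x).
\end{align*}
Here the monotonicity-in-$u$ argument from Theorem~\ref{th:consistency_mean} still works \emph{pointwise in $x$}: applying $Q^*(\cdot\mid x)$ (which is increasing under Assumption~\ref{ass:linear}) to the sign of $\beta_{n-1}(u)^Tx - Q^*(u\mid x)$ shows $(\beta_{n-1}(u)^Tx - \beta^*(u)^Tx)$ and $(P^*(\beta_{n-1}(u)^Tx\mid x) - u)$ have the same sign, so the integrand is non-negative and $T_n\ge 0$. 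Thus $L_n$ is a genuine supermartingale: $\E[L_n\mid\mathcal{F}_{n-1}]\le L_{n-1} - 2\alpha_n T_n + \alpha_n^2 C_X^2$, with $\sum\alpha_n^2<\infty$ by Assumption~\ref{as:alpha}. Theorem~\ref{app:thm_asmart} gives $L_n\to L_\infty<\infty$ a.s. and $\sum_n\alpha_n T_n<\infty$ a.s., hence $\liminf_n T_n=0$ a.s. since $\sum\alpha_n=\infty$.

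Finally I would show $T_n$ dominates $L_n$ up to a constant, forcing $L_\infty=0$. This is where I expect the main obstacle. Unlike the scalar case, $T_n$ is a quadratic-form-type quantity $\int\int (\beta_{n-1}(u)-\beta^*(u))^Tx\,[P^*(\beta_{n-1}(u)^Tx\mid x)-u]$, and to lower-bound it by a multiple of $L_n$ I need two ingredients: (i) a Lipschitz-type lower bound on $|P^*(a\mid x)-P^*(b\mid x)| \ge c|a-b|$ for $a,b$ in a bounded range — this follows from a strictly positive, bounded conditional density, which should be extracted from the Lipschitz assumption on $\beta^*$ (Assumption~\ref{ass:reg_lipschitz}) combined with a boundedness-away-from-degeneracy argument, and from noting $\beta_{n-1}$ stays in a bounded set (provable inductively from the Lipschitz initial $\beta_0$ and the bounded increments); and (ii) positive-definiteness of $\Sigma_x$ (Assumption~\ref{ass:covariate}) to convert $\int_{\mathcal X}\|(\beta_{n-1}(u)-\beta^*(u))^Tx\|^2\,dP^*(x) = (\beta_{n-1}(u)-\beta^*(u))^T\Sigma_x(\beta_{n-1}(u)-\beta^*(u)) \ge \lambda_{\min}(\Sigma_x)\|\beta_{n-1}(u)-\beta^*(u)\|^2$. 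Combining, $T_n \ge c\,\lambda_{\min}(\Sigma_x)\int_0^1\|\beta_{n-1}(u)-\beta^*(u)\|^2\,du = c\,\lambda_{\min}(\Sigma_x)\,L_{n-1}$. Evaluating along the subsequence on which $T_{n_j}\to 0$ then gives $L_{n_j}\to 0$, and since $L_n$ converges a.s., $L_\infty=0$ a.s.$[P^*]$, which is the claim. The subtlety to be careful about is that the relevant range of $\beta_{n-1}(u)^Tx$ is random but a.s.\ bounded uniformly in $n,u,x$, so the constant $c$ in (i) can be chosen uniformly; I would make this boundedness precise via a separate lemma showing $\sup_{n,u}\|\beta_n(u)\|<\infty$ a.s., using $\|\beta_n(u)-\beta_0(u)\|\le\sum_i\alpha_i\|X_i\|\cdot|\text{bracket}|$, which is $O(\sum\alpha_i)$ and diverges — so actually this needs the supermartingale convergence to be invoked first to get $L_n$ bounded, then bound $\|\beta_n(u)\|\le\|\beta^*(u)\|+\sqrt{L_n}\cdot(\text{something})$; I would instead argue boundedness in an $L^2(du)$-averaged sense, which is exactly $\sqrt{L_n}+\|\beta^*\|_{2,p}<\infty$, and carry the density lower bound through the $du$-integral rather than pointwise in $u$.
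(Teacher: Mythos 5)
Your skeleton coincides with the paper's proof up to and including the Robbins--Siegmund step: the same recursive expansion of $L_n=d^2_{2,p}(\beta^*,\beta_n)$, the same bound on the quadratic remainder via compactness of $\mathcal{X}$ (the paper uses $V^2=\E[X^TX]$ rather than $C_X^2$, an immaterial difference), the same identification of the drift as $-2\alpha_n g(\beta^*,\beta_{n-1})$ with $g(\beta^*,\beta)=\int_{\mathcal X}\int_0^1(Q^*(u\mid x)-\beta(u)^Tx)(u-P^*(\beta(u)^Tx\mid x))\,du\,dP^*(x)\ge 0$ by the pointwise-in-$x$ monotonicity argument, and the same appeal to Theorem~\ref{app:thm_asmart} to get $L_n\to L_\infty$ and $\sum_n\alpha_n g(\beta^*,\beta_{n-1})<\infty$ a.s. (you correctly note that $\rho=1$ removes the $\kappa_n$ error term entirely). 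You also correctly plan to use $\lambda_{\min}(\Sigma_x)>0$ from Assumption~\ref{ass:covariate} to pass from the conditional-quantile distance back to $d^2_{2,p}$.

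The gap is the last step, $g(\beta^*,\beta_{n-1})\gtrsim L_{n-1}$, which you acknowledge and do not close. Your route via a reverse-Lipschitz bound $|P^*(a\mid x)-P^*(b\mid x)|\ge c|a-b|$ forces you to control where $\beta_{n-1}(u)^Tx$ lies, and neither of your repairs works: the increment bound $\sum_i\alpha_i\|X_i\|$ diverges (as you note), and the ``$L^2(du)$-averaged'' fallback does not help because the flatness of $P^*(\cdot\mid x)$ outside its compact support is a pointwise obstruction --- if the iterate exits the support on a $u$-set of positive measure, no averaging restores $|u-P^*(\beta_{n-1}(u)^Tx\mid x)|\ge c\,|Q^*(u\mid x)-\beta_{n-1}(u)^Tx|$ there. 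The paper instead argues on the quantile side: Assumptions~\ref{ass:covariate} and~\ref{ass:reg_lipschitz} give that $u\mapsto Q^*(u\mid x)=\beta^*(u)^Tx$ is $M$-Lipschitz uniformly in $x$ with $M=C_X\,p\,L$, and substituting $u'=P^*(\beta_{n-1}(u)^Tx\mid x)$ into this Lipschitz bound yields $|Q^*(u\mid x)-\beta_{n-1}(u)^Tx|\le M\,|u-P^*(\beta_{n-1}(u)^Tx\mid x)|$, hence $g(\beta^*,\beta_{n-1})\ge M^{-1}\int_0^1(\beta^*(u)-\beta_{n-1}(u))^T\Sigma_x(\beta^*(u)-\beta_{n-1}(u))\,du\ge \lambda_{\min}(\Sigma_x)M^{-1}L_{n-1}$, with no boundedness lemma for the iterates at all; combined with $\sum_n\alpha_n g(\beta^*,\beta_{n-1})<\infty$ and $\sum_n\alpha_n=\infty$ this forces $L_\infty=0$. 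To be fair to your instinct: that substitution implicitly uses the identity $Q^*(P^*(y\mid x)\mid x)=y$ at $y=\beta_{n-1}(u)^Tx$, which is exactly the support issue you flagged, and the paper asserts the resulting inequality without further comment (mirroring the unconditional proof of Theorem~\ref{th:consistency_mean}); so the concern is legitimate, but your proposed fixes do not resolve it, whereas the paper's quantile-side phrasing delivers the explicit constant $M$ and reduces the matter to that single identity.
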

\begin{proof}
The $L^2$ distance can be expanded recursively:
\begin{align*}
    d^2_{2,p}\left(\beta^*, \beta_{n+1}\right) &= d^2_{2,p}\left(\beta^*, \beta_{n}\right) - 2 \alpha_{n+1} \int_0^1  \left(\beta^*(u) - \beta_n(u)\right)^T X_{n+1}\left(u - \mathbbm{1}\left(Y_{n+1} \leq \beta_n(u)^T X_{n+1}\right)\,\right) \, du\\
    &+\alpha_{n+1}^2 \int_0^1 \left(u - \mathbbm{1}\left(Y_{n+1} \leq \beta_n(u)^T X_{n+1}\right)\right)  du \, X_{n+1}^T X_{n+1} \
\end{align*}
Taking the conditional expectation of the above given $\mathcal{F}_n = \sigma(Z_{1},\ldots,Z_n)$ for $Z_i = (Y_i,X_i)$ gives
\begin{align*}
  \E\left[d^2_{2,p}\left(\beta^*, \beta_{n+1}\right) \mid \mathcal{F}_n \right]  &\leq d^2_{2,p}\left(\beta^*, \beta_{n}\right) -  2 \alpha_{n+1}  g(\beta^*,\beta_n)+ \alpha_{n+1}^2 V^2
\end{align*}
where $V^2 = \E[X^T X]$ for $X \sim P^*(x)$ which is finite by {{Assumption \ref{ass:covariate}}}, and 
\begin{align*}
    g(\beta^*,\beta) &=\int_{\mathcal{X}} \, \int_0^1  \left(\beta^*(u) - \beta(u)\right)^T x\left(u - P^*\left(\beta(u)^T x \mid x\right)\right) \, du\, dP^*(x)\\
    &= \int_{\mathcal{X}}\int_0^1 \left(Q^*(u \mid x) -Q(u \mid x)\right)\left(u - P^*\left(Q(u \mid x)\mid x\right)\right)\,du \, dP^*(x).
\end{align*}
For each value of $(u,x)$, we have that 
\begin{align*}
    \left(Q^*(u \mid x) -Q(u \mid x)\right)\left(u - P^*\left(Q(u \mid x) \mid x\right)\right) \geq 0.
\end{align*}
which follows as $P^*(\cdot  \mid x)$ is monotonic, so we have
\begin{align*}
     \left(Q^*(u \mid x) -Q(u \mid x)\right)\geq 0 \implies \left(u - P^*\left(Q(u \mid x) \mid x\right)\right) \geq 0.
\end{align*}
We thus have $g(\beta^*, \beta) \geq 0$  and $V^2\sum_{n =1}^\infty{\alpha_{n+1}^2} < \infty$ which gives us the almost supermartingale from {Theorem \ref{app:thm_asmart}}. We thus have
\begin{align*}
    d^2_{2,p}\left(\beta^*, \beta_n\right) \to d_\infty \quad \text{a.s.}, \quad 
    \sum_{n = 1}^\infty \alpha_n g\left(\beta^*,\beta_n\right) < \infty \quad \text{a.s }
\end{align*}

We now seek to show $d_\infty = 0$ a.s.
Let $C_X$ denote the magnitude of the maximum value of $\mathcal{X}$ in all dimensions which is finite by {{Assumption \ref{ass:covariate}}}. Note that we have
\begin{align*}
    \left|Q^*(u \mid x) - Q^*(u' \mid x)\right|  &= \left|\left(\beta^*(u) - \beta^{*}(u')\right)^T x\right| \\
    &\leq C_X\sum_{j = 1}^p \left|\beta^*_j(u) - \beta^*_j(u')\right|\\
    &\leq C_X\, p\,L \, |u - u'|
\end{align*}
where $L$ is the Lipschitz constant from {Assumption \ref{ass:reg_lipschitz}}.
We thus have $M= C_XpL<\infty$ such that
\begin{align*}
   |Q^*(u \mid x) - Q(u \mid x)| \leq   M \left|u - P^*\left(Q(u \mid x) \mid x\right)\right|
\end{align*}
where we have plugged in $u' = P^*\left(Q(u \mid x)\mid x\right)$, and $M$ is chosen uniformly over $x$. Then this gives
\begin{align*}
    g(\beta^*,\beta) \geq M^{-1}\int_{\mathcal{X}}\int_0^1  \left(Q^*(u \mid x) - Q(u \mid x)\right)^2 \, \, du \, dP^*(x)
\end{align*}
With the above, we have
\begin{align*}
g(\beta^*,\beta) 
    &\geq M^{-1}\int_{\mathcal{X}} \int_0^1\left(\beta^*(u) - \beta(u)\right)^Tx x^T\left(\beta^*(u) - \beta(u)\right) \,du \,  dP^*(x)\\
    &= M^{-1}\int_0^1 \left(\beta^*(u) - \beta(u)\right)^T\Sigma_x\left(\beta^*(u) - \beta(u)\right) \, du
\end{align*}
where $\Sigma = \int x x^T dP^*(x)$ and we have used Tonelli's theorem.
As $\Sigma_x$ is positive definite by {Assumption \ref{ass:covariate}}, we have that 
\begin{align*}
    \frac{x^T \Sigma_x x}{x^Tx} \geq \lambda_{\text{min}} > 0
\end{align*}
where $\lambda_{\text{min}}$ is the minimum eigenvalue of $\Sigma_x$. Therefore, we have that there exists $\varepsilon = \lambda_{\text{min}}M^{-1} >0$ such that
\begin{align*}
    g(\beta^*,\beta_n) \geq \varepsilon \,  d^2_{2,p}\left(\beta^*,\beta_{n}\right).
\end{align*}
As $\sum_{n = 1}^\infty \alpha_n g(\beta^*, \beta_n) < \infty$ a.s., this then ensures that $d_\infty = 0$ a.s. by the usual argument.
\end{proof}

We can show that consistency of $\beta_n$ implies consistency of the conditional quantiles.
\begin{cor}
 Under {Assumptions \ref{ass:covariate}}, {\ref{ass:linear}} and {\ref{ass:reg_lipschitz}}, we have that
             ${d}_{2,x}(Q^*(\cdot \mid x), Q_n(\cdot \mid x)) \to 0$ a.s.$[P^*]$
          as $n \to \infty$ under (\ref{eq:beta_consist1}) and (\ref{eq:beta_consist2}) with the $N\leq n$ form of $G_{N}$.
\end{cor}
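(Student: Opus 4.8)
The plan is to deduce the claim directly from the convergence ${d}_{2,p}(\beta^*,\beta_n)\to 0$ a.s.$[P^*]$ established in the preceding proposition, via a single quadratic-form identity. Write $Q_n(u\mid x)=\beta_n(u)^T x$ and, using Assumption \ref{ass:linear}, $Q^*(u\mid x)=\beta^*(u)^T x$. Then by definition
\[
{d}^2_{2,x}\big(Q^*(\cdot\mid x),Q_n(\cdot\mid x)\big)=\int_{\mathcal{X}}\int_0^1\big(\,(\beta^*(u)-\beta_n(u))^T x\,\big)^2\,du\,dP^*(x).
\]
The integrand is nonnegative and measurable, so Tonelli's theorem lets me interchange the order of integration; carrying out the inner integration over $x$ and using $(a^Tx)^2 = a^T x x^T a$ gives
\[
{d}^2_{2,x}\big(Q^*(\cdot\mid x),Q_n(\cdot\mid x)\big)=\int_0^1 (\beta^*(u)-\beta_n(u))^T\,\Sigma_x\,(\beta^*(u)-\beta_n(u))\,du,
\]
with $\Sigma_x=\int_{\mathcal{X}} x x^T\,dP^*(x)$.

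The second step is to invoke Assumption \ref{ass:covariate}: since $\Sigma_x$ is positive definite with finite entries it has a finite largest eigenvalue $\lambda_{\max}<\infty$, so $v^T\Sigma_x v\le \lambda_{\max}\|v\|^2$ for all $v\in\R^p$. Taking $v=\beta^*(u)-\beta_n(u)$ and integrating over $u\in(0,1)$,
\[
{d}^2_{2,x}\big(Q^*(\cdot\mid x),Q_n(\cdot\mid x)\big)\le \lambda_{\max}\int_0^1 \|\beta^*(u)-\beta_n(u)\|^2\,du=\lambda_{\max}\,{d}^2_{2,p}(\beta^*,\beta_n).
\]
The right-hand side is a deterministic multiple of a quantity that tends to $0$ almost surely $[P^*]$ by the preceding proposition, so on that probability-one event the left-hand side tends to $0$ as well, which is exactly the conclusion.

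There is essentially no hard step in this argument; the only point requiring mild care is checking finiteness of the quantities involved — $\Sigma_x$ has finite entries by Assumption \ref{ass:covariate}, and $\int_0^1\|\beta^*(u)-\beta_n(u)\|^2\,du$ is finite because $\beta^*$ is Lipschitz on $[0,1]$ (Assumption \ref{ass:reg_lipschitz}) and $\beta_n$ is bounded, being $\beta_0$ (Lipschitz) plus the uniformly bounded increments of the recursion (\ref{eq:beta_consist1}). As a remark, combining the displayed bound with the rearrangement inequality (Proposition \ref{prop:rearr}), applied for each fixed $x$ where $Q^*(\cdot\mid x)$ is already a proper quantile function so that $d_2(Q^*(\cdot\mid x),Q_n^{\dagger}(\cdot\mid x))\le d_2(Q^*(\cdot\mid x),Q_n(\cdot\mid x))$, and then integrating over $x$, also yields ${d}_{2,x}(Q^*(\cdot\mid x),Q_n^{\dagger}(\cdot\mid x))\to 0$ a.s., i.e. consistency of the implicit rearranged conditional quantile estimate.
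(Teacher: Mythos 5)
Your argument is correct and is essentially the paper's own proof: both rewrite the covariate-averaged $L^2$ distance via Tonelli as $\int_0^1 (\beta^*(u)-\beta_n(u))^T\Sigma_x(\beta^*(u)-\beta_n(u))\,du$ and then bound it by $\lambda_{\max}\, d^2_{2,p}(\beta^*,\beta_n)$, with $\lambda_{\max}<\infty$ guaranteed by the compact covariate support, before invoking the preceding proposition. The concluding remark about the rearranged estimate is a nice (and valid) bonus but not needed for the statement.
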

\begin{proof}
First, we write
\begin{align*}
    {d}^2_{2,x}\left(Q^*(\cdot \mid x),Q_n(\cdot \mid x)\right) &= \int_{\mathcal{X}} \int_0^1 \left(Q^*(u \mid x) - Q_n(u \mid x)\right)^2 \, du\,  dP^*(x)\\
    &= \int_{\mathcal{X}} \int_0^1 \left(\beta^*(u) - \beta_n(u)\right)^T xx^T \left(\beta^*(u) - \beta_n(u)\right) \, du\,  dP^*(x)\\
    &= \int_0^1 \left(\beta^*(u) - \beta_n(u)\right)^T \Sigma_x \left(\beta^*(u) - \beta_n(u)\right) \, du
\end{align*}
which looks familiar. Using the other side of the inequality for Rayleigh's quotient, we have that
\begin{align*}
    \int_0^1 \left(\beta^*(u) - \beta_n(u)\right)^T \Sigma_x \left(\beta^*(u) - \beta_n(u)\right) \, du \leq  \lambda_{\text{max}}\int_0^1 \left(\beta^*(u) - \beta_n(u)\right)^T \left(\beta^*(u) - \beta_n(u)\right) \, du 
\end{align*}
where $\lambda_{\text{max}}$ is the maximum eigenvalue of the covariance matrix, which is bounded due to compact support from {Assumption \ref{ass:covariate}}. We then have the desired result as ${d}^2_{2,x}\left(Q^*(\cdot \mid x),Q_n(\cdot \mid x)\right) \leq \lambda_{\text{max}} \, d_{2,p}^2(\beta^*,\beta_n) \to 0$ a.s.
\end{proof}

Posterior consistency can then be showed as follows, where we work directly with $Q^{\dagger}(u \mid x)$ instead of $\beta_\infty(u)$ due to need to take into account the rearrangement operator. The setup is the same as {Theorem \ref{th:consistency}}, where we extend the probability space as before, with the additional ingredients of a vector of weights $w_{1:n}$ and random covariates $\{X_{n,n+1},X_{n,n+1},\ldots\}$ for each $n$ arising from the Bayesian bootstrap.
\begin{thm}
    Under {Assumptions \ref{ass:covariate}}, {\ref{ass:linear}} and {\ref{ass:reg_lipschitz}}, the QMP with $\rho = 1$ as in (\ref{eq:beta_consist1}) and (\ref{eq:beta_consist2}) is consistent, that is for any $\varepsilon > 0$, we have that
    \begin{align*}
    \Pi\left({d}^2_{2,x}(Q^*(\cdot \mid x),Q_{n\infty}^{\dagger}(\cdot \mid x)) \geq \varepsilon \mid Y_{1:n}, X_{1:n}\right) \to 0 \quad \textnormal{a.s.}[P^*]
    \end{align*}
 \end{thm}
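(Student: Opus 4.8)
\begin{proof_outline}
The argument runs parallel to {Theorem \ref{th:consistency}}, now in the Hilbert space $L^2((0,1))^p$ and with the covariate-averaged metrics $d_{2,x}$ and $d_{2,p}$ in place of their unconditional counterparts. First I would establish the existence of $\beta_{n\infty}$. Under the $\rho=1$ predictive-resampling update (\ref{eq:beta_consist1})--(\ref{eq:beta_consist2}), the increment of the $j$-th component is $\alpha_{N+1}\big(u - \mathbbm{1}(V_{N+1}\leq u)\big)X_{N+1,j}$, which as a function of $u$ is essentially bounded, hence lies in $L^2((0,1))$, and has pointwise conditional mean zero (since $\int_0^1 \mathbbm{1}(v\leq u)\,dv = u$ and the covariate draw is independent of $V_{N+1}$). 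The scalarization argument of {Lemma \ref{app:lem_bochnerl2}} then shows $\{\beta_{Nj}\}_{N\geq n}$ is an $L^2((0,1))$-valued martingale, and $\sup_{N\geq n}\E\big[\|\beta_{Nj}\|_2^2\big]\leq \|\beta_{nj}\|_2^2 + \big(\sup_{k\leq n}X_{kj}^2\big)\sum_{i>n}\alpha_i^2 < \infty$ because $\int_0^1(u-\mathbbm{1}(v\leq u))^2\,du\leq 1$, there are only finitely many covariate observations, and $\sum_i\alpha_i^2<\infty$. {Theorem \ref{app:thm_hilbert_mart}} gives $\beta_{Nj}\to\beta_{\infty j}$ a.s.\ in $L^2((0,1))$; taking the union of the exceptional null sets over the finitely many $j$ yields $\beta_N\to\beta_\infty$ componentwise a.s. By the continuous-mapping step of {Proposition \ref{prop:condit_quant}}, $Q_{n\infty}(\cdot\mid x) := \beta_\infty(\cdot)^T x$ is the $L^2$-limit of $Q_N(\cdot\mid x)=\beta_N(\cdot)^T x$ for each $x$, and $Q_{n\infty}^{\dagger}(\cdot\mid x)$ is its increasing rearrangement.

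Next I would apply Markov's inequality and the rearrangement contraction. Since $Q^*(\cdot\mid x)=\beta^*(\cdot)^Tx$ is already a proper monotone quantile function by {Assumption \ref{ass:linear}}, the $L^2$ version of {Proposition \ref{prop:rearr}} (\cite[Theorem 1.2.3]{Kesavan2006}) gives $d_2(Q_{n\infty}^{\dagger}(\cdot\mid x),Q^*(\cdot\mid x))\leq d_2(Q_{n\infty}(\cdot\mid x),Q^*(\cdot\mid x))$ for each $x$; integrating over $P^*(x)$,
\begin{align*}
d^2_{2,x}\big(Q^*(\cdot\mid x),Q_{n\infty}^{\dagger}(\cdot\mid x)\big) \leq d^2_{2,x}\big(Q^*(\cdot\mid x),Q_{n\infty}(\cdot\mid x)\big) = \int_0^1 \big(\beta_\infty(u)-\beta^*(u)\big)^T \Sigma_x \big(\beta_\infty(u)-\beta^*(u)\big)\,du.
\end{align*}
Writing $\beta_\infty-\beta^* = (\beta_\infty-\beta_n) + (\beta_n-\beta^*)$ and expanding in the $\Sigma_x$-inner product splits the right-hand side into a posterior-spread term $d^2_{2,x}(Q_{n\infty}(\cdot\mid x),Q_n(\cdot\mid x))$, a point-estimate term $d^2_{2,x}(Q_n(\cdot\mid x),Q^*(\cdot\mid x))$, and a cross-term bounded in modulus (Cauchy--Schwarz in the $\Sigma_x$-norm) by the product of the square roots of the first two.

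The remaining work is to bound the conditional expectation given $Y_{1:n},X_{1:n}$ of each piece. The point-estimate term tends to $0$ a.s.$[P^*]$ by the preceding corollary on consistency of the conditional quantiles $Q_n(\cdot\mid x)$. For the posterior-spread term, conditioning first on the Bayesian-bootstrap weights $w_{1:n}$, the predictive-resampling increments $\alpha_i\big(u-\mathbbm{1}(V_i\leq u)\big)X_{n,i}$ are independent and mean zero, so the cross terms vanish and the conditional second moment of $\beta_{n\infty}(u)-\beta_n(u)$ equals $\big(\sum_{k=1}^n w_k X_k X_k^T\big)\sum_{i>n}\alpha_i^2\,\E\big[(u-\mathbbm{1}(V_i\leq u))(u'-\mathbbm{1}(V_i\leq u'))\big]$, with the last factor equal to $\min\{u,u'\}-uu'\leq \tfrac14$; integrating the quadratic form over $P^*(x)$ and taking the tower expectation over $w_{1:n}$ replaces $\sum_k w_k X_k X_k^T$ by $\tfrac1n\sum_k X_k X_k^T$, which is bounded a.s.\ under {Assumption \ref{ass:covariate}}, so the posterior spread is $O(n^{-1})$ and its square root $O(n^{-1/2})$. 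Substituting into Markov's bound,
\begin{align*}
&\Pi\big(d^2_{2,x}(Q^*(\cdot\mid x),Q_{n\infty}^{\dagger}(\cdot\mid x))\geq\varepsilon \mid Y_{1:n},X_{1:n}\big) \\
&\qquad \leq \frac{1}{\varepsilon}\Big[O(n^{-1}) + O(n^{-1/2})\,d_{2,x}(Q_n(\cdot\mid x),Q^*(\cdot\mid x)) + d^2_{2,x}(Q_n(\cdot\mid x),Q^*(\cdot\mid x))\Big] \to 0 \quad \text{a.s.}[P^*].
\end{align*}
The main obstacle is this posterior-spread estimate: unlike the i.i.d.\ unconditional case, the covariates used in predictive resampling are drawn from the empirical (bootstrap) measure rather than from $P^*(x)$, so the Bayesian-bootstrap randomness must be carried through the covariate average, and the two metrics $d_{2,x}$ and $d_{2,p}$ must be reconciled through the trace of $\Sigma_x$ against the empirical second-moment matrix; compact support of the covariates is what keeps this term $O(n^{-1})$.
\end{proof_outline}
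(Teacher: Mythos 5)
Your proposal is correct and follows essentially the same route as the paper's proof: Markov's inequality, the pointwise-in-$x$ rearrangement contraction integrated over $P^*(x)$, and a decomposition into a posterior-spread term made $O(n^{-1})$ by conditioning on the Dirichlet weights and then applying the tower property, a cross-term controlled by Cauchy--Schwarz, and a point-estimate term handled by the preceding frequentist consistency corollary for $Q_n(\cdot \mid x)$. Your explicit $L^2((0,1))$-martingale existence argument for $\beta_{n\infty}$ at $\rho = 1$ and your appeal to the $L^2$ version of the rearrangement inequality are in fact slightly more careful than the paper's treatment, but they do not change the substance of the argument.
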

 \begin{proof}
Once again, Markov's inequality gives us
\begin{align*}
    \Pi\left({d}^2_{2,x}(Q^*(\cdot \mid x),Q_{n\infty}^{\dagger}(\cdot \mid x)) \geq \varepsilon \mid Y_{1:n}, X_{1:n}\right)\leq \frac{1}{\varepsilon^2}\E\left[{d}^2_{2,x}(Q^*(\cdot \mid x),Q_{n\infty}^{\dagger}(\cdot \mid x)) \mid Y_{1:n}, X_{1:n}\right].
\end{align*}
Expanding out the triangle inequality, we have
\begin{equation}
\begin{aligned}\label{app:eq_triangle}
    &E\left[{d}^2_{2,x}(Q^*(\cdot \mid x),Q_{n\infty}^{\dagger}(\cdot \mid x)) \mid Y_{1:n}, X_{1:n}\right]  \\
    &\leq\E\left[{d}^2_{2,x}(Q_{n\infty}^{\dagger}(\cdot \mid x), Q^{\dagger}_n(\cdot \mid x))\mid Y_{1:n}, X_{1:n}\right] \\
    &+ 2E\left[{d}_{2,x}(Q_{n\infty}^{\dagger}(\cdot \mid x), Q^{\dagger}_n(\cdot \mid x))\mid Y_{1:n}, X_{1:n}\right]{d}_{2,x}(Q^*(\cdot \mid x), Q^{\dagger}_n(\cdot \mid x))\\
&+ {d}^2_{2,x}(Q^*(\cdot \mid x), Q^{\dagger}_n(\cdot \mid x)).
\end{aligned}
\end{equation}

For each $x \in \mathcal{X}$, $Q^*(\cdot \mid x)$ and $Q_n(\cdot \mid x)$ has compact range from {Assumption \ref{ass:reg_lipschitz}}. We can thus apply {Proposition \ref{prop:rearr}} to show  
${d}^2_{2}(Q^*(\cdot \mid x), Q^{\dagger}(\cdot \mid x)) \leq {d}^2_{2}(Q^*(\cdot \mid x), Q(\cdot \mid x))$ for each $x \in \mathcal{X}$, which gives
\begin{align*}
    {d}^2_{2,x}(Q^*(\cdot \mid x), Q^{\dagger}(\cdot \mid x)) \leq {d}^2_{2,x}(Q^*(\cdot \mid x), Q(\cdot \mid x)).
\end{align*}

For the final term then, we have
\begin{align*}
 {d}^2_{2,x}(Q^*(\cdot \mid x), Q^{\dagger}_n(\cdot \mid x)) \leq  {d}^2_{2,x}(Q^*(\cdot \mid x), Q_n(\cdot \mid x)) \to 0 \quad \textnormal{a.s.}[P^*]
\end{align*}
For the first term , we also apply the rearrangement inequality to get
\begin{align*}
   \E\left[{d}^2_{2,x}(Q_{n\infty}^{\dagger}(\cdot \mid x), Q^{\dagger}_n(\cdot \mid x))\mid Y_{1:n}, X_{1:n}\right] \leq   \E\left[{d}^2_{2,x}(Q_{n\infty}(\cdot \mid x), Q_n(\cdot \mid x))\mid Y_{1:n}, X_{1:n}\right]
\end{align*}
which we now bound. First, we look at the inner term
\begin{align*}
    &{d}^2_{2,x}(Q_{n\infty}(\cdot \mid x), Q_n(\cdot \mid x)) = \int_{\mathcal{X}} \int_0^1 \left[\left(\beta_{n\infty}(u) - \beta_{n}(u)\right)^T x\right]^2 \, du \, dP^*(x)\\
    &=\sum_{i = n+1}^\infty \sum_{j = n+1}^\infty \alpha_i\, \alpha_j \int_0^1\left(u - \mathbbm{1}\left(V_i \leq u\right)\right)\, \left(u - \mathbbm{1}\left(V_j \leq u\right)\right) \, du\, X_i^T \int_{\mathcal{X}} x x^T dP^*(x)\,  X_j
\end{align*}
where we have applied Tonelli's theorem. Taking the expectation conditional on $\{w_{1:n}, Y_{1:n}, X_{1:n}\}$ gives
\begin{align*}
    &\E\left[{d}^2_{2,x}(Q_{n\infty}(\cdot \mid x), Q_n(\cdot \mid x))\mid w_{1:n}, Y_{1:n}, X_{1:n}\right]\\
    &=\sum_{i = n+1}^\infty \alpha_i^2 \int_0^1 (u - \mathbbm{1}\left(V_i  \leq u\right))^2\, du \, E[X_i^T   \Sigma_x X_i\mid w_{1:n}]\,
\end{align*}
where the cross-terms disappear as $V_i$ is independent of $V_j$ (and both are independent of $X_i,X_j$) for $i\neq j$, and the terms have mean 0.
We can upper bound the above term by
\begin{align*}
   E\left[X_{n+1}^T   \Sigma_x X_{n+1}\mid w_{1:n}\right] \sum_{i = n+1}^\infty \alpha_i^2  =O(n^{-1}) \left[\sum_{i = 1}^n w_i X_i^T \Sigma_x X_i\right]
\end{align*}
where we have used the fact that $X_{n+1:\infty}$ are i.i.d. conditional on $w_{1:n}$. Taking the expectation over the weights then gives
\begin{align*}
    \E\left[{d}^2_{2,x}(Q_{n\infty}(\cdot \mid x), Q_n(\cdot \mid x))\mid Y_{1:n}, X_{1:n}\right]\leq O(n^{-1}) \left[\frac{1}{n}\sum_{i= 1}^n X_i^T \Sigma_x X_i\right]
\end{align*}
As $\Sigma_x$ has finite eigenvalues from {Assumption \ref{ass:covariate}}, we have that
\begin{align*}
\E\left[{d}^2_{2,x}(Q_{n\infty}(\cdot \mid x), Q_n(\cdot \mid x))\mid Y_{1:n}, X_{1:\infty}\right]\leq O(n^{-1})\left(\frac{1}{n}\sum_{i = 1}^n X_i^T X_i\right)
\end{align*}
We then have $\frac{1}{n}\sum_{i = 1}^n X_i^T X_i \to V^2 < \infty$ a.s., which gives
\begin{align*}
\E\left[{d}^2_{2,x}(Q_{n\infty}(\cdot \mid x), Q_n(\cdot \mid x))\mid Y_{1:n}, X_{1:\infty}\right]\leq O(n^{-1}) \quad \text{a.s.}[P^*]
\end{align*}
A similar argument as in the proof of {Theorem \ref{th:consistency}} can be used to handle the second cross-term in (\ref{app:eq_triangle}), so we have the result.
 \end{proof}
 We thus have posterior consistency of the QMP for linear regression for $\rho = 1$.
A similar result can  likely be derived for the posterior contraction rate. Like in the quantile estimation case, we suspect that the QMP with the smoothed update (\ref{eq:beta_freq}) satisfies a similar result on posterior consistency and contraction, but it is not immediately obvious due to the non-linearity of the increasing rearrangement operator. In practice, we see that the rearrangement is negligible for the initial estimate for a sequence $\rho_N$ which approaches $1$ sufficiently slowly. We leave a thorough investigation of this for future work.

\section{Practical details}

\subsection{Implementation}

In this section, we outline some computation details that were not mentioned in the main paper. All methods were implemented in the \texttt{JAX} package in Python, which is efficient and competitive with \texttt{C++} in terms of computational speed. The bivariate copula term $C_{\rho}(u,u')$ can be computed efficiently using standard approximations to the bivariate normal CDF; we utilize the implementation in \texttt{scipy}. 

As the quantile function is scalar on bounded support $(0,1)$, it is efficient to implement methods based on a uniform discrete grid of size $n_U$. Rearrangement is particularly straightforward in this case, as it just involves sorting the values in increasing order \citep{Chernozhukov2009}. For selecting $c$, we compute estimates of $q_n^{\dagger}$ by taking finite differences of $Q_n^{\dagger}$ on the grid. We suspect it is possible to compute this more accurately using the derivatives of the update function but the rearrangement step makes it nontrivial. Finally, we outline the computational complexity of the main algorithms. Estimation of $Q_n^{\dagger}$ ({Algorithm \ref{alg:fit}}) has an average time complexity of  $O(n \times n_U\log n_U )$ due to the sorting required for rearrangement, but in practice rearrangement is not required for each step.
Exact quantile predictive resampling ({Algorithm \ref{alg:QMP}}) has a time complexity of $O(B\times (N-n) \times n_U)$, where $N$ governs the truncation of predictive resampling. In practice, we select $N = n + 5000$, although $N$ can likely shrink with $n$ \citep{Fong2023a}. Approximate predictive resampling ({Algorithm \ref{alg:approx_QMP}}) has time complexity $O(B \times n_U^2)$, which is much faster in practice.
For all examples, we set the grid size to $n_U = 200$, which does not need to grow with $n$. Finally, the algorithms for quantile regression ({Algorithms \ref{alg:reg_fit}}, {\ref{alg:reg_approx_QMP}} and {\ref{alg:QMP_reg}})  have the same complexity as the original unconditional algorithms
multiplied by a factor of $p$.

\subsection{Algorithms}\label{app:sec:alg}
In this section, we outline a few algorithms that were omitted from the main paper due to space constraints. {Algorithm \ref{alg:QMP_reg}} outlines the exact quantile predictive resampling method for quantile regression, where we carry out the exact Bayesian bootstrap for the covariates for expediency.
{Algorithm \ref{alg:GP_samp}} then illustrates how to draw a sample from a GP with kernel $C_{\rho}(u,u')- uu'$ on a finite grid of size $n_U$, which is essentially just equivalent to drawing a Gaussian vector. {Algorithm \ref{alg:reg_GP_samp}} is then a natural extension to generate the GP from the covariate-dependent kernel, which we highlight is conditional on the randomly drawn $w_{1:n}$, so technically we would only want to draw $B = 1$ for each sample of $w_{1:n}$. This involves drawing $p$ independent GPs and carrying out an affine transformation to induce the covariate dependence.
\begin{figure}[!h]
\small
\center
\begin{minipage}{.9\linewidth}
\begin{algorithm}[H]
{Initialize $\beta_n$ from {Algorithm \ref{alg:reg_fit}}}\\
\For{$b \gets 1$ \textnormal{\textbf{to}} $B$}{
{Draw $w_{1:n}^{(b)} \sim \text{Dirichlet}(1,\ldots,1)$ and $X^{(b)}_{n+1:N} \iid \sum_{i = 1}^n w_i \delta_{X_i}$}\\
\For{$i \gets n+1$ \textnormal{\textbf{to}} $N$}{
{Draw $V^{(b)}_i \sim \mathcal{U}(0,1)$}\\
{{$\beta^{(b)}_{i}(u) = \beta^{(b)}_{i-1}(u) + \alpha_{i}\left[u - H_{\rho_{i}}\left(u, V^{(b)}_i\right)\right]\, X^{(b)}_i$}   
}}}
 {Return $\left\{{\beta_N}^{(1)},\ldots,{\beta_N}^{(B)}\right\}$}
\caption{QMP sampling for regression}\label{alg:QMP_reg}
\end{algorithm}
\end{minipage}
\end{figure}\vspace{-5mm}
\begin{figure}[!h]
\small
\center
 \begin{minipage}{.9\linewidth}
\begin{algorithm}[H]
Initialize uniform grid ${U}$ on $[0,1]$ of size $n_U$  \\
Compute the $n_U \times n_U$ matrix $\Sigma$ where $\Sigma_{ij} = C_\rho(u_i,u_j) - u_i u_j$ for $(u_i,u_j) \in U\times U$\\
Compute Cholesky decomposition $\Sigma = LL^T$\\
\For{$b \gets 1$ \textnormal{\textbf{to}} $B$}{
    {Draw $Z^{(b)}\sim \mathcal{N}(0,I_{n_U})$} \\
    {Compute $S^{(b)} = LZ^{(b)}$}
}
 {Return $\left\{S^{(1)},\ldots, S^{(B)} \right\}$}
\caption{Sampling from GP with kernel $C_\rho(u,u') - uu'$}\label{alg:GP_samp}
\end{algorithm}
\end{minipage}
\end{figure}\vspace{-5mm}
\begin{figure}[!h]
\small
\center
 \begin{minipage}{.9\linewidth}
\begin{algorithm}[H]
Initialize uniform grid ${U}$ on $[0,1]$ of size $n_U$  \\
Compute the $p \times p$ matrix $\Sigma_{x} = \sum_{i = 1}^n w_i X_i X_i^T$\\
Compute Cholesky decomposition $\Sigma_{x} = L_{x}{L_{x}}^T$\\
\For{$b \gets 1$ \textnormal{\textbf{to}} $B$}{
    {Draw $Z_{1:p}^{(b)}\iid \mathcal{GP}(0, C_{\rho^2} (u,u')- uu')$ on grid $U \times U$} \\
    {Compute $S_{1:p}^{(b)} = L_xZ_{1:p}^{(b)}$}
}
 {Return $\{S_{1:p}^{(1)},\ldots, S_{1:p}^{(B)} \}$}
\caption{Sampling from GP with covariate-dependent kernel $k_{\rho}(\{u,j\}, \{u',j'\}; w_{1:n})$}\label{alg:reg_GP_samp}
\end{algorithm}
\end{minipage}
\end{figure}

\section{Additional experiments \& discussion}
\subsection{Simulations}\label{app:sec_sim}
In this section, we include additional results for the simulations. {Figure \ref{fig:simulation_exact}} illustrates the equivalent to {Figure \ref{fig:simulation}} but with the exact sampler. {Figure \ref{fig:simulation_samples}} additionally shows sample paths for the exact and approximate QMP. We see that there is little difference between the exact and approximate sampler, even with $n = 50$.
\begin{figure}[!h]
\begin{center}
\includegraphics[width=\textwidth]{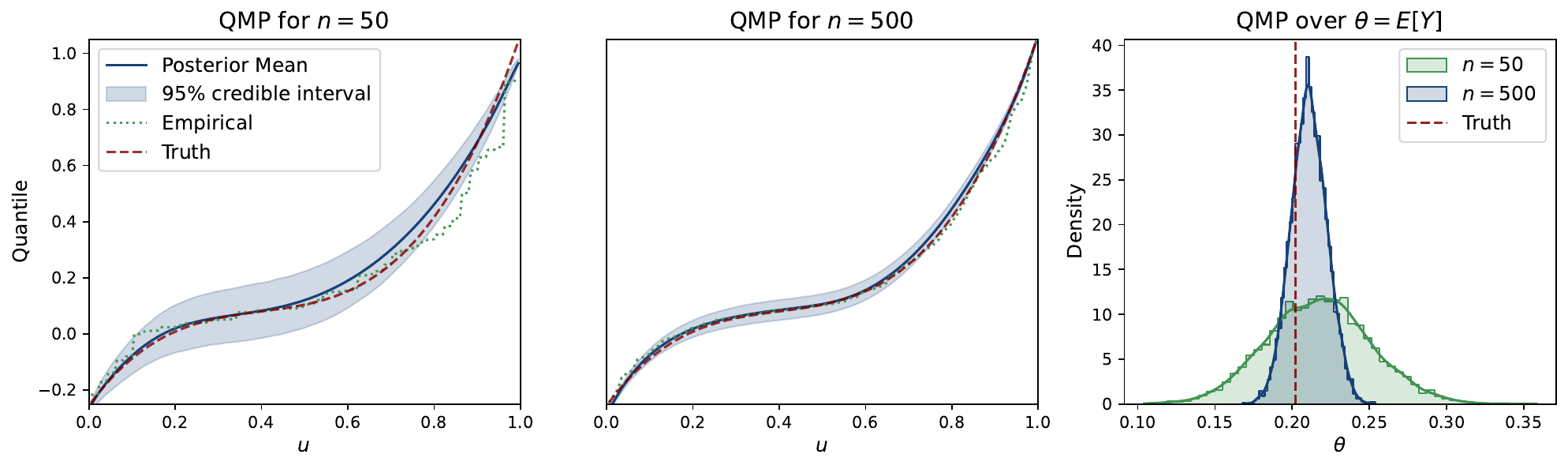}
\end{center}
\caption{QMP over $Q_\infty^{\dagger}$ with exact sampling for (Left) $n = 50$; (Middle) $n = 500$; (Right) QMP over  $\theta = \E[Y]$; predictive resampling is truncated at $N = n + 5000$ } 
\label{fig:simulation_exact}
\end{figure}\vspace{-2mm}
\begin{figure}[!h]
\begin{center}
\includegraphics[width=\textwidth]{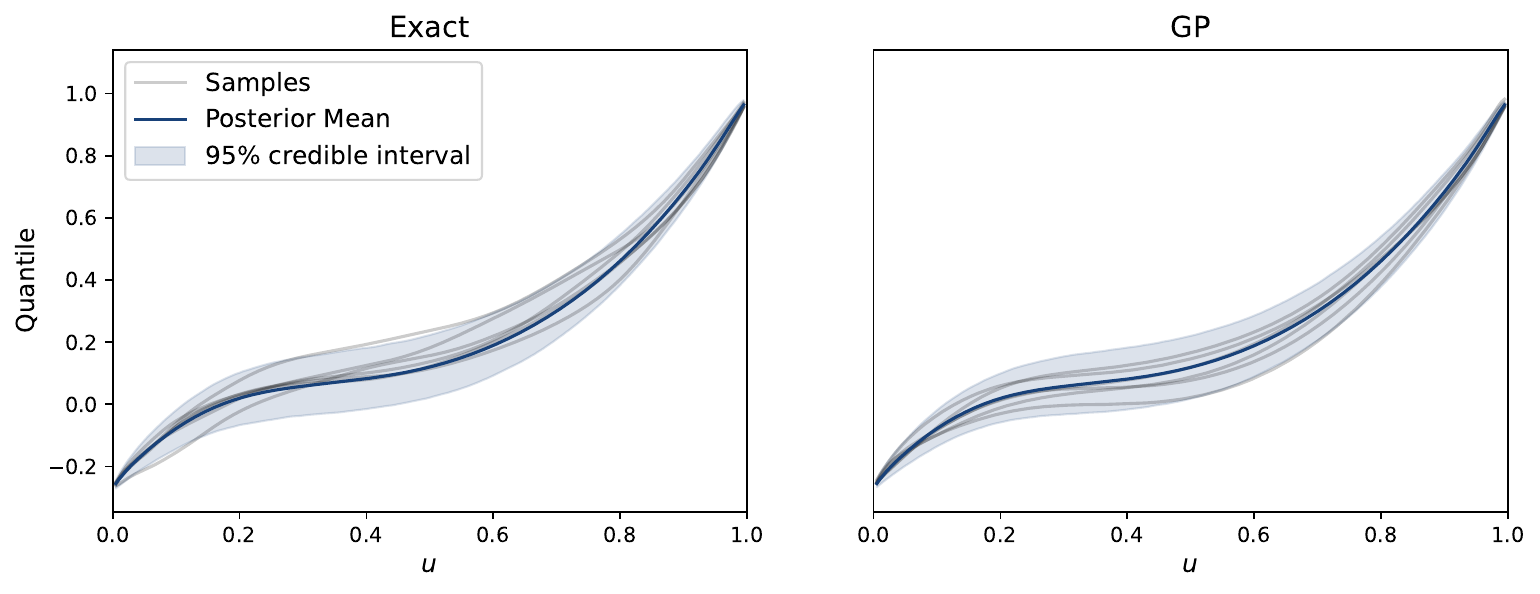}
\end{center}
\caption{QMP over $Q_\infty^{\dagger}$ in the $n = 50$ setting with posterior samples for (Left) Exact sampling; (Right) GP approximation} 
\label{fig:simulation_samples}
\end{figure}

\subsection{Cyclone dataset experiment}\label{app:sec_cyc}
In this section, we include additional results for the cyclone data experiment with $n = 291$ in the NA basin. {Figures \ref{fig:reg_small_exact_samples}} and {\ref{fig:reg_small_samples}} illustrate the QMP for the conditional quantile functions and quantile regression curves for the exact and approximate QMPs respectively. Once again, we see that the two sampling schemes are visually indistinguishable.
\begin{figure}[!h]
\begin{center}
\includegraphics[width=\textwidth]{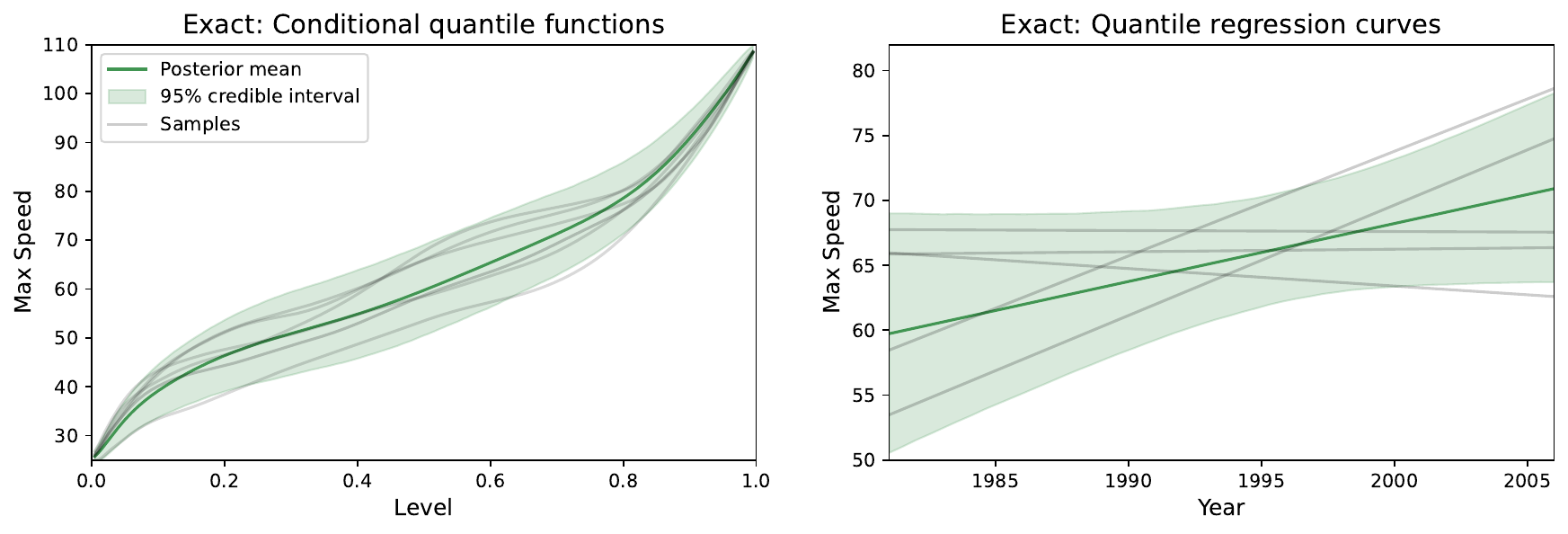}
\end{center}
\caption{Tropical cyclone maximum speeds in the NA basin ($n = 291$): (Left) Posterior mean, 95\% credible intervals and samples for $Q(u \mid x = 1981)$ from the exact QMP; (Right)  Posterior mean, 95\% credible intervals and samples for $Q^{\dagger}_\infty(u = u^* \mid x)$ for $u^* =0.5 $ from the exact QMP } 
\label{fig:reg_small_exact_samples}
\end{figure}\vspace{5mm}
\begin{figure}[!h]
\begin{center}
\includegraphics[width=\textwidth]{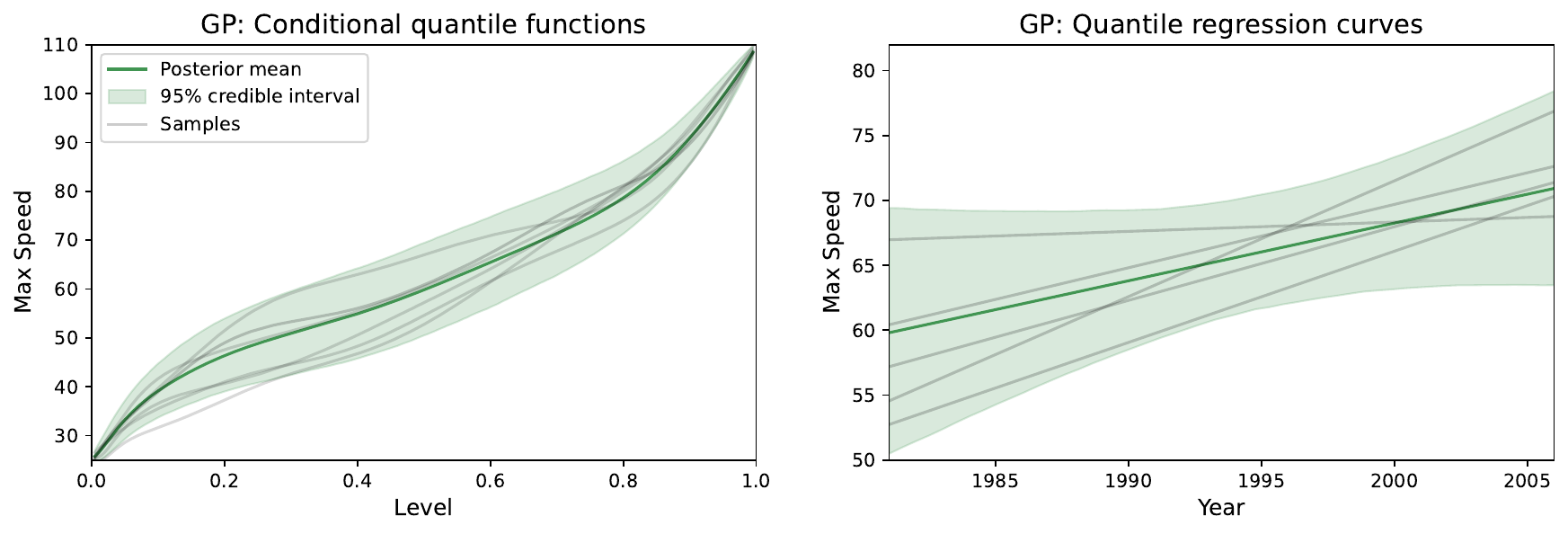}
\end{center}
\caption{Tropical cyclone maximum speeds in the NA basin ($n = 291$): (Left) Posterior mean, 95\% credible intervals and samples for $Q(u \mid x = 1981)$ from the approximate QMP; (Right)  Posterior mean, 95\% credible intervals and samples for $Q^{\dagger}_\infty(u = u^* \mid x)$ for $u^* =0.5 $ from the approximate QMP} 
\label{fig:reg_small_samples}
\end{figure}

\subsection{Functional learning rates}\label{app:sec_func}
As discussed in the main paper, we can consider a functional learning rate $a(u)$. It is not too difficult to extend {Theorem \ref{th:consistency_mean}} if $a(u)$ satisfies the following.
\begin{as}[Functional learning rate]\label{as:alpha_func}
    The learning rate sequence takes the form $\alpha_i(u)= a(u)(i+1)^{-1}$, where
 $L \leq a(u) \leq U$ for some constants $0 < L < U < \infty$. 
\end{as}
\begin{thm}
   Under {Assumptions  \ref{as:alpha_func}}, {\ref{as:bandwidth}} and {\ref{as:consistency_truth}}, we have that $d_2( Q_n^{\dagger},Q^*) \to 0$ a.s.$[P^*]$ under a variant of {Algorithm \ref{alg:fit}} with a functional learning rate $a(u)$.
\end{thm}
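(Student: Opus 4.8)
The plan is to adapt the proof of Theorem \ref{th:consistency_mean} essentially verbatim, tracking where the scalar learning rate $a$ is used and checking that replacing it by a bounded function $a(u)$ changes nothing essential. I would keep $L_n = d_2^2(Q_n^{\dagger}, Q^*)$ as the candidate almost supermartingale and start from the same recursive expansion coming from the update $Q_{i+1}(u) = Q_i^{\dagger}(u) + \alpha_{i+1}(u)[u - H_{\rho_{i+1}}(u, V_{i+1})]$ with $\alpha_{i+1}(u) = a(u)(i+1)^{-1}$. Integrating over $u$, the only change is that the learning rate now sits inside the $u$-integral, so the first-order term becomes $2\int (Q^*(u) - Q_{n-1}^{\dagger}(u))\,a(u)\,(H_{\rho_n}(u,V_n) - u)\,du$ and the second-order term becomes $\int a(u)^2 (H_{\rho_n}(u,V_n) - u)^2\, du \le (n+1)^{-2} U^2$, using $a(u) \le U$ and $|H_\rho(u,v) - u| \le 1$. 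The rearrangement inequality (Proposition \ref{prop:rearr}) still applies unchanged to pass from $Q_n$ to $Q_n^{\dagger}$, since it does not see the learning rate.

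Next I would take the conditional expectation given $\mathcal{F}_{n-1}$ and, exactly as before, add and subtract $a(u)\,P^*(Q_{n-1}^{\dagger}(u))$ to split the drift into a good monotone part and an error part $\zeta_n$. The key positivity fact, that $\int (Q^*(u) - Q_{n-1}^{\dagger}(u))\,a(u)\,(u - P^*(Q_{n-1}^{\dagger}(u)))\,du \ge 0$, survives because $a(u) \ge L > 0$ is pointwise nonnegative and the two other factors have matching sign (this is the argument via applying the monotone $Q^*$). For the error term I would again use Cauchy--Schwarz: $|\zeta_n| \le U\sqrt{L_{n-1}}\,\sqrt{\kappa_n}$ where $\kappa_n$ is exactly the quantity from \eqref{app:eq_kappan} — crucially $\kappa_n$ does not involve $a$, so Lemma \ref{app:lem_kappan} gives $\sqrt{\kappa_n} \le K(1-\rho_n^2)^{1/4}$ unchanged. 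Then $\sum_i \alpha_i(u)\sqrt{\kappa_i}$-type bounds hold uniformly since $\alpha_i(u) \le U(i+1)^{-1}$, so the summability conditions of the Robbins--Siegmund theorem (Theorem \ref{app:thm_asmart}) are met under Assumption \ref{as:bandwidth} ($0 < k < 1$).

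Finally, to conclude $L_\infty = 0$ I would use the lower bound on $a(u)$: the $D_n$ term in the almost supermartingale is now at least $2 L (n+1)^{-1} T(Q_{n-1}^{\dagger})$ with $T$ as in the original proof, so $\sum_i (i+1)^{-1} T(Q_{i-1}^{\dagger}) < \infty$ a.s., hence $\liminf_n T(Q_n^{\dagger}) = 0$ along a subsequence, and the $M$-Lipschitz property of $Q^*$ (Assumption \ref{as:consistency_truth}) gives $T(Q_n^{\dagger}) \ge M^{-1} L_n$ exactly as before, forcing $L_\infty = 0$. I do not expect a genuine obstacle here: the whole point is that $a$ enters only through the uniform two-sided bound $L \le a(u) \le U$, which is precisely what Assumption \ref{as:alpha_func} supplies. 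The one place that warrants care is verifying that $\{Q_n^{\dagger}\}$ still lies in $H^1((0,1))$ (so that the update and rearrangement are well-defined and Proposition \ref{prop:rearr} applies) — but since $a(u)$ multiplying the smooth update term $u - H_\rho(u,v)$ only needs $a \in L^\infty$, and $a$ bounded suffices for the weak derivative to stay in $L^2$, this goes through as in the proof of Theorem \ref{th:consistency}. One could additionally require $a(u)$ itself to be, say, Lipschitz or in $H^1$ if one wanted the Sobolev-space predictive-asymptotics results to carry over, but for the consistency statement as phrased, boundedness is all that is needed.
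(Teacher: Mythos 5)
Your proposal is correct and follows essentially the same route as the paper's proof: the same almost supermartingale $L_n = d_2^2(Q_n^{\dagger},Q^*)$, the same split of the drift into a nonnegative $a$-weighted term $\widetilde{T}$ and an error $\widetilde{\zeta}_n$ controlled by $U\sqrt{\kappa_n}$ via Lemma \ref{app:lem_kappan}, and the same conclusion using $a(u)\geq L$ together with the Lipschitz bound $T(Q_n^{\dagger}) \geq M^{-1}L_n$. The only cosmetic difference is that the paper converts $\sqrt{L_{n-1}}$ into $L_{n-1}+1$ (via $\sqrt{x}\leq x+1$) so the error term fits the Robbins--Siegmund form exactly, a step your plan implicitly inherits from the original proof.
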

\begin{proof}
    Following the proof of {Theorem \ref{th:consistency_mean}}, let $L_n := d_2^2(Q_n^{\dagger},Q^*)$. We once again have
    \begin{align}\label{app:eq_asmart_func}
        \E\left[L_n \mid \mathcal{F}_{n-1}\right] \leq L_{n-1} - 2(n+1)^{-1}\widetilde{T}(Q_{n-1}^{\dagger}) + U^2 (n+1)^{-2}+ 2 (n+1)^{-1} \widetilde{\zeta}_n
    \end{align}
    where
    \begin{align*}
         \widetilde{T}(Q^{\dagger}_{n-1}) &= \int a(u) \left(Q^*(u) - Q^\dagger_{n-1}(u)\right)\left(u- P^*\left(Q^{\dagger}_{n-1}(u)\right)\right)\, du\\
         \widetilde{\zeta}_n &=\int \, a(u)\,  \left(Q^*(u) - Q^\dagger_{n-1}(u)\right)\left( K_n(u) - P^*\left(Q_{n-1}^{\dagger}(u)\right)\right)du
    \end{align*}
    and $K_n(u) = \int  H_{\rho_n}(u,P_{n-1}(y))\, p^*(y) \, dy$ as before.
Once again, we have $\widetilde{T}(Q) \geq 0$ as the integrand is always positive. We can upper bound $|\zeta_n|$ again with
\begin{align*}
    |\zeta_n| &\leq \sqrt{L_{n-1}} \sqrt{\int a^2(u) \left(K_n(u) - P^*\left(Q_{n-1}^{\dagger}(u)\right)\right)^2 \, du}\\
    &\leq (L_{n-1} +1)\,  U \sqrt{\kappa_n},
\end{align*}
where $\kappa_n$ is defined in (\ref{app:eq_kappan}). It is thus again sufficient to show $\sum_{i=1}^\infty (i+1)^{-1}\sqrt{\kappa_i} < \infty$ which occurs under the same assumptions as before (i.e. {Assumption \ref{as:bandwidth}}).

Once again, we have $L_{n} \to L_\infty$ a.s. and $\sum_{i=1}^\infty (i+1)^{-1} \widetilde{T}(Q^{\dagger}_{i-1}) < \infty$ a.s. from {Theorem \ref{app:thm_asmart}}. As the integrand in $\widetilde{T}(Q)$ is positive, we can further lower bound
\begin{align*}
    \widetilde{T}(Q) \geq LT(Q),
\end{align*}
so we also have $\sum_{i=1}^\infty (i+1)^{-1} {T}(Q^{\dagger}_{i-1}) < \infty$ a.s. The same argument based on the Lipschitz constant can then be applied to show $L_\infty = 0$ a.s.
\end{proof}

 In practice, an intuitive choice for the functional learning rate is to set 
 \begin{align*}
     a(u) = \frac{1}{\hat{p}(\hat{Q}(u))},
 \end{align*}
 where $\hat{p}$ and $\hat{Q}$ are estimates of $p^*$ and $Q^*$ respectively. This can be motivated by optimal learning rates for attaining efficient stochastic approximation of pointwise quantiles which is also suggested by \cite{Aboubacar2014}.  In the quantile regression case, under appropriate assumptions, this would involve estimating the residuals via linear regression, then estimating $\hat{p}$ and $\hat{Q}$ from the residuals. As discussed in the main paper however, it is unsatisfying that a separate density estimate is required, and the results will also be quite sensitive to this density estimate.

{Figure \ref{fig:simulation_func}} illustrates the same experiment as Section \ref{sec:simulation} but instead with $a(u)$ as above, where we estimate $\hat{p}$ and $\hat{Q}$ with the Gaussian kernel density estimate and empirical quantile function respectively. We set $c = 0.7$ to match the settings of the main paper, with all other settings the same. We can see that the center and tails have slightly less and more uncertainty respectively compared to the main paper,
 due to the adaptive $a(u)$. However, the estimates and intervals are quite non-smooth despite setting a large value of $c = 0.7$.
This suggests that while an adaptive $a(u)$ may help with estimating $Q^*$, it may not be better for estimating the quantile density function $q^*$. Finally, we see that the posterior of the mean functional looks quite similar to the fixed $a$ result. 
\begin{figure}[!h]
\begin{center}
\includegraphics[width=\textwidth]{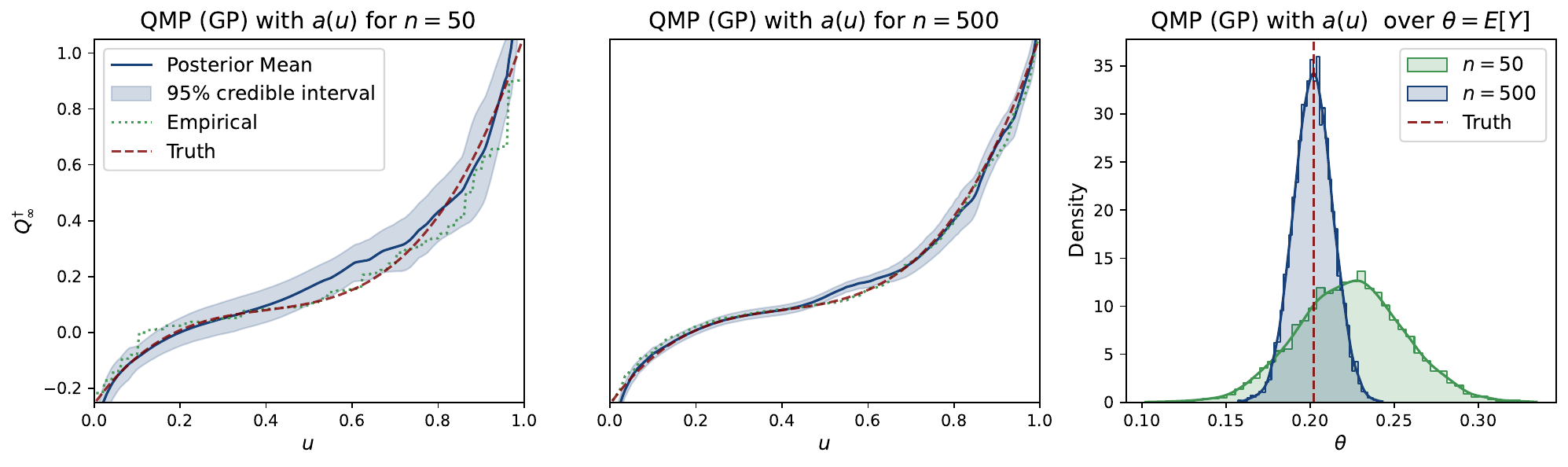}
\end{center}
\caption{QMP over $Q_\infty^{\dagger}$ with functional learning rate $a(u)$ for (Left) $n = 50$; (Middle) $n = 500$; (Right) QMP over  $\theta = \E[Y]$;  we only show the GP approximation as it is visually indistinguishable from exact sampling. } 
\label{fig:simulation_func}
\end{figure}\vspace{-5mm}

\subsection{Comparison to the Bayesian bootstrap for quantile regression}
We now draw comparisons between the QMP for quantile regression with $\rho = 1$ with the Bayesian bootstrap.
Consider now a new test point $x$, which is distinct from $X_{1:n}$. Under the BB, the posterior distribution over $\E[Y \mid x]$ is always 0 in this case, as the BB only allocates mass to $x = X_i$. However, the QMP will be the distribution of
\begin{align*}
   \E_\infty[ Y \mid x] =   \int Q^{\dagger}_{\infty}(u\mid x) \, du &=  \int Q_{\infty}(u\mid x) \, du=x^T\int \beta_{\infty}(u)\, du.
\end{align*}
This is thus non-zero for all values of $x$. The QMP thus allows posterior inference on $\E[Y \mid x]$ for the whole covariate space, which the Bayesian bootstrap is unable to do. Of course the same argument also applies if we are interested in the posterior over $Q(u \mid x)$ for some $x$ not in the support of the data.

    
\end{appendices}
\end{document}